\documentclass[journal]{IEEEtran}
\usepackage{amsmath,amsfonts,amssymb,amsthm,mathtools,mathrsfs}
\usepackage{algorithm}
\usepackage{algpseudocode}
\usepackage{array}
\usepackage[caption=false,font=normalsize,labelfont=sf,textfont=sf]{subfig}
\usepackage{textcomp}
\usepackage{stfloats}
\usepackage{url}
\usepackage{multirow}
\usepackage{verbatim}
\usepackage{graphicx}
\usepackage{cite}
\usepackage{hyperref} 
\usepackage{geometry}
\usepackage{float}
\usepackage{enumitem}
\usepackage{tabularx, booktabs, ragged2e}
\newcolumntype{L}{>{\RaggedRight\arraybackslash}X}
\usepackage{placeins} 
\usepackage{makecell} 
\usepackage{etoolbox} 
\usepackage{chngcntr} 
\usepackage{pgfplots}
\usepgfplotslibrary{groupplots}
\pgfplotsset{compat=1.18}
\usetikzlibrary{calc}
\usepackage{tikz}
\usetikzlibrary{mindmap, shadows, shapes.geometric, positioning, calc}

\newcommand{\Dd}{\text{Dd}}
\newcommand{\Atom}{\text{Atom}}

\newcommand{\Cn}{\text{Cn}}
\newcommand{\enc}{\text{enc}}

\newcommand{\poly}{\operatorname{poly}}

\newcommand{\E}{\mathbb{E}}

\newcommand{\timeindex}{\mathsf{time}}

\newtheorem{axiom}{Axiom}[section]
\newtheorem{assumption}{Assumption}[section]
\newtheorem{theorem}{Theorem}[section]
\newtheorem{lemma}{Lemma}[section]
\newtheorem{proposition}{Proposition}[section]
\newtheorem{corollary}{Corollary}[section]
\newtheorem{definition}{Definition}[section]
\newtheorem{remark}{Remark}[section]

\numberwithin{equation}{section}

\makeatletter
\renewenvironment{proof}[1][\proofname]{\par
  \pushQED{\qed}%
  \normalfont \topsep6\p@\@plus6\p@\relax
  \trivlist
  \item[\hskip\labelsep
        \itshape
    #1\@addpunct{.}]\ignorespaces
}{%
  \popQED\endtrivlist\@endpefalse
}
\makeatother

\onecolumn
\begin{document}

\title{The Derivation Penalty in Premise-Erasure Caching: Capacity, Strong Converse, and Dispersion Dichotomy}

\author{Jianfeng~Xu$^{1}$%
\thanks{$^{1}$Koguan School of Law, China Institute for Smart Justice, School of Computer Science, Shanghai Jiao Tong University, Shanghai 200030, China. Email: xujf@sjtu.edu.cn}%
}

\maketitle

\begin{abstract}
We introduce an information-theoretic framework for caching in
derivation-based reasoning engines under independent premise
erasure.
Two decoder models are compared: a coded scheme using an arbitrary
bit-string cache with a general-purpose decoder, and a
derivation-constrained scheme where the cache consists of logical
facts and the decoder must produce a valid proof.
Four coding theorems are established.
The first proves that each derivation step carries a universal
per-step information content determined by the base size.
The second reveals an exponential capacity separation between
linear-chain and balanced-merge Datalog architectures at equal
depth.
The third identifies a critical access frequency separating the
regimes where caching and on-demand derivation are optimal.
The fourth determines the minimum derivation-constrained cache
under erasure, decomposing query information into reliable cache
and noisy channel capacity.
The central result is the derivation penalty: the ratio of the
derivation-constrained cache to the coded cache converges to the
reciprocal of the erasure rate, universally across query counts,
overlap structures, and reliability targets.
This penalty originates from a structural caching rigidity
theorem showing that only cache facts within the target query's
derivation DAG contribute to resilience, precluding
cross-coordinate error correction.
Beyond capacity, we prove a strong converse at the KL-divergence
rate with Bahadur--Rao prefactors, a dispersion dichotomy
(positive coded dispersion versus zero derivation-constrained
dispersion), and a complete eight-regime phase diagram.
The architecture-dependent depth-to-dependency mapping yields
exponentially sharper phase transitions for the merge
architecture.
All results transfer across synonymous representations.
\end{abstract}

\begin{IEEEkeywords}
Derivation depth,
premise erasure,
coded caching,
derivation penalty,
erasure channel,
strong converse,
channel dispersion,
MDS codes,
Datalog.
\end{IEEEkeywords}

\section{Introduction}
\label{sec:introduction}

\IEEEPARstart{D}{erivation-based} reasoning engines answer queries by constructing
proofs from stored premises.
Such engines arise in database query
evaluation~\cite{abiteboul1995foundations}, knowledge-base reasoning, automated
theorem proving, and retrieval-augmented architectures for large language
models~\cite{lewis2020retrieval,brown2020language}.
When the premise base is subject to stochastic loss---due to storage failures,
network partitions, or data expiration---a reliable cache of intermediate logical
facts can ensure continued derivability of target queries.
Two fundamental questions arise.
First, \emph{what is the minimum cache size for $\delta$-reliable query recovery
under i.i.d.\ premise erasure?}
Second, \emph{what penalty is incurred when the decoder must produce a valid
logical derivation rather than merely output the correct answer?}
These questions sit at the intersection of information theory, deductive reasoning,
and caching, yet no existing framework addresses them jointly.

Shannon's theory~\cite{shannon1948mathematical,cover2006elements} provides the
language for source coding and channel capacity, but does not model
\emph{inference cost}: the number of deductive steps needed to reach a conclusion
from stored premises.
Kolmogorov complexity~\cite{kolmogorov1965three,li2008introduction} quantifies
individual description length, and Bennett's logical
depth~\cite{bennett1988logical} connects it to computation time; however, neither
targets \emph{base-relative} query answering, where the available premise base is
the explicit parameter governing online cost.
In database systems, materialized view selection explicitly trades storage for
query latency~\cite{chaudhuri1995optimizing,gupta1997materialized,harinarayan1996implementing},
and caching policies implement this tradeoff
online~\cite{sleator1985amortized,belady1966study,megiddo2003arc}, but these
approaches lack information-theoretic optimality guarantees under noise.
Knowledge compilation~\cite{darwiche2002knowledge} pre-processes a knowledge base
into a tractable target language so that certain queries can be answered
efficiently, yet the compilation cost is not quantified against a noisy premise
channel.
Time--space tradeoffs are classical in complexity
theory~\cite{yao1985should,borodin2023time}, and the depth--width decomposition of
circuits~\cite{wegener1987complexity,vollmer1999introduction} provides structural
analogies, but neither framework parameterizes the tradeoff by a noisy premise
base.

On the channel-coding side, the $m$-ary erasure channel
$\mathrm{BEC}_m(\varepsilon)$ with capacity
$(1{-}\varepsilon)\log m$~\cite{cover2006elements} and MDS codes achieving the
sphere-packing exponent at all
rates~\cite[Theorem~5.8.3]{gallager1968information} are well understood.
The finite-blocklength theory of Polyanskiy, Poor, and
Verd\'{u}~\cite{polyanskiy2010channel} characterizes second-order coding rates via
channel dispersion.
The coded-caching paradigm of Maddah-Ali and
Niesen~\cite{maddah2014fundamental} demonstrates that coded multicast messages
yield multiplicative gains over uncoded placement in broadcast networks; however,
in that framework the receiver is an arbitrary algebraic decoder---not a proof
engine constrained to combine cached facts with surviving premises via valid
inference steps.
The strong converse and error
exponents~\cite{bahadur1960deviations,dembo2009large,gallager1968information},
moderate deviations~\cite{dembo2009large}, and exact asymptotics for binomial tails
provide the probabilistic toolkit, but connecting these tools to a deductive
decoder model requires new structural results.
Neuro-symbolic AI~\cite{garcez2023neurosymbolic} and large knowledge
graphs~\cite{bollacker2008freebase} further highlight the tension between
parametric storage and non-parametric retrieval or derivation, yet
information-theoretic bounds on this tension---particularly under premise
noise---have been lacking.

The existing literature does not provide a joint theory of (i)~base-relative
derivation cost, (ii)~information-theoretic cache limits under premise noise,
and (iii)~the precise price of requiring a \emph{valid derivation} rather than an
arbitrary decoder output. This paper addresses all three.

We formalize two decoder models: \emph{coded caching}
(Definition~\ref{def:coded-scheme}), where the cache is an arbitrary bit string
decoded by a general-purpose algorithm, and \emph{derivation-constrained caching}
(Definition~\ref{def:uncoded-scheme}), where the cache consists of logical facts
\(S\subseteq\Cn(B)\) and decoding must certify \(q\in\Cn(\tilde B\cup S)\).
Our key structural result is a caching rigidity theorem
(Theorem~\ref{thm:ancestral-relevance}): under faithful programs with unique
traces, only cached facts within the target query's derivation DAG can improve
resilience. This precludes cross-coordinate error correction and leads to the
universal \(1/\varepsilon\) derivation penalty.

The formal substrate fixes a logical language $\mathcal{L}$ instantiated as
$\mathrm{FO(LFP)}$~\cite{immerman1999descriptive}; on finite ordered structures,
$\mathrm{FO(LFP)}$ captures polynomial-time properties via the Immerman--Vardi
theorem~\cite{immerman1982relational,vardi1982complexity}, providing a
machine-independent semantic layer.
The two-domain information model $(S_O,S_C)$ is adopted from the objective
information viewpoint~\cite{xu2014objective,xu2024research,xu2025general,qiu2025research}.
The quantitative results are developed for two Datalog architectures: a
tuple-assembly chain $\Pi_k$ with linear dependency growth
$\kappa=k{+}d{-}1$ and a balanced-merge program $\Pi_k^{\parallel}$ with
exponential growth $\kappa=k\cdot 2^{d-1}$.

Figure~\ref{fig:roadmap} summarizes the logical architecture of the paper's
results, organized around four coding theorems (CT1--CT4) that form a logical
chain converging on the central $1/\varepsilon$ derivation penalty.

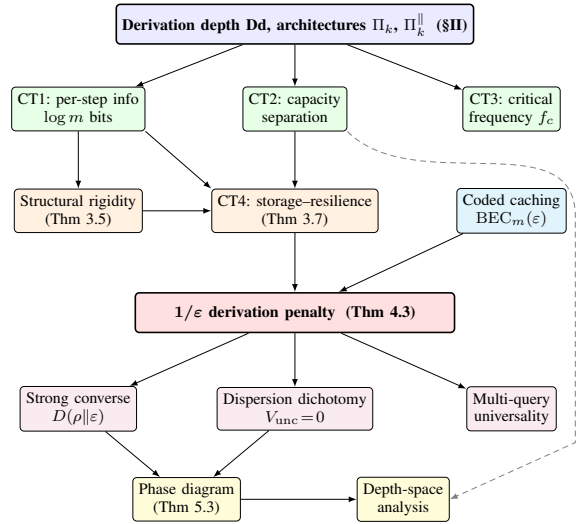
\begin{figure}[ht]
\centering
\scalebox{0.75}{%
\begin{tikzpicture}[>=latex, font=\footnotesize,
  nd/.style={draw, rounded corners=2pt, minimum height=6mm,
             align=center, inner sep=3pt},
  wd/.style={draw, rounded corners=2pt, minimum height=7mm,
             minimum width=5.6cm, align=center,
             font=\footnotesize\bfseries, thick, inner sep=3pt}]
\node[wd, fill=blue!8] (M) at (0,0)
  {Derivation depth $\Dd$, architectures
   $\Pi_k$, $\Pi_k^{\parallel}$\; (\S\ref{sec:model})};
\node[nd, fill=green!10] (C1) at (-3.8,-1.5)
  {CT1: per-step info\\$\log m$ bits};
\node[nd, fill=green!10] (C2) at (0,-1.5)
  {CT2: capacity\\separation};
\node[nd, fill=green!10] (C3) at (3.8,-1.5)
  {CT3: critical\\frequency $f_c$};
\node[nd, fill=orange!12] (R) at (-3.8,-3.3)
  {Structural rigidity\\(Thm~\ref{thm:ancestral-relevance})};
\node[nd, fill=orange!12] (C4) at (0,-3.3)
  {CT4: storage--resilience\\(Thm~\ref{thm:storage-resilience-capacity})};
\node[nd, fill=cyan!10] (CC) at (3.8,-3.3)
  {Coded caching\\$\mathrm{BEC}_m(\varepsilon)$};
\node[wd, fill=red!12] (P) at (0,-5.1)
  {$\boldsymbol{1/\varepsilon}$ \textbf{derivation penalty}\;
   (Thm~\ref{thm:source-channel-separation})};
\node[nd, fill=purple!8] (SC) at (-3.8,-6.8)
  {Strong converse\\$D(\rho\|\varepsilon)$};
\node[nd, fill=purple!8] (DI) at (0,-6.8)
  {Dispersion dichotomy\\$V_{\mathrm{unc}}\!=\!0$};
\node[nd, fill=purple!8] (MQ) at (3.8,-6.8)
  {Multi-query\\universality};
\node[nd, fill=yellow!18] (PH) at (-1.9,-8.4)
  {Phase diagram\\(Thm~\ref{thm:phase-diagram})};
\node[nd, fill=yellow!18] (DS) at (1.9,-8.4)
  {Depth-space\\analysis};
\draw[->] (M) -- (C1);
\draw[->] (M) -- (C2);
\draw[->] (M) -- (C3);
\draw[->] (C1.south) -- (R.north);
\draw[->] (C1.south east) -- (C4.north west);
\draw[->] (C2) -- (C4);
\draw[->] (R.east) -- (C4.west);
\draw[->] (C4) -- (P);
\draw[->] (CC) -- (P);
\draw[->] (P) -- (SC);
\draw[->] (P) -- (DI);
\draw[->] (P) -- (MQ);
\draw[->] (SC) -- (PH);
\draw[->] (DI) -- (PH);
\draw[->] (PH) -- (DS);
\draw[->, densely dashed, gray]
  (C2.south east) to[out=-45,in=90] ++(4.0,-1.5)
  -- ++(0,-4.0)
  -- (DS.east);
\end{tikzpicture}
}%
\caption{Logical dependency map.
Solid arrows denote primary dependencies; the dashed arrow indicates
CT2's architecture mapping $d\mapsto\kappa_{\mathcal A}(d)$ feeding
the depth-space re-parameterization of~\S\ref{sec:arch-depth}.
The $1/\varepsilon$ derivation penalty
(Theorem~\ref{thm:source-channel-separation}) arises from comparing
CT4's derivation-constrained converse---enabled by the structural
rigidity of Theorem~\ref{thm:ancestral-relevance}---with the coded
caching achievability over $\mathrm{BEC}_m(\varepsilon)$.
Representation invariance (Theorem~\ref{thm:CT-transfer}) ensures
all coding theorems transfer across synonymous representations.}
\label{fig:roadmap}
\end{figure}

The main contributions, corresponding to the nodes in
Figure~\ref{fig:roadmap}, are as follows.
CT1 (Theorem~\ref{thm:tight-datalog}) proves that each derivation step carries
exactly $\log m$ bits of conditional algorithmic information, establishing the
per-step information rate that denominates all subsequent results.
CT2 (Theorem~\ref{thm:exponential-capacity}) establishes an exponential capacity
separation between the chain and merge architectures at equal depth:
$C(\Pi_k,B,d)=(k{+}d{-}1)\log m$ versus
$C(\Pi_k^{\parallel},B,d)=k\cdot 2^{d-1}\log m$.
CT3 (Theorem~\ref{thm:freq_tradeoff}) identifies the critical access frequency
$f_c=\Theta(\rho_s\cdot\log(m{+}d))$, where $\rho_s$ is the normalized storage
cost, at which caching overtakes on-demand derivation.
CT4 (Theorem~\ref{thm:storage-resilience-capacity}) proves that under i.i.d.\
erasure at rate~$\varepsilon$ with resilience target~$\delta$, the minimum
derivation-constrained cache for a generic query with $\kappa$ distinct
dependencies is $\sigma^*_{\mathrm{unc}}=(\kappa{-}N^*)\log m+O(\kappa)$ with
$N^*\approx\delta/\varepsilon$, while the minimum coded cache is
$\sigma^*_{\mathrm{code}}=(1{+}o(1))\varepsilon\kappa\log m$
(Theorem~\ref{thm:source-channel-separation}).
The ratio $\sigma^*_{\mathrm{unc}}/\sigma^*_{\mathrm{code}}$ converges to
$1/\varepsilon$---the \emph{derivation penalty}---which is universal: it depends
only on~$\varepsilon$ and is independent of the number of queries, their overlap
structure, or~$\delta$ (Theorem~\ref{thm:multi-penalty-mq}).
Beyond first-order capacity, the success probability below the coded threshold
decays at the KL-divergence rate $D(\rho\|\varepsilon)$
(Theorem~\ref{thm:sc-exponent}) with Bahadur--Rao
prefactors~\cite{bahadur1960deviations} (Theorem~\ref{thm:bahadur-rao}), and the
two schemes exhibit a \emph{dispersion dichotomy}: coded second-order term
$\Theta(\sqrt\kappa\,\log m)$ matching the BEC
dispersion~\cite{polyanskiy2010channel} versus derivation-constrained
$O(\log m)$---zero effective dispersion
(Theorem~\ref{thm:dispersion-dichotomy}).
The complete phase diagram (Theorem~\ref{thm:phase-diagram}) comprises five coded
regimes and three derivation-constrained regimes, the latter exhibiting no phase
transition.
The architecture-dependent mapping $d\mapsto\kappa_{\mathcal A}(d)$ from CT2
propagates to every downstream quantity, yielding exponentially sharper phase
transitions for the merge architecture (Theorem~\ref{thm:arch-phase-ds}) and
exponentially smaller maximum resilient depth
(Theorem~\ref{thm:depth-resilience-duality}).
All coding theorems transfer across synonymous representations
(Theorem~\ref{thm:CT-transfer}).

The paper is organized as follows.
Section~\ref{sec:model} develops the system model: irredundant cores, derivation
depth, the two Datalog architectures, coding theorems CT1--CT3, and representation
invariance.
Section~\ref{sec:noise-resilience} introduces premise erasure and proves CT4 via
the structural rigidity theorem.
Section~\ref{sec:precise} establishes the source--channel separation, the
$1/\varepsilon$ derivation penalty, the strong converse, and the dispersion
dichotomy.
Section~\ref{sec:phase-diagram} refines the error analysis to Bahadur--Rao
precision and assembles the complete phase diagram.
Section~\ref{sec:multi-query} extends the theory to multi-query joint caching.
Section~\ref{sec:arch-depth} re-parameterizes the phase diagram in depth space.
Section~\ref{sec:numerical} provides numerical validation.
Sections~\ref{sec:discussion} and~\ref{sec:conclusion} discuss implications,
limitations, and open problems.
Appendix~\ref{sec:substrate} contains the logical substrate and noise model;
Appendix~\ref{app:sc-exponent-calc} provides a supporting calculation.


\section{System Model and Preliminaries}
\label{sec:model}

The axiomatization of information---the logical system
$\mathcal{L}:=\mathrm{FO(LFP)}$ over finite relational
structures~\cite{immerman1999descriptive}, the two-domain model
$(S_O,S_C)$ with effective encodings, the enabling mechanism,
state synonymy~$\equiv_{\mathcal{L}}$, and the noisy semantic
base---is developed in Appendix~\ref{sec:substrate}.
All semantic state sets lie in a fixed ambient
universe~$\mathbb{S}_O$
(Assumption~\ref{assump:semantic-universe});
by the Immerman--Vardi
theorem~\cite{immerman1982relational,vardi1982complexity},
$\mathcal{L}$ captures exactly the polynomial-time properties of
ordered finite structures, and
Proposition~\ref{prop:noisy-exist} ensures that the
set-perturbation noise model of
Sections~\ref{sec:noise-resilience}--\ref{sec:arch-depth} entails
no loss of generality.
We fix an effective proof system~$\mathsf{PS}$ for~$\mathcal{L}$
with decidable proof checking
(Assumption~\ref{assump:proof-system}) and define the deductive
closure of any finite formula set~$\Gamma$ as
\begin{equation}\label{eq:Cn-def}
  \Cn(\Gamma)
  \;:=\;
  \bigl\{\varphi:\Gamma\vdash_{\mathcal{L}}\varphi\bigr\}.
\end{equation}
The remainder of this section introduces irredundant cores and
derivation depth (Section~\ref{subsec:atomic-derivation}),
derivation DAGs and strata (Section~\ref{subsec:dag-strata}),
two concrete Datalog architectures
(Section~\ref{subsec:architectures}), the coding theorems
CT1--CT2 (Section~\ref{subsec:coding-theorems}), the
storage--depth and storage--computation tradeoffs with CT3
(Sections~\ref{subsec:stratified-tradeoff}--\ref{subsec:tradeoff}),
and representation invariance (Section~\ref{subsec:rep-inv}).

\subsection{Semantic Dependencies, Irredundant Cores, and
  Derivation Depth}
\label{subsec:atomic-derivation}

This subsection introduces: (i)~an irredundant semantic core
$\Atom(S_O)$ (via $\Cn$, not $P_O$); (ii)~a computable predecessor
operator $P_O$ defining $\Dd(\cdot\mid B)$.

A finite set $B\subseteq\mathbb S_O$ is an \emph{available premise
base}: its elements are depth-$0$ premises.

\begin{assumption}[Finite and effectively listable knowledge bases]
\label{assump:finite-so}
The knowledge bases $S_O$ considered are finite and effectively
listable under a fixed canonical order.
\end{assumption}

\begin{assumption}[Effective redundancy test]
\label{assump:core-extractable}
The predicate $s\in\Cn(\Gamma)$ is decidable whenever
$\Gamma\subseteq S_O$ is finite and $s\in S_O$.
This holds in Datalog/Horn-style settings and bounded-domain
theories~\cite{abiteboul1995foundations,dantsin2001complexity}.
\end{assumption}

\begin{definition}[Semantic atomic core]
\label{def:atom-so}
$\Atom(S_O)$ is the output of the deterministic procedure:
initialize $A\gets S_O$; scan in canonical order; for each
$s\in S_O$, if $s\in\Cn(A\setminus\{s\})$ then
$A\gets A\setminus\{s\}$; output $A$.
\end{definition}

\begin{proposition}[Core correctness]
\label{prop:atom-core-correct}
Under Assumptions~\ref{assump:finite-so}
and~\ref{assump:core-extractable}, $A:=\Atom(S_O)$ satisfies:
\textup{(i)}~$\Cn(A)=\Cn(S_O)$;
\textup{(ii)}~$a\notin\Cn(A\setminus\{a\})$ for all $a\in A$;
\textup{(iii)}~$A$ is uniquely determined by $S_O$ and the
canonical order.
\end{proposition}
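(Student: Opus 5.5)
We argue by tracking an invariant through the scan of Definition~\ref{def:atom-so}. Write $s_1,\dots,s_n$ for the elements of $S_O$ in canonical order (finite by Assumption~\ref{assump:finite-so}). Let $A_0:=S_O$ and let $A_i$ be the working set after processing $s_i$, so that $A=A_n$.

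For (i), we show by induction that $\Cn(A_i)=\Cn(S_O)$. The case $i=0$ is trivial. For the step, either $A_i=A_{i-1}$, or $A_i=A_{i-1}\setminus\{s_i\}$ with $s_i\in\Cn(A_i)$. In the second case we use the standard closure properties of $\vdash_{\mathcal L}$: monotonicity (reflexivity), and cut, i.e.\ idempotence $\Cn(\Cn(\Gamma))=\Cn(\Gamma)$, which hold for the effective proof system of Assumption~\ref{assump:proof-system}. Since $A_{i-1}\subseteq A_i\cup\{s_i\}\subseteq\Cn(A_i)$, we get $\Cn(A_{i-1})\subseteq\Cn(A_i)$. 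Monotonicity gives the reverse inclusion.

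For (ii), take $a\in A$ with $a=s_j$. Because $s_j$ survived step $j$, the test failed there: $s_j\notin\Cn(A_{j-1}\setminus\{s_j\})$. Note that $A_{j-1}$ still contains $s_j$, since an element is only ever removed at its own step. The working sets only shrink, so $A\setminus\{a\}\subseteq A_{j-1}\setminus\{s_j\}$. By monotonicity, $\Cn(A\setminus\{a\})\subseteq\Cn(A_{j-1}\setminus\{s_j\})$, and therefore $a\notin\Cn(A\setminus\{a\})$.

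This monotonicity step is the crux of the argument. It is also the only reason removal order matters, because irredundance is checked against a larger set at scan time. One point needs care: the scan iterates over $S_O$ rather than the shrinking $A$, so already-removed elements must be handled. Testing such an element again cannot reinsert it, so it has no effect.

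For (iii), the procedure is deterministic. Its only branching is on the predicate $s_i\in\Cn(A_{i-1}\setminus\{s_i\})$, which is decidable by Assumption~\ref{assump:core-extractable} because $A_{i-1}\setminus\{s_i\}\subseteq S_O$ is finite. The order of processing is fixed by the canonical listing. Induction on $i$ therefore shows that each $A_i$ is a function of $(S_O,\text{order})$ alone, and hence so is $A=A_n$.
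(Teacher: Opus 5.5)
Your proof is correct and follows the same route as the paper's. For (i) you use an induction showing each removal preserves $\Cn$; for (ii) you observe that the working set at scan time contains the final $A$ and then apply monotonicity of $\Cn$; for (iii) you rely on determinism of the scan.

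You are more explicit than the paper in naming the cut/idempotence property behind ``each removal preserves $\Cn$''. Your aside about re-testing already-removed elements is unnecessary, since each element is tested exactly once, at its own step, before it could have been removed.
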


\begin{proof}
Each removal preserves $\Cn$; by induction,
$\Cn(A)=\Cn(S_O)$.
For irredundancy: when $a$ was scanned, $a$ was retained because
$a\notin\Cn(A_t\setminus\{a\})$ where $A_t\supseteq A$;
monotonicity of $\Cn$ gives
$a\notin\Cn(A\setminus\{a\})$.
Canonicality: determinism.
\end{proof}

\begin{definition}[Intrinsic vs.\ operational premise bases]
\label{def:intrinsic-operational-bases}
$A:=\Atom(S_O)$ (core premises),
$J:=S_O\setminus A$ (stored shortcuts),
$A\cup J=S_O$ (operational base).
\end{definition}

\paragraph{Predecessor structure for depth.}

\begin{axiom}[Finite and computable predecessors]
\label{ax:finite-predecessors}
For every $s\in\mathbb S_O$, $P_O(s)$ is finite and effectively
computable from $\enc_O(s)$.
\end{axiom}

\begin{axiom}[Well-foundedness]
\label{ax:wellfounded-predecessor}
For every $s\in\mathbb S_O$, the backward unfolding of $s$ along
$P_O$ contains no infinite chain.
\end{axiom}

\begin{proposition}[Finite acyclic predecessor unfolding]
\label{prop:no-cycle-dag}
Under Axioms~\ref{ax:finite-predecessors}
and~\ref{ax:wellfounded-predecessor}, for any $s\in\mathbb S_O$
the backward unfolding is a finite computable
DAG~\cite{kunen2014set}.
\end{proposition}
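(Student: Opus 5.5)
We define the backward unfolding $U(s)$ as the directed graph whose vertex set is the smallest set $V(s)\subseteq\mathbb S_O$ containing $s$ and closed under $P_O$, i.e.\ $t\in V(s)$ and $u\in P_O(t)$ imply $u\in V(s)$. Its edges are $u\to t$ whenever $u\in P_O(t)$. We show three properties in turn: acyclicity, finiteness, and computability.

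\emph{Acyclicity.} Suppose $U(s)$ contains a directed cycle $t_0\to t_1\to\cdots\to t_r=t_0$ with $r\ge 1$. Traversing the cycle backwards and repeating it indefinitely gives an infinite chain $t_0, t_{r-1}, \dots, t_0, \dots$ in which each element is a $P_O$-predecessor of the one before. Since every $t_i$ is reachable from $s$, prepending a backward path from $s$ to $t_0$ yields an infinite chain in the backward unfolding of $s$. This contradicts Axiom~\ref{ax:wellfounded-predecessor}.

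\emph{Finiteness.} This is the main step; it is essentially K\"onig's lemma. Consider the tree $T(s)$ of finite backward paths $s=t_0,t_1,\dots,t_n$ with $t_{i+1}\in P_O(t_i)$. By Axiom~\ref{ax:finite-predecessors}, every node of $T(s)$ has finitely many children, so $T(s)$ is finitely branching. By Axiom~\ref{ax:wellfounded-predecessor}, $T(s)$ has no infinite branch. K\"onig's lemma~\cite{kunen2014set} then says $T(s)$ is finite. Every vertex of $U(s)$ is the endpoint of some path in $T(s)$, so $V(s)$ is finite. A side remark: although well-foundedness also rules out cycles, we do not need it here, since we count paths rather than vertices. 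Path-finiteness directly bounds the vertex set.

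\emph{Computability.} We compute $U(s)$ by a worklist search starting from $\enc_O(s)$. At each step, remove a vertex $t$ from the worklist, compute $P_O(t)$ using Axiom~\ref{ax:finite-predecessors}, record the edges $u\to t$ for $u\in P_O(t)$, and add to the worklist any $u$ not yet seen. Each vertex is expanded once, and $V(s)$ is finite, so the search terminates. Its output is exactly $V(s)$ together with its edges. Hence $U(s)$ is a finite computable DAG.

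The only subtle point is whether "contains no infinite chain" is read as ruling out repeated elements. The cycle-unrolling argument settles this: it produces a genuine infinite backward sequence, and the axiom forbids any such sequence, with or without repetitions. So both acyclicity and finiteness follow under that reading.
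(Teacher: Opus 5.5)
This is correct and is essentially the paper's argument. Both apply K\"onig's lemma to the finitely branching tree of backward paths, with states kept with multiplicity; an infinite branch would be an infinite backward chain excluded by Axiom~\ref{ax:wellfounded-predecessor}, and the finite vertex set is then obtained by collapsing repeated states. Your cycle-unrolling argument for acyclicity and your worklist procedure spell out what the paper compresses into ``collapsing repeated states yields a finite computable DAG''; the worklist tacitly uses decidable equality of encodings (Axiom~\ref{ax:state-representation}(R1)), and like the paper you read the axiom as forbidding infinite backward sequences even when they repeat elements.
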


\begin{proof}
Consider the rooted predecessor unfolding graph obtained by repeatedly expanding
each node \(v\) to its finite predecessor set \(P_O(v)\) (Axiom~\ref{ax:finite-predecessors}).
This unfolding is a finitely-branching rooted tree when nodes are kept with
multiplicity, and it contains a rooted path for every backward chain.

If the unfolding were infinite, then by K\H{o}nig's lemma (every infinite finitely
branching tree has an infinite path~\cite{konig1990theory}), it would contain an infinite backward chain,
contradicting Axiom~\ref{ax:wellfounded-predecessor}.
Hence the unfolding is finite. Collapsing repeated states yields a finite
computable DAG.
\end{proof}

\begin{assumption}[Alignment of dependency and inference]
\label{assump:alignment}
Fix a finite $B\subseteq\mathbb S_O$.
For every $s\in\mathbb S_O\setminus B$, $P_O(s)$ coincides with
the immediate premises in a fixed normal-form single-step
inference concluding $s$.
Elements of $B$ need not be inferred.
\end{assumption}

\begin{definition}[Base-relative derivation depth]
\label{def:derivation-depth}
Let $B\subseteq\mathbb S_O$ be finite.
Under Assumption~\ref{assump:alignment}:
\begin{equation}\label{eq:Dd-inductive}
  \Dd(s\mid B)
  :=
  \begin{cases}
    0, & s\in B,\\[0.3ex]
    1+\displaystyle\max_{s'\in P_O(s)}\Dd(s'\mid B),
      & \text{otherwise},
  \end{cases}
\end{equation}
with $\max\varnothing:=0$.
\end{definition}

\begin{theorem}[Well-definedness and computability of $\Dd$]
\label{thm:computability-Dd}
Let $B\subseteq\mathbb S_O$ be finite and effectively decidable.
Under Assumption~\ref{assump:alignment} and
Axioms~\ref{ax:finite-predecessors}--\ref{ax:wellfounded-predecessor},
$\Dd(s\mid B)$ is a unique, finite, computable non-negative
integer for every $s\in\mathbb S_O$.
\end{theorem}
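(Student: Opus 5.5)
The plan is to prove existence, uniqueness, finiteness and computability at once, by well-founded recursion on the backward unfolding supplied by Proposition~\ref{prop:no-cycle-dag}. Fix $s\in\mathbb S_O$ and let $G_s$ be the finite computable DAG obtained by unfolding $s$ along $P_O$ and collapsing repeated states. Every state whose depth is requested during the evaluation of~\eqref{eq:Dd-inductive} at $s$ is a node of $G_s$. The reason is that the recursion only ever passes from a node $v\notin B$ to members of $P_O(v)$, and these are exactly the edges of $G_s$. The recursion stops at nodes of $B$, but it may cut off part of $G_s$; this causes no harm, because we only need the recursion to stay inside $G_s$.

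\textbf{Existence and uniqueness.} We work by induction on the height $h(v)$ of a node $v\in G_s$, meaning the length of the longest backward chain that starts at $v$. This height is finite because $G_s$ is finite and acyclic.
\begin{itemize}
\item If $v\in B$, the value $0$ is forced.
\item If $v\notin B$ and $P_O(v)=\varnothing$, the convention $\max\varnothing:=0$ forces the value $1$.
\item Otherwise every $s'\in P_O(v)$ has $h(s')<h(v)$. The induction hypothesis gives each such $s'$ a unique finite value, and $P_O(v)$ is finite by Axiom~\ref{ax:finite-predecessors}. So the maximum is a well-defined non-negative integer, and $\Dd(v\mid B)$ is uniquely determined.
\end{itemize}
Assumption~\ref{assump:alignment} is used here only to justify that~\eqref{eq:Dd-inductive} is the intended recursion. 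Uniqueness itself is the standard fact that any two functions satisfying~\eqref{eq:Dd-inductive} agree on $G_s$, which follows from the same induction. The induction also gives the bound $\Dd(s\mid B)\le h(s)+1<\infty$.

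\textbf{Computability.} The algorithm is as follows:
\begin{itemize}
\item on input $\enc_O(s)$, build $G_s$ by breadth-first expansion using the computable map $P_O$ (Axiom~\ref{ax:finite-predecessors});
\item at each node, first test membership in $B$, which is decidable by hypothesis, and do not expand nodes of $B$;
\item evaluate~\eqref{eq:Dd-inductive} bottom-up in a reverse topological order of the resulting DAG.
\end{itemize}

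The one point needing care is termination of the construction. Proposition~\ref{prop:no-cycle-dag} says the unfolding is finite, but the algorithm must also be able to \emph{detect} that it has finished. I would rely on the König argument in that proof: the tree-form expansion (with multiplicities) is finitely branching and has no infinite branch, so a plain depth-first expansion halts after finitely many steps, with no need to know a bound in advance. Collapsing repeated states, using equality of encodings, then yields the DAG on which the bottom-up evaluation runs. Each step is effective, so $s\mapsto\Dd(s\mid B)$ is a total computable function with values in $\mathbb N$.
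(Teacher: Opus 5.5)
Your proposal is correct and takes the same route as the paper: use the finite acyclic unfolding from Proposition~\ref{prop:no-cycle-dag} and evaluate \eqref{eq:Dd-inductive} bottom-up over it. What you add is the detail the paper's two-line proof leaves implicit: the height induction that gives existence, uniqueness and the bound $\Dd(s\mid B)\le h(s)+1$, and the K\H{o}nig-based argument that the expansion procedure halts and can detect that it has finished.
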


\begin{proof}
Proposition~\ref{prop:no-cycle-dag} gives a finite DAG;
bottom-up dynamic programming computes $\Dd$.
\end{proof}

\begin{definition}[Intrinsic and operational depths]
\label{def:int-op-depth}
For $A:=\Atom(S_O)$ and $q\in\Cn(A)$:
$n_{\mathrm{int}}(q):=\Dd(q\mid A)$,
$n_{\mathrm{op}}(q):=\Dd(q\mid S_O)$.
Since $A\subseteq S_O$, $n_{\mathrm{op}}(q)\le n_{\mathrm{int}}(q)$.
\end{definition}

\subsection{Derivation DAGs, Strata, and Base-Fact Dependencies}
\label{subsec:dag-strata}

\begin{definition}[Canonical derivation DAG]
\label{def:canonical-dag}
For $q\in\Cn(B)$, the DAG $G(q,B)$ has vertex set $V(G)$
(the smallest set containing $q$ and closed under
predecessors), edge set
$E(G):=\{(v,v'):v\notin B,\;v'\in P_O(v)\}$,
root $q$, and leaves $V_0(G):=V(G)\cap B$.
\end{definition}

\begin{definition}[Depth strata and width profile]
\label{def:strata}
For $G=G(q,B)$ with $d:=\Dd(q\mid B)$:
depth stratum
$V_\ell:=\{v\in V(G):\Dd(v\mid B)=\ell\}$;
layer width $w_\ell:=|V_\ell|$;
maximum width $W_{\max}:=\max_{1\le\ell\le d}w_\ell$;
total work $W:=\sum_{\ell=1}^{d}w_\ell$.
\end{definition}

\begin{definition}[Dependency count vs.\ head arity]
\label{def:dep-count}
Let \(q\in\Cn(B)\) and \(G(q,B)\) be the canonical derivation DAG
(Definition~\ref{def:canonical-dag}).
\begin{enumerate}[label=\textup{(\roman*)}]
\item The \emph{distinct base-fact dependency count} is
\[
  \kappa(q,B)\;:=\;|V_0(G(q,B))|\;=\;|V(G(q,B))\cap B|.
\]
This counts \emph{distinct} EDB facts in \(B\) that appear as leaves.

\item For the concrete query families of
Definitions~\ref{def:tuple-assembly}--\ref{def:balanced-merge},
each IDB query \(q\) has a fixed head tuple length (arity), denoted \(a(q)\).
For depth \(d\), write \(a(\Pi,d)\) for the arity of depth-\(d\) queries under
architecture \(\Pi\).
For \(\Pi_k\), \(a(\Pi_k,d)=k+d-1\); for \(\Pi_k^{\parallel}\),
\(a(\Pi_k^{\parallel},d)=k\cdot 2^{d-1}\).

\item Always \(\kappa(q,B)\le a(q)\). Equality holds for the
\emph{distinct-coordinate} subfamily where all head coordinates are pairwise
distinct.
\end{enumerate}
\end{definition}

\subsection{Derivation Architectures}
\label{subsec:architectures}

The paper's quantitative results are established for two
Datalog programs over a unary EDB $B:=\{A(i):i\in[m]\}$
with $m:=|B|\ge 2$.
Both programs use rule set~$R$ with $r:=|R|=2$ and maximum
arity~$k$.

\begin{definition}[Tuple-assembly chain program $\Pi_k$]
\label{def:tuple-assembly}
Fix $k\ge 2$.
IDB relations $T_d$ of arity $d\ge k$.
\emph{Base rule:}
$T_k(x_1,\ldots,x_k)\leftarrow A(x_1),\ldots,A(x_k)$.
\emph{Extension rule:}
$T_{d+1}(\vec x,x_{d+1})\leftarrow T_d(\vec x),A(x_{d+1})$.
\end{definition}

\begin{definition}[Balanced-merge program $\Pi_k^{\parallel}$]
\label{def:balanced-merge}
Fix $k\ge 2$.
IDB relations $R_d$ of arity $k\cdot 2^{d-1}$.
\emph{Base rule:}
$R_1(x_1,\ldots,x_k)\leftarrow A(x_1),\ldots,A(x_k)$.
\emph{Merge rule:}
$R_{d+1}(\vec x,\vec y)\leftarrow R_d(\vec x),R_d(\vec y)$,
with $|\vec x|=|\vec y|=k\cdot 2^{d-1}$.
\end{definition}

\begin{definition}[Faithful rule]
\label{def:faithful-rule}
A Datalog rule is \emph{faithful} if every body variable
appears in the head, so the head tuple uniquely determines
the body instantiation.
\end{definition}

\begin{lemma}[Faithfulness of $\Pi_k$]
\label{lem:faithfulness}
Every rule of\/ $\Pi_k$ is faithful.
If $T_d(i_1,\ldots,i_d)$ is derivable from~$B$, its
derivation trace is uniquely determined by the tuple
$(i_1,\ldots,i_d)$.
\end{lemma}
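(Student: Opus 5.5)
The argument has two parts: a syntactic check of Definition~\ref{def:faithful-rule} on the two rules of $\Pi_k$, followed by an induction on the arity $d\ge k$ for trace uniqueness.

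For faithfulness, inspect the rules of Definition~\ref{def:tuple-assembly}. In the base rule $T_k(x_1,\ldots,x_k)\leftarrow A(x_1),\ldots,A(x_k)$, the body variables are $x_1,\ldots,x_k$, and each appears in the head. In the extension rule $T_{d+1}(\vec x,x_{d+1})\leftarrow T_d(\vec x),A(x_{d+1})$, the body variables are the entries of $\vec x$ together with $x_{d+1}$, and all of them occur in the head. Hence, for a given ground head atom, the substitution of the body variables is fixed by reading off head positions: $\vec x=(i_1,\ldots,i_d)$ and $x_{d+1}=i_{d+1}$.

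For trace uniqueness, I use induction on $d$.

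\emph{Base case $d=k$.} The only rule with head predicate $T_k$ is the base rule, since the extension rule produces $T_{d+1}$ with $d\ge k$, i.e.\ arity at least $k+1$. By faithfulness, the only instance concluding $T_k(i_1,\ldots,i_k)$ has body $A(i_1),\ldots,A(i_k)$. These are EDB facts in $B$, which are leaves of the DAG (Definition~\ref{def:canonical-dag}); the last sentence of Assumption~\ref{assump:alignment} says elements of $B$ need not be inferred, and in Datalog an EDB fact is never derived by a rule.

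\emph{Inductive step.} For $d+1>k$, the predicate $T_{d+1}$ is the head of exactly one rule, the extension rule with parameter $d$. The base rule is excluded by arity, and distinct relation symbols $T_d$ are indexed by arity. The unique instance has premises $T_d(i_1,\ldots,i_d)$ and $A(i_{d+1})$. The fact $A(i_{d+1})$ is a leaf. By the induction hypothesis, $T_d(i_1,\ldots,i_d)$ has a uniquely determined trace. Gluing the two gives a unique trace for $T_{d+1}(i_1,\ldots,i_{d+1})$. Derivability from $B$ is what guarantees that each $A(i_j)\in B$, so the trace actually exists.

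The main subtlety is pinning down what "derivation trace" means. With repeated coordinates, for example $i_1=i_2$, the same EDB fact appears several times, so the trace must be read as the tree of rule instances, or equivalently its collapse into $G(q,B)$; either way it is fixed. I would also state explicitly that each IDB predicate has exactly one defining rule. That fact, combined with faithfulness, is what excludes alternative derivations. Faithfulness alone is not enough in general, because two different faithful rules with the same head could still give two traces.
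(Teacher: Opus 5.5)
Your proposal is correct and follows the paper's argument: you check syntactically that every body variable occurs in the head, then induct on arity, with the head tuple fixing the body instantiation $(T_d(i_1,\ldots,i_d),\,A(i_{d+1}))$ and the induction hypothesis fixing the rest. Your explicit base case $d=k$ and your observation that each IDB predicate has exactly one defining rule make the argument more complete than the paper's sketch, which uses that uniqueness only implicitly here and names it later in Lemma~\ref{lem:kappa-minimal}.
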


\begin{proof}
Each rule places all body variables in the head.
Uniqueness follows by induction on~$d$: the head tuple
determines the body instantiation
$(T_{d-1}(i_1,\ldots,i_{d-1}),\,A(i_d))$, and the trace
of $T_{d-1}(\ldots)$ is unique by hypothesis.
\end{proof}

\begin{lemma}[Faithfulness of $\Pi_k^{\parallel}$]
\label{lem:merge-faithfulness}
Every rule of $\Pi_k^{\parallel}$ is faithful.
If $R_d(i_1,\ldots,i_{k\cdot 2^{d-1}})$ is derivable
from~$B$, its derivation trace is uniquely determined by
the tuple.
\end{lemma}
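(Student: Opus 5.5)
The plan mirrors the proof of Lemma~\ref{lem:faithfulness}: first check faithfulness of each rule syntactically, then establish uniqueness of the trace by induction on the stratum index~$d$.

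\emph{Faithfulness.} The base rule $R_1(x_1,\ldots,x_k)\leftarrow A(x_1),\ldots,A(x_k)$ has body variables $\{x_1,\ldots,x_k\}$, and all of them occur in the head. In the merge rule $R_{d+1}(\vec x,\vec y)\leftarrow R_d(\vec x),R_d(\vec y)$, the body variables are the coordinates of $\vec x$ and $\vec y$, and the head is their concatenation. So every body variable appears in the head, and Definition~\ref{def:faithful-rule} is satisfied.

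\emph{Unique trace, by induction on~$d$.}

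For $d=1$, the only rule with head predicate $R_1$ is the base rule. The head tuple $(i_1,\ldots,i_k)$ therefore fixes the body instantiation $A(i_1),\ldots,A(i_k)$. These are EDB facts in~$B$, so they are leaves and need no further derivation.

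For the inductive step, let $R_{d+1}(i_1,\ldots,i_{k2^d})$ be derivable. The only rule concluding $R_{d+1}$ is the merge rule, and its head is the concatenation $(\vec x,\vec y)$ with $|\vec x|=|\vec y|=k2^{d-1}$ fixed by the arity of~$R_d$. The split point is therefore forced:
\[
\vec x=(i_1,\ldots,i_{k2^{d-1}}),\qquad \vec y=(i_{k2^{d-1}+1},\ldots,i_{k2^d}).
\]
This determines both premises $R_d(\vec x)$ and $R_d(\vec y)$. Both must be derivable, since the conclusion is derivable only through this rule. By the induction hypothesis, each has a unique trace, and the trace of the conclusion is the merge step applied to these two subtraces.

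When $\vec x=\vec y$, the two premises coincide as facts. The trace is still unique, because the canonical DAG of Definition~\ref{def:canonical-dag} collapses the duplicated vertex, and the body instantiation as a multiset of atoms is still determined by the head.

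The one point needing care is that no other rule can produce the same head atom. This holds because the predicate symbols $R_d$ are distinct for distinct $d$, so the predicate symbol of the head identifies which rule fired. Combined with the fixed arities, which force the split point, this rules out any ambiguity, and the argument is otherwise a routine transcription of Lemma~\ref{lem:faithfulness}.
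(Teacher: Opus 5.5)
Your proposal is correct and follows essentially the same route as the paper: a syntactic check that every body variable of the base and merge rules occurs in the head, then induction on $d$ in which the head tuple, via the fixed arity $k\cdot 2^{d-1}$ of $R_d$, determines both body instantiations. Your additional remarks (the forced split point, the collapsed vertex when $\vec x=\vec y$, and the distinct predicate symbols $R_d$ identifying which rule fired) spell out details the paper leaves implicit.
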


\begin{proof}
Analogous to Lemma~\ref{lem:faithfulness}: the merge rule
places all body variables in the head, and uniqueness follows
by induction on~$d$ since each head tuple determines both
body instantiations.
\end{proof}

\begin{proposition}[Counting and structure for \(\Pi_k\)]
\label{prop:tuple-counting}
For \(\Pi_k\) with \(|B|=m\ge 2\) and depth \(d\ge 1\), define the depth-\(d\)
query set
\[
  \mathcal{Q}_d:=\{T_{k+d-1}(i_1,\ldots,i_{k+d-1}): (i_1,\ldots,i_{k+d-1})\in[m]^{k+d-1}\}.
\]
Then:
\begin{enumerate}[label=\textup{(\roman*)}]
\item \(|\mathcal{Q}_d|=m^{a}\) with head-arity \(a=a(\Pi_k,B,d)=k+d-1\).
\item Every derivation DAG \(G(q,B)\) is a path with \(w_\ell=1\).
\item The distinct dependency count satisfies \(\kappa(q,B)\le a\), with equality
whenever \(i_1,\ldots,i_a\) are pairwise distinct.
\end{enumerate}
\end{proposition}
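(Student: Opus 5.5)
The three parts follow directly from the definitions and Lemma~\ref{lem:faithfulness}; I will take them in order.

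For (i), I first check that every tuple in $[m]^{k+d-1}$ gives a derivable fact. Starting from $T_k(i_1,\ldots,i_k)$, which comes from the base rule with body $A(i_1),\ldots,A(i_k)\in B$, I apply the extension rule $d-1$ times, each time using $A(i_{j})\in B$. This gives $T_{k+d-1}(i_1,\ldots,i_{k+d-1})\in\Cn(B)$ for every tuple. Distinct tuples are distinct ground atoms, so $|\mathcal Q_d|=m^{k+d-1}$, and the head arity is $a(\Pi_k,d)=k+d-1$ by construction. I would also confirm that such a query has $\Dd(q\mid B)=d$ by a short induction: $\Dd(T_k(\cdot)\mid B)=1$ since all its predecessors lie in $B$, and each extension adds exactly one to the maximum predecessor depth. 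This justifies calling $\mathcal Q_d$ the depth-$d$ query set.

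For (ii), I use Assumption~\ref{assump:alignment} to identify $P_O(q)$ with the body of the unique rule instance concluding $q$. Uniqueness comes from Lemma~\ref{lem:faithfulness}. I then argue by induction on $j\ge k$ that the non-leaf vertices of $G(T_j(\vec i),B)$ are exactly $T_j(i_1..i_j),T_{j-1}(i_1..i_{j-1}),\ldots,T_k(i_1..i_k)$. Each has depth $j-k+1$, and each $T_{j}$ has exactly one non-base predecessor, namely $T_{j-1}$. Hence every stratum $V_\ell$ with $1\le\ell\le d$ contains exactly one IDB vertex, so $w_\ell=1$. The IDB spine is a path, and the leaves $A(i_j)$ hang off it, all sitting at depth~$0$. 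The main subtlety is that ``path'' must be read on the IDB spine, i.e.\ on the strata $\ell\ge1$. Under Definition~\ref{def:strata} the widths are counted only for $\ell\ge1$, so this reading is consistent, and I will state it explicitly.

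For (iii), the leaf set is $V_0=\{A(i_1),\ldots,A(i_a)\}$. This follows from the spine description: $T_k$ contributes $A(i_1),\ldots,A(i_k)$, and each extension step contributes $A(i_j)$. The leaf set is the image of the map $j\mapsto A(i_j)$ on $[a]$, so $\kappa=|\{i_1,\ldots,i_a\}|\le a$. When the indices are pairwise distinct the map is injective, which gives equality.
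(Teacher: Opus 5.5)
Your proof is correct and takes the same route as the paper. The paper's proof is a one-line version of yours: build the fact with one base-rule application followed by $d-1$ extension steps, then use the unique traces of Lemma~\ref{lem:faithfulness} to fix the structure of $G(q,B)$. Your remark that ``path'' must be read on the IDB spine (strata $\ell\ge 1$) is a correct clarification the paper leaves implicit: the full $G(q,B)$ also carries the $A(i_j)$ leaves at depth~$0$ hanging off the spine, so it is not literally a path.
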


\begin{proof}
A chain of $d$ steps builds $T_{k+d-1}$-tuples over
$[m]^{k+d-1}$; faithfulness
(Lemma~\ref{lem:faithfulness}) gives uniqueness and
minimality.
\end{proof}

\begin{lemma}[Distinct-coordinate fraction]
\label{lem:distinct-coordinates}
Let \(a\ge 1\) and draw \((I_1,\ldots,I_a)\) uniformly from \([m]^a\).
Then
\[
  \Pr[\exists\,r<s:\ I_r=I_s]\le \binom{a}{2}\frac{1}{m}.
\]
Hence if \(a=o(\sqrt m)\), the distinct-coordinate fraction tends to \(1\) as
\(m\to\infty\). In particular, for a uniformly random depth-\(d\) query from
\(\mathcal Q_d\) (chain) or \(\mathcal Q_d^{\parallel}\) (merge), we have
\(\kappa(q,B)=a(q)\) with probability \(1-o(1)\) whenever \(a(\Pi,d)=o(\sqrt m)\).
\end{lemma}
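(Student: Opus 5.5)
The plan is a birthday-type union bound, followed by a check that distinct coordinates force $\kappa(q,B)=a(q)$ in both architectures.

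\textbf{Collision bound.} For a fixed pair $r<s$, independence and uniformity of $I_r,I_s$ on $[m]$ give
\[
\Pr[I_r=I_s]=\sum_{j\in[m]}\Pr[I_r=j]\,\Pr[I_s=j]=m\cdot\frac{1}{m^2}=\frac1m .
\]
The event $\{\exists\,r<s: I_r=I_s\}$ is the union of these $\binom a2$ pairwise events. The union bound therefore gives the stated inequality $\binom a2/m$.

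\textbf{Asymptotic statement.} Since $\binom a2\le a^2/2$, the collision probability is at most $a^2/(2m)$. If $a=o(\sqrt m)$, then $a^2/m\to 0$. Hence the distinct-coordinate fraction, which is $1$ minus this probability, tends to $1$.

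\textbf{Transfer to queries.} For the chain, $\mathcal Q_d$ is in bijection with $[m]^{k+d-1}$ via the head tuple (Proposition~\ref{prop:tuple-counting}(i)). So a uniform query is exactly a uniform tuple with $a=a(\Pi_k,d)$. By Proposition~\ref{prop:tuple-counting}(iii), pairwise distinct coordinates give $\kappa(q,B)=a$.

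For the merge architecture, I would read $\mathcal Q_d^{\parallel}$ analogously as $\{R_d(\vec i):\vec i\in[m]^{k2^{d-1}}\}$, which makes the tuple again uniform. I then need the analogue of (iii), namely that the leaves of $G(q,B)$ are exactly $\{A(i_j)\}$. This follows from Lemma~\ref{lem:merge-faithfulness}. By induction on $d$, the unique trace unfolds $R_d(\vec x,\vec y)$ into $R_{d-1}(\vec x)$ and $R_{d-1}(\vec y)$, and finally into $R_1$ facts whose bodies are $A(i_j)$ for every coordinate $j$. The leaf set is therefore $\{A(i_j):j\in[a]\}$, whose cardinality is the number of distinct coordinate values. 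This equals $a$ exactly when the coordinates are pairwise distinct, which is Definition~\ref{def:dep-count}(iii). Combining this with the bound above yields $\kappa(q,B)=a(q)$ with probability $1-o(1)$ whenever $a(\Pi,d)=o(\sqrt m)$.

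\textbf{Main obstacle.} The only delicate step is that $\mathcal Q_d^{\parallel}$ is not formally defined before the statement. I would define it as above and verify the leaf-set identity via faithfulness. Everything else is routine.
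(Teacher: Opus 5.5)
Your proof is correct and follows the paper's argument: a union bound over the $\binom{a}{2}$ pairs with $\Pr[I_r=I_s]=1/m$, then $\binom{a}{2}/m\le a^2/(2m)\to 0$. Your explicit transfer to $\mathcal Q_d$ and $\mathcal Q_d^{\parallel}$ (a uniform query is a uniform head tuple, and the leaf set is $\{A(i_j)\}$ by faithful unfolding) is a valid elaboration of what the paper leaves implicit through Propositions~\ref{prop:tuple-counting}(iii) and~\ref{prop:merge-counting}(v).
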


\begin{proof}
Union bound over all \(\binom{a}{2}\) pairs: \(\Pr[I_r=I_s]=1/m\).
\end{proof}

\begin{definition}[Non-colliding query]
\label{def:non-colliding}
A query $q=R_d(\vec i)\in\mathcal{Q}_d^{\parallel}$ is
\emph{non-colliding} if at every merge step the two body
instantiations are distinct ground facts.
\end{definition}

\begin{lemma}[Non-colliding fraction]
\label{lem:non-colliding-fraction}
For $\Pi_k^{\parallel}$ with $|B|=m\ge 2$ and $d\ge 2$,
$|\mathcal{Q}_d^{\parallel}\setminus
\mathcal{Q}_d^{\mathrm{nc}}|
/|\mathcal{Q}_d^{\parallel}|\le 2^d/m^k$.
For $d\le c\log_2 m$ with $c<k$, the non-colliding fraction
tends to~$1$.
\end{lemma}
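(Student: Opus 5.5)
The plan is a union bound over the merge nodes of the derivation tree, done level by level. By Lemma~\ref{lem:merge-faithfulness}, a query $q=R_d(\vec i)$ with $\vec i\in[m]^{k\cdot 2^{d-1}}$ has a unique derivation trace. This trace is a complete binary tree of merge steps above $2^{d-1}$ leaves of type $R_1$, and each $R_1$ leaf is determined by a block of $k$ consecutive coordinates. A merge at level $j$ (producing an $R_j$ fact with $2\le j\le d$) combines two $R_{j-1}$ facts. Each of these is determined by a block of $k\cdot 2^{j-2}$ consecutive coordinates of $\vec i$, and the two blocks are disjoint and adjacent. The merge collides exactly when these two coordinate blocks are equal as tuples.

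With $\vec I$ uniform on $[m]^{k\cdot 2^{d-1}}$, the coordinates are i.i.d.\ uniform. For a level-$j$ merge node, the two blocks are disjoint, so they are independent uniform tuples in $[m]^{k\cdot 2^{j-2}}$. They therefore coincide with probability $m^{-k\cdot 2^{j-2}}\le m^{-k}$. There are $2^{d-j}$ merge nodes at level $j$, and $d-1$ levels in total, giving $2^{d-1}-1<2^d$ merge nodes overall. The union bound then yields
\[
\Pr[\text{colliding}]\le\sum_{j=2}^{d}2^{d-j}m^{-k\cdot 2^{j-2}}\le (2^{d-1}-1)\,m^{-k}\le 2^d/m^k ,
\]
which is the stated fraction bound because the distribution is uniform over $\mathcal Q_d^{\parallel}$. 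The same sum could be bounded more sharply, since it is dominated by the $j=2$ term $2^{d-2}m^{-k}$, but the crude bound suffices.

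For the asymptotic claim, take $d\le c\log_2 m$ with $c<k$. Then $2^d\le m^c$, so $2^d/m^k\le m^{c-k}\to0$ as $m\to\infty$. The non-colliding fraction therefore tends to $1$.

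The only step requiring care is the identification in the first paragraph. One must check that "two body instantiations are distinct ground facts" is equivalent to the two coordinate sub-blocks being unequal. This holds because ground facts of $R_{j-1}$ are equal iff their argument tuples are equal. One must also check that the sub-blocks are disjoint, which holds because the merge rule concatenates $\vec x,\vec y$ into the head; disjointness is what gives independence. I expect no real obstacle beyond writing down this indexing correctly.
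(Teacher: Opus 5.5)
Your proof is correct and follows the same route as the paper: a union bound over the $2^{d-1}-1$ merge nodes of the derivation tree, each of which collides with probability at most $m^{-k}$. You also spell out two steps the paper leaves implicit, namely the exact level-$j$ collision probability $m^{-k\cdot 2^{j-2}}$ and the asymptotic bound $2^d/m^k\le m^{c-k}\to 0$.
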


\begin{proof}
The derivation tree has $2^{d-1}-1$ merge nodes; at each,
collision probability is $\le m^{-k}$; apply the union
bound.
\end{proof}

\begin{proposition}[Counting and structure for
  $\Pi_k^{\parallel}$]
\label{prop:merge-counting}
For $\Pi_k^{\parallel}$ with $|B|=m$ and $d\ge 1$:
\textup{(i)}~$|\mathcal{Q}_d^{\parallel}|
=m^{k\cdot 2^{d-1}}$.
\textup{(ii)}~For non-colliding queries, $G(q,B)$ is a
complete binary tree of height~$d$ with
$w_\ell=2^{d-\ell}$.
\textup{(iii)}~For all queries, $w_\ell\le 2^{d-\ell}$.
\textup{(iv)}~For non-colliding queries,
$W_{\max}=2^{d-1}$ and $W=2^d-1$.
\textup{(v)}~\(\kappa(q,B)\le a\) where \(a=a(\Pi_k^{\parallel},B,d)=k\cdot 2^{d-1}\),
with equality if the head coordinates are pairwise distinct.
\end{proposition}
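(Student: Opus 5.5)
I will prove the five items of Proposition~\ref{prop:merge-counting} by induction on~$d$, combined with the uniqueness of derivation traces from Lemma~\ref{lem:merge-faithfulness}.

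\emph{Item (i).} First I establish by induction that every $R_d$-tuple over $[m]^{k\cdot 2^{d-1}}$ is derivable from~$B$ and that no other $R_d$-facts are. For $d=1$ the base rule applied to any $(x_1,\ldots,x_k)\in[m]^k$ yields $R_1(x_1,\ldots,x_k)$. For the step, any tuple of length $k\cdot 2^{d}$ splits uniquely as $(\vec x,\vec y)$ with both halves of length $k\cdot 2^{d-1}$. Both halves are derivable by hypothesis, so the merge rule fires. Conversely, only the merge rule produces $R_{d+1}$. Hence $\mathcal Q_d^{\parallel}$ is in bijection with $[m]^{k\cdot 2^{d-1}}$, which gives~(i).

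\emph{Items (ii)--(iv).} By Lemma~\ref{lem:merge-faithfulness}, the predecessor set $P_O(q)$ of a depth-$d$ fact is $\{R_{d-1}(\vec x),R_{d-1}(\vec y)\}$, via Assumption~\ref{assump:alignment}. For $R_1$ facts the predecessor set is the set of $A(x_j)$. I then show by induction that every $R_\ell$ fact has $\Dd=\ell$ relative to~$B$, using Definition~\ref{def:derivation-depth} and $\Dd(A(i)\mid B)=0$. It follows that the stratum $V_\ell$ for $\ell\ge1$ consists exactly of the distinct $R_\ell$ facts in $G(q,B)$.
\begin{itemize}
\item For~(iii): the unfolded tree has at most $2^{d-\ell}$ occurrences of $R_\ell$ nodes, obtained by counting halvings down from the root. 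Collapsing duplicates into the DAG can only reduce the count, so $w_\ell\le 2^{d-\ell}$.
\item For~(ii): in a non-colliding query the two children at every merge are distinct (Definition~\ref{def:non-colliding}). This alone only gives sibling distinctness. To get $w_\ell=2^{d-\ell}$ I need all $2^{d-\ell}$ subtuples at level~$\ell$ to be distinct. I will read ``non-colliding'' as requiring distinct ground facts throughout, equivalently no identification of nodes in the unfolding, which is the sense used in Lemma~\ref{lem:non-colliding-fraction}. Under that reading, collapsing the unfolding leaves a complete binary tree of height~$d$ above the leaf layer.
\item For~(iv): sum $W=\sum_{\ell=1}^d 2^{d-\ell}=2^d-1$, and the maximum width is attained at $\ell=1$, giving $W_{\max}=2^{d-1}$.
\end{itemize}

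\emph{Item (v).} The leaves of $G(q,B)$ are exactly the facts $A(i_j)$ for the coordinates $i_j$ of the head, since the $R_1$ nodes partition the head tuple into blocks of~$k$. Therefore $\kappa(q,B)=|\{i_1,\ldots,i_a\}|\le a$, with equality iff the coordinates are pairwise distinct. I will also note that pairwise-distinct coordinates imply non-colliding, so in that case (ii) applies as well.

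\emph{Main obstacle.} I expect the hardest step to be~(ii): reconciling the sibling-only wording of Definition~\ref{def:non-colliding} with the requirement that all same-level subtrees be distinct. I will handle it by making the global-distinctness interpretation explicit, or, as a fallback, by noting that the statement holds verbatim under the distinct-coordinate hypothesis.
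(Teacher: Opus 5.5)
The paper states this proposition without proof, so there is no argument to compare yours against. Your argument is the natural one and is sound for (i), (iii) and (v). It rests on three facts: every $R_\ell$ fact has $\Dd(\cdot\mid B)=\ell$; hence $V_\ell$ is the set of distinct $R_\ell$ facts among at most $2^{d-\ell}$ occurrences in the unfolding; and the leaves are the $A(i_j)$ for the head coordinates. For (iii), the bound has to be read for $1\le\ell\le d$, which your proof does implicitly: $w_0=\kappa$ can equal $k\cdot 2^{d-1}>2^d$ when $k\ge 3$.

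The obstacle you flag in (ii) is genuine, and it is a defect of the statement, not of your proof. With Definition~\ref{def:non-colliding} read literally (distinct bodies at each merge), (ii) and (iv) are false. Take $k=2$, $d=3$, $m\ge 6$, $q=R_3(1,2,3,4,1,2,5,6)$. The root's bodies satisfy $R_2(1,2,3,4)\ne R_2(1,2,5,6)$, and their bodies satisfy $R_1(1,2)\ne R_1(3,4)$ and $R_1(1,2)\ne R_1(5,6)$, so $q$ is non-colliding. Yet $R_1(1,2)$ has two parents, so $G(q,B)$ is not a tree, and $w_1=3$, $W=6$, $W_{\max}=3$. A stronger hypothesis is therefore necessary, and either of your repairs makes your proof of (ii) and (iv) go through.

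What does not hold is your justification that the global reading ``is the sense used in'' Lemma~\ref{lem:non-colliding-fraction}. That proof union-bounds over the $2^{d-1}-1$ merge nodes, each with sibling-collision probability at most $m^{-k}$. This is exactly the sibling-only reading. Under global distinctness of same-level IDB nodes, the union bound must run over pairs: $\sum_{\ell=1}^{d-1}\binom{2^{d-\ell}}{2}m^{-k2^{\ell-1}}=O(4^d/m^k)$, dominated by the birthday term at $\ell=1$. So adopting that reading is a redefinition. Under it, the lemma's bound $2^d/m^k$ must become $O(4^d/m^k)$, and its regime $c<k$ must shrink to $c<k/2$. This is not mere slack: for $m^{k/2}\ll 2^d\ll m^k$, a typical query is sibling-non-colliding but has repeated $R_1$ blocks.

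Your fallback, pairwise distinct head coordinates, needs no redefinition, and its typicality is already supplied by Lemma~\ref{lem:distinct-coordinates}. It holds typically only in the much smaller range $k\cdot 2^{d-1}=o(\sqrt m)$. In exchange, it is the condition under which $G(q,B)$ is a genuine tree including the leaf layer.
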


\begin{definition}[Quantitative genericity]
\label{def:quantitative-generic}
A query $q$ in $\mathcal{Q}$ with $|\mathcal{Q}|=M$ is
\emph{$\eta$-generic} if
$K(q\mid\langle B\rangle)\ge\log M-\log(1/\eta)$.
At least a $1{-}\eta$ fraction of $\mathcal{Q}$ is
$\eta$-generic.
\end{definition}

\begin{corollary}[Joint typicality]
\label{cor:joint-typicality}
The fraction of $\mathcal{Q}_d^{\parallel}$ simultaneously
non-colliding and $(1/m)$-generic
\textup{(Definition~\ref{def:quantitative-generic})} is
$\ge 1-1/m-2^d/m^k$, tending to~$1$ when
$d=o(k\log_2 m)$ and $m\to\infty$.
\end{corollary}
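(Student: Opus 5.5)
The plan is a union bound over the two complementary bad events inside $\mathcal{Q}_d^{\parallel}$. Write $\mathcal{Q}_d^{\mathrm{nc}}$ for the non-colliding queries and $\mathcal{G}$ for the $(1/m)$-generic ones. Then
\[
  \frac{|\mathcal{Q}_d^{\mathrm{nc}}\cap\mathcal{G}|}{|\mathcal{Q}_d^{\parallel}|}
  \;\ge\; 1-\frac{|\mathcal{Q}_d^{\parallel}\setminus\mathcal{Q}_d^{\mathrm{nc}}|}{|\mathcal{Q}_d^{\parallel}|}
  -\frac{|\mathcal{Q}_d^{\parallel}\setminus\mathcal{G}|}{|\mathcal{Q}_d^{\parallel}|},
\]
since the complement of an intersection is the union of the complements. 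It remains to bound the two subtracted fractions by $2^d/m^k$ and $1/m$.

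\textbf{Collision term.} By Lemma~\ref{lem:non-colliding-fraction}, the colliding fraction is at most $2^d/m^k$ for $d\ge2$. For $d=1$ there are no merge steps, so every query is non-colliding and the bound holds trivially.

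\textbf{Genericity term.} I apply Definition~\ref{def:quantitative-generic} with $\eta=1/m$ and $M=|\mathcal{Q}_d^{\parallel}|=m^{k2^{d-1}}$ (Proposition~\ref{prop:merge-counting}(i)). This gives a non-generic fraction of at most $1/m$. To be self-contained I would recall the standard counting argument. Non-generic queries satisfy $K(q\mid\langle B\rangle)<\log M-\log m$, and there are fewer than $2^{\log M-\log m}=M/m$ programs of that length. Each program describes at most one query, so at most $M/m$ queries are non-generic. Combining the two bounds yields $1-1/m-2^d/m^k$.

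\textbf{Limit.} I need $2^d/m^k\to0$ and $1/m\to0$; the second is immediate as $m\to\infty$. For the first, write $2^d/m^k=2^{\,d-k\log_2 m}$. If $d=o(k\log_2 m)$, then $d-k\log_2 m=-(1-o(1))k\log_2 m\to-\infty$, so this term vanishes.

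The only step needing care is the definitional one. The paper states the $1-\eta$ genericity fraction as part of the definition, and I would justify it by the incompressibility count above, which requires that $\log M$ be measured in bits consistently with $K$. No real obstacle is expected beyond checking the $d=1$ edge case of Lemma~\ref{lem:non-colliding-fraction}.
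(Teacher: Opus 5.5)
Your proof is correct and is the same union bound the corollary implicitly rests on (the paper gives no separate proof). It combines the colliding fraction $\le 2^d/m^k$ from Lemma~\ref{lem:non-colliding-fraction} with the non-generic fraction $\le 1/m$ from Definition~\ref{def:quantitative-generic}, and your $d=1$ check and the limit $2^{d-k\log_2 m}\to 0$ complete it.

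One tightening is worth making in the genericity count. The threshold $\log M-\log m=(k2^{d-1}-1)\log m$ need not be an integer, so counting programs by length only gives $2^{\lceil \log M-\log m\rceil}-1$. Instead, use Kraft's inequality for the prefix-free $K$: each non-generic $q$ has $2^{-K(q\mid\langle B\rangle)}>m/M$, and these weights sum to at most $1$. This gives fewer than $M/m$ non-generic queries exactly.
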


\subsection{Coding Theorems: Per-Step Information and Capacity
  Separation}
\label{subsec:coding-theorems}

\begin{definition}[Canonical self-delimiting encoding of a finite base]
\label{def:canonical-encoding}
Fix the canonical order on finite knowledge bases
(Assumption~\ref{assump:finite-so}).
For any finite premise base \(B=\{b_1,\ldots,b_m\}\subseteq\mathbb S_O\),
let \(\langle B\rangle\) denote a fixed \emph{prefix-free} binary encoding of the
ordered list \((\enc_O(b_1),\ldots,\enc_O(b_m))\), using a standard self-delimiting
scheme (e.g., length-prefixing each \(\enc_O(b_i)\) and then concatenating).
We analogously encode any finite noisy base \(\tilde B\) as \(\langle \tilde B\rangle\).

All Kolmogorov complexities \(K(\cdot\mid\cdot)\) in this paper are taken with
respect to a fixed universal prefix machine and the fixed encoding convention
\(\langle\cdot\rangle\).
\end{definition}

Throughout, $\log:=\log_2$ and $K(\cdot\mid\cdot)$ denotes
prefix-free conditional Kolmogorov complexity with respect to a
fixed universal machine~\cite{li2008introduction}.

\begin{definition}[Information-rich regime]
\label{def:info-rich}
A family of instances $(q,B)$ lies in the
\emph{information-rich regime} if
$K(q\mid\langle B\rangle)=\omega(\log m)$ as $m\to\infty$.
\end{definition}

\begin{lemma}[Chain-form encoding]
\label{lem:chain-encoding}
A chain-form trace of length $n\ge 1$ from~$B$ with
$|B|=m$ admits a prefix-free encoding of length
$\kappa\log m+O(n)$,
where $\kappa=k{+}n{-}1$ is the effective base-selection
count.
\end{lemma}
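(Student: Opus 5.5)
The encoding follows the shape of the chain trace. A chain-form trace of length $n$ in $\Pi_k$ is fixed by three things: its length $n$, the $k$ base facts consumed by the base rule $T_k(x_1,\ldots,x_k)\leftarrow A(x_1),\ldots,A(x_k)$, and the single base fact consumed at each of the $n-1$ extension steps $T_{d+1}(\vec x,x_{d+1})\leftarrow T_d(\vec x),A(x_{d+1})$. By Lemma~\ref{lem:faithfulness} the trace is uniquely determined by the head tuple $(i_1,\ldots,i_{k+n-1})\in[m]^{k+n-1}$. It therefore suffices to encode this tuple, together with enough side information to parse it, prefix-freely. The decoder holds $\langle B\rangle$, so it knows $m$ and the canonical order on $B$ (Definition~\ref{def:canonical-encoding}). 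It can therefore turn each index $i_j$ back into the fact $A(i_j)$ and then re-run the rules to reconstruct every intermediate node of the path-shaped DAG of Proposition~\ref{prop:tuple-counting}(ii).

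The encoder is the concatenation of three fields.
\begin{enumerate}[label=\textup{(\arabic*)}]
\item A self-delimiting code for $n$, for example the Elias-$\gamma$ code, of length $2\lfloor\log n\rfloor+1=O(n)$. The argument $k$ is a fixed program parameter and is absorbed into the machine; if it must be transmitted, it costs $O(\log k)$, which is a constant.
\item The rule label at each step. Since $r=|R|=2$, this takes one bit per step, so $n$ bits in total. For $\Pi_k$ the label is actually forced (base rule first, extension rule afterwards), but keeping the field shows that the bound holds for general chain-form traces, and it still costs only $O(n)$.
\item The tuple itself, written as one integer in $\{0,\ldots,m^{\kappa}-1\}$ in mixed radix $m$, using exactly $\lceil\kappa\log m\rceil\le\kappa\log m+1$ bits.
\end{enumerate}

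Prefix-freeness comes from the order of decoding. The decoder first reads the self-delimiting $n$, which fixes $\kappa=k+n-1$. It then knows the fixed length of field~(2), and, because it knows $m$ and $\kappa$, it also knows the fixed length $\lceil\kappa\log m\rceil$ of field~(3). So every field boundary is determined before that field is read. Adding up gives a total length of $\kappa\log m+O(n)+O(\log n)+O(1)=\kappa\log m+O(n)$, since $n\ge1$.

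The step I expect to need care is the alternative of encoding each base index separately. Doing that gives $\kappa\lceil\log m\rceil$, which can exceed $\kappa\log m$ by up to $\kappa$ bits. Because $\kappa=k+n-1$ and $k$ is constant, this is still $O(n)$, but I would use the joint mixed-radix encoding anyway, so that the overhead is a single bit and the $O(n)$ term is clearly independent of $m$. It also has to be checked that the coefficient of $\log m$ is exactly $\kappa$: the decoder's knowledge of $\langle B\rangle$ must not force us to spend extra $\log m$ bits on side information, and none is needed beyond $n$, whose encoding does not depend on $m$.
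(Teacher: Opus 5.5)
Your proposal is correct and follows essentially the same route as the paper, which encodes at each step a rule identifier ($\lceil\log r\rceil$ bits) and the base-premise pointers ($k_j\lceil\log m\rceil$ bits, summing to $\kappa\lceil\log m\rceil\le\kappa\log m+\kappa$), leaving the IDB premise implicit because the previous step determines it. Your two refinements are tidy but not essential. Joint mixed-radix packing cuts the rounding overhead to one bit. The self-delimiting header for $n$ fixes every field boundary in advance, which makes explicit the prefix-freeness that the paper's one-line proof leaves implicit.
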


\begin{proof}
Each step encodes a rule identifier
($\lceil\log r\rceil$ bits) and the base-premise pointers
($k_j\lceil\log m\rceil$ bits); the IDB pointer is
deterministic.
\end{proof}

\begin{theorem}[Tight coding theorem for chain derivations
  \textup{(CT1)}]
\label{thm:tight-datalog}
For $\Pi_k$ with $|B|=m\ge 2$, $q\in\mathcal{Q}_d$ with
$d\ge 1$, and $\kappa:=k{+}d{-}1$:
\begin{equation}\label{eq:tight-datalog-bounds}
  \kappa\log m-\log m-O(1)
  \;\le\;K(q\mid\langle B\rangle)
  \;\le\;\kappa\log m+O(d).
\end{equation}
The lower bound holds for at least a $1{-}1/m$ fraction.
For $(1/m)$-generic queries in the regime
$\log m,d\to\infty$:
$K(q\mid\langle B\rangle)=(1{+}o(1))\kappa\log m$,
with each derivation step contributing exactly $\log m$
bits.
\end{theorem}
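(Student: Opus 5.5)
The plan is to prove the upper bound by an explicit description and the lower bound by counting, then combine the two for generic queries.

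\emph{Upper bound.} I will exhibit a short program that, given $\langle B\rangle$, prints $q$. By faithfulness (Lemma~\ref{lem:faithfulness}), the head tuple $(i_1,\ldots,i_\kappa)\in[m]^\kappa$ with $\kappa=k+d-1$ determines the whole derivation trace. It therefore suffices to encode the index tuple, or equivalently the chain-form trace of Lemma~\ref{lem:chain-encoding} with $n=d$, which costs $\kappa\log m+O(d)$ bits. One point needs care: the index pointers must be prefix-free and cost $\kappa\log m$ rather than $\kappa\lceil\log m\rceil$. I will handle this by encoding the tuple as a single integer in $[m^\kappa]$, which takes $\lceil\kappa\log m\rceil$ bits; its length is computable from $m$, and $m$ is readable from $\langle B\rangle$. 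Specifying $d$ adds $O(\log d)\subseteq O(d)$ bits through a self-delimiting code, and $k$ is a constant. Together with the constant-size decoder this gives $K(q\mid\langle B\rangle)\le\kappa\log m+O(d)$.

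\emph{Lower bound.} I will use the standard incompressibility count. For fixed $\langle B\rangle$ and $d$, the map $q\mapsto(i_1,\ldots,i_\kappa)$ is a bijection by Proposition~\ref{prop:tuple-counting}(i), so $|\mathcal Q_d|=m^\kappa$. The number of programs of length less than $\kappa\log m-\log m$ is below $m^{\kappa}/m$. Hence at most a $1/m$ fraction of $\mathcal Q_d$ has $K(q\mid\langle B\rangle)<\kappa\log m-\log m$. This is exactly the $(1/m)$-genericity statement of Definition~\ref{def:quantitative-generic} with $M=m^\kappa$, and the $O(1)$ slack absorbs any encoding-convention overhead. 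So the lower bound holds on at least a $1-1/m$ fraction.

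\emph{Asymptotics.} For $(1/m)$-generic $q$, the two bounds give
\[
\kappa\log m\Bigl(1-\tfrac1\kappa\Bigr)-O(1)\le K(q\mid\langle B\rangle)\le\kappa\log m\Bigl(1+O\bigl(\tfrac{d}{\kappa\log m}\bigr)\Bigr).
\]
As $d\to\infty$ we have $\kappa\to\infty$ and $d/\kappa\le 1$. As $\log m\to\infty$, both relative error terms vanish, which yields $(1+o(1))\kappa\log m$. For the per-step reading, each extension rule appends exactly one fresh coordinate. Since $\kappa$ increases by one per step, the increments $K(q_{d+1}\mid\langle B\rangle)-K(q_d\mid\langle B\rangle)$ are $\log m$ on average, up to $o(\log m)$, along generic chains.

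\emph{Main obstacle.} The delicate step is the upper bound's additive term. Naive per-pointer rounding in Lemma~\ref{lem:chain-encoding} gives $\kappa\lceil\log m\rceil$, which can exceed $\kappa\log m+O(d)$ by up to $\kappa$ bits. This is harmless only because $k$ is a constant, so the excess $k+d-1$ is still $O(d)$. I will use the integer (arithmetic) encoding to sidestep this, and otherwise note that the excess is $O(\kappa)=O(d)$ for fixed $k$. A secondary point is that the phrase ``exactly $\log m$ bits per step'' should be interpreted asymptotically: it is the ratio $K/\kappa\to\log m$, not an exact per-step equality.
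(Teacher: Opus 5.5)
Your proposal is correct and follows the same route as the paper: the upper bound comes from the chain-form encoding of Lemma~\ref{lem:chain-encoding}, and the lower bound from comparing $|\mathcal{Q}_d|=m^{\kappa}$ with the fewer than $2^{\ell}$ descriptions of length below $\ell=\kappa\log m-\log m$. Your arithmetic encoding of the tuple and your self-delimiting header for $d$ just make explicit the rounding and prefix-freeness overheads that the paper silently absorbs into its $O(d)$ term, which is valid because $\kappa=k+d-1=O(d)$ for fixed $k$.
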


\begin{proof}
\emph{Upper bound.}\;
Lemma~\ref{lem:chain-encoding} gives
$K\le\kappa\log m+O(d)$.

\emph{Lower bound.}\;
Since $|\mathcal{Q}_d|=m^{\kappa}$
(Proposition~\ref{prop:tuple-counting}) and
$|\{x:K(x\mid y)<\ell\}|<2^\ell$, setting
$\ell=\kappa\log m-\log m$ shows that fewer than
$m^{\kappa-1}$ queries have complexity below~$\ell$,
giving a compressible fraction $<1/m$.
\end{proof}

\begin{assumption}[Serializable depth witnesses]
\label{assump:serializable}
$c_{\min}\,\Dd(q\mid B)\le N(q\mid B)
\le c_{\max}\,\Dd(q\mid B)$
for constants $c_{\min},c_{\max}>0$, where $N(q\mid B)$ is
the minimal derivation trace length.
This holds with $c_{\min}=c_{\max}=1$ for both $\Pi_k$ and
$\Pi_k^{\parallel}$.
\end{assumption}

\begin{theorem}[Derivation depth as information metric]
\label{thm:derivation-depth-info-metric}
Under Assumption~\textup{\ref{assump:serializable}}, for
generic queries with $\Dd(q\mid B)\ge 1$:
$K(q\mid\langle B\rangle)
=\Theta(\Dd(q\mid B)\cdot\log(m{+}\Dd(q\mid B)))$.
\end{theorem}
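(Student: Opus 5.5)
We prove $K(q\mid\langle B\rangle)=\Theta(d\log(m{+}d))$, where $d:=\Dd(q\mid B)\ge 1$, by sandwiching $K$ between an encoding length and a counting bound. The upper bound comes from describing a minimal derivation trace; the lower bound comes from the size of the query family at depth~$d$. Assumption~\ref{assump:serializable} converts trace length $N(q\mid B)$ into depth $d$ up to the constants $c_{\min},c_{\max}$. We carry out the argument for the two architectures $\Pi_k$ and $\Pi_k^{\parallel}$ of Section~\ref{subsec:architectures}, where the assumption holds with $c_{\min}=c_{\max}=1$. Throughout, $k$ is treated as a fixed constant.

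\emph{Upper bound.} Serialize a minimal trace of length $N\le c_{\max}d$. Each step is encoded by three pieces of information:
\begin{itemize}
\item a rule identifier, costing $O(1)$ bits since $r=2$;
\item at most $k$ base-premise pointers, costing $O(\log m)$ bits;
\item pointers to earlier trace lines (the IDB premises), costing $O(\log N)=O(\log d)$ bits.
\end{itemize}
This extends Lemma~\ref{lem:chain-encoding} to non-chain traces; for the chain the IDB pointer is even implicit. A prefix-free header recording $N$ adds $O(\log d)$ bits. Summing over the steps gives $K(q\mid\langle B\rangle)\le c_{\max}d\cdot O(\log m+\log d)=O(d\log(m{+}d))$, using $\log m+\log d\le 2\log(m{+}d)$.

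\emph{Lower bound.} For generic $q$ (Definition~\ref{def:quantitative-generic}, $\eta=1/m$), $K(q\mid\langle B\rangle)\ge\log|\mathcal Q_d|-\log m$. For $\Pi_k$, Theorem~\ref{thm:tight-datalog} gives $(k+d-2)\log m-O(1)$. Since $k\ge2$, this is at least $\tfrac12(k+d-1)\log m\ge\tfrac{d}{2}\log m$ once $m$ exceeds a constant. For $\Pi_k^{\parallel}$, Proposition~\ref{prop:merge-counting}(i) gives $\log|\mathcal Q_d^{\parallel}|=k2^{d-1}\log m$. Since $k2^{d-1}-1\ge d$ for $k\ge2$, we get $K\ge d\log m$, with room to spare.

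\emph{Main obstacle.} The remaining step, which we expect to be the main obstacle, is upgrading $\Omega(d\log m)$ to $\Omega(d\log(m{+}d))$ when $d\gg m$. Here $\log(m{+}d)\approx\log d$, and the counting bound over $[m]^{\kappa}$ only yields $d\log m$. We would handle it by cases.
\begin{itemize}
\item \textbf{Case $d\le m^{C}$ for a constant $C$.} Then $\log(m{+}d)=\Theta(\log m)$, and the bound above already suffices.
\item \textbf{Case $d$ superpolynomial in $m$.} Here we will argue that the generic-query statement must be read with the trace-pointer cost accounted for. Concretely, we intend to apply genericity over the enlarged family of queries together with their serialized traces, whose cardinality is $m^{\Theta(d)}\cdot d^{\Theta(d)}$ for general DAG-shaped traces. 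For the merge architecture, that family genuinely has $d^{\Theta(d)}$ wiring choices only if reuse is allowed. If this fails for the fixed programs, we would instead restrict the theorem's regime to $d\le\poly(m)$, where $\log(m{+}d)=\Theta(\log m)$ and the two bounds match.
\end{itemize}
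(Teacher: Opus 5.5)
For $\Pi_k$ with $d\le\poly(m)$, your sandwich is exactly the paper's argument: Lemma~\ref{lem:chain-encoding} for the upper bound, the counting bound of Theorem~\ref{thm:tight-datalog} for the lower bound, and $\kappa=k+d-1=\Theta(d)$. Both extensions you attempt beyond that fail.

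\textbf{The merge case.} It is not provable, and your own lower bound shows it. For a $(1/m)$-generic $q\in\mathcal Q_d^{\parallel}$ you obtain $K(q\mid\langle B\rangle)\ge(k2^{d-1}-1)\log m$, which is exponentially larger than $O(d\log(m{+}d))$ once $d$ grows. So ``room to spare'' is a contradiction, not a comfort. The step that breaks is your serialization. A trace of at most $d$ steps, each with at most $k$ base-premise pointers, references at most $kd$ base facts. A distinct-coordinate merge query, by contrast, has $k2^{d-1}$ leaves and $W=2^d-1$ rule applications (Proposition~\ref{prop:merge-counting}(iv)--(v)). Citing the parenthetical claim $c_{\max}=1$ for $\Pi_k^{\parallel}$ in Assumption~\ref{assump:serializable} does not help. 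If $N$ counts rule applications, it equals $2^d-1$. If it counts parallel rounds, a round is not a single step of cost $O(\log m+\log d)$. The true generic merge complexity is $\Theta(2^d\log m)=\Theta(W\log m)$, consistent with Theorem~\ref{thm:exponential-capacity}(IV). The paper's proof accordingly argues only for $\Pi_k$. You should restrict to $\Pi_k$, or to any architecture with $\kappa=\Theta(\Dd)$.

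\textbf{The case $d$ superpolynomial in $m$.} Your primary plan here cannot work, because the statement is false in that regime. By Lemma~\ref{lem:faithfulness}, the trace is a computable function of $q$, so enlarging the family to (query, trace) pairs adds no information. Writing the head tuple directly gives $K(q\mid\langle B\rangle)\le(k+d-1)\lceil\log m\rceil+O(\log d)=O(d\log m)$. This is $o(d\log(m{+}d))$ as soon as $\log d/\log m\to\infty$. Hence $\Dd=O(\poly(m))$ is a necessary hypothesis, not a fallback. It is precisely the condition the paper's proof uses to identify the per-step cost $\Theta(\log m)$ with $\Theta(\log(m{+}\Dd))$. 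With both restrictions made explicit, your proof goes through.
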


\begin{proof}
By Theorem~\ref{thm:tight-datalog},
$K=(1{+}o(1))\kappa\log m$ for generic queries.
Assumption~\ref{assump:serializable} gives
$N=\Theta(\Dd)$; for $\Pi_k$,
$\kappa=k{+}\Dd{-}1=\Theta(\Dd)$
and the encoding cost per step is
$\Theta(\log m)=\Theta(\log(m{+}\Dd))$ when
$\Dd=O(\poly(m))$.
\end{proof}

\begin{theorem}[Depth-reachability capacity separation
  \textup{(CT2)}]
\label{thm:exponential-capacity}
Define
$C(\Pi,B,d):=\log|\{q:\Dd(q\mid B)=d\}|$.
Then:
\begin{enumerate}[label=\textup{(\Roman*)}]
\item \emph{Chain.}\;
$C(\Pi_k,B,d)=(k{+}d{-}1)\log m$.

\item \emph{Merge.}\;
$C(\Pi_k^{\parallel},B,d)=k\cdot 2^{d-1}\log m$.

\item \emph{Exponential separation.}\;
$C(\Pi_k^{\parallel},B,d)/C(\Pi_k,B,d)
=k\cdot 2^{d-1}/(k{+}d{-}1)\to\infty$.

\item \emph{Conservation at equal work.}\;
Both architectures yield $\Theta(W\log m)$ bits for
total work~$W$; the gap is in information per unit of
\emph{depth}, not per unit of work.
\end{enumerate}
\end{theorem}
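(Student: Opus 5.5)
The plan is to reduce each part to an exact count of the depth-$d$ queries, using the counting propositions already established. For (I), I will identify $\{q:\Dd(q\mid B)=d\}$ under $\Pi_k$ with the set $\mathcal{Q}_d$ of Proposition~\ref{prop:tuple-counting}. Every base fact $A(i)$ has depth $0$. A fact $T_k(\vec x)$ has predecessors that all lie in $B$, so its depth is $1$. By induction on the extension rule, $T_{k+j}(\vec x)$ has depth $1+\max(\Dd(T_{k+j-1}),0)=j+1$. Faithfulness (Lemma~\ref{lem:faithfulness}) together with Assumption~\ref{assump:alignment} guarantees that $P_O$ is the unique body instantiation, so no alternative shorter derivation exists. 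Hence the depth-$d$ queries are exactly the ground atoms $T_{k+d-1}(\vec i)$ with $\vec i\in[m]^{k+d-1}$. This set has cardinality $m^{k+d-1}$ by Proposition~\ref{prop:tuple-counting}(i), and taking $\log$ gives $(k{+}d{-}1)\log m$.

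For (II), the same induction applies to the merge rule. $R_1$ has depth $1$, and $R_{d+1}(\vec x,\vec y)$ has depth $1+\max(\Dd(R_d(\vec x)),\Dd(R_d(\vec y)))=d+1$. This holds for colliding queries too, since $\vec x=\vec y$ does not change the max. Lemma~\ref{lem:merge-faithfulness} again pins down $P_O$. So the depth-$d$ set is exactly $\mathcal{Q}_d^{\parallel}$, whose size is $m^{k2^{d-1}}$ by Proposition~\ref{prop:merge-counting}(i). One point needs checking in both parts: no ground atom of a different relation has depth $d$. This holds because depth is a function of the relation index alone, namely $T_j\mapsto j-k+1$ and $R_j\mapsto j$. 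The EDB facts are excluded for $d\ge1$.

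Part (III) is the quotient of (I) and (II). The $\log m$ factors cancel, leaving $k2^{d-1}/(k+d-1)$. Since $k+d-1$ grows only linearly in $d$, the ratio is at least $k2^{d-1}/(k+d)\to\infty$ as $d\to\infty$ for fixed $k$.

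Part (IV) is the one step I expect to need care, because ``total work'' must be matched to the right architecture quantity. For $\Pi_k$, Proposition~\ref{prop:tuple-counting}(ii) gives $w_\ell=1$, so $W=d$. Then $C=(k+d-1)\log m=\Theta(W\log m)$ for fixed $k$ and $d\ge1$. For $\Pi_k^{\parallel}$, Proposition~\ref{prop:merge-counting}(iv) gives $W=2^d-1$ for non-colliding queries, which is the generic case by Lemma~\ref{lem:non-colliding-fraction}. Then $C=k2^{d-1}\log m$, and $k2^{d-1}/(2^d-1)\in[k/2,k]$, so $C=\Theta(W\log m)$ with constants depending only on $k$. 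Both architectures therefore carry $\Theta(\log m)$ bits per unit of work. The exponential gap in (III) thus comes entirely from $W$ being linear in $d$ for the chain and exponential in $d$ for the merge. This is the claim that the separation is per unit of depth, not per unit of work.
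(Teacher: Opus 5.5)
Your proposal is correct and follows the paper's proof essentially step for step. Parts (I) and (II) read off $|\mathcal{Q}_d|=m^{k+d-1}$ and $|\mathcal{Q}_d^{\parallel}|=m^{k\cdot 2^{d-1}}$ from the counting propositions, (III) is the quotient, and (IV) uses $W=d$ for the chain and $W=2^d-1$ for non-colliding merge queries. Your one addition is an explicit induction showing that $\{q:\Dd(q\mid B)=d\}$ is exactly $\mathcal{Q}_d$ (resp.\ $\mathcal{Q}_d^{\parallel}$), covering colliding queries and atoms of other relations, which the paper simply assumes when it calls these the depth-$d$ query sets.
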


\begin{proof}
\textbf{(I) Chain.}
By Proposition~\ref{prop:tuple-counting}(i), the depth-\(d\) head-arity is
\(a=k+d-1\) and \(|\mathcal Q_d|=m^{a}\). Hence
\(C(\Pi_k,B,d)=\log|\mathcal Q_d|=a\log m=(k+d-1)\log m\).

\textbf{(II) Merge.}
By Proposition~\ref{prop:merge-counting}(i), the depth-\(d\) head-arity is
\(a=k\cdot 2^{d-1}\) and \(|\mathcal Q_d^{\parallel}|=m^{a}\). Hence
\(C(\Pi_k^{\parallel},B,d)=a\log m=k\cdot 2^{d-1}\log m\).

\textbf{(III) Separation.}
Immediate from the ratio \(k\cdot 2^{d-1}/(k+d-1)\to\infty\) as \(d\to\infty\).

\textbf{(IV) Equal-work remark.}
For \(\Pi_k\), \(W=\Theta(d)\) and \(C=\Theta(W\log m)\).
For non-colliding \(\Pi_k^{\parallel}\), Proposition~\ref{prop:merge-counting}(iv)
gives \(W=2^d-1\) while \(C=\Theta(2^d\log m)=\Theta(W\log m)\).
\end{proof}

The derivation depth $\Dd(q\mid B)$ coincides with the
\emph{parallel evaluation depth}: the number of synchronous
rounds in bottom-up evaluation from~$B$.

\begin{proposition}[$\Dd$ equals parallel evaluation depth]
\label{prop:dd-parallel}
Under Assumption~\ref{assump:alignment}, the parallel
evaluation depth of~$q$ from~$B$ equals $\Dd(q\mid B)$.
\end{proposition}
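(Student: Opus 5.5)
We first make the notion of parallel evaluation precise, then prove the equality by strong induction on $\Dd$. We use synchronous bottom-up evaluation from $B$: $E_0:=B$, and $E_{t+1}:=E_t\cup\{s\notin B: P_O(s)\subseteq E_t\}$. Here $P_O(s)$ is, by Assumption~\ref{assump:alignment}, exactly the premise set of the normal-form single-step inference concluding $s$, so each round fires every rule instance whose body is already available. The parallel evaluation depth of $q$ is $t(q):=\min\{t: q\in E_t\}$. This minimum exists for every $q\in\Cn(B)$ whose canonical DAG $G(q,B)$ is finite, by Proposition~\ref{prop:no-cycle-dag} and Theorem~\ref{thm:computability-Dd}. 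The claim is $t(s)=\Dd(s\mid B)$ for every vertex $s$ of $G(q,B)$, and in particular for $q$.

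The induction runs over the finite DAG in order of increasing $\Dd$, which is legitimate by Theorem~\ref{thm:computability-Dd}.
\begin{itemize}
\item \emph{Base case.} If $\Dd(s\mid B)=0$ then $s\in B=E_0$, so $t(s)=0$.
\item \emph{Inductive step, $t(s)\le\Dd(s\mid B)$.} Take $s\notin B$ with $\Dd(s\mid B)=1+\max_{s'\in P_O(s)}\Dd(s'\mid B)=:1+D$. By the induction hypothesis every predecessor lies in $E_{D}$. Hence $s\in E_{D+1}$ by the update rule.
\item \emph{Inductive step, $t(s)\ge\Dd(s\mid B)$.} Suppose $s\in E_t$ with $t\ge1$ and $s\notin B$. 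Since $E_t\setminus E_{t-1}$ contains only non-base facts whose predecessors all lie in $E_{t-1}$, we get $P_O(s)\subseteq E_{t-1}$. Each $s'\in P_O(s)$ therefore has $t(s')\le t-1$, and by induction $\Dd(s'\mid B)\le t-1$. So $\Dd(s\mid B)\le t$.
\end{itemize}
The empty-predecessor case, where $\max\varnothing=0$, gives $\Dd=1$ and firing in round $1$, which is consistent.

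The one point needing care is the lower-bound direction. Parallel evaluation might derive $s$ by a different inference than the one encoded in $P_O(s)$, which would make $t(s)$ smaller than $\Dd$. This is exactly what Assumption~\ref{assump:alignment} rules out. Because we fix the normal-form inference, the only route to $s$ in the evaluation is through $P_O(s)$, so the recursion for $t$ mirrors \eqref{eq:Dd-inductive}. For the concrete programs $\Pi_k$ and $\Pi_k^{\parallel}$ this is also immediate from the uniqueness of traces (Lemmas~\ref{lem:faithfulness} and~\ref{lem:merge-faithfulness}). I would note explicitly that without alignment only the inequality $t(q)\le\Dd(q\mid B)$ survives.
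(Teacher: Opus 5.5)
Your proof is correct and follows the paper's argument. The paper observes that the round index under synchronous bottom-up evaluation and $\Dd(\cdot\mid B)$ satisfy the same recurrence \eqref{eq:Dd-inductive}, whose solution is unique. Your explicit definition of $E_t$ and your two-sided induction on $\Dd$ write out that uniqueness argument in full.
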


\begin{proof}
Both quantities satisfy the same recurrence: a fact in~$B$ is
evaluated in round~$0$, and a non-base fact is evaluated in the
round after all its predecessors are available.
The unique solution is~\eqref{eq:Dd-inductive}.
\end{proof}

\subsection{Stratified Storage--Depth Tradeoff}
\label{subsec:stratified-tradeoff}

\begin{definition}[Layer-prefix caching]
\label{def:layer-prefix-cache}
For $G=G(q,B)$ with $d=\Dd(q\mid B)$ and
$\ell_0\in\{0,\ldots,d\}$:
$S_{\ell_0}:=\{v\in V(G):1\le\Dd(v\mid B)\le\ell_0\}$.
\end{definition}

\begin{proposition}[Depth reduction]
\label{prop:layer-prefix-depth}
$\Dd(q\mid B\cup S_{\ell_0})=d-\ell_0$.
\end{proposition}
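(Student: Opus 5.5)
The approach is to prove the identity $\Dd(v\mid B\cup S_{\ell_0})=\max\{0,\Dd(v\mid B)-\ell_0\}$ for every $v\in V(G)$, by strong induction on $\Dd(v\mid B)$, and then specialise to $v=q$. Since $\ell_0\le d$, this gives $\Dd(q\mid B\cup S_{\ell_0})=d-\ell_0$. Throughout, the predecessor operator $P_O$ is fixed and does not depend on the base. Enlarging the base from $B$ to $B':=B\cup S_{\ell_0}$ changes only which vertices are treated as leaves in recursion~\eqref{eq:Dd-inductive}. Here I read Assumption~\ref{assump:alignment} as holding for $B'$ as well, since it is a statement about $P_O$ on non-base states. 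By Theorem~\ref{thm:computability-Dd}, both depths are well-defined finite integers on the finite DAG of Proposition~\ref{prop:no-cycle-dag}, so induction over that DAG is legitimate.

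\emph{Case $\Dd(v\mid B)\le\ell_0$.} If $\Dd(v\mid B)=0$, then $v\in B$ (a vertex of positive depth cannot be in $B$, and a non-base vertex has depth at least $1$). If $1\le\Dd(v\mid B)\le\ell_0$, then $v\in S_{\ell_0}$ by Definition~\ref{def:layer-prefix-cache}. In both subcases $v\in B'$, so $\Dd(v\mid B')=0$, as claimed.

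\emph{Case $\Dd(v\mid B)>\ell_0$.} Then $v\notin B'$. Indeed, $v\notin B$ because its $B$-depth is positive, and $v\notin S_{\ell_0}$ by the depth cutoff. Hence $\Dd(v\mid B')=1+\max_{v'\in P_O(v)}\Dd(v'\mid B')$. Each predecessor $v'$ lies in $V(G)$ by closure and has strictly smaller $B$-depth, so the induction hypothesis gives $\Dd(v'\mid B')=\max\{0,\Dd(v'\mid B)-\ell_0\}$. The function $x\mapsto\max\{0,x-\ell_0\}$ is monotone, so the maximum over predecessors is $\max\{0,(\Dd(v\mid B)-1)-\ell_0\}$. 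Since $\Dd(v\mid B)-1\ge\ell_0$, this equals $\Dd(v\mid B)-1-\ell_0$, and adding $1$ gives $\Dd(v\mid B)-\ell_0$.

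The step needing most care is the bookkeeping in the second case. One must check that the maximizing predecessor under $B$ is also maximizing under $B'$, which the monotone truncation handles cleanly. One must also check that $P_O(v)\neq\varnothing$ whenever $v\notin B$ has depth at least $2$; this is automatic, because $\max\varnothing=0$ would force depth $1$. The boundary values $\ell_0=0$ (where $S_0=\varnothing$) and $\ell_0=d$ (where $q\in S_d$, giving depth $0$) follow from the same identity.
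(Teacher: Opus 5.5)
Your proof is correct. The paper states Proposition~\ref{prop:layer-prefix-depth} without a proof, so there is no argument of its own to compare against, and yours fills that gap. You strengthen the claim to the truncated-shift identity $\Dd(v\mid B\cup S_{\ell_0})=\max\{0,\Dd(v\mid B)-\ell_0\}$ for every $v\in V(G)$ and prove it by strong induction on $\Dd(v\mid B)$. This is the same well-founded-induction pattern the paper uses for the upper bound in Theorem~\ref{thm:loss-comp-duality}(I). There, however, it yields only an inequality; your truncation $\max\{0,\cdot\}$ is what makes the exact identity go through. It lets predecessors that have become leaves and predecessors that remain internal be treated uniformly.

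Two minor remarks:
\begin{itemize}
\item You do not need to read Assumption~\ref{assump:alignment} as extending to $B\cup S_{\ell_0}$. Since $B\subseteq B\cup S_{\ell_0}$, every state outside the enlarged base is already outside $B$, so the assumption for $B$ covers it. In any case, the recursion~\eqref{eq:Dd-inductive} uses only the base-independent operator $P_O$.
\item In your boundary remark, when $d=0$ the case $\ell_0=d$ has $S_0=\varnothing$. There $q$ has depth $0$ because $q\in B$, not because $q\in S_d$. Your general identity already covers this case.
\end{itemize}
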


\begin{theorem}[Stratified Pareto curve]
\label{thm:stratified-tradeoff}
Let $r:=|R|$ be the rule count and
$M_\ell:=m+\sum_{j=1}^{\ell-1}w_j$ the cumulative pool
size at layer~$\ell$.
Define the per-layer information
$I_\ell:=w_\ell(k\log M_\ell+\log r)$ and the cumulative
information $I(\ell_0):=\sum_{\ell=1}^{\ell_0}I_\ell$.
The layer-prefix strategy with parameter $\ell_0$ achieves
storage
$\sigma(\ell_0)\le I(\ell_0)+O(W+d\log W_{\max})$
and residual depth $d{-}\ell_0$.
For $\Pi_k$ the tradeoff is linear
\textup{(}$\Theta(\log m)$ bits per depth unit\textup{)};
for $\Pi_k^{\parallel}$ it is exponentially front-loaded
\textup{(}caching layer~$1$ costs $\Theta(K/2)$ but
reduces depth by only~$1$\textup{)}.
\end{theorem}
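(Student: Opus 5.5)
Our plan is to prove the statement by writing down an explicit encoding of the cached set $S_{\ell_0}$ and bounding its length layer by layer. The residual-depth claim is then Proposition~\ref{prop:layer-prefix-depth}. The two architecture-specific remarks follow by evaluating $I(\ell_0)$ with the width profiles from Propositions~\ref{prop:tuple-counting} and~\ref{prop:merge-counting}.

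\textbf{Encoding.} We fix an order on $V(G)$ that is compatible with the strata $V_1,\ldots,V_{\ell_0}$, the order a bottom-up evaluation would use (Proposition~\ref{prop:dd-parallel}).
\begin{itemize}
\item The encoder first records the width profile $(w_1,\ldots,w_{\ell_0})$ in self-delimiting form. This costs $O(d\log W_{\max})$ bits.
\item For each vertex $v\in V_\ell$ it then records a rule identifier ($\lceil\log r\rceil$ bits) and, for each of at most $k$ body atoms, a pointer into the pool of facts available before layer $\ell$. That pool is $B\cup V_1\cup\cdots\cup V_{\ell-1}$, of size $M_\ell$, so each pointer costs $\lceil\log M_\ell\rceil$ bits.
\end{itemize}
Since the rules are faithful (Lemmas~\ref{lem:faithfulness}--\ref{lem:merge-faithfulness}), the rule choice together with the body instantiation determines the head fact. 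Hence the decoder, given $\langle B\rangle$, reconstructs each layer in turn and recovers $S_{\ell_0}$ exactly.

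\textbf{Counting bits.} The total length is $\sum_{\ell\le\ell_0} w_\ell\bigl(k\lceil\log M_\ell\rceil+\lceil\log r\rceil\bigr)+O(d\log W_{\max})$. Ceiling losses contribute at most $(k+1)$ bits per vertex, which sums to $O(W)$ because $k$ is fixed. This gives $\sigma(\ell_0)\le I(\ell_0)+O(W+d\log W_{\max})$; prefix-freeness adds only $O(1)$.

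\textbf{Architecture specializations.}
\begin{itemize}
\item For $\Pi_k$ we have $w_\ell=1$, so $M_\ell=m+\ell-1$. Each layer then costs $k\log(m+\ell)+O(1)$ under the naive bound. The extension rule, however, uses only one fresh base pointer plus a deterministic IDB pointer, exactly as in Lemma~\ref{lem:chain-encoding}. Refining the per-layer count in this way gives $\Theta(\log m)$ bits per depth unit when $d=\poly(m)$, i.e.\ a linear tradeoff.
\item For non-colliding $\Pi_k^{\parallel}$ we have $w_\ell=2^{d-\ell}$. Layer~1 holds $2^{d-1}$ base-rule facts, each carrying $k\log m$ bits. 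So $I_1\approx k2^{d-1}\log m$, about half of $K\approx(2^d-1)\cdot(\text{per-node cost})$. Caching it is therefore $\Theta(K/2)$, yet it reduces depth by only one.
\end{itemize}

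\textbf{Main obstacle.} The delicate step is reconciling the generic bound $I_\ell=w_\ell(k\log M_\ell+\log r)$ with the claimed per-architecture rates. For the chain, the generic $k\log M_\ell$ overcounts, since only one new coordinate is introduced per step. We would handle this by noting that the bound in the statement is an upper bound, and establishing the $\Theta(\log m)$ rate separately. The upper half comes from the refined pointer count; the lower half comes from the counting argument of Theorem~\ref{thm:tight-datalog}, applied to the prefix of length $k+\ell_0-1$. Similarly, for the merge case, the $\Theta(K/2)$ claim needs $K$ from CT2 ($K\approx k2^{d-1}\log m$ for generic queries). If $K$ is read as total trace information, the correct comparison is layer-1 cost versus the total $\sum_\ell I_\ell$. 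We would state that interpretation explicitly and verify the ratio via the geometric decay of $w_\ell$.
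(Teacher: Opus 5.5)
The paper states this theorem (and Proposition~\ref{prop:layer-prefix-depth}) without proof. Its definition $I_\ell=w_\ell(k\log M_\ell+\log r)$ is precisely the length of your scheme: one rule identifier plus at most $k$ pointers into the pool $B\cup V_1\cup\cdots\cup V_{\ell-1}$ for each node of $V_\ell$, decoded stratum by stratum. So your encoding is the intended argument, and it correctly yields $\sigma(\ell_0)\le I(\ell_0)+O(W+d\log W_{\max})$.

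For the residual depth you may cite Proposition~\ref{prop:layer-prefix-depth}. Since it is unproved, it is worth adding the two-line induction on $\Dd(v\mid B)$ showing $\Dd(v\mid B\cup S_{\ell_0})=\max\{\Dd(v\mid B)-\ell_0,0\}$ for all $v\in V(G)$.

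Two small repairs are needed. First, the fact that a rule plus its body instantiation determines the head is not faithfulness. Definition~\ref{def:faithful-rule} gives the converse (head $\Rightarrow$ body). What you need is range-restriction, which holds here because every rule of $\Pi_k$ and $\Pi_k^{\parallel}$ has the same variables in head and body. Second, the width profile costs $O(d\log W_{\max}+d)$, not $O(d\log W_{\max})$, which is $0$ for the chain. The extra $O(d)$ is absorbed by $O(W)$ since $W\ge d$.

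\textbf{Chain.} The step you call delicate is not. With $k$ fixed, $I_\ell=k\log(m+\ell-1)+1=\Theta(\log m)$ whenever $d=\poly(m)$, so linearity reads directly off the generic formula. Your one-fresh-pointer refinement only sharpens the slope to $\log m$. A lower bound is immediate for the scheme's own length. Your prefix-counting argument gives an information-theoretic lower bound for all but a $1/m$ fraction of queries, because each prefix has the same number $m^{d-\ell_0}$ of extensions.

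\textbf{Merge.} You should resolve your hedge explicitly rather than leave both readings open.
\begin{itemize}
\item Under the CT2 reading, $K=K(q\mid\langle B\rangle)=(1+o(1))k2^{d-1}\log m$ for generic $q$. Then layer~$1$ alone costs $I_1=2^{d-1}(k\log m+1)=(1+o(1))K$, not $K/2$, because the $2^{d-1}$ layer-$1$ facts jointly contain all $k2^{d-1}$ head coordinates of $q$. The literal claim survives only because $\Theta(K/2)=\Theta(K)$.
\item The factor $\tfrac12$ is meaningful only for $K:=I(d)$, and then the geometric-decay computation must track the growing pool. If $2^{d}=m^{c+o(1)}$ with $c>0$ fixed, then $\log M_\ell=(\max\{c,1\}+o(1))\log m$ for $\ell\ge2$. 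Hence $I_1/I(d)\to 1/(1+\max\{c,1\})$. This equals $\tfrac12$ only for $c\le 1$, and is merely $\Theta(1)$ over the non-colliding range $c<k$ of Lemma~\ref{lem:non-colliding-fraction}.
\end{itemize}
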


\subsection{Storage--Computation Tradeoff}
\label{subsec:tradeoff}

\begin{definition}[Cost model]
\label{def:cost-model}
Storage costs $\rho_s:=\alpha/\beta$ per bit (normalized);
computation costs $1$ per derivation step.
For query~$q$ accessed $f_q$ times over a horizon:
caching costs $\rho_s\cdot\ell_q/f_q+O(1)$ per access,
where $\ell_q=K(q\mid\langle B\rangle)+O(1)$;
on-demand derivation costs $\Dd(q\mid B)$ per access.
\end{definition}

\begin{theorem}[Critical frequency \textup{(CT3)}]
\label{thm:freq_tradeoff}
Under Assumption~\textup{\ref{assump:serializable}} and the
information-rich regime, the break-even frequency is
\begin{equation}\label{eq:critical-freq}
  f_c=\Theta(\rho_s\cdot\log(m{+}d)).
\end{equation}
For $f_q\gg f_c$, caching is optimal; for $f_q\ll f_c$,
on-demand derivation is optimal.
\end{theorem}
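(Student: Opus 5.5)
The plan is to compare the two per-access costs of Definition~\ref{def:cost-model} directly and solve for the frequency at which they coincide. Caching costs $\rho_s\ell_q/f_q+O(1)$ per access and on-demand derivation costs $\Dd(q\mid B)$ per access. Setting these equal gives
\[
  f_c\;=\;\frac{\rho_s\,\ell_q}{\Dd(q\mid B)-O(1)}.
\]
Everything then reduces to estimating the ratio $\ell_q/\Dd(q\mid B)$. Here $\ell_q=K(q\mid\langle B\rangle)+O(1)$.

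\emph{Step 1: two-sided estimate of $\ell_q$.} In the information-rich regime (Definition~\ref{def:info-rich}) and under Assumption~\ref{assump:serializable}, Theorem~\ref{thm:derivation-depth-info-metric} gives
\[
  K(q\mid\langle B\rangle)=\Theta\bigl(\Dd(q\mid B)\log(m{+}\Dd(q\mid B))\bigr).
\]
The upper bound will come from the chain-form encoding (Lemma~\ref{lem:chain-encoding}), extended to general serializable traces. That extension uses $N(q\mid B)\le c_{\max}\Dd$ steps, each costing $O(\log r+k\log(m+N))$ bits for the rule identifier and the pointers into the current pool. The lower bound will come from the counting/genericity argument of Theorem~\ref{thm:tight-datalog}. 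The information-rich hypothesis $K=\omega(\log m)$ guarantees that the additive $O(1)$ and $O(\log m)$ slack terms are absorbed.

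\emph{Step 2: the threshold and the two regimes.} Dividing by $\Dd=d$ yields $\ell_q/d=\Theta(\log(m+d))$, and hence $f_c=\Theta(\rho_s\log(m+d))$. The information-rich regime forces $d\ge 1$ and in fact $d\to\infty$ or $\kappa\log m=\omega(\log m)$; this ensures the $O(1)$ overheads in both costs are lower order relative to $d$. For $f_q\gg f_c$, the amortized caching cost $\rho_s\ell_q/f_q=o(d)$ is beaten by nothing, so caching wins. For $f_q\ll f_c$, the same term is $\omega(d)$, so derivation wins. Each direction is a one-line inequality once Step~1 is in place.

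\emph{Main obstacle.} I expect the real difficulty to be the lower bound in Step~1 for general architectures, that is, justifying $K=\Omega(d\log(m+d))$ rather than merely $\Omega(d\log m)$, and handling the merge architecture. For $\Pi_k^{\parallel}$, CT2 gives $K\approx k2^{d-1}\log m$, which is not $\Theta(d\log m)$. There Assumption~\ref{assump:serializable} must be read with $N$ as the serialized trace length. Under that reading the statement holds with $d$ interpreted as serialized work; otherwise it holds only for the chain architecture.

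I would first establish the result cleanly for $\Pi_k$, where $\kappa=\Theta(d)$ and $\log m=\Theta(\log(m+d))$ for $d=\poly(m)$. I would then state the merge case as a remark in which $\log(m+d)$ is replaced by $\ell_q/\Dd$.
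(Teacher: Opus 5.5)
This matches the argument the paper intends, though the paper states CT3 without a written proof: equate the per-access costs of Definition~\ref{def:cost-model} to get $f_c=\rho_s\ell_q/(\Dd(q\mid B)-O(1))$, then substitute Theorem~\ref{thm:derivation-depth-info-metric}, whose own proof covers only $\Pi_k$ for generic $q$ with $\Dd=O(\poly(m))$---the same hypotheses you make explicit. Your caveat about $\Pi_k^{\parallel}$ is correct and goes beyond the paper: a generic merge query has $K(q\mid\langle B\rangle)\approx k2^{d-1}\log m$ while $\Dd=d$, since its minimal serialized trace has $2^d-1$ steps, so the paper's claim that Assumption~\ref{assump:serializable} holds with $c_{\min}=c_{\max}=1$ for the merge program is false, and restricting the theorem to the chain is the right scope.
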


\subsection{Representation Invariance}
\label{subsec:rep-inv}

A foundational requirement is that the coding theorems measure
information content rather than representation artifacts.

\begin{definition}[Inference system and aligned
  bi-interpretation]
\label{def:inference-system}
An \emph{inference system}
$\mathfrak{I}:=(S_O,P_O,B)$ pairs a state set with a
predecessor operator and a base.
An \emph{aligned bi-interpretation}
$\Phi=(\tau,\bar\tau):
\mathfrak{I}_1\rightleftharpoons\mathfrak{I}_2$
consists of compositional interpretations witnessing
synonymy $S_O^{(1)}\equiv_{\mathcal L}S_O^{(2)}$,
additionally satisfying base fidelity
\textup{(}$|B_1|=|B_2|=m$\textup{)}, query
correspondence, and derivation compatibility with
dilation $(\delta_\tau,\delta_{\bar\tau})$.
When $\delta_\tau=\delta_{\bar\tau}=1$, $\Phi$ is
\emph{isometric}.
\end{definition}

\begin{theorem}[Representation invariance]
\label{thm:rep-inv}
Under an aligned bi-interpretation with dilation
$(\delta_\tau,\bar\delta_\tau)$:
\begin{enumerate}[label=\textup{(\alph*)}]
\item \emph{Depth sandwich.}\;
$\Dd_{\Pi_1}(q\mid B_1)/\bar\delta_\tau
\le\Dd_{\Pi_2}(q^\tau\mid B_2)
\le\delta_\tau\,\Dd_{\Pi_1}(q\mid B_1)$.

\item \emph{Complexity invariance.}\;
$|K(q\mid\langle B_1\rangle)
-K(q^\tau\mid\langle B_2\rangle)|
\le O(|\tau|{+}|\bar\tau|)$.

\item \emph{Isometric sharpening.}\;
$\delta_\tau=\bar\delta_\tau=1$ gives exact depth equality.
\end{enumerate}
\end{theorem}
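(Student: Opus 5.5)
We prove the three parts in the order (a), (c), (b). Part (c) follows immediately from (a) by setting $\delta_\tau=\bar\delta_\tau=1$. The main tool is the derivation-compatibility clause of the aligned bi-interpretation (Definition~\ref{def:inference-system}). We read it as two guarantees. First, every single inference step of $\mathfrak I_1$ concluding $s$ from $P_O^{(1)}(s)$ is simulated by an $\mathfrak I_2$-derivation of $s^\tau$ from $\{t^\tau:t\in P_O^{(1)}(s)\}$ whose depth is at most $\delta_\tau$. Second, $\bar\tau$ simulates $\mathfrak I_2$-steps in $\mathfrak I_1$ with depth at most $\bar\delta_\tau$. Base fidelity is taken to mean that $\tau$ maps $B_1$ bijectively onto $B_2$, and query correspondence that $\bar\tau(q^\tau)=q$.

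\emph{Depth sandwich (a).} We induct on the canonical derivation DAG $G(q,B_1)$, which is finite by Proposition~\ref{prop:no-cycle-dag}. By Proposition~\ref{prop:dd-parallel}, $\Dd$ equals the number of synchronous parallel rounds, so it suffices to show that round $\ell$ of bottom-up evaluation in $\mathfrak I_1$ can be carried out in at most $\delta_\tau$ rounds of $\mathfrak I_2$.
\begin{itemize}
\item Base case: for $v\in V_0(G)$ we have $v\in B_1$, so $v^\tau\in B_2$ and $\Dd_{\Pi_2}(v^\tau\mid B_2)=0$.
\item Inductive step: for $v$ with $\Dd(v\mid B_1)=\ell$, the images of all predecessors of $v$ are available in $\mathfrak I_2$ by round $\delta_\tau(\ell-1)$, by the inductive hypothesis applied along the recurrence~\eqref{eq:Dd-inductive}. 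The simulated step then adds at most $\delta_\tau$ further rounds.
\end{itemize}
This gives $\Dd_{\Pi_2}(q^\tau\mid B_2)\le\delta_\tau\Dd_{\Pi_1}(q\mid B_1)$. Running the same argument with $\bar\tau$ from $\mathfrak I_2$ back to $\mathfrak I_1$, and using $\bar\tau(q^\tau)=q$, gives $\Dd_{\Pi_1}(q\mid B_1)\le\bar\delta_\tau\Dd_{\Pi_2}(q^\tau\mid B_2)$, which rearranges to the lower bound.

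\emph{Complexity invariance (b).} This is a standard translation argument. Because $\tau$ is compositional, $q\mapsto q^\tau$ is computed by a fixed program of length $O(|\tau|)$. Because base fidelity is witnessed effectively, $\langle B_2\rangle$ is computed from $\langle B_1\rangle$ by a program of length $O(|\tau|)$, using the canonical encodings of Definition~\ref{def:canonical-encoding}. Compose these with a shortest program for $q$ given $\langle B_1\rangle$. The composite receives $\langle B_2\rangle$ as its condition, so it must first recover $\langle B_1\rangle$ from $\langle B_2\rangle$ via $\bar\tau$, which costs $O(|\bar\tau|)$. It then outputs $q^\tau$. This yields
\[
K(q^\tau\mid\langle B_2\rangle)\le K(q\mid\langle B_1\rangle)+O(|\tau|+|\bar\tau|).
\]
The symmetric argument with the roles of $\tau$ and $\bar\tau$ exchanged gives the reverse inequality. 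The additive constants arise from the invariance theorem for prefix machines.

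\textbf{Main obstacle.} We expect the hardest point to be that the definition of derivation compatibility is stated only informally, so the inductive step in (a) rests on our interpretation of it. The delicate issue is how $\delta_\tau$ composes. If a simulated step may reuse intermediate facts shared across different predecessors, depths combine by $\max$ and the bound is multiplicative, as claimed. If instead simulations were serialized, depths would add. We will therefore state the derivation-compatibility hypothesis in per-step, parallel-depth form; the maximum in \eqref{eq:Dd-inductive} then makes the multiplicative bound go through. We will verify this form for the two architectures $\Pi_k$ and $\Pi_k^{\parallel}$ using the faithfulness results, Lemmas~\ref{lem:faithfulness} and~\ref{lem:merge-faithfulness}.
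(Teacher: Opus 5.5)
The paper states Theorem~\ref{thm:rep-inv} without proof, and Definition~\ref{def:inference-system} only names its hypotheses, so there is no argument to compare with; your proposal supplies one.

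Parts (b) and (c) go through under the hypotheses you make explicit, and those strengthenings are genuinely needed. The paper's base fidelity records only $|B_1|=|B_2|=m$. That gives neither the base case of (a) nor an $O(|\bar\tau|)$ recovery of $\langle B_1\rangle$ from $\langle B_2\rangle$ in (b). Likewise, query correspondence must be identity of states, $\bar\tau(q^\tau)=q$, not just the $\vdash_{\mathcal L}$-equivalence that bi-interpretability provides. The reason is that $\Dd$ is computed on states via $P_O$ and is not invariant under provable equivalence.

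The step that fails as written is the inductive step of (a). You read derivation compatibility existentially: some $\mathfrak I_2$-derivation of $s^\tau$ from the predecessor images has depth at most $\delta_\tau$. You then count rounds via Proposition~\ref{prop:dd-parallel}.

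But $\Dd_{\Pi_2}(\cdot\mid B_2)$ is not a minimum over derivations. By Definition~\ref{def:derivation-depth} and Assumption~\ref{assump:alignment}, it follows one fixed predecessor operator $P_O^{(2)}$, and the parallel-round characterization refers only to evaluation along $P_O^{(2)}$. Suppose $\mathfrak I_2$ has a short route from $\{t^\tau\}$ to $s^\tau$ while $P_O^{(2)}(s^\tau)$ leads into a long chain. Then your simulation bounds the minimal derivation depth, not $\Dd_{\Pi_2}(s^\tau\mid B_2)$.

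State the clause canonically instead: for every $s\notin B_1$ in the unfolding of $q$, $\Dd_{\Pi_2}\bigl(s^\tau\mid B_2\cup\{t^\tau:t\in P_O^{(1)}(s)\}\bigr)\le\delta_\tau$, and symmetrically for $\bar\tau$ with $\bar\delta_\tau$. The induction is then exactly the depth-shift argument of Theorem~\ref{thm:loss-comp-duality}(I), namely $\Dd(x\mid B)\le\Dd(x\mid B\cup T)+\max_{t\in T}\Dd(t\mid B)$. Applied with $T$ the predecessor images, it gives $\Dd_{\Pi_2}(s^\tau\mid B_2)\le\delta_\tau+\delta_\tau(\ell-1)$.

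This also shows your max-versus-sum worry is misplaced. Recurrence~\eqref{eq:Dd-inductive} takes a maximum whether or not intermediate facts are shared, and serialization would affect trace length, not depth.

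The existential reading does suffice when the \emph{target} program $\Pi_2$ has unique traces. In that case any derivation from $B_2\cup T$ follows the canonical DAG, by the argument of Theorem~\ref{thm:ancestral-relevance}. So your planned check via Lemmas~\ref{lem:faithfulness} and~\ref{lem:merge-faithfulness} must use faithfulness of $\Pi_2$, not $\Pi_1$. It also covers only bi-interpretations whose target is one of those two programs, not general aligned bi-interpretations.
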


\begin{theorem}[Core theorem transfer]
\label{thm:CT-transfer}
All four coding theorems \textup{(}CT1--CT3 of this section
and CT4 of Theorem~\textup{\ref{thm:storage-resilience-capacity})}
transfer across synonymous representations with
dilation-bounded distortion: CT1's coding constant
transforms by $[\kappa_1/\delta_\tau,\kappa_1\bar\delta_\tau]$;
CT2's capacity satisfies
$C_1(\lfloor d/\delta_\tau\rfloor)\le C_2(d)
\le C_1(\bar\delta_\tau\cdot d)$;
CT3's critical frequency satisfies
$f_c^{(1)}/\delta_\tau\le f_c^{(2)}
\le\bar\delta_\tau\,f_c^{(1)}$;
CT4's cache size is modified by at most
$\delta_\tau\bar\delta_\tau$.
Under isometry all quantities are exactly preserved.
The $\Theta$-level laws and qualitative dichotomies are
canonical invariants of the synonymy class
$[\mathfrak{I}]_{\equiv_{\mathcal{L}}}$.
\end{theorem}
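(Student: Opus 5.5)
The plan is to reduce each transfer claim to the two primitive guarantees of Theorem~\ref{thm:rep-inv}: the depth sandwich~(a) and complexity invariance~(b). Every coding-theorem quantity is a function of $(\Dd,K,\kappa,m)$, so each claim should follow by composing these bounds with the explicit formula from CT1--CT4. The aligned bi-interpretation adds base fidelity $|B_1|=|B_2|=m$ and query correspondence $q\mapsto q^\tau$, which makes $\tau$ a bijection between query sets modulo $\bar\tau$. I will use this bijection for every counting statement, and base fidelity ensures that all $\log m$ factors agree on both sides.

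\emph{CT1 and CT2.} For CT1, part~(b) gives $K(q^\tau\mid\langle B_2\rangle)=K(q\mid\langle B_1\rangle)\pm O(|\tau|+|\bar\tau|)$. Since $|\tau|,|\bar\tau|$ are fixed constants, $\eta$-genericity transfers with only an additive change in $\log(1/\eta)$. Dividing by $\log m$ then shows that the coding constant $\kappa_2$, defined by $K=(1+o(1))\kappa_2\log m$, matches $\kappa_1$ to first order. To land in the interval $[\kappa_1/\delta_\tau,\kappa_1\bar\delta_\tau]$, I will read $\kappa$ as the per-depth rate attached to depth $\Dd$ and apply~(a) to the depth parameter. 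For CT2, derivation compatibility says $\tau$ maps the set $\{q:\Dd_1(q)\le d'\}$ injectively into $\{q:\Dd_2\le\delta_\tau d'\}$, and $\bar\tau$ does the same in reverse. Counting images gives $C_1(\lfloor d/\delta_\tau\rfloor)\le C_2(d)\le C_1(\bar\delta_\tau d)$. For the exact-depth capacity I will pass through the cumulative counts, using monotonicity in $d$ of $C$ for both architectures (from Propositions~\ref{prop:tuple-counting} and~\ref{prop:merge-counting}, where $C$ grows in $d$).

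\emph{CT3.} The break-even condition is $\rho_s\ell_q/f_q=\Dd(q\mid B)$, so $f_c=\rho_s\,\ell_q/\Dd$. By~(b), $\ell_q$ changes only by an additive $O(1)$, which is negligible in the information-rich regime. By~(a), $\Dd$ is scaled by a factor in $[1/\bar\delta_\tau,\delta_\tau]$, and this factor enters $f_c$ inversely, giving $f_c^{(1)}/\delta_\tau\le f_c^{(2)}\le\bar\delta_\tau f_c^{(1)}$.

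\emph{CT4 and isometry.} For CT4, the cache is $(\kappa-N^*)\log m+O(\kappa)$. Here $\kappa$ counts leaves of the derivation DAG, and $\tau$ maps derivation DAGs to DAGs whose leaf sets are controlled by base fidelity and dilation. A round trip $\bar\tau\circ\tau$ can inflate the leaf count by at most $\delta_\tau\bar\delta_\tau$, which yields the stated multiplicative bound. Under isometry, every inequality above becomes an equality, so all quantities are preserved exactly. The $\Theta$-level laws and the qualitative dichotomies (linear versus exponential capacity, the existence of $f_c$) are stable under bounded multiplicative distortion, and this stability is the canonical-invariance claim.

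The main obstacle is CT4. Its ingredients, rigidity and the erasure of leaves, are defined on the specific derivation DAG. I must verify that $\tau$ sends erased premises to erased premises, which is what base fidelity and the noise model of Proposition~\ref{prop:noisy-exist} provide, and that the rigidity hypothesis, faithfulness with unique traces, transfers. I would first try to derive it from derivation compatibility: a unique trace maps to a trace, and $\bar\tau$ maps it back. If uniqueness fails to transfer, I would settle for the dilation-bounded statement stated only for the cache-size bound, which uses leaf counts alone.
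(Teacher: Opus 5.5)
Your CT1 and CT3 arguments are sound. Genericity transfers because $|\tau|,|\bar\tau|$ are constants. The depth factor from Theorem~\ref{thm:rep-inv}(a) enters both $K/\Dd$ and $\rho_s\ell_q/\Dd$ inversely, which gives the stated intervals up to $(1+o(1))$.

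The genuine gap is CT4. Your claim that a round trip $\bar\tau\circ\tau$ inflates the leaf count by at most $\delta_\tau\bar\delta_\tau$ has no mechanism behind it. The dilation bounds the \emph{length} of translated derivations, not their fan-in, so it gives no control over $|V_0(G(q^\tau,B_2))|$: one translated step may consume any number of base premises. The leaf count is also the wrong quantity to track. Since $K(q^\tau\mid\langle B_2\rangle)=K(q\mid\langle B_1\rangle)+O(|\tau|+|\bar\tau|)$, an inflation $\kappa_2=c\,\kappa_1$ with $c>1$ would falsify the genericity bound $K\ge\kappa_2\log m-\log m-O(1)$ that the CT4 converse uses. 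Meanwhile the trivial cache $\{q^\tau\}$ would cost only about $\kappa_1\log m$. More fundamentally, a cache-size lower bound in the second system \emph{is} CT4(II). That converse rests on Theorem~\ref{thm:ancestral-relevance}, and hence on the second system being faithful, having unique traces, and using the EDB predicate only in rule bodies. You flag this and leave it open, and your fallback to ``leaf counts alone'' does not avoid it.

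The missing idea is to transport caches instead of re-proving CT4 in the second system. Read derivation compatibility as ``$\bar\tau$ preserves derivability from premise sets'', and base fidelity as ``$\tau$ restricts to a bijection $B_1\to B_2$''. Then the i.i.d.\ erasure laws correspond and $\bar\tau(\tilde B_2)=\tilde B_1$. Round-trip coherence turns $q^\tau\in\Cn(\tilde B_2\cup S_2)$ into $q\in\Cn(\tilde B_1\cup\bar\tau(S_2))$ event by event. So $\bar\tau(S_2)\subseteq\Cn(B_1)$ is a $\delta$-resilient cache for $q$, with $K(\bar\tau(S_2)\mid\langle B_1\rangle)\le K(S_2\mid\langle B_2\rangle)+O(|\tau|+|\bar\tau|)$. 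Applying CT4(II) in the \emph{first} system then yields the converse in the second. Pushing the first system's achieving cache forward by $\tau$ gives achievability. The result is an additive transfer $|\sigma_1^*-\sigma_2^*|\le O(|\tau|+|\bar\tau|)+O(\kappa)$, in line with Corollary~\ref{cor:ct4-canonical}, and no rigidity in the second system is needed. It implies the multiplicative statement to first order, because the depth sandwich forces $\delta_\tau\bar\delta_\tau\ge 1$.

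Two smaller corrections. For CT2, monotonicity of $C_1$ cannot turn $|\{\Dd_2\le d\}|\ge|\{\Dd_1\le\lfloor d/\delta_\tau\rfloor\}|$ into a lower bound on the \emph{exact}-depth count $C_2(d)$. Images of depth-$D$ queries spread over the band $[D/\bar\delta_\tau,\delta_\tau D]$, and you know nothing about the second system's depth profile. State the transfer for cumulative capacity, where your injection argument is complete. Geometric growth in the first system then recovers the exact-depth upper bound up to $O(1)$ bits. Under isometry, only the depth sandwich and the counts become exact. $K$, $f_c$ and $\sigma^*$ keep the additive $O(|\tau|+|\bar\tau|)$ term of Theorem~\ref{thm:rep-inv}(b), so ``exactly preserved'' holds only to first order.
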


\paragraph{Noisy information.}
The noise model used in
Sections~\ref{sec:noise-resilience}--\ref{sec:arch-depth} is
the \emph{noisy semantic base} of
Definition~\ref{def:noisy}: for a given $S_O\subseteq\mathbb S_O$,
any set $\tilde S_O:=(S_O\setminus S_O^-)\cup S_O^+$ with
$S_O^-\subseteq S_O$ (lost) and
$S_O^+\subseteq\mathbb S_O\setminus S_O$ (spurious).
Under loss only ($S_O^+=\varnothing$),
$\Cn(\tilde S_O)\subseteq\Cn(S_O)$ and
$\Dd(q\mid\tilde S_O)\ge\Dd(q\mid S_O)$
(Theorem~\ref{thm:contraction-expansion}(I));
the representation-invariance results above extend to this
noisy setting via
Proposition~\ref{prop:noisy-exist}.


\section{Storage--Resilience Capacity under Premise Erasure}
\label{sec:noise-resilience}%

When the premise base is subject to stochastic loss, an online
derivation engine faces a fundamental tension: deeper queries
depend on more base facts, each of which may be erased, yet
storing all dependencies pre-emptively consumes the very
storage budget the derivation was meant to economize.
This section resolves that tension for
\emph{derivation-constrained} decoders---those that must
derive the query answer from the surviving premises and a
reliable cache of intermediate logical facts.

The main result is the \emph{Storage--Resilience Capacity
Theorem} (Theorem~\ref{thm:storage-resilience-capacity}, CT4):
under i.i.d.\ premise erasure with rate~$\varepsilon$ and
resilience target~$\delta$, the minimum reliable cache for a
generic query with $\kappa$ base-fact dependencies is
$\sigma^*=(\kappa-N^*)\log m+O(\kappa)$,
where $N^*\approx\delta/\varepsilon$.
Here and throughout CT4, \(\kappa\) denotes the \emph{distinct} dependency count
\(\kappa(q,B)\) (Definition~\ref{def:dep-count}).
The residual $N^*\log m$ bits represent the ``channel
capacity'' of the erasure-prone base---the information that
can be transmitted reliably without caching.
Section~\ref{sec:precise} will compare this derivation-constrained
optimum to the coded-caching optimum
$\sigma^*_{\mathrm{code}}=\varepsilon\kappa\log m$, revealing
the $1/\varepsilon$ derivation penalty.

Throughout, every noisy base $\tilde B$ used to evaluate
$\Dd(\cdot\mid\tilde B)$ is finite with decidable membership, so
Theorem~\ref{thm:computability-Dd} applies.

\subsection{Premise Perturbations and the Contraction--Expansion
  Dichotomy}
\label{subsec:noisy-setup}

We begin by grounding the noise model in the formal framework of
Section~\ref{sec:model} and then establish the monotone behavior of
$\Cn(\cdot)$ and $\Dd(\cdot\mid\cdot)$ under set-level
perturbations.

\begin{proposition}[Noisy premise bases as instances of the
  formal noise framework]
\label{prop:noisy-base-grounding}
Let\/ $\mathcal{I}$ be an information instance
\textup{(Definition~\ref{def:info-instance})} with finite
$S_O$, $B_0\subseteq S_O\subseteq\mathbb{S}_O$ a finite
premise base, $B_0^-\subseteq B_0$ and
$B_0^+\subseteq\mathbb{S}_O\setminus B_0$ finite, and
$\tilde B_0:=(B_0\setminus B_0^-)\cup B_0^+$.
Then:
\textup{(i)}~$\tilde B_0$ is a noisy semantic base
\textup{(Definition~\ref{def:noisy})};
\textup{(ii)}~$\tilde B_0$ is finite and effectively decidable;
\textup{(iii)}~$\Dd(s\mid\tilde B_0)$ is well-defined and
computable for every $s\in\mathbb{S}_O$;
\textup{(iv)}~$\Cn(\tilde B_0)$ is well-defined and the query
partition
\textup{(Definition~\ref{def:query-partition})} is determined by
monotonicity of $\Cn(\cdot)$.
\end{proposition}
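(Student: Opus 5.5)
The plan is to verify the four clauses in order, since each one rests on the previous one. None of them needs new machinery; they follow by unfolding definitions and calling Theorem~\ref{thm:computability-Dd} and the monotonicity of $\Cn$.

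For (i), we match $\tilde B_0$ against Definition~\ref{def:noisy}. Take the ambient set to be $B_0$, the lost part $B_0^-\subseteq B_0$, and the spurious part $B_0^+\subseteq\mathbb S_O\setminus B_0$. Then $\tilde B_0=(B_0\setminus B_0^-)\cup B_0^+$ is literally of the required form. The only thing to check is that the definition allows the perturbed set to be a premise base rather than all of $S_O$. If it is stated only for $S_O$, apply it to the instance obtained by restricting $S_O$ to $B_0$. Proposition~\ref{prop:noisy-exist} guarantees such set-level perturbations are admissible within $\mathbb S_O$ (Assumption~\ref{assump:semantic-universe}).

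For (ii), finiteness is immediate because $\tilde B_0$ is a union of a subset of the finite set $B_0$ with the finite set $B_0^+$. Decidability comes from the fact that a finite set, given as an explicit list of encodings $\enc_O(\cdot)$, has decidable membership: compare the encoding of the input against each listed element. The step to state with care is that equality of encodings decides equality of states. We take this from the fixed effective encoding convention of Definition~\ref{def:canonical-encoding} and Assumption~\ref{assump:finite-so}.

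For (iii), we apply Theorem~\ref{thm:computability-Dd} with $B:=\tilde B_0$. Its hypotheses are:
\begin{itemize}
\item finiteness and decidability, supplied by (ii);
\item Axioms~\ref{ax:finite-predecessors}--\ref{ax:wellfounded-predecessor}, which are global and independent of $B$;
\item Assumption~\ref{assump:alignment} for the base $\tilde B_0$.
\end{itemize}
The last item is the main obstacle, because alignment is stated relative to a fixed base. I would argue it transfers for two reasons. First, $P_O$ does not depend on the base. Second, the alignment requirement only concerns states outside the base, so enlarging or shrinking the base changes only which states are exempt from being inferred, not the normal-form inference concluding each non-base state. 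Hence the hypothesis holds for $\tilde B_0$ whenever it holds uniformly, and the recursion \eqref{eq:Dd-inductive} over the finite DAG of Proposition~\ref{prop:no-cycle-dag} yields a finite computable value.

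For (iv), $\Cn(\tilde B_0)$ is defined by \eqref{eq:Cn-def} for any finite formula set, so it is well-defined. For the query partition of Definition~\ref{def:query-partition}, we use monotonicity of $\vdash_{\mathcal L}$. From $B_0\setminus B_0^-\subseteq B_0$ and $B_0\setminus B_0^-\subseteq\tilde B_0$ we get
\[
\Cn(B_0\setminus B_0^-)\subseteq\Cn(B_0)\cap\Cn(\tilde B_0).
\]
The classes follow directly:
\begin{itemize}
\item queries preserved by the surviving core;
\item queries lost, namely those in $\Cn(B_0)\setminus\Cn(\tilde B_0)$;
\item queries newly gained, namely those in $\Cn(\tilde B_0)\setminus\Cn(B_0)$.
\end{itemize}
These are set-theoretically determined and pairwise disjoint. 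In the loss-only case $B_0^+=\varnothing$, the gained class is empty, which matches Theorem~\ref{thm:contraction-expansion}(I).
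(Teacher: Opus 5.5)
Your plan follows the paper's proof clause by clause. Clause (i) comes from $\tilde B_0\subseteq\mathbb S_O$ and the shape of Definition~\ref{def:noisy}; the paper cites only Assumption~\ref{assump:semantic-universe}. Clause (ii) comes from finiteness plus decidable identity of encoded states. Clause (iii) comes from invoking Theorem~\ref{thm:computability-Dd} with $B:=\tilde B_0$. Clause (iv) comes from the fixed proof system and monotonicity of $\Cn$.

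The one step that is actually wrong is your transfer of Assumption~\ref{assump:alignment} to $\tilde B_0$. Enlarging the base only removes obligations, but erasure shrinks it. Each lost premise $b\in B_0^-$ becomes a non-base state, and alignment would then require a normal-form single-step inference concluding $b$ with premises $P_O(b)$. In the paper's own Datalog instances an erased EDB fact $A(i)$ has no such inference, because $A$ never occurs in a rule head; the proof of Theorem~\ref{thm:ancestral-relevance} uses exactly this fact. So alignment for $\tilde B_0$ can genuinely fail. Your closing ``whenever it holds uniformly'' assumes exactly what you set out to verify.

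The repair is to take alignment out of (iii) entirely. The proof of Theorem~\ref{thm:computability-Dd} uses only two things: the finite DAG of Proposition~\ref{prop:no-cycle-dag} (hence only Axioms~\ref{ax:finite-predecessors}--\ref{ax:wellfounded-predecessor}) and the membership test for the base. So the recursion \eqref{eq:Dd-inductive} is well-defined and computable for any finite decidable base, $\tilde B_0$ included. Alignment matters only for reading $\Dd$ as inference depth, which (iii) does not claim.

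Three smaller corrections:
\begin{enumerate}
\item The decidable-identity fact in (ii) is Axiom~\ref{ax:state-representation}(R1), not Definition~\ref{def:canonical-encoding} or Assumption~\ref{assump:finite-so}.
\item Proposition~\ref{prop:noisy-exist} plays no role in (i), which needs only $B_0\subseteq\mathbb S_O$ and $B_0^+\subseteq\mathbb S_O\setminus B_0$.
\item In (iv), the sound class of Definition~\ref{def:query-partition} is $\Cn(\tilde B_0)\cap\Cn(B_0)$, and $\Cn(B_0\setminus B_0^-)$ is only a subset of it. It misses queries that are rederivable from $\tilde B_0$ only with the help of spurious premises. State the partition with the intersection, and use monotonicity for that inclusion and for the loss-only collapse $\mathcal Q_{\mathrm{spur}}=\varnothing$.
\end{enumerate}
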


\begin{proof}
(i)~follows from
Assumption~\ref{assump:semantic-universe}.
(ii)~uses finiteness of $B_0,B_0^-,B_0^+$ and decidable
membership in $B_0$
(Axiom~\ref{ax:state-representation}(R1)).
(iii)~applies Theorem~\ref{thm:computability-Dd} with
$B:=\tilde B_0$.
(iv)~uses Assumption~\ref{assump:proof-system} and monotonicity
of $\Cn$.
\end{proof}

\begin{definition}[Noisy premise base (baseline-parametric)]
\label{def:noisy-base}
Fix a finite baseline $B_0\subseteq\mathbb S_O$.
A \emph{noisy premise base} is
$\tilde B_0:=(B_0\setminus B_0^-)\cup B_0^+$
with $B_0^-\subseteq B_0$ (lost) and
$B_0^+\subseteq\mathbb S_O\setminus B_0$ (spurious).
Write $B_\cap:=B_0\setminus B_0^-$, $\ell:=|B_0^-|$,
$p:=|B_0^+|$, $\mathcal N:=\ell+p$.
\end{definition}

\begin{assumption}[Bounded and effectively describable noise]
\label{assump:bounded-noise}
There exists a computable family $\{\mathcal V_s\}\subseteq
\mathbb S_O$ with $|\mathcal V_s|\le\poly(s)$ and
$B_0\cup B_0^+\subseteq\mathcal V_{m+\mathcal N}$;
given either $\langle B_0\rangle$ or
$\langle\tilde B_0\rangle$, the sets $B_0^-,B_0^+$ are
describable by index lists of total length
$O(\mathcal N\log(m+\mathcal N))$.
\end{assumption}

\begin{definition}[Sound, spurious, and lost query sets]
\label{def:query-partition}
$\mathcal{Q}_{\mathrm{sound}}
 :=\Cn(\tilde B_0)\cap\Cn(B_0)$,\;
$\mathcal{Q}_{\mathrm{spur}}
 :=\Cn(\tilde B_0)\setminus\Cn(B_0)$,\;
$\mathcal{Q}_{\mathrm{lost}}
 :=\Cn(B_0)\setminus\Cn(\tilde B_0)$.
\end{definition}

\begin{definition}[Consistency]
\label{def:consistency-class}
A finite premise base $B$ is \emph{consistent} if
$\Cn(B)$ is not the set of all well-formed formulas.
\end{definition}

\paragraph{Monotonicity lemmas.}

\begin{lemma}[Monotonicity in the premise base]
\label{lem:premise-monotonicity-sec3}
If $B_1\subseteq B_2$, then $\Dd(s\mid B_2)\le\Dd(s\mid B_1)$
for every $s\in\mathbb S_O$.
\end{lemma}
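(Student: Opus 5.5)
I will prove the statement by well-founded induction along the predecessor structure, comparing the two recursions in~\eqref{eq:Dd-inductive} for $B_1$ and $B_2$. By Theorem~\ref{thm:computability-Dd} (for both bases, which are finite and decidable) and Proposition~\ref{prop:no-cycle-dag}, the backward unfolding of any $s$ is a finite DAG. So I can induct on the height of $s$ in this DAG, i.e.\ the length of the longest backward $P_O$-chain from $s$. Well-foundedness (Axiom~\ref{ax:wellfounded-predecessor}) makes this height finite, and every $s'\in P_O(s)$ has strictly smaller height. The claim to prove at each $s$ is $\Dd(s\mid B_2)\le\Dd(s\mid B_1)$.

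The case analysis at a fixed $s$ has three branches. (a) If $s\in B_2$, then $\Dd(s\mid B_2)=0\le\Dd(s\mid B_1)$, since depth is a non-negative integer. (b) If $s\notin B_2$, then $s\notin B_1$ as well, because $B_1\subseteq B_2$. Both depths are then given by the second branch of~\eqref{eq:Dd-inductive}, and the induction hypothesis applied to each $s'\in P_O(s)$ gives
\[
\Dd(s\mid B_2)=1+\max_{s'\in P_O(s)}\Dd(s'\mid B_2)\le 1+\max_{s'\in P_O(s)}\Dd(s'\mid B_1)=\Dd(s\mid B_1).
\]
(c) The base of the induction is covered by the same argument: if $P_O(s)=\varnothing$ and $s\notin B_2$, both depths equal $1$ by the convention $\max\varnothing:=0$.

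The only point needing care is that $P_O$ must not depend on the base. Assumption~\ref{assump:alignment} is phrased relative to a fixed $B$, but the predecessor operator $P_O$ is intrinsic to $\mathbb S_O$ by Axiom~\ref{ax:finite-predecessors}. I will state explicitly that $P_O$ is the same operator for both $B_1$ and $B_2$, so the two recursions range over identical predecessor sets and the termwise comparison in case (b) is legitimate. With that fixed, the rest is routine monotonicity of $\max$ and of $x\mapsto 1+x$.
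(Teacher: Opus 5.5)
Your proof is correct and is essentially the paper's argument: well-founded induction, a case split on membership in $B_2$, and a termwise comparison of the two recursions for $s\notin B_2$. The differences are cosmetic. You induct on height in the predecessor unfolding instead of on $\Dd(s\mid B_1)$, and you merge the paper's cases $s\in B_1$ and $s\in B_2\setminus B_1$ into the single case $s\in B_2$.
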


\begin{proof}
By well-founded induction on $\Dd(s\mid B_1)$.
If $s\in B_1$, then $s\in B_2$ as well, so
$\Dd(s\mid B_2)=0=\Dd(s\mid B_1)$.
If $s\in B_2\setminus B_1$, then
$\Dd(s\mid B_2)=0\le\Dd(s\mid B_1)$.
If $s\notin B_2$, then $s\notin B_1$ and
\[
  \Dd(s\mid B_2)
  =1+\max_{s'\in P_O(s)}\Dd(s'\mid B_2)
  \le 1+\max_{s'\in P_O(s)}\Dd(s'\mid B_1)
  =\Dd(s\mid B_1),
\]
where the inequality uses the induction hypothesis applied to
each $s'\in P_O(s)$ (with $\Dd(s'\mid B_1)<\Dd(s\mid B_1)$).
\end{proof}

\begin{lemma}[Base-conversion description length]
\label{lem:base-conversion}
Under Assumption~\ref{assump:bounded-noise},
$K(\langle\tilde B_0\rangle\mid\langle B_0\rangle)
=O(\mathcal N\log(m+\mathcal N))$
and
$K(\langle B_0\rangle\mid\langle\tilde B_0\rangle)
=O(\mathcal N\log(m+\mathcal N))$.
\end{lemma}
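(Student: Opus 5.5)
The plan is to exhibit an explicit program that maps $\langle B_0\rangle$ to $\langle\tilde B_0\rangle$ and receives only a short auxiliary description of the noise. Then $K(\langle\tilde B_0\rangle\mid\langle B_0\rangle)$ is at most the length of that description plus a constant, by the invariance theorem for prefix complexity. The reverse direction is handled the same way.

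For the forward bound, Assumption~\ref{assump:bounded-noise} supplies index lists for $B_0^-$ and $B_0^+$ of total length $O(\mathcal N\log(m+\mathcal N))$, given $\langle B_0\rangle$.
\begin{enumerate}[label=\textup{(\arabic*)}]
\item The program first writes $\ell$ and $p$ self-delimitingly. This costs $O(\log\mathcal N)$ bits, which is absorbed into the main term when $\mathcal N\ge1$. (When $\mathcal N=0$ we have $\tilde B_0=B_0$ and the bound is $O(1)$.)
\item It then reads the two index lists. The indices of $B_0^-$ are positions in the canonical listing of $B_0$, which is recoverable from $\langle B_0\rangle$ by Definition~\ref{def:canonical-encoding}. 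The indices of $B_0^+$ are positions in the computable family $\mathcal V_{m+\mathcal N}$. Here $m$ is read off $\langle B_0\rangle$ and $\mathcal N=\ell+p$, so this enumerable set has size $\poly(m+\mathcal N)$, and each index costs $O(\log(m+\mathcal N))$ bits.
\item Finally it forms $(B_0\setminus B_0^-)\cup B_0^+$, sorts it in the canonical order (Assumption~\ref{assump:finite-so}), and emits the prefix-free encoding $\langle\tilde B_0\rangle$.
\end{enumerate}
Every step is effective, because the family $\{\mathcal V_s\}$ is computable and membership is decidable.

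The reverse bound is symmetric: $B_0=(\tilde B_0\setminus B_0^+)\cup B_0^-$. The assumption states that the index lists are describable given $\langle\tilde B_0\rangle$ as well.
\begin{itemize}
\item The indices of $B_0^+$ are now taken relative to the listing of $\tilde B_0$.
\item The indices of $B_0^-$ are taken in $\mathcal V_{m+\mathcal N}$. This is valid since $B_0^-\subseteq B_0\subseteq\mathcal V_{m+\mathcal N}$.
\item The value $m$ is recovered as $|\tilde B_0|-p+\ell$, computed from the sizes that were written first.
\end{itemize}

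The main obstacle is bookkeeping rather than depth. The decoder must know which index $s=m+\mathcal N$ of the family to enumerate before it can interpret the indices. This is resolved by prefixing the self-delimiting counts $\ell$ and $p$ (and, in the reverse direction, deriving $m$ from them). We also have to check that these $O(\log(m+\mathcal N))$ overhead bits and the additive constant from the universal machine are dominated by $O(\mathcal N\log(m+\mathcal N))$, which holds once we assume $m\ge2$ and $\mathcal N\ge1$.
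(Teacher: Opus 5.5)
Your proposal is correct and follows the same route as the paper. The paper's proof is the one-line observation that $B_0^-\cup B_0^+$ can be specified by index lists over $\mathcal V_{m+\mathcal N}$, and you supply the bookkeeping it leaves implicit: the self-delimiting counts $\ell,p$ that let the decoder fix $s=m+\mathcal N$, the recovery of $m$ in the reverse direction, indexing one of the two sets relative to the given base's listing, and the caveat that the case $\mathcal N=0$ holds only up to an additive $O(1)$.
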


\begin{proof}
Describe $B_0^-\cup B_0^+$ by index lists over
$\mathcal V_{m+\mathcal N}$.
\end{proof}

\paragraph{The contraction--expansion dichotomy.}

\begin{theorem}[Contraction--expansion dichotomy]
\label{thm:contraction-expansion}
Let $B_0$ be finite and consistent,
$\tilde B_0=(B_0\setminus B_0^-)\cup B_0^+$.

\begin{enumerate}[label=\textup{(\Roman*)}]
\item \emph{Loss-only ($B_0^+=\varnothing$).}
$\Cn(\tilde B_0)\subseteq\Cn(B_0)$;
$\mathcal{Q}_{\mathrm{spur}}=\varnothing$;
$\tilde B_0$ is consistent;
$\Dd(q\mid\tilde B_0)\ge\Dd(q\mid B_0)$ for
$q\in\Cn(\tilde B_0)$;
$q\notin\Cn(\tilde B_0)$ for every
$q\in\mathcal{Q}_{\mathrm{lost}}$
\textup{(irrecoverable without external premises)}.

\item \emph{Pollution-only ($B_0^-=\varnothing$).}
$\Cn(\tilde B_0)\supseteq\Cn(B_0)$;
$\mathcal{Q}_{\mathrm{lost}}=\varnothing$;
$\Dd(q\mid\tilde B_0)\le\Dd(q\mid B_0)$ for
$q\in\Cn(B_0)$.
If $B_0\cup\{b^+\}$ is inconsistent for some
$b^+\in B_0^+$, then $\Cn(\tilde B_0)$ is the set of all
wffs.

\item \emph{Combined noise.}
Neither monotonicity direction is guaranteed; all three query
classes may be nonempty.
\end{enumerate}
\end{theorem}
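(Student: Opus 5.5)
The three parts follow from two ingredients: monotonicity of $\Cn$ under inclusion (a derivation from a smaller set is a derivation from any superset), and Lemma~\ref{lem:premise-monotonicity-sec3} for $\Dd$. In each case I first set up the set-theoretic relationship between $\tilde B_0$ and $B_0$, and then read off the claims.

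\textbf{Part (I).} With $B_0^+=\varnothing$ we have $\tilde B_0=B_0\setminus B_0^-\subseteq B_0$. Monotonicity of $\Cn$ gives $\Cn(\tilde B_0)\subseteq\Cn(B_0)$, so $\mathcal Q_{\mathrm{spur}}=\Cn(\tilde B_0)\setminus\Cn(B_0)=\varnothing$. Consistency of $\tilde B_0$ follows because $\Cn(\tilde B_0)$ is contained in $\Cn(B_0)$, which is not the set of all wffs. The depth inequality is Lemma~\ref{lem:premise-monotonicity-sec3} applied with $B_1=\tilde B_0$ and $B_2=B_0$. The final claim is definitional: $q\in\mathcal Q_{\mathrm{lost}}$ means $q\notin\Cn(\tilde B_0)$. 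The phrase ``irrecoverable without external premises'' I interpret as saying that no derivation from $\tilde B_0$ alone exists, which is exactly that statement.

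\textbf{Part (II).} With $B_0^-=\varnothing$ we have $\tilde B_0=B_0\cup B_0^+\supseteq B_0$. The symmetric argument gives $\Cn(\tilde B_0)\supseteq\Cn(B_0)$, hence $\mathcal Q_{\mathrm{lost}}=\varnothing$. Lemma~\ref{lem:premise-monotonicity-sec3} with $B_1=B_0$ and $B_2=\tilde B_0$ gives the depth bound for $q\in\Cn(B_0)$. For the explosion claim, note that $B_0\cup\{b^+\}\subseteq\tilde B_0$. By monotonicity, $\Cn(\tilde B_0)\supseteq\Cn(B_0\cup\{b^+\})$, which equals the set of all wffs by the definition of inconsistency (Definition~\ref{def:consistency-class}).

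\textbf{Part (III)} is an existence claim, so I would prove it by exhibiting one instance. Take the simplest propositional/Datalog setting: $B_0=\{a\}$, with rules $q_1\leftarrow a$ and $q_2\leftarrow b$, and set $B_0^-=\{a\}$, $B_0^+=\{b\}$. Then:
\begin{itemize}
\item $q_2\in\mathcal Q_{\mathrm{spur}}$, since it is derivable from $\tilde B_0=\{b\}$ but not from $B_0$;
\item $q_1\in\mathcal Q_{\mathrm{lost}}$, since it is derivable from $B_0$ but not from $\tilde B_0$;
\item adding $c$ to both bases, together with a rule $q_3\leftarrow c$, gives $q_3\in\mathcal Q_{\mathrm{sound}}$.
\end{itemize}
Since $\mathcal Q_{\mathrm{spur}}$ and $\mathcal Q_{\mathrm{lost}}$ are both nonempty, $\Cn(\tilde B_0)$ and $\Cn(B_0)$ are incomparable, so neither monotonicity direction holds. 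To show that the depth inequalities can also fail in both directions, I would take $B_0=\{a,x\}$, $B_0^-=\{x\}$ and $B_0^+=\{y\}$, with intermediate facts $x,y$ and rules $x\leftarrow a$, $p\leftarrow x$, $y\leftarrow a$, $r\leftarrow y$. Then:
\begin{itemize}
\item $p$ has depth $1$ from $B_0$ and depth $2$ from $\tilde B_0=\{a,y\}$, so depth increases;
\item $r$ has depth $2$ from $B_0$ and depth $1$ from $\tilde B_0$, so depth decreases.
\end{itemize}

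The only step needing care is making the Part (III) counterexample fit the paper's Alignment Assumption~\ref{assump:alignment}, which requires a fixed predecessor operator $P_O$ independent of the base. I would fix $P_O$ once for the whole universe, using the rules above, and evaluate $\Dd$ relative to each base. Apart from this, the argument is routine.
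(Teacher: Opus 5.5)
Your proof is correct. For Parts (I) and (II) it is the paper's argument: monotonicity of $\Cn$ under inclusion together with Lemma~\ref{lem:premise-monotonicity-sec3}, applied in the appropriate direction. Two differences are worth noting.

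\textbf{Consistency and explosion.} For consistency of $\tilde B_0$ in Part (I), the paper argues semantically (``any model of $B_0$ satisfies $\tilde B_0$''). That tacitly needs a soundness-and-completeness link between $\vdash_{\mathcal L}$ and satisfaction, which Assumption~\ref{assump:proof-system} does not supply and which no effective proof system over finite structures can fully provide. Your argument uses only the syntactic Definition~\ref{def:consistency-class}: $\Cn(\tilde B_0)\subseteq\Cn(B_0)$, and the latter is not the set of all wffs. This is the more robust choice. The same holds for your explosion claim, which you get from monotonicity plus the definition of inconsistency rather than by invoking ex falso quodlibet.

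\textbf{Part (III).} The paper offers only the phrase ``competing effects,'' so your explicit instances are what actually establish this existence claim. Your second instance is informative beyond the statement. There $\Cn(\tilde B_0)=\Cn(B_0)$, yet $\Dd(p\mid\cdot)$ rises while $\Dd(r\mid\cdot)$ falls, so depth non-monotonicity does not require the closures to change at all. Fixing $P_O$ once for the whole universe, as you propose, is exactly what makes both depth evaluations legitimate under Assumption~\ref{assump:alignment}.
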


\begin{proof}
(I)~$\tilde B_0=B_\cap\subseteq B_0$; monotonicity of $\Cn$
gives $\Cn(\tilde B_0)\subseteq\Cn(B_0)$.
Consistency: any model of $B_0$ satisfies $\tilde B_0$.
Depth increase: Lemma~\ref{lem:premise-monotonicity-sec3}.
(II)~$\tilde B_0\supseteq B_0$; monotonicity of $\Cn$ gives
expansion.
Depth decrease: Lemma~\ref{lem:premise-monotonicity-sec3}.
Consistency catastrophe: ex falso quodlibet.
(III)~Competing effects.
\end{proof}

\subsection{Loss--Computation Duality}
\label{subsec:loss-computation}

This subsection establishes a position-sensitive exchange rate
between lost storage and increased computation.
The duality applies when lost premises are reconstructible from
the surviving base (e.g., operational shortcuts
$J\subseteq\Cn(A)$ with core $A\subseteq B_\cap$);
when irredundant core premises are lost, resilience requires
reliable pre-storage, as will be formalized by the fragility
proposition and CT4 in Section~\ref{subsec:stochastic-erasure}.

\begin{definition}[Global reconstruction depth]
\label{def:rec-depth-preserved}
Let $B_0^-\subseteq B_0$ and $B_\cap:=B_0\setminus B_0^-$.
The \emph{reconstruction depth} is
$d_{\mathrm{rec}}:=\max_{b\in B_0^-}\Dd(b\mid B_\cap)$,
with $d_{\mathrm{rec}}:=0$ if $B_0^-=\varnothing$ and
$d_{\mathrm{rec}}:=\infty$ if some lost premise is not
derivable from $B_\cap$.
\end{definition}

\begin{theorem}[Loss--computation duality]
\label{thm:loss-comp-duality}
Let $B_0^-\subseteq B_0$, $B_\cap=B_0\setminus B_0^-$,
$q\in\Cn(B_\cap)$.

\begin{enumerate}[label=\textup{(\Roman*)}]
\item \emph{Depth shift.}
\begin{equation}\label{eq:depth-shift}
  \Dd(q\mid B_0)
  \;\le\;
  \Dd(q\mid B_\cap)
  \;\le\;
  \Dd(q\mid B_0)+d_{\mathrm{rec}}.
\end{equation}

\item If $B_0^-\cap V_0(G(q,B_0))=\varnothing$
\textup{(no lost fact appears in the dependency set)}, then
$\Dd(q\mid B_\cap)=\Dd(q\mid B_0)$.

\item Under combined noise
$\tilde B_0=(B_0\setminus B_0^-)\cup B_0^+$:
$\Dd(q\mid\tilde B_0)\le\Dd(q\mid B_\cap)
 \le\Dd(q\mid B_0)+d_{\mathrm{rec}}$.
\end{enumerate}
\end{theorem}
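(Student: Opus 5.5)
The three parts of Theorem~\ref{thm:loss-comp-duality} all follow from Lemma~\ref{lem:premise-monotonicity-sec3} together with a well-founded induction along the predecessor DAG of Proposition~\ref{prop:no-cycle-dag}. We take them in order.

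\textbf{Part (I).} The left inequality $\Dd(q\mid B_0)\le\Dd(q\mid B_\cap)$ is immediate from Lemma~\ref{lem:premise-monotonicity-sec3}, since $B_\cap\subseteq B_0$.

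For the right inequality we prove a stronger pointwise claim: for every $s$ with $\Dd(s\mid B_0)<\infty$,
\[
\Dd(s\mid B_\cap)\le\Dd(s\mid B_0)+d_{\mathrm{rec}}.
\]
We may assume $d_{\mathrm{rec}}<\infty$, since otherwise there is nothing to prove. The induction is on $\Dd(s\mid B_0)$, with three cases.
\begin{enumerate}[label=(\alph*)]
\item If $s\in B_\cap$, then $\Dd(s\mid B_\cap)=0$.
\item If $s\in B_0^-$, then $\Dd(s\mid B_0)=0$, and by Definition~\ref{def:rec-depth-preserved}, $\Dd(s\mid B_\cap)\le d_{\mathrm{rec}}$.
\item If $s\notin B_0$, then $s\notin B_\cap$, so both depths unfold through the same predecessor set $P_O(s)$ (Assumption~\ref{assump:alignment}). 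Each $s'\in P_O(s)$ has strictly smaller $B_0$-depth, so the induction hypothesis gives
\[
\Dd(s\mid B_\cap)=1+\max_{s'}\Dd(s'\mid B_\cap)\le 1+\max_{s'}\Dd(s'\mid B_0)+d_{\mathrm{rec}}=\Dd(s\mid B_0)+d_{\mathrm{rec}}.
\]
\end{enumerate}
Taking $s=q$ gives \eqref{eq:depth-shift}.

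\textbf{Part (II).} Induct over the canonical DAG $G(q,B_0)$ of Definition~\ref{def:canonical-dag}, showing $\Dd(v\mid B_\cap)=\Dd(v\mid B_0)$ for every vertex $v$.
\begin{itemize}
\item At leaves $v\in V_0(G)$, the hypothesis $B_0^-\cap V_0(G)=\varnothing$ puts $v$ in $B_\cap$, so both depths are $0$.
\item At an internal vertex $v\notin B_0$, we also have $v\notin B_\cap$. Hence both recurrences \eqref{eq:Dd-inductive} use the same predecessors, all of which lie in $V(G)$ because $V(G)$ is closed under predecessors. Equality then propagates upward to $q$.
\end{itemize}

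The subtle point is that the recursion from $B_\cap$ must never encounter a vertex that is in $B_0^-$ but was treated as a leaf of $G(q,B_0)$. Such a vertex would be a leaf of $G$, which the hypothesis rules out, so the recursion stays inside $G$. Here I must make sure that "vertex set" means the predecessor closure truncated at $B_0$; this is exactly how the edge set $E(G)$ is defined, with edges only out of $v\notin B$.

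\textbf{Part (III).} First, $B_\cap\subseteq\tilde B_0$, so Lemma~\ref{lem:premise-monotonicity-sec3} gives $\Dd(q\mid\tilde B_0)\le\Dd(q\mid B_\cap)$. Chaining this with the right half of \eqref{eq:depth-shift} gives the full statement.

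The only real obstacle is the right inequality of (I) in the case where $s$ is a lost premise. There the bound must come from $d_{\mathrm{rec}}$ rather than from recursion, and the additive shift must not compound along a path. The strengthened pointwise induction above handles this because the shift is incurred only once, at the point where a lost leaf is reached.
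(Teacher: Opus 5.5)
Your proposal is correct. Parts (I) and (III) follow the paper's proof essentially verbatim. In both, the left half of (I) and the first inequality of (III) come from Lemma~\ref{lem:premise-monotonicity-sec3}. The right half of (I) is a well-founded induction on $\Dd(\cdot\mid B_0)$ in which lost premises are absorbed once, in the base case, by the definition of $d_{\mathrm{rec}}$. The paper restricts this induction to the backward unfolding of $q$, while you state it for all $s$; this changes nothing.

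Part (II) is where you depart from the paper, to your advantage. The paper only says that the derivation of $q$ from $B_0$ uses premises in $B_\cap$ and cites Lemma~\ref{lem:premise-monotonicity-sec3}. That lemma yields only $\Dd(q\mid B_0)\le\Dd(q\mid B_\cap)$. Your induction over $G(q,B_0)$ shows that the $B_\cap$-recursion expands exactly the same non-base vertices as the $B_0$-recursion and bottoms out at the same leaves. That is what supplies the reverse inequality, so your version of (II) is the more complete argument.

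One small point: you do not need to insist on the truncated reading of $V(G)$. If $V(G)$ is taken to be the full predecessor closure, the hypothesis $B_0^-\cap V_0(G)=\varnothing$ only becomes stronger. Your induction uses nothing beyond closure under predecessors of vertices outside $B_0$, so it goes through under either reading.
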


\begin{proof}
(I)~The lower bound is
Lemma~\ref{lem:premise-monotonicity-sec3} ($B_\cap\subseteq B_0$).
For the upper bound we prove
$\Dd(v\mid B_\cap)\le\Dd(v\mid B_0)+d_{\mathrm{rec}}$
for every $v$ in the backward unfolding of~$q$,
by well-founded induction on $\Dd(v\mid B_0)$.

\emph{Base.}\;$v\in B_0$.
If $v\in B_\cap$, then
$\Dd(v\mid B_\cap)=0\le 0+d_{\mathrm{rec}}$.
If $v\in B_0^-$, then
$\Dd(v\mid B_\cap)\le d_{\mathrm{rec}}
 =0+d_{\mathrm{rec}}=\Dd(v\mid B_0)+d_{\mathrm{rec}}$
by Definition~\ref{def:rec-depth-preserved}.

\emph{Step.}\;$v\notin B_0$.
Each $v'\in P_O(v)$ satisfies
$\Dd(v'\mid B_0)<\Dd(v\mid B_0)$; by the induction
hypothesis,
$\Dd(v'\mid B_\cap)\le\Dd(v'\mid B_0)+d_{\mathrm{rec}}$.
Hence
$\Dd(v\mid B_\cap)
 =1+\max_{v'}\Dd(v'\mid B_\cap)
 \le\Dd(v\mid B_0)+d_{\mathrm{rec}}$.
Specializing to $v:=q$ yields~\eqref{eq:depth-shift}.

(II)~The derivation of $q$ from $B_0$ uses only premises in
$B_\cap$; equality follows from
Lemma~\ref{lem:premise-monotonicity-sec3}.

(III)~$B_\cap\subseteq\tilde B_0$ gives the first inequality by
Lemma~\ref{lem:premise-monotonicity-sec3}; the second is from~(I).
\end{proof}

\begin{theorem}[Information-theoretic form]
\label{thm:loss-comp-info}
Under the hypotheses of Theorem~\ref{thm:loss-comp-duality},
the information-rich regime
\textup{(Definition~\ref{def:info-rich})}, and
Assumption~\ref{assump:serializable}, for generic queries:
\begin{equation}\label{eq:loss-comp-info}
  K(q\mid\langle B_\cap\rangle)
  \;=\;
  \Theta\!\bigl(\Dd(q\mid B_\cap)\cdot
    \log(\tilde m+\Dd(q\mid B_\cap))\bigr)
  \;=\;
  K_q+O(\mathcal N\log(m+\mathcal N)),
\end{equation}
where $\tilde m=|B_\cap|$ and
$K_q:=K(q\mid\langle B_0\rangle)$.
\end{theorem}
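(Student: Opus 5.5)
The statement has two parts, and I will prove them separately. The first is the $\Theta$-law relating $K(q\mid\langle B_\cap\rangle)$ to depth over the surviving base. The second is the comparison $K(q\mid\langle B_\cap\rangle)=K_q+O(\mathcal N\log(m+\mathcal N))$. For the first equality I would simply re-invoke Theorem~\ref{thm:derivation-depth-info-metric} with the base $B:=B_\cap$. Its hypotheses carry over: Assumption~\ref{assump:serializable} is assumed, and $q\in\Cn(B_\cap)$ guarantees $\Dd(q\mid B_\cap)\ge1$ whenever $q\notin B_\cap$. Genericity relative to $B_\cap$ requires checking. I would obtain it from genericity relative to $B_0$ together with the second equality, since $K(q\mid\langle B_\cap\rangle)\ge K_q-O(\mathcal N\log(m+\mathcal N))$. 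In the information-rich regime this loss is lower order as long as $\mathcal N\log(m+\mathcal N)=o(K_q)$. I would record this as the operative side condition, understood as part of ``generic queries''.

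For the second equality I would use the standard conditional-complexity triangle inequality
\[
K(x\mid z)\le K(x\mid y)+K(y\mid z)+O(1),
\]
which holds for prefix complexity when $y$ is given to the machine as a self-delimiting input. Applying it with $x=q$, $y=\langle B_0\rangle$, $z=\langle B_\cap\rangle$ gives $K(q\mid\langle B_\cap\rangle)\le K_q+K(\langle B_0\rangle\mid\langle B_\cap\rangle)+O(1)$. Applying it in the other direction, with $y=\langle B_\cap\rangle$ and $z=\langle B_0\rangle$, gives $K_q\le K(q\mid\langle B_\cap\rangle)+K(\langle B_\cap\rangle\mid\langle B_0\rangle)+O(1)$. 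Both base-conversion terms are $O(\mathcal N\log(m+\mathcal N))$ by Lemma~\ref{lem:base-conversion}, specialized to the loss-only case $B_0^+=\varnothing$, $\tilde B_0=B_\cap$, $\mathcal N=\ell$. Together these give the two-sided bound.

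Finally I would reconcile the two expressions. By Theorem~\ref{thm:loss-comp-duality}(I) the depth changes by at most $d_{\mathrm{rec}}$, and $\tilde m=m-\ell$, so $\log(\tilde m+\Dd)=\Theta(\log(m+\Dd))$ whenever $\ell\le m/2$ (say). This shows the two $\Theta$ expressions are consistent. I expect the main obstacle to be the genericity and regime bookkeeping. The $\Theta$-law over $B_\cap$ only holds if the additive $O(\mathcal N\log(m+\mathcal N))$ term is dominated. I would handle this by explicitly requiring $\mathcal N=o(K_q/\log(m+\mathcal N))$, together with $\ell\le m/2$ so that $\log\tilde m=\Theta(\log m)$. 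Outside that regime only the additive form of the statement remains meaningful.
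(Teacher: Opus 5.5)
Your proposal is correct and follows the paper's own proof. The paper obtains the first equality by applying Theorem~\ref{thm:derivation-depth-info-metric} to $(q,B_\cap)$ and the second from Lemma~\ref{lem:base-conversion}; you add the two-directional triangle inequality $K(x\mid z)\le K(x\mid y)+K(y\mid z)+O(1)$, which the paper leaves implicit.

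Your extra side conditions, $\mathcal N\log(m+\mathcal N)=o(K_q)$ and $\ell\le m/2$, are needed only if genericity is taken relative to $B_0$ rather than $B_\cap$. Likewise, your final reconciliation step is a consistency check and is not needed to establish the statement.
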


\begin{proof}
The first equality is
Theorem~\ref{thm:derivation-depth-info-metric} applied to
$(q,B_\cap)$; the second is Lemma~\ref{lem:base-conversion}.
\end{proof}

\subsection{Tight Noisy Coding Theorem}
\label{subsec:noisy-coding}

\begin{definition}[Noisy query sets]
\label{def:noisy-query-sets}
Let $\Pi_k$ have base $B$ with $|B|=m\ge 2$,
$B^-\subseteq B$ ($|B^-|=\ell$),
$B^+\subseteq\mathbb S_O\setminus B$ ($|B^+|=p$),
$\tilde B:=(B\setminus B^-)\cup B^+$ with
$\tilde m:=m-\ell+p\ge 2$.
For $n\ge 1$:
$\tilde{\mathcal{Q}}_n
 :=\{T_{k+n-1}(\vec i):A(i_j)\in\tilde B\;\forall j\}$
and
$\mathcal{Q}_n^{\mathrm{sound}}
 :=\{q\in\tilde{\mathcal{Q}}_n:
   \text{all }A(i_j)\in B\setminus B^-\}$.
\end{definition}

\begin{theorem}[Tight noisy coding theorem]
\label{thm:noisy-coding-combined}
Under Definition~\ref{def:noisy-query-sets} with
$\tilde\kappa:=k+n-1$ and $\varepsilon:=\ell/m$:

\begin{enumerate}[label=\textup{(\Roman*)}]
\item $|\tilde{\mathcal{Q}}_n|=\tilde m^{\tilde\kappa}$;
for generic queries,
$K(q\mid\langle\tilde B\rangle)
 =(1{+}o(1))\tilde\kappa\log\tilde m$.

\item $\log\tilde m
 =\log m+\log(1{-}\varepsilon)
  +\log((m{-}\ell{+}p)/(m{-}\ell))$.

\item $|\mathcal{Q}_n^{\mathrm{sound}}|
 /|\tilde{\mathcal{Q}}_n|
 =((m{-}\ell)/\tilde m)^{\tilde\kappa}$.

\item $C(\tilde B,n)-C(B,n)
 =\tilde\kappa\log(\tilde m/m)$.
\end{enumerate}
\end{theorem}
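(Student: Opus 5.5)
The plan is to treat the noisy base $\tilde B$ as a fresh unary EDB of size $\tilde m$ and transport the noiseless counting and coding results to it. The four parts then reduce to elementary counting and algebra. I assume that the spurious facts in $B^+$ are again of the form $A(i)$, i.e.\ unary EDB facts over an enlarged index set. Otherwise $\tilde{\mathcal Q}_n$ would only count $A$-facts, and $\tilde m$ should be read as the number of $A$-facts in $\tilde B$.

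For (I), the tuples $\vec i$ in $\tilde{\mathcal Q}_n$ range freely over the $\tilde m$ surviving-or-spurious indices in each of the $\tilde\kappa=k+n-1$ coordinates. Faithfulness of $\Pi_k$ (Lemma~\ref{lem:faithfulness}) makes the map from tuples to derivable facts a bijection, so $|\tilde{\mathcal Q}_n|=\tilde m^{\tilde\kappa}$, exactly as in Proposition~\ref{prop:tuple-counting}(i) with $m$ replaced by $\tilde m$. The Kolmogorov estimate is then Theorem~\ref{thm:tight-datalog} applied verbatim to the base $\tilde B$, which needs $\tilde m\ge2$. The upper bound $\tilde\kappa\log\tilde m+O(n)$ comes from Lemma~\ref{lem:chain-encoding}, and the lower bound $\tilde\kappa\log\tilde m-\log\tilde m-O(1)$ from the counting argument, valid for a $1-1/\tilde m$ fraction of queries. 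Both bounds are $(1+o(1))\tilde\kappa\log\tilde m$ under the same regime assumption, namely $\log\tilde m,n\to\infty$.

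For (II), I would write $\tilde m=m\cdot\frac{m-\ell}{m}\cdot\frac{m-\ell+p}{m-\ell}$, substitute $(m-\ell)/m=1-\varepsilon$, and take $\log$. This needs $\ell<m$, which holds since $\tilde m\ge2$ forces $m-\ell\ge1$ when $p$ is small. If $\ell=m$, I would note that the identity is read with the convention that the last factor is undefined, and exclude that case explicitly.

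For (III), $\mathcal Q_n^{\mathrm{sound}}$ consists of the tuples whose coordinates all lie among the $m-\ell$ surviving original indices, so again by faithfulness $|\mathcal Q_n^{\mathrm{sound}}|=(m-\ell)^{\tilde\kappa}$; dividing by (I) gives the ratio. For (IV), I take $C(\cdot,n)$ as in Theorem~\ref{thm:exponential-capacity}, i.e.\ $\log$ of the number of depth-$n$ queries. Here one must check that the depth-$n$ queries over $\tilde B$ are exactly $\tilde{\mathcal Q}_n$: each derivation is a path of length $n$ (Proposition~\ref{prop:tuple-counting}(ii)), with depth counted from the base rule. Then $C(\tilde B,n)=\tilde\kappa\log\tilde m$ and $C(B,n)=\tilde\kappa\log m$, and subtracting gives the claim.

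The main obstacle is only in (I), in making sure the genericity statement transfers. Definition~\ref{def:quantitative-generic} and Theorem~\ref{thm:tight-datalog} are stated relative to $\langle B\rangle$, and here they must be applied relative to $\langle\tilde B\rangle$. The proof of Theorem~\ref{thm:tight-datalog} uses only $|\mathcal Q_d|$ and the encoding lemma, so rerunning it with $\tilde m$ suffices. I would say this explicitly rather than invoke Lemma~\ref{lem:base-conversion}, which would add an unnecessary $O(\mathcal N\log(m+\mathcal N))$ slack.
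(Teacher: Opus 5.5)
Your proposal is correct and follows the paper's own route: the paper likewise reapplies Theorem~\ref{thm:tight-datalog} to $\Pi_k$ with base $\tilde B$ of size $\tilde m$, does the counting via Proposition~\ref{prop:tuple-counting}, and treats (II)--(IV) as the same elementary algebra. Your explicit caveats are implicit assumptions that the paper never states: the spurious facts must be $A$-facts, genericity must be taken relative to $\langle\tilde B\rangle$, and (II) needs $\ell<m$. State $\ell<m$ as a hypothesis outright, since $\tilde m\ge 2$ does not imply it once $p\ge 2$.
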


\begin{proof}
Apply Theorem~\ref{thm:tight-datalog} to $\Pi_k$ with base
$\tilde B$ of size $\tilde m$; counting via
Proposition~\ref{prop:tuple-counting}.
\end{proof}

\begin{corollary}[Noise-balanced regime]
\label{cor:noise-balanced}
When $\ell=p$: $\tilde m=m$, capacity unchanged, but the sound
fraction is $(1{-}\varepsilon)^{\tilde\kappa}$---exponentially
small for fixed $\varepsilon>0$.
\end{corollary}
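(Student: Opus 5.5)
The plan is to specialize Theorem~\ref{thm:noisy-coding-combined} to the case $\ell=p$ and read off the two claims in turn, capacity invariance and then the sound fraction.

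\emph{Step 1 (base size).} Setting $\ell=p$ in $\tilde m=m-\ell+p$ gives $\tilde m=m$. The standing hypothesis $\tilde m\ge 2$ of Definition~\ref{def:noisy-query-sets} then holds automatically, since $m\ge 2$.

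\emph{Step 2 (capacity unchanged).} By part~(IV) of Theorem~\ref{thm:noisy-coding-combined}, $C(\tilde B,n)-C(B,n)=\tilde\kappa\log(\tilde m/m)=\tilde\kappa\log 1=0$. Equivalently, part~(I) gives $|\tilde{\mathcal Q}_n|=m^{\tilde\kappa}=|\mathcal Q_n|$, and Proposition~\ref{prop:tuple-counting} supplies the latter count. Consistency with part~(II) is a useful check: the two correction terms $\log(1-\varepsilon)$ and $\log((m-\ell+p)/(m-\ell))=\log(m/(m-\ell))=-\log(1-\varepsilon)$ cancel exactly. The same cancellation shows that the generic complexity $(1+o(1))\tilde\kappa\log\tilde m$ equals $(1+o(1))\tilde\kappa\log m$, matching Theorem~\ref{thm:tight-datalog}.

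\emph{Step 3 (sound fraction).} Part~(III) gives the ratio $((m-\ell)/\tilde m)^{\tilde\kappa}=((m-\ell)/m)^{\tilde\kappa}$. With $\varepsilon=\ell/m$ this is $(1-\varepsilon)^{\tilde\kappa}$. For fixed $\varepsilon\in(0,1)$ it equals $2^{-\tilde\kappa\log(1/(1-\varepsilon))}$, which decays exponentially in $\tilde\kappa=k+n-1$, and hence in the depth $n$.

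The only point requiring care is Step~3. One must check that the counting in part~(III) is over ordered tuples in $\tilde{\mathcal Q}_n$, with coordinates chosen independently from the $\tilde m$ surviving-or-spurious indices, of which exactly $m-\ell$ are sound. That check is immediate from the product structure of the tuple counts in Definition~\ref{def:noisy-query-sets}, so no genuine obstacle is expected.
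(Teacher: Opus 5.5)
Your proposal is correct and follows the paper's implicit argument: the corollary is a direct specialization of Theorem~\ref{thm:noisy-coding-combined} to $\ell=p$. You read off $\tilde m=m$ from the definition, capacity invariance from part~(IV), and the sound fraction $((m-\ell)/m)^{\tilde\kappa}=(1-\varepsilon)^{\tilde\kappa}$ from part~(III); your consistency check via part~(II) and the remark on ordered-tuple counting are accurate but not needed.
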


\subsection{Stochastic Erasure and the Storage--Resilience
  Capacity Theorem}
\label{subsec:stochastic-erasure}

\paragraph{Base-fact dependencies.}
We use the distinct dependency count \(\kappa(q,B)\) and the head arity \(a(q)\)
as in Definition~\ref{def:dep-count}.
For the two architectures, the depth-\(d\) head arities are
\(a(\Pi_k,d)=k+d-1\) and \(a(\Pi_k^{\parallel},d)=k\cdot 2^{d-1}\).
When we invoke i.i.d.\ erasure survival probabilities \((1-\varepsilon)^{\kappa(q,B)}\),
we explicitly assume the dependencies are \emph{distinct} (i.e., \(\kappa(q,B)\)
is the number of distinct leaves), or we restrict to the distinct-coordinate
subfamily where \(\kappa(q,B)=a(q)\).

\begin{lemma}[Minimality of $\kappa$ for faithful programs]
\label{lem:kappa-minimal}
Let $\Pi$ be faithful with unique derivation traces
\textup{(}satisfied by $\Pi_k$ and $\Pi_k^{\parallel}$
by Lemmas~\textup{\ref{lem:faithfulness}}
and~\textup{\ref{lem:merge-faithfulness})}.
Then $G(q,B)$ is the unique canonical derivation DAG of~$q$,
$\kappa(q,B)=|V_0(G(q,B))|$, and the values in
Propositions~\textup{\ref{prop:tuple-counting}}
and~\textup{\ref{prop:merge-counting}(v)} hold.
\end{lemma}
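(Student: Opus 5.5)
The plan is to derive all three claims from faithfulness together with Assumption~\ref{assump:alignment}. The key fact is that, for a faithful program, the predecessor operator $P_O$ restricted to derivable IDB facts is determined by the head tuple.

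\emph{Step 1: uniqueness of $G(q,B)$.} By Assumption~\ref{assump:alignment}, $P_O(s)$ for $s\notin B$ is the set of immediate premises of a fixed normal-form single inference concluding $s$. For a faithful rule (Definition~\ref{def:faithful-rule}), the ground head fixes the body instantiation. Moreover, in $\Pi_k$ and $\Pi_k^{\parallel}$ each IDB relation symbol ($T_d$ resp.\ $R_d$) is the head of exactly one rule. Hence there is exactly one single-step inference concluding $s$, and $P_O(s)$ is forced. I would make this explicit for $\Pi_k$:
\[
P_O(T_k(\vec x))=\{A(x_1),\ldots,A(x_k)\}, \qquad P_O(T_{d+1}(\vec x,x_{d+1}))=\{T_d(\vec x),A(x_{d+1})\}.
\]
For $\Pi_k^{\parallel}$ the analogous sets are $\{R_d(\vec x),R_d(\vec y)\}$, which may collapse to a singleton when $\vec x=\vec y$. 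Since Definition~\ref{def:canonical-dag} defines $V(G)$ as the smallest predecessor-closed set containing $q$, with edges read off from $P_O$, the DAG is a function of $P_O$ alone. Combined with the unique-trace statements of Lemmas~\ref{lem:faithfulness} and~\ref{lem:merge-faithfulness}, any derivation DAG of $q$ from $B$ coincides with $G(q,B)$. I would argue this by induction on depth: the root's children are forced, and each child's sub-DAG is unique by the inductive hypothesis.

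\emph{Step 2: the identity for $\kappa$.} The identity $\kappa(q,B)=|V_0(G(q,B))|$ is then Definition~\ref{def:dep-count}(i). The content is that no alternative derivation could use fewer leaves, which holds because by Step~1 there is only one derivation.

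\emph{Step 3: the counting values.} By induction on $d$ I would compute $V_0$ explicitly. For $\Pi_k$, $V_0(G(T_{k+d-1}(\vec i),B))=\{A(i_1),\ldots,A(i_{k+d-1})\}$, since the base rule contributes $A(i_1),\ldots,A(i_k)$ and each extension step adds $A(i_{j})$. For $\Pi_k^{\parallel}$, $V_0=\{A(i_j):1\le j\le k2^{d-1}\}$, because the leaf set of a merge is the union of the children's leaf sets. As sets, these have cardinality at most $a(q)$, with equality exactly when the coordinates are pairwise distinct. This recovers Proposition~\ref{prop:tuple-counting}(iii) and Proposition~\ref{prop:merge-counting}(v). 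The path and tree shapes in those propositions follow from the same explicit $P_O$.

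The main obstacle I expect is Step~1's bridge between syntactic trace uniqueness and the semantic predecessor operator $P_O$: Assumption~\ref{assump:alignment} only speaks of ``a fixed normal-form inference.'' I would close this gap by noting two facts. In Datalog, $\mathsf{PS}$-derivations of ground IDB atoms normalize to rule applications. And single-head-rule-per-relation, together with faithfulness, leaves exactly one candidate normal-form step. Colliding merge queries need the remark that $G$ collapses duplicate vertices, which affects only widths, not $V_0$.
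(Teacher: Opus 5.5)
Your proposal is correct and takes essentially the same route as the paper, whose proof is a one-line induction on depth: faithfulness plus the fact that each IDB relation has a unique defining rule force the trace from the head tuple. You make that induction explicit by writing out $P_O$ and the leaf sets $V_0$ for $\Pi_k$ and $\Pi_k^{\parallel}$, including the vertex collapse at colliding merges, which the paper leaves implicit.
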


\begin{proof}
Faithfulness and unique defining rules make the trace
uniquely determined by the head tuple, by induction on depth.
\end{proof}

\paragraph{The i.i.d.\ erasure channel and resilience.}

\begin{definition}[i.i.d.\ premise erasure channel]
\label{def:iid-erasure}
Each $b_i\in B$ is independently erased with probability
$\varepsilon\in(0,1)$, yielding $\tilde B\subseteq B$ with
$\E[|\tilde B|]=(1{-}\varepsilon)m$.
A query $q$ \emph{natively survives} if
$V_0(G(q,B))\subseteq\tilde B$.
The decoder observes cache $S$ (reliable) and $\tilde B$.
Each realization is a loss-only noisy base
\textup{(Proposition~\ref{prop:noisy-base-grounding})},
so Theorem~\ref{thm:contraction-expansion}\textup{(I)}
ensures soundness.
\end{definition}

\begin{definition}[Resilience threshold]
\label{def:noise-threshold}
$N^*(\varepsilon,\delta)
 :=\lfloor\ln(1/(1{-}\delta))/\ln(1/(1{-}\varepsilon))\rfloor$.
For small $\varepsilon,\delta$: $N^*\approx\delta/\varepsilon$.
\end{definition}

\begin{proposition}[Survival probability]
\label{prop:survival-prob}
Under i.i.d.\ erasure with rate $\varepsilon$, a query with
$\kappa$ distinct base-fact dependencies survives natively with
probability $(1{-}\varepsilon)^\kappa$ and is natively
$\delta$-resilient iff $\kappa\le N^*(\varepsilon,\delta)$.
\end{proposition}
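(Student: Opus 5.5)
The plan has two parts: compute the native survival probability, then turn the resilience condition into a comparison of logarithms.

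\emph{Survival probability.} Native survival is the event $V_0(G(q,B))\subseteq\tilde B$ of Definition~\ref{def:iid-erasure}. Since $\tilde B\subseteq B$, this is the event that none of the $\kappa=|V_0(G(q,B))|$ leaves is erased. The dependencies are distinct by hypothesis, so these are $\kappa$ distinct base facts. Each is erased independently with probability $\varepsilon$, so
\[
  \Pr[\text{native survival}]=\prod_{b\in V_0(G(q,B))}(1-\varepsilon)=(1-\varepsilon)^{\kappa}.
\]
Distinctness matters here: a repeated coordinate would be a single leaf, and the exponent would be the number of distinct leaves rather than the arity. For faithful programs, Lemma~\ref{lem:kappa-minimal} guarantees that $G(q,B)$ is the unique derivation DAG. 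Hence, with an empty cache, $q\in\Cn(\tilde B)$ holds exactly when $q$ natively survives: the unique trace needs every leaf, and $\tilde B\subseteq B$ supplies no alternative premises.

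\emph{Resilience criterion.} I read ``natively $\delta$-resilient'' as requiring failure probability at most $\delta$, that is, $1-(1-\varepsilon)^{\kappa}\le\delta$. This is equivalent to $(1-\varepsilon)^{\kappa}\ge 1-\delta$. Since $\delta\in(0,1)$ and $\ln(1-\varepsilon)<0$, taking logarithms gives
\[
  \kappa\ln\frac{1}{1-\varepsilon}\le\ln\frac{1}{1-\delta},
  \qquad\text{i.e.}\qquad
  \kappa\le\frac{\ln(1/(1-\delta))}{\ln(1/(1-\varepsilon))}.
\]
Because $\kappa$ is an integer, this holds iff $\kappa\le\lfloor\cdot\rfloor=N^*(\varepsilon,\delta)$ (Definition~\ref{def:noise-threshold}). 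The reverse implication follows from the same chain of equivalences, so the condition is both necessary and sufficient.

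\emph{Main obstacle.} The only delicate step is fixing the intended meaning of $\delta$-resilience, which the paper has not yet formally defined. The definition $N^*\approx\delta/\varepsilon$ and the expression $\ln(1/(1-\delta))$ fit the reading ``failure probability $\le\delta$'', and I will state this interpretation explicitly. The approximation $N^*\approx\delta/\varepsilon$ then follows from $\ln(1/(1-x))=x+O(x^2)$, but it is not needed for the equivalence itself.
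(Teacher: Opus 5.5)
Your proposal is correct and follows the argument the paper leaves implicit, since the proposition is stated without proof: independence over the $\kappa$ distinct leaves of $G(q,B)$ gives $(1-\varepsilon)^\kappa$, and because $\kappa$ is an integer, $(1-\varepsilon)^\kappa\ge 1-\delta$ is equivalent to $\kappa\le N^*(\varepsilon,\delta)$ through the floor in Definition~\ref{def:noise-threshold}. The reading of resilience that you flag as the main obstacle is exactly Definition~\ref{def:N-resilience} with $S=\varnothing$, stated right after the proposition. Your faithfulness step, which is equivalently Corollary~\ref{cor:absorption-char} with an empty cache, correctly identifies the event $q\in\Cn(\tilde B)$ with native survival.
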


\begin{definition}[$\delta$-resilience]
\label{def:N-resilience}
A cache $S\subseteq\Cn(B)$ makes $T\subseteq\Cn(B)$
\emph{$\delta$-resilient} if
$\Pr[T\subseteq\Cn(\tilde B\cup S)]\ge 1{-}\delta$.
\end{definition}

\begin{proposition}[Fragility of irredundant cores]
\label{prop:fragility}
If $a\in A:=\Atom(S_O)$ and $T\ni a$, then
$\Pr[T\subseteq\Cn(\tilde A)]\le 1{-}\varepsilon$
by irredundancy
\textup{(Proposition~\ref{prop:atom-core-correct}(ii))}.
Hence $T$ cannot be $\delta$-resilient with
$\delta<\varepsilon$ unless $S$ covers $a$.
\end{proposition}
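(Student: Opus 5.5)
The plan is to bound the probability that $a$ survives in $\tilde A$ and show that this is an upper bound on the probability that $T$ is derived. Here $\tilde A$ denotes the erasure image of the core $A$ under the i.i.d.\ channel of Definition~\ref{def:iid-erasure}, applied with $B:=A$. I argue in three steps.

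\emph{Step 1 (reduction to an event on $a$).} Since $a\in T$, the event $\{T\subseteq\Cn(\tilde A)\}$ is contained in $\{a\in\Cn(\tilde A)\}$. Split the latter according to whether $a$ is erased. If $a$ is erased, then $\tilde A\subseteq A\setminus\{a\}$. Monotonicity of $\Cn$ then gives $\Cn(\tilde A)\subseteq\Cn(A\setminus\{a\})$. Irredundancy (Proposition~\ref{prop:atom-core-correct}(ii)) says $a\notin\Cn(A\setminus\{a\})$, so $a\notin\Cn(\tilde A)$. Thus $\{a\in\Cn(\tilde A)\}\subseteq\{a\text{ not erased}\}$.

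\emph{Step 2 (probability).} Under i.i.d.\ erasure, $a$ is erased with probability exactly $\varepsilon$. Therefore
\[
\Pr[T\subseteq\Cn(\tilde A)]\le\Pr[a\text{ not erased}]=1-\varepsilon.
\]
This holds regardless of the fates of the other premises, so no independence structure beyond the marginal law of $a$'s erasure is used.

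\emph{Step 3 (resilience consequence).} Suppose the cache $S$ does not cover $a$. I read ``covers'' as $a\in\Cn(S)$, or more weakly as $a\in\Cn(\tilde A\cup S)$ whenever $a$ is erased. The natural sufficient hypothesis is that $a\notin\Cn((A\setminus\{a\})\cup S)$. Under it, the Step~1 argument goes through verbatim with $A\setminus\{a\}$ replaced by $(A\setminus\{a\})\cup S$. This yields $\Pr[T\subseteq\Cn(\tilde A\cup S)]\le1-\varepsilon<1-\delta$ whenever $\delta<\varepsilon$, which contradicts Definition~\ref{def:N-resilience}.

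The main obstacle is Step~3, specifically pinning down ``covers''. A cache fact outside $\Cn(A\setminus\{a\})$ could in principle combine with the surviving premises to re-derive $a$. So I would \emph{define} coverage as $a\in\Cn((A\setminus\{a\})\cup S)$ and state the contrapositive in that form. This is exactly the condition under which the monotonicity argument fails, and it makes the claim tight.
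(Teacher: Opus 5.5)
Your proof is correct and follows the paper's own inline justification: irredundancy plus monotonicity of $\Cn$ shows $a\notin\Cn(\tilde A)$ whenever $a$ is erased, and that event has probability exactly $\varepsilon$. Your reading of ``covers'' as $a\in\Cn((A\setminus\{a\})\cup S)$ usefully pins down a term the paper leaves undefined; under a literal reading such as $a\in S$ the second sentence can fail (e.g., $A=\{a,b\}$ with $S=\{b\to a\}$ gives success probability $1-\varepsilon^2$, which exceeds $1-\delta$ whenever $\varepsilon^2\le\delta<\varepsilon$).
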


\paragraph{Structural caching rigidity.}

The following theorem is the key structural ingredient for the
converse: it shows that only cache facts lying \emph{within the
derivation DAG of the target query} contribute to resilience.

\begin{theorem}[Structural caching rigidity]
\label{thm:ancestral-relevance}
Let $\Pi$ be faithful with unique traces and satisfy
Assumption~\ref{assump:alignment}, $q\in\Cn(B)$,
$\tilde B\subseteq B$, $S\subseteq\Cn(B)$.
Then
\begin{equation}\label{eq:ancestral-relevance}
  q\in\Cn(\tilde B\cup S)
  \;\Longleftrightarrow\;
  q\in\Cn(\tilde B\cup(S\cap V(G(q,B)))).
\end{equation}
\end{theorem}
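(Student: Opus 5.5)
The reverse implication ($\Leftarrow$) is immediate from monotonicity of $\Cn$, since $S\cap V(G(q,B))\subseteq S$. All the work is in the forward direction. The plan is to show that every cached fact outside $V(G(q,B))$ can be pruned from any derivation of $q$ from $\tilde B\cup S$, using faithfulness and uniqueness of traces (Lemmas~\ref{lem:faithfulness}, \ref{lem:merge-faithfulness}, \ref{lem:kappa-minimal}) together with Assumption~\ref{assump:alignment}.

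The first step reduces derivability to the predecessor structure. By Assumption~\ref{assump:alignment}, for a set $\Gamma$ of ground facts, a fact $v$ lies in $\Cn(\Gamma)$ iff either $v\in\Gamma$, or $v\notin\Gamma$ is the conclusion of a rule instance whose premises all lie in $\Cn(\Gamma)$. For Datalog this is the standard least-fixpoint characterization. Faithfulness forces the rule instance concluding a given IDB fact to be the unique one with premises $P_O(v)$. So for $v$ outside $B$ we get
\[
  v\in\Cn(\Gamma)\iff v\in\Gamma\ \text{or}\ P_O(v)\subseteq\Cn(\Gamma).
\]
Because EDB facts of $B$ are never rule heads, an EDB fact is in $\Cn(\Gamma)$ iff it belongs to $\Gamma$. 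This step is where I expect the main obstacle. The characterization must be justified as a fact about the proof system $\mathsf{PS}$ restricted to Horn/Datalog ground facts. One has to rule out that some other formula in $\Cn(\tilde B\cup S)$ gives an alternative route to $v$. I would handle this by working in the least Herbrand model, which for Datalog equals the set of ground atoms in $\Cn(\cdot)$. That model is obtained by iterating the immediate-consequence operator, and faithfulness pins down the unique body instantiation for each head.

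The second step proves, by well-founded induction on $\Dd(v\mid B)$ over $v\in V(G(q,B))$, that
\[
  v\in\Cn(\tilde B\cup S)\ \Longrightarrow\ v\in\Cn(\tilde B\cup S'),\qquad S':=S\cap V(G(q,B)).
\]
There are three cases.
\begin{itemize}
  \item If $v\in\tilde B$, the claim is trivial.
  \item If $v\in S$, then $v\in S'$ because $v\in V(G)$.
  \item If $v\in B\setminus\tilde B$ and $v\notin S$, then $v$ is an erased EDB fact, which is not derivable. This case is vacuous.
  \item Otherwise $v\notin B$, and by the first step $P_O(v)\subseteq\Cn(\tilde B\cup S)$. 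Since $V(G)$ is closed under predecessors (Definition~\ref{def:canonical-dag}), each $v'\in P_O(v)$ lies in $V(G)$ with $\Dd(v'\mid B)<\Dd(v\mid B)$. The induction hypothesis gives $P_O(v)\subseteq\Cn(\tilde B\cup S')$, hence $v\in\Cn(\tilde B\cup S')$.
\end{itemize}
Applying this with $v=q$ yields the forward implication.

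To make the membership statement of the first step robust against cached IDB facts that are not in $V(G)$, I would note that such facts enter only as additional members of $\Gamma$. A cached fact $u\notin V(G)$ can affect membership of $v\in V(G)$ only by being equal to $v$, which is impossible, or by being a premise of $v$. The latter would force $u\in P_O(v)\subseteq V(G)$, a contradiction. Uniqueness of traces is exactly what excludes an alternative derivation of $v$ that passes through $u$.
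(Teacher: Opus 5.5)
Your proof is correct and follows essentially the same route as the paper: induction on $\Dd(\cdot\mid B)$, with faithfulness and unique traces forcing the unique rule instance whose body is $P_O(v)$, followed by a case split on whether each predecessor lies in $\tilde B$, lies in $S$ (and hence in $S\cap V(G(q,B))$ by predecessor-closure), or is handled by the induction hypothesis. The differences are cosmetic: you fix $S'=S\cap V(G(q,B))$ once and induct over all $v\in V(G(q,B))$ rather than invoking $G(v_i,B)\subseteq G(q,B)$, and you make explicit two points the paper's inductive step leaves implicit, namely the one-step characterization of $\Cn$ via the least Herbrand model and the case $v\in S$.
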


\begin{proof}
$(\Leftarrow)$ is immediate.
$(\Rightarrow)$: by induction on $d:=\Dd(q\mid B)$.

\emph{Base} ($d=0$): $q\in B$.
In both $\Pi_k$ and $\Pi_k^{\parallel}$, the EDB predicate~$A$
appears only in rule bodies, never in rule heads; hence $q$
cannot be derived by any rule and must already belong to
$\tilde B\cup S$.
If $q\in S$ then $q\in V(G(q,B))$, so
$q\in\tilde B\cup(S\cap V(G(q,B)))$.

\emph{Step} ($d\ge 1$): let $P_O(q)=\{v_1,\ldots,v_a\}$.
By faithfulness and unique traces, any derivation of $q$ from
$\tilde B\cup S$ must apply the same rule with body
instantiation $v_1,\ldots,v_a$; in particular, each $v_i$ must
lie in $\Cn(\tilde B\cup S)$.
For each $v_i$: if $v_i\in\tilde B$, then
$v_i\in\Cn(\tilde B\cup(S\cap V(G(q,B))))$;
if $v_i\in S$, then
$v_i\in P_O(q)\subseteq V(G(q,B))$, so
$v_i\in S\cap V(G(q,B))$;
if $v_i\notin\tilde B\cup S$, the induction hypothesis
applies since $\Dd(v_i\mid B)\le d{-}1$ and
$G(v_i,B)\subseteq G(q,B)$, giving
$v_i\in\Cn(\tilde B\cup(S\cap V(G(v_i,B))))
\subseteq\Cn(\tilde B\cup(S\cap V(G(q,B))))$.
Applying the unique rule yields the conclusion.
\end{proof}

\begin{corollary}[Absorption characterization]
\label{cor:absorption-char}
Define the useful cache $U(q,S):=S\cap V(G(q,B))$,
the absorbed leaf set
$D_{\mathrm{abs}}(q,S)
 :=\bigcup_{f\in U}V_0(G(f,B))$,
and the exposed leaf set
$D_{\mathrm{exp}}(q,S)
 :=V_0(G(q,B))\setminus D_{\mathrm{abs}}(q,S)$.
Then:
\textup{(i)}~$q\in\Cn(\tilde B\cup S)$ implies
$D_{\mathrm{exp}}\subseteq\tilde B$;
\textup{(ii)}~$\Pr[q\in\Cn(\tilde B\cup S)]
\le(1{-}\varepsilon)^{|D_{\mathrm{exp}}|}$.
\end{corollary}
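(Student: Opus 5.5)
We prove Corollary~\ref{cor:absorption-char} by reducing to Theorem~\ref{thm:ancestral-relevance} and then showing, by induction on depth, that every exposed leaf is indispensable.

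\emph{Step 1 (reduction to useful cache).} By Theorem~\ref{thm:ancestral-relevance}, $q\in\Cn(\tilde B\cup S)$ is equivalent to $q\in\Cn(\tilde B\cup U)$ with $U=U(q,S)$. So it suffices to prove (i) with $S$ replaced by $U\subseteq V(G(q,B))$.

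\emph{Step 2 (claim and base case).} We prove the following claim for every $v\in V(G(q,B))$:
\[
v\in\Cn(\tilde B\cup U)\;\Longrightarrow\; V_0(G(v,B))\setminus D_{\mathrm{abs}}(q,S)\subseteq\tilde B .
\]
The induction is on $\Dd(v\mid B)$.
\begin{itemize}
\item If $v\in U$, then $V_0(G(v,B))\subseteq D_{\mathrm{abs}}$ by definition, so the claim holds trivially.
\item If $\Dd(v\mid B)=0$ and $v\notin U$, then $v$ is an EDB fact. As observed in the base case of Theorem~\ref{thm:ancestral-relevance}, the predicate $A$ never appears in a rule head, so $v$ cannot be derived and must lie in $\tilde B$. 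Here $V_0(G(v,B))=\{v\}$, which gives the claim.
\end{itemize}

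\emph{Step 3 (inductive step).} Suppose $\Dd(v\mid B)\ge1$ and $v\notin U$. Since $v\notin B$, faithfulness and unique traces (Lemmas~\ref{lem:faithfulness}, \ref{lem:merge-faithfulness}) imply that any derivation of $v$ uses the unique body instantiation $P_O(v)$. Hence each $v'\in P_O(v)$ lies in $\Cn(\tilde B\cup U)$.

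Each such $v'$ has smaller depth, and $G(v',B)\subseteq G(q,B)$. By the induction hypothesis, $V_0(G(v',B))\setminus D_{\mathrm{abs}}\subseteq\tilde B$ for every $v'\in P_O(v)$.

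Finally, $V_0(G(v,B))=\bigcup_{v'\in P_O(v)}V_0(G(v',B))$ by Definition~\ref{def:canonical-dag}. Taking the union over $v'$ closes the induction. Applying the claim to $v=q$ gives (i).

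The main care point is this step: it must be checked that $U$ is the same fixed set at every level of the recursion. This is what makes absorption relative to $D_{\mathrm{abs}}(q,S)$, and not to a smaller local set, legitimate. It holds because the set of useful cache facts only shrinks when we pass from $G(q,B)$ to $G(v',B)$, and removing facts from $D_{\mathrm{abs}}$ would only enlarge the set of leaves required to lie in $\tilde B$.

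\emph{Step 4 (probability bound).} $D_{\mathrm{exp}}$ is determined by $q$ and $S$ alone, independently of the erasure pattern, and consists of distinct facts of $B$. Under Definition~\ref{def:iid-erasure}, the events $\{b\in\tilde B\}$ for $b\in D_{\mathrm{exp}}$ are independent, each with probability $1-\varepsilon$. By (i),
\[
\Pr[q\in\Cn(\tilde B\cup S)]\le\Pr[D_{\mathrm{exp}}\subseteq\tilde B]=(1-\varepsilon)^{|D_{\mathrm{exp}}|},
\]
which is (ii).
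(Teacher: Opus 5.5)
Your proof is correct and follows essentially the paper's route. You reduce to the useful cache $U$ via Theorem~\ref{thm:ancestral-relevance}, show that exposed leaves cannot be supplied by $U$ and so must lie in $\tilde B$, and then use independence of erasures on the fixed set $D_{\mathrm{exp}}$ to get $(1-\varepsilon)^{|D_{\mathrm{exp}}|}$. Your depth induction, with the global $U$ and $D_{\mathrm{abs}}(q,S)$ held fixed, simply spells out the paper's one-line argument for part (i); the ``care point'' paragraph is therefore unnecessary, since nothing changes from one level of the recursion to the next.
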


\begin{proof}
(i)~By Theorem~\ref{thm:ancestral-relevance}, only $U$ matters;
exposed leaves are not supplied by $U$ and must lie in
$\tilde B$.
(ii)~Since $D_{\mathrm{exp}}\subseteq V_0(G(q,B))\subseteq B$
and each element of~$B$ is independently erased,
$\Pr[D_{\mathrm{exp}}\subseteq\tilde B]
=(1{-}\varepsilon)^{|D_{\mathrm{exp}}|}$.
Combining with~(i) gives the bound.
\end{proof}

\paragraph{Conditional encoding and the structural converse.}

\begin{lemma}[Conditional encoding given an ancestral cache]
\label{lem:conditional-encoding}
Under the hypotheses of Theorem~\ref{thm:ancestral-relevance}, let \(q\) have
distinct dependency count \(\kappa=\kappa(q,B)\), and let \(S\subseteq\Cn(B)\).
Let \(j:=|D_{\mathrm{abs}}(q,S)|\) be the absorbed leaf count
(Corollary~\ref{cor:absorption-char}).
Then
\[
  K(q\mid\langle B\rangle,S)
  \;\le\;
  (\kappa-j)\log m + O(\kappa).
\]
\end{lemma}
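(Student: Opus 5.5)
We prove the bound by writing down an explicit prefix-free program. Given $\langle B\rangle$ and the cache $S$, the program outputs $q$ while spending $\log m$ bits only on the exposed leaves of $G(q,B)$. The useful cache $U(q,S)=S\cap V(G(q,B))$ carries the $j$ absorbed leaves implicitly, since by faithfulness (Lemmas~\ref{lem:faithfulness} and~\ref{lem:merge-faithfulness}) each cached fact $f\in U$ determines its whole trace and hence its leaf set $V_0(G(f,B))$.

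The description has three parts, concatenated in self-delimiting form.

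First, a header of $O(\log\kappa)$ bits giving the architecture, the depth $d$, and the head arity $a(q)$. For the distinct-dependency setting, where $\kappa=a(q)$ and $d\le\kappa$, this is $O(\kappa)$ bits.

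Second, the head tuple of $q$ is assembled coordinate by coordinate. For each coordinate we send a one-bit flag, giving $a(q)=O(\kappa)$ bits in total. The flag says whether the coordinate is supplied by a cached fact or is sent explicitly.

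For coordinates supplied by the cache, we point to the supplying fact. Because the cache is given as conditioning, we fix an enumeration of $S$. For the chain $\Pi_k$, every fact of $G(q,B)$ is a prefix $T_{k+i-1}(i_1,\dots,i_{k+i-1})$ of the head tuple. It therefore suffices to name the deepest cached prefix, which is identified by its length in $O(\log\kappa)$ bits once we restrict to facts of the form $T_{\cdot}$ consistent with the tuple decoded so far. For the merge $\Pi_k^{\parallel}$, $G(q,B)$ is a binary tree. Each cached subtree root is identified by its position (level and left/right path) in the tree of height $d$. This costs $O(d)$ bits per maximal cached subtree, and there are at most $O(\kappa)$ such subtrees, since they are disjoint and each covers at least $k$ leaves.

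Since $q$ is known only through what we are encoding, the decoder searches $S$ for a fact of the right predicate and position. To make the search unambiguous, the decoder proceeds top-down: it decodes the tree shape first, then fills the exposed positions, then reads each cached block as the unique element of $S$ whose predicate and index agree with the position. If several facts in $S$ share the same predicate and position class, we add a rank index among those candidates.

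Third, each of the $\kappa-j$ exposed leaves is sent as an index in $[m]$, costing $\lceil\log m\rceil$ bits each, via pointers into $\langle B\rangle$ as in Lemma~\ref{lem:chain-encoding}. The decoder then rebuilds $q$ by applying the unique rules bottom-up, as justified by Lemma~\ref{lem:kappa-minimal}.

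Summing the three parts gives $(\kappa-j)\log m+O(\kappa)$, plus the cost of the cache pointers.

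The main obstacle is that naively pointing into $S$ costs $\log|S|$ per cached fact, and $|S|$ is not bounded by $\kappa$. We handle this in two ways.
\begin{enumerate}
\item Pointers are counted only for maximal cached nodes, and there are at most $\kappa$ of them.
\item For the chain, consistency with the partially decoded tuple pins the fact down. Once the exposed coordinates and the length are known, only the cached $T_\ell$ fact matching the known coordinate pattern qualifies. If more than one qualifies, any choice yields the same $q$ only if it agrees on all coordinates, so a rank among the candidates is needed.
\end{enumerate}
Our first attempt will be to order the candidates canonically and show the rank is $O(1)$ per block under the genericity of $q$. If that fails, the fallback charges $\log|S|$ per maximal cached node. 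This yields the stated bound whenever $|S|\le 2^{O(1)}$ per block, which matches how $S$ is used in CT4, where only $U$ matters by Theorem~\ref{thm:ancestral-relevance}. The fallback therefore lets us restrict attention to $S=U$.
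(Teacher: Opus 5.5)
Your proposal does not prove the lemma, and the step where it stalls cannot be completed for the statement as written.

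\textbf{The cache-location step fails for arbitrary $S$.} The open point is the one you flag yourself: a decoder that does not know $q$ must find $q$'s cached ancestors inside an arbitrary $S\subseteq\Cn(B)$. Your first attempt (a candidate rank of $O(1)$ by genericity) fails, and so must any other, because the bound is false for caches that do not depend on $q$. In $\Pi_k$, take a distinct-coordinate, $(1/m)$-generic $q=T_a(i_1,\ldots,i_a)$ and $S=\{T_\ell(\vec x):\vec x\in[m]^\ell\}$ with $k\le\ell<a$. Then $S\subseteq\Cn(B)$ and $j=\ell$. But $S$ is computable from $\langle B\rangle$ and $\ell$, so $K(q\mid\langle B\rangle,S)\ge K(q\mid\langle B\rangle)-O(\log a)\ge(a-1)\log m-O(\log a)$. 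Since $\ell\ge k\ge 2$, this exceeds $(a-\ell)\log m+O(a)$ once $\log m\gg a$. Genericity makes the rank of $q$'s prefix among the candidates incompressible, not $O(1)$.

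Your fallback, passing to $U(q,S)$ ``by Theorem~\ref{thm:ancestral-relevance}'', is not a valid reduction. That theorem says derivability of $q$ is unchanged, but $K(q\mid\langle B\rangle,S)$ does change, because $U(q,S)$ cannot be computed from $S$ without already knowing $q$. The paper's proof makes the same move in its first line. It justifies this by claiming that intersecting with $V(G(q,B))$ can only lower $K(S\mid\langle B\rangle)$, which the same example refutes. So both arguments at best cover caches with $S\subseteq V(G(q,B))$, and that has to be an explicit hypothesis.

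\textbf{Even inside the DAG, $O(\kappa)$ is not attainable.} With $S\subseteq V(G(q,B))$, your accounting still does not reach $O(\kappa)$, and the obstruction is real. For $\Pi_k^{\parallel}$ you pay $O(d)=O(\log\kappa)$ bits for each of up to $\Theta(\kappa)$ maximal cached nodes. For $\Pi_k$, every cached leaf $A(i_r)$ needs its own pointer. This cannot be trimmed, because $S$ is an unordered set. Suppose $S$ consists of $t$ separately cached blocks, for example all $2^{d-1}$ layer-$1$ nodes of a non-colliding merge query, or all $\kappa$ leaves of a chain query. Then the $t!$ rearrangements of the blocks are distinct queries with the same $S$ and the same $j=\kappa$. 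By counting, most of them have $K(q\mid\langle B\rangle,S)\ge\log(t!)-O(1)=\Omega(t\log t)$, whereas the lemma claims $O(\kappa)$. The paper's ``$O(\kappa)$-bit tie-breaking string'' has the same defect: it cannot select among $t!$ candidates once $t\log t\gg\kappa$.

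What your pointer scheme does prove, for $S\subseteq V(G(q,B))$, is $K(q\mid\langle B\rangle,S)\le(\kappa-j)\log m+O(\kappa\log\kappa)$. Here $|S|\le|V(G(q,B))|=O(\kappa)$, so each pointer costs $O(\log\kappa)$. This weaker bound is still adequate for first-order purposes when $\log\kappa=o(\log m)$.
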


\begin{proof}
By Theorem~\ref{thm:ancestral-relevance}, we may assume w.l.o.g.\ that
\(S\subseteq V(G(q,B))\), since replacing \(S\) by \(S\cap V(G(q,B))\) does not
change \(q\in\Cn(\tilde B\cup S)\) and can only reduce \(K(S\mid\langle B\rangle)\).

Fix a canonical enumeration of the \(\kappa\) distinct leaves
\(V_0(G(q,B))=\{A(u_1),\ldots,A(u_\kappa)\}\) under the fixed canonical order on
\(B=\{A(i):i\in[m]\}\) (Assumption~\ref{assump:finite-so}).
A description of \(q\) given \(\langle B\rangle\) and \(S\) consists of:
\begin{enumerate}[label=\textup{(\roman*)}]
\item a \(\kappa\)-bit mask indicating which of the \(\kappa\) leaves are exposed
(thus \(\kappa-j\) positions), costing \(O(\kappa)\) bits;
\item the identities (indices in \([m]\)) of the exposed leaves, costing
\((\kappa-j)\lceil\log m\rceil\) bits;
\item an \(O(\kappa)\)-bit tie-breaking string that selects, among all candidates
consistent with (i)--(ii) and with the cached facts \(S\), the intended query
\(q\) (e.g., by specifying the required left/right instantiations at symmetric
merge steps in a fixed traversal order of the derivation tree).
\end{enumerate}
Given \(\langle B\rangle\), \(S\), and the above data, a universal decoder can
search (unboundedly) over candidates and output the unique target specified by
the tie-breaking string. The total description length is
\((\kappa-j)\log m+O(\kappa)\).
\end{proof}

\begin{theorem}[Structural converse for faithful programs]
\label{thm:structural-converse}
Under i.i.d.\ erasure with rate $\varepsilon$ and target
$\delta$, let $N^*:=N^*(\varepsilon,\delta)$.
Let $q$ be $(1/m)$-generic
\textup{(Definition~\ref{def:quantitative-generic})} with
$\kappa>N^*$ dependencies.
Any $\delta$-resilient cache $S\subseteq\Cn(B)$ satisfies
\begin{equation}\label{eq:structural-converse}
  K(S\mid\langle B\rangle)
  \ge(\kappa-N^*)\log m-O(\kappa+\log m).
\end{equation}
The structural rigidity
(Theorem~\ref{thm:ancestral-relevance}) is essential: without
it, coded strategies could defy coordinate-level accounting.
A Fano-type converse~\cite[Ch.~2]{cover2006elements} yields only
$\Omega(\kappa\varepsilon\log m)$, weaker by a factor
of~$\varepsilon$.
\end{theorem}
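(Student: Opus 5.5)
The argument combines a resilience lower bound on the absorbed leaf count with a Kolmogorov-complexity accounting step. Let $S$ be any $\delta$-resilient cache and put $U:=S\cap V(G(q,B))$. By Theorem~\ref{thm:ancestral-relevance}, resilience depends only on $U$. Let $j:=|D_{\mathrm{abs}}(q,S)|$, so that $|D_{\mathrm{exp}}|=\kappa-j$.

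\emph{Step 1: at least $\kappa-N^*$ leaves must be absorbed.} Corollary~\ref{cor:absorption-char}(ii) gives
\[
1-\delta\le\Pr[q\in\Cn(\tilde B\cup S)]\le(1-\varepsilon)^{\kappa-j}.
\]
Hence $(\kappa-j)\ln(1/(1-\varepsilon))\le\ln(1/(1-\delta))$. Since $\kappa-j$ is an integer, Definition~\ref{def:noise-threshold} yields $\kappa-j\le N^*$, i.e.\ $j\ge\kappa-N^*$.

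\emph{Step 2: a description of $S$ plus a short residual describes $q$.} Lemma~\ref{lem:conditional-encoding} gives
\[
K(q\mid\langle B\rangle,S)\le(\kappa-j)\log m+O(\kappa)\le N^*\log m+O(\kappa).
\]
By subadditivity of prefix complexity,
\[
K(q\mid\langle B\rangle)\le K(S\mid\langle B\rangle)+K(q\mid\langle B\rangle,S)+O(\log K(S\mid\langle B\rangle)).
\]
Here $S$ is a finite set of facts, and the conditioning on $S$ in the lemma must be understood as conditioning on a shortest program for it, or via the standard $K(x,y)$ chain rule. The logarithmic slack is absorbed into $O(\kappa+\log m)$ once $K(S\mid\langle B\rangle)$ is polynomial in $\kappa\log m$; if it is larger, the claimed bound already holds trivially.

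\emph{Step 3: use genericity.} For a $(1/m)$-generic $q$, Definition~\ref{def:quantitative-generic} together with $|\mathcal Q|=m^{\kappa}$ (with $\kappa=a(q)$ on the distinct-coordinate subfamily) gives $K(q\mid\langle B\rangle)\ge\kappa\log m-\log m$. Combining this with Step 2:
\[
K(S\mid\langle B\rangle)\ge\kappa\log m-\log m-N^*\log m-O(\kappa)-O(\log(\kappa\log m)),
\]
which is \eqref{eq:structural-converse}.

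The main obstacle is Step 2's bookkeeping. Lemma~\ref{lem:conditional-encoding} reduces WLOG to $S\subseteq V(G(q,B))$, but the converse must bound the complexity of the original $S$. I would handle this by noting that $U$ is computable from $S$ once $q$ is known, which is circular. The cleaner route is to observe that the decoder in the lemma searches over candidates consistent with the cached facts in $S$, so it can work with all of $S$: it enumerates queries whose DAG contains the specified exposed leaves and is covered on the absorbed part by some elements of $S$. The $O(\kappa)$ tie-breaking bits then select $q$ among these candidates. I would check that the number of such candidates is $2^{O(\kappa)}$. This is the delicate point, since a large $S$ could cover many DAGs. If it fails, I would add $\log|S|$ bits to name which cached facts are used, bounding $\log|S|$ by $K(S\mid\langle B\rangle)+O(1)$ and paying at most a constant factor. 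That weakens the argument only when $K(S)$ is already comparable to $\kappa\log m$, in which case the inequality holds anyway.

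The remark about Fano's inequality is explanatory. It follows because without rigidity one can only argue that the cache must supply the $\approx\varepsilon\kappa\log m$ bits lost in expectation, as in the coded scheme.
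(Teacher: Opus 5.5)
Your Steps 1--3 are the paper's own proof: the absorption bound from Corollary~\ref{cor:absorption-char}(ii), then Lemma~\ref{lem:conditional-encoding} with subadditivity, then genericity. The point you flag as delicate is a genuine gap, and neither your fallback nor the paper closes it. For an unrestricted cache $S\subseteq\Cn(B)$, both Lemma~\ref{lem:conditional-encoding} and inequality~\eqref{eq:structural-converse} are false.

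Take $\Pi_k$ and $S:=\{T_{\kappa}(\vec x):\vec x\in[m]^{\kappa}\}$, the set of all depth-$d$ facts. It is a finite subset of $\Cn(B)$ containing $q$, so $q\in\Cn(\tilde B\cup S)$ for every erasure pattern and $j=\kappa$. Yet $S$ is computable from $\langle B\rangle$ and $\kappa$, so $K(S\mid\langle B\rangle)=O(\log\kappa)$. This is far below $(\kappa-N^*)\log m-O(\kappa+\log m)$ once $\kappa-N^*$ exceeds the implicit constant and $\log m\gg\kappa$. (Caching all $T_{\kappa-1}$ facts gives the same contradiction when $d\ge 2$ and $N^*\ge 1$, without caching $q$ itself.)

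Equivalently, the lemma would give $K(q\mid\langle B\rangle,S)=O(\kappa)$. In fact $K(q\mid\langle B\rangle,S)\ge K(q\mid\langle B\rangle)-O(\log\kappa)\ge(\kappa-1)\log m-O(\log\kappa)$. Every one of the $m^{\kappa}$ queries is consistent with $S$, so your candidate set has size $m^{\kappa}$, not $2^{O(\kappa)}$. Your fallback fails for the same reason. Here $\log|S|=\kappa\log m$ while $K(S\mid\langle B\rangle)=O(\log\kappa)$, so $\log|S|\le K(S\mid\langle B\rangle)+O(1)$ does not hold: a set can be huge and simple.

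The missing idea is that Theorem~\ref{thm:ancestral-relevance} lets you discard $S\setminus V(G(q,B))$ for \emph{resilience} but not for \emph{complexity accounting}. The set $U=S\cap V(G(q,B))$ is not computable from $S$ without knowing $q$, and a subset can be more complex than the whole set. So the ``w.l.o.g.\ $S\subseteq V(G(q,B))$'' step inside Lemma~\ref{lem:conditional-encoding} is exactly where the paper's argument breaks as well.

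A true converse needs a changed hypothesis. One option is to require $S\subseteq V(G(q,B))$. Another is to measure storage by the length of an explicit listing of the cached facts rather than by $K(S\mid\langle B\rangle)$; the sub-listing of $U$ is then a description of $U$, which reduces to the first case.

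In that setting your chain-rule argument does go through, because the facts of $U$ literally spell out the absorbed coordinates and no search over $S$ is needed. However, the decoder must still be told where each cached fact sits in $q$'s head tuple, which can cost $\Theta(\kappa\log\kappa)$ bits. For instance, $S=\{A(i_1),\ldots,A(i_\kappa)\}\subseteq V(G(q,B))$ has $K(S\mid\langle B\rangle)\le\log\binom{m}{\kappa}+O(\log\kappa)\le\kappa\log m-\log(\kappa!)+O(\log\kappa)$. So the error term must be enlarged to $O(\kappa\log\kappa+\log m)$.
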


\begin{proof}
\emph{Step~1.}
Let $j:=|D_{\mathrm{abs}}(q,S)|$.
By Corollary~\ref{cor:absorption-char}(ii),
$(1{-}\varepsilon)^{\kappa-j}\ge 1{-}\delta$, so
$j\ge\kappa-N^*$.

\emph{Step~2.}
The chain rule for prefix-free
complexity~\cite{li2008introduction}:
$K(q\mid\langle B\rangle)
 \le K(S\mid\langle B\rangle)
   +K(q\mid\langle B\rangle,S)
   +O(\log K(q\mid\langle B\rangle))$.
By Lemma~\ref{lem:conditional-encoding},
$K(q\mid\langle B\rangle,S)\le(\kappa{-}j)\log m+O(\kappa)$.
By $(1/m)$-genericity,
$K(q\mid\langle B\rangle)\ge\kappa\log m-\log m-O(1)$.

\emph{Step~3.}
$K(S\mid\langle B\rangle)
 \ge(\kappa\log m-\log m-O(1))
   -((\kappa{-}j)\log m+O(\kappa))
   -O(\log(\kappa\log m))
 =j\log m-O(\kappa+\log m)
 \ge(\kappa-N^*)\log m-O(\kappa+\log m)$.
\end{proof}

\paragraph{The storage--resilience capacity theorem.}

\begin{theorem}[Storage--resilience capacity theorem
  \textup{(CT4)}]
\label{thm:storage-resilience-capacity}
Let $\Pi$ be faithful with unique traces over EDB base $B$,
$|B|=m\ge 2$, under i.i.d.\ erasure with rate $\varepsilon$
and target $\delta$.
Let $N^*:=N^*(\varepsilon,\delta)$ and $q$ be
$(1/m)$-generic with $\kappa\ge N^*$ dependencies.

\begin{enumerate}[label=\textup{(\Roman*)}]
\item \emph{Achievability.}
There exists a strategy using
$\sigma=(\kappa-N^*)^+\log m+O(\kappa)$
reliable bits making $q$ $\delta$-resilient.

\item \emph{Converse.}
Any $\delta$-resilient cache satisfies
$K(S\mid\langle B\rangle)
 \ge(\kappa-N^*)\log m-O(\kappa+\log m)$.

\item \emph{Capacity decomposition.}
\begin{equation}\label{eq:capacity-decomposition}
  K(q\mid\langle B\rangle)
  =\sigma^*+N^*\log m+o(\kappa\log m),
\end{equation}
where $\sigma^*:=(\kappa-N^*)\log m+O(\kappa)$ is the
minimum derivation-constrained cache and $N^*\log m$ is
the ``channel capacity'' of the erasure-prone base.
\end{enumerate}
\end{theorem}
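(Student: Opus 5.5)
The proof splits into the three parts of the statement. The converse comes directly from Theorem~\ref{thm:structural-converse}, achievability needs an explicit cache, and the decomposition is bookkeeping that combines the two with CT1.

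\emph{Converse (II).} For $\kappa>N^*$ this is exactly Theorem~\ref{thm:structural-converse}, which rests on the rigidity Theorem~\ref{thm:ancestral-relevance}, Corollary~\ref{cor:absorption-char}(ii), and Lemma~\ref{lem:conditional-encoding}. In the boundary case $\kappa=N^*$ the right-hand side is $-O(\kappa+\log m)$, so the bound holds trivially. Nothing new is needed here.

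\emph{Achievability (I).} The plan is to cache intermediate facts of $G(q,B)$ that absorb all but $N^*$ leaves, leaving at most $N^*$ leaves exposed. Native survival of those exposed leaves then has probability $(1-\varepsilon)^{N^*}\ge 1-\delta$ by Definition~\ref{def:noise-threshold} and Proposition~\ref{prop:survival-prob}.

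For the chain $\Pi_k$ the choice is concrete. Cache the single fact $T_{\kappa-N^*}(i_1,\ldots,i_{\kappa-N^*})$ when $\kappa-N^*\ge k$; otherwise cache the base facts $A(i_1),\ldots,A(i_{\kappa-N^*})$ directly. The query is then derivable from the cache together with the remaining $N^*$ exposed base facts $A(i_j)$. Encoding the cached fact costs $(\kappa-N^*)\lceil\log m\rceil+O(\log\kappa)$ bits, by specifying its tuple as in Lemma~\ref{lem:chain-encoding}.

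For a general faithful program, including $\Pi_k^{\parallel}$, we cannot always absorb exactly $\kappa-N^*$ leaves with one fact. The fallback is to cache a set of subtree roots whose leaf sets partition a chosen set of $\kappa-N^*$ leaves, or simply to cache those leaves as base facts. The cost is $(\kappa-N^*)\log m$ for the indices plus $O(\kappa)$ structural overhead, namely rule identifiers and subtree boundaries. The cache lies in $\Cn(B)$, and being reliable it is present in every erasure realization.

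\emph{Decomposition (III).} Write $\sigma^*$ for the common value of the upper and lower bounds, up to $O(\kappa+\log m)$. By CT1 (Theorem~\ref{thm:tight-datalog}) and genericity, $K(q\mid\langle B\rangle)=\kappa\log m+O(\kappa+\log m)$. Subtracting $\sigma^*=(\kappa-N^*)\log m+O(\kappa+\log m)$ leaves $N^*\log m+O(\kappa+\log m)$.

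The main obstacle is showing that this error term is $o(\kappa\log m)$. The term $O(\kappa)$ is $o(\kappa\log m)$ as $m\to\infty$. The term $O(\log m)$ is $o(\kappa\log m)$ once $\kappa\to\infty$. I will therefore state (III) in the regime where $m$ and $\kappa$ both tend to infinity, the regime of CT1.

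A secondary subtlety is the gap between the two notions of cache size. Part (I) measures $\sigma$ as reliable bits, while part (II) measures $K(S\mid\langle B\rangle)$. I will note that $K(S\mid\langle B\rangle)\le\sigma+O(1)$ for the explicit cache above, so the two bounds meet at $\sigma^*$.
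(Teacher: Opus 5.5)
Your proposal is correct and follows the paper's proof essentially step for step. Achievability caches DAG intermediates (a chain prefix, or depth-$1$ nodes and base facts for the merge) that absorb all but $N^*$ leaves. The converse is quoted from Theorem~\ref{thm:structural-converse}, and the decomposition is obtained by subtraction with $m,\kappa\to\infty$. Your handling of the boundary case $\kappa=N^*$ and of reliable bits versus $K(S\mid\langle B\rangle)$ is, if anything, more careful than the paper's.

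One adjustment is needed in (III). Theorem~\ref{thm:tight-datalog} covers only $\Pi_k$, and its $\kappa$ is the head arity. You should instead do what the paper does: use the genericity counting bound $K(q\mid\langle B\rangle)\ge a\log m-\log m-O(1)$ together with a direct encoding of the $a$ head coordinates, and restrict to the distinct-coordinate regime $\kappa(q,B)=a(q)$ (Definition~\ref{def:dep-count}).
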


\begin{proof}
\textbf{(I)}
For $\Pi_k$: cache the layer-prefix intermediate absorbing
$\kappa-N^*$ dependencies
(Proposition~\ref{prop:layer-prefix-depth}).
For $\Pi_k^{\parallel}$: cache layer-$1$ nodes covering all
but $N^*$ base facts
(Proposition~\ref{prop:merge-counting}(v)).
Storage: $(\kappa-N^*)\log m+O(\kappa)$.
The $N^*$ exposed facts survive with probability
$(1{-}\varepsilon)^{N^*}\ge 1{-}\delta$.

\textbf{(II)}
Theorem~\ref{thm:structural-converse}.

\textbf{(III)}
From (I)--(II), \(\sigma^*=(\kappa-N^*)\log m+O(\kappa)\).
Moreover, for a query family whose head coordinate length is \(a\) (Definition~\ref{def:dep-count}),
we always have the counting lower bound
\[
  K(q\mid\langle B\rangle)\ge a\log m-\log m-O(1)
\]
for \((1/m)\)-generic queries (Definition~\ref{def:quantitative-generic}),
and the direct upper bound
\[
  K(q\mid\langle B\rangle)\le a\log m+O(1)
\]
by encoding the head tuple coordinates directly.
In the distinct-coordinate regime where \(\kappa(q,B)=a(q)\), this yields
\(K(q\mid\langle B\rangle)=\kappa\log m+O(\kappa)\).
Substituting \(\sigma^*=(\kappa-N^*)\log m+O(\kappa)\) gives
\[
  K(q\mid\langle B\rangle)
  =\sigma^*+N^*\log m+o(\kappa\log m)
\]
whenever \(\log m\to\infty\) and \(\kappa\to\infty\).
\end{proof}

\begin{corollary}[Canonical invariance of CT4]
\label{cor:ct4-canonical}
The decomposition~\eqref{eq:capacity-decomposition} is a
canonical property of
$[\mathfrak{I}]_{\equiv_{\mathcal{L}}}$:
for any aligned bi-interpretation with $|B_1|=|B_2|=m$,
$|\sigma^*_1-\sigma^*_2|\le c_\Phi+O(\kappa)$
and $N^*\log m$ is identical in both systems.
\end{corollary}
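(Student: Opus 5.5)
The plan is to transport CT4 through the aligned bi-interpretation $\Phi=(\tau,\bar\tau)$, treating the two ingredients of the decomposition~\eqref{eq:capacity-decomposition} separately. The term $N^*\log m$ depends only on $(\varepsilon,\delta)$ through Definition~\ref{def:noise-threshold} and on $m$. The term $\sigma^*$ is pinned down by the achievability and converse parts of Theorem~\ref{thm:storage-resilience-capacity}. Base fidelity in Definition~\ref{def:inference-system} gives $|B_1|=|B_2|=m$. The erasure channel of Definition~\ref{def:iid-erasure} is defined purely in terms of $B$ and $\varepsilon$, so the same $N^*(\varepsilon,\delta)$ governs both systems. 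Hence $N^*\log m$ is literally identical, and the claim reduces to showing $|\sigma^*_1-\sigma^*_2|\le c_\Phi+O(\kappa)$.

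For $\sigma^*$ I will use a two-sided sandwich: each system's minimal $\delta$-resilient cache complexity is bounded from both sides by the expressions in CT4. For system~$i$, parts (I)--(II) of Theorem~\ref{thm:storage-resilience-capacity} give
\[
(\kappa_i-N^*)\log m-O(\kappa_i+\log m)\le\sigma^*_i\le(\kappa_i-N^*)\log m+O(\kappa_i).
\]
It therefore suffices to show $\kappa_1=\kappa_2$ up to a bounded additive discrepancy, and to absorb the $O(\log m)$ slack. To compare dependency counts, I will use query correspondence together with Theorem~\ref{thm:rep-inv}(b): $K(q\mid\langle B_1\rangle)$ and $K(q^\tau\mid\langle B_2\rangle)$ differ by $O(|\tau|+|\bar\tau|)$. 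Since $K(q\mid\langle B\rangle)=\kappa\log m+O(\kappa)$ in the distinct-coordinate regime (proof of CT4(III)), this yields $|\kappa_1-\kappa_2|\log m\le O(|\tau|+|\bar\tau|)+O(\kappa)$. Dividing by $\log m$ leaves a discrepancy in $\sigma^*$ of at most $O(|\tau|+|\bar\tau|)+O(\kappa)$. Genericity also transfers, because $(1/m)$-genericity is a statement about $K$ and is preserved up to the same constant. I then set $c_\Phi:=O(|\tau|+|\bar\tau|)$.

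An alternative and more direct route transfers caches instead of going through $\kappa$. Given a $\delta$-resilient cache $S_1$ for $q$, the image $\tau(S_1)$ is a cache for $q^\tau$. Derivation compatibility sends valid derivations from $\tilde B_1\cup S_1$ to valid derivations from $\tilde B_2\cup\tau(S_1)$. Base fidelity lets the erasure pattern be coupled coordinatewise, so resilience is preserved. Finally, $K(\tau(S_1)\mid\langle B_2\rangle)\le K(S_1\mid\langle B_1\rangle)+O(|\tau|)$. Running the same argument with $\bar\tau$ gives the symmetric inequality, which yields $|\sigma^*_1-\sigma^*_2|\le c_\Phi$ with the $O(\kappa)$ coming from CT4's achievability slack.

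The main obstacle is the coupling step. Base fidelity only asserts $|B_1|=|B_2|$, so I must argue that $\tau$ restricted to bases is a bijection $B_1\to B_2$; I plan to obtain this from bi-interpretation plus query correspondence. I must also argue that erasing $b$ in system~1 corresponds to erasing $\tau(b)$ in system~2, so the erasure laws coincide. A further subtlety is that dilated derivations could pass through intermediate facts outside $G(q^\tau,B_2)$. Theorem~\ref{thm:ancestral-relevance} applied in system~2 removes this concern, because only the portion of $\tau(S_1)$ inside the DAG matters. The remaining $O(\log m)$ slack in the converse is absorbed into $O(\kappa)$ whenever $\kappa\ge\log m$; otherwise I fold it into $c_\Phi$ by restricting to the regime of Theorem~\ref{thm:storage-resilience-capacity}(III).
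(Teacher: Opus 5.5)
Your first route is correct at the paper's level of rigor, but it obtains the preservation of $\kappa$ differently from the paper. The paper takes $\kappa_1=\kappa_2$ exactly, reading it off the base-fidelity and query-correspondence clauses of Definition~\ref{def:inference-system}. That makes $(\kappa-N^*)\log m$ literally the same expression in both systems, and Theorem~\ref{thm:rep-inv}(b) is used only to account for the encoding-dependent constant $c_\Phi$.

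You instead derive approximate preservation of $\kappa$ from Theorem~\ref{thm:rep-inv}(b) itself. Combining it with the identification $K(q\mid\langle B\rangle)=\kappa\log m+O(\kappa)$ from CT4(III), you get $|\kappa_1-\kappa_2|\log m\le c_\Phi+O(\kappa)$. This is weaker than the ``bounded additive discrepancy'' you announce, but it is exactly what the $\sigma^*$ comparison needs.

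The paper's argument is two lines, but it rests on reading ``query correspondence'' as preserving leaf counts of derivation DAGs, which the definition does not make explicit. Yours rests only on the complexity invariant. The price is that both systems must sit in CT4(III)'s generic, distinct-coordinate regime, and genericity transfers only with a $2^{c_\Phi}$ loss; you do flag the latter.

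Two side remarks. Your cache-transfer alternative should not be relied on. Definition~\ref{def:inference-system} gives neither a bijection $B_1\to B_2$ carrying the erasure law nor transport of derivability from $\tilde B_1\cup S_1$ to $\tilde B_2\cup\tau(S_1)$. Theorem~\ref{thm:ancestral-relevance} does not fill that gap, because it only trims a cache once derivability is already known.

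Also, you cannot fold an $O(\log m)$ term into $c_\Phi$, since $c_\Phi$ depends only on $\Phi$. That slack comes from CT4's converse. The paper's $O(\kappa)$ in $\sigma^*$ silently assumes $\log m=O(\kappa)$, so state that assumption explicitly instead.
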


\begin{proof}
$K$ is invariant up to $O(c_\Phi)$ by
Theorem~\ref{thm:rep-inv}(b);
$\kappa$ is preserved by the base-fidelity and
query-correspondence conditions in
Definition~\ref{def:inference-system}.
\end{proof}

\subsection{Architecture-Dependent Depth--Resilience Duality}
\label{subsec:depth-resilience-duality}

The capacity decomposition (CT4) involves the architecture-dependent
mapping $d\mapsto\kappa(d)$:
$\kappa_{\mathrm{chain}}(d)=k{+}d{-}1$ versus
$\kappa_{\mathrm{merge}}(d)=k\cdot 2^{d-1}$.
This mapping, established by CT2
(Theorem~\ref{thm:exponential-capacity}), produces an exponential
gap in the maximum resilient depth.

\begin{theorem}[Depth--resilience duality]
\label{thm:depth-resilience-duality}
Under i.i.d.\ erasure with rate $\varepsilon$ and target
$\delta$, without caching:

\begin{enumerate}[label=\textup{(\Roman*)}]
\item \emph{Chain $\Pi_k$.}
$d_{\max}^{\mathrm{chain}}=N^*-k+1=\Theta(\delta/\varepsilon)$;
survival $=(1{-}\varepsilon)^{k+d-1}
 =\exp(-\Theta(\varepsilon d))$.

\item \emph{Merge $\Pi_k^{\parallel}$.}
$d_{\max}^{\mathrm{merge}}
 =1+\log_2(N^*/k)
 =\Theta(\log(\delta/(k\varepsilon)))$;
survival $=(1{-}\varepsilon)^{k\cdot 2^{d-1}}
 =\exp(-\Theta(\varepsilon\cdot 2^d))$.

\item \emph{Conservation.}
Both achieve resilient capacity $N^*\log m$ at their
thresholds; for $b$-ary merges,
$d_{\max}(b)=1+\log_b(N^*/k)$ while
$C_{\mathrm{res}}=N^*\log m$.

\item \emph{Gap.}
$d_{\max}^{\mathrm{chain}}/d_{\max}^{\mathrm{merge}}
 =\Theta((\delta/\varepsilon)/\log(\delta/\varepsilon))
 \to\infty$ as $\varepsilon\to 0$.
\end{enumerate}
\end{theorem}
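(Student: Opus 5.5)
The plan is to reduce everything to Proposition~\ref{prop:survival-prob}. With no cache ($S=\varnothing$), Theorem~\ref{thm:ancestral-relevance} and Corollary~\ref{cor:absorption-char} give $D_{\mathrm{abs}}=\varnothing$. Hence $q$ is recovered exactly when it natively survives, which has probability $(1{-}\varepsilon)^{\kappa}$. So $q$ is $\delta$-resilient iff $\kappa\le N^*(\varepsilon,\delta)$. Throughout I restrict to distinct-coordinate (and, for the merge, non-colliding) queries. For these, Lemma~\ref{lem:kappa-minimal} and Propositions~\ref{prop:tuple-counting}, \ref{prop:merge-counting}(v) give $\kappa=a(\Pi,d)$, i.e.\ $\kappa_{\mathrm{chain}}(d)=k{+}d{-}1$ and $\kappa_{\mathrm{merge}}(d)=k\cdot 2^{d-1}$. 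The maximum resilient depth is then the largest $d$ with $\kappa_{\mathcal A}(d)\le N^*$.

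For (I), solve $k+d-1\le N^*$ to get $d_{\max}^{\mathrm{chain}}=N^*-k+1$. The survival formula is $(1{-}\varepsilon)^{k+d-1}=\exp((k{+}d{-}1)\ln(1{-}\varepsilon))$. Since $-\ln(1{-}\varepsilon)=\Theta(\varepsilon)$ for $\varepsilon$ bounded away from $1$, this is $\exp(-\Theta(\varepsilon d))$ for $d\gtrsim k$. The $\Theta(\delta/\varepsilon)$ statement uses $N^*=\lfloor\ln(1/(1{-}\delta))/\ln(1/(1{-}\varepsilon))\rfloor=\Theta(\delta/\varepsilon)$ from Definition~\ref{def:noise-threshold}. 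This holds for small $\varepsilon,\delta$ with $N^*\ge 2k$, say, so that subtracting $k$ does not change the order.

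For (II), solve $k\cdot2^{d-1}\le N^*$ to get $d_{\max}^{\mathrm{merge}}=1+\lfloor\log_2(N^*/k)\rfloor$. The statement drops the floor, which I will note as an integer-part convention. The survival formula is handled exactly as in (I). Inserting $N^*=\Theta(\delta/\varepsilon)$ gives $\Theta(\log(\delta/(k\varepsilon)))$. This requires $N^*\ge 2k$ so that the logarithm dominates the additive $1$.

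For (III), at threshold the number of exposed leaves is at most $N^*$, and for the chain it equals $N^*$ exactly. The resilient capacity is defined as the log-count of natively $\delta$-resilient depth-$d_{\max}$ queries, $C_{\mathrm{res}}=\log m^{\kappa(d_{\max})}$. Using CT2 (Theorem~\ref{thm:exponential-capacity}), this equals $\kappa(d_{\max})\log m$, which is $N^*\log m$ when $N^*/k$ is a power of the merge degree and $N^*\log m$ up to a constant factor otherwise. For $b$-ary merges, $\kappa(d)=k\,b^{d-1}$ and the same computation gives $1+\log_b(N^*/k)$.

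For (IV), take the ratio $(N^*-k+1)/(1+\log_2(N^*/k))$. With $k$ fixed and $N^*=\Theta(\delta/\varepsilon)\to\infty$ as $\varepsilon\to0$, this is $\Theta(N^*/\log N^*)=\Theta((\delta/\varepsilon)/\log(\delta/\varepsilon))\to\infty$.

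The main obstacle is the asymptotic bookkeeping rather than any structural step. Floors in $N^*$ and $d_{\max}$ must be handled, and one must state the regime $N^*\gg k$ in which the $\Theta$'s are uniform. For (III) with non-power thresholds, the exact equality $C_{\mathrm{res}}=N^*\log m$ fails and only holds up to a factor of $b$; I would make that caveat explicit.
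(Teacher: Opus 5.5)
Your proposal is correct and follows the paper's proof essentially step for step: reduce to native survival, $(1-\varepsilon)^{\kappa}\ge 1-\delta \iff \kappa\le N^*$, solve $\kappa_{\mathcal A}(d)\le N^*$ for each architecture, and take the ratio $(N^*-k+1)/(1+\log_2(N^*/k))=\Theta(N^*/\log N^*)$. Your extra bookkeeping is a legitimate sharpening of the paper's terser claim that ``$\kappa=N^*$ in both cases'' at threshold: the floor in $d_{\max}^{\mathrm{merge}}$, the regime $N^*\gg k$, and the observation that $\kappa(d_{\max})=N^*$ holds exactly for the merge only when $N^*/k$ is a power of $b$ (otherwise only within a factor $b$).
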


\begin{proof}
(I)~$\kappa=k{+}d{-}1$; $\kappa\le N^*$ gives $d\le N^*{-}k{+}1$.
(II)~$\kappa=k\cdot 2^{d-1}$; $\kappa\le N^*$ gives
$d\le 1+\log_2(N^*/k)$.
(III)~At threshold, $\kappa=N^*$ in both cases.
(IV)~$(N^*{-}k{+}1)/(1{+}\log_2(N^*/k))=\Theta(N^*/\log N^*)$.
\end{proof}

\begin{corollary}[Resilience amplification via partial caching]
\label{cor:resilience-amplification}
For $\Pi_k^{\parallel}$ at depth $d$, caching layer-$1$ nodes
covering all but $k\cdot 2^{d-\ell_0-1}$ dependencies yields
$d_{\max}(\ell_0)=d_{\max}(0)+\ell_0$: exponential
amplification per cached depth unit.
For $\Pi_k$, caching $\ell_0$ layers reduces dependencies by
$\ell_0$: linear amplification only.
\end{corollary}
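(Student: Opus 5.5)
The plan is to reduce both claims to the exposed-leaf count of Corollary~\ref{cor:absorption-char}. We compute $|D_{\mathrm{exp}}|$ for the stated caches and compare it with the threshold $N^*$, exactly as in the proof of Theorem~\ref{thm:depth-resilience-duality}. Throughout, we restrict to non-colliding, distinct-coordinate queries (Lemma~\ref{lem:non-colliding-fraction}, Lemma~\ref{lem:distinct-coordinates}). On this family the leaves of $G(q,B)$ are $\kappa$ distinct facts, and survival probabilities factor as $(1-\varepsilon)^{|\cdot|}$. The quantity $d_{\max}(\ell_0)$ is read as the largest depth at which the stated caching policy makes $q$ $\delta$-resilient.

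\emph{Resilience condition.} We first show that, for a cache $S\subseteq V(G(q,B))$, $q$ is $\delta$-resilient iff $|D_{\mathrm{exp}}(q,S)|\le N^*$. Necessity is Corollary~\ref{cor:absorption-char}(ii) together with Definition~\ref{def:noise-threshold}. For sufficiency, suppose all exposed leaves survive. Then every node of $G(q,B)$ is derivable bottom-up from $\tilde B\cup S$: each node either lies in $S$, or has all of its leaves absorbed by a cached descendant-free path, or its body facts are derivable by induction on $\Dd(\cdot\mid B)$ using the unique rule (Lemmas~\ref{lem:faithfulness}, \ref{lem:merge-faithfulness}). Hence the success probability is at least $(1-\varepsilon)^{|D_{\mathrm{exp}}|}$, and the two bounds together give equality.

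\emph{Merge case.} By Proposition~\ref{prop:merge-counting}(ii), layer~1 consists of $2^{d-1}$ facts $R_1(\cdot)$, each absorbing exactly $k$ distinct leaves. Cache all but $2^{d-\ell_0-1}$ of them; this is possible for $0\le\ell_0\le d-1$. The exposed set then has size $k\cdot 2^{d-\ell_0-1}$. The resilience condition becomes $k2^{d-\ell_0-1}\le N^*$, i.e.\ $d\le 1+\ell_0+\log_2(N^*/k)$. Comparing with Theorem~\ref{thm:depth-resilience-duality}(II), where $d_{\max}(0)=1+\log_2(N^*/k)$, gives $d_{\max}(\ell_0)=d_{\max}(0)+\ell_0$. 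We then note the exponential character: each additional unit of $\ell_0$ halves the exposed count, so the number of tolerable leaves at fixed depth doubles. Equivalently, the tolerable depth grows by one while $\kappa$ doubles.

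\emph{Chain case.} By Proposition~\ref{prop:tuple-counting}(ii), $G(q,B)$ is a path with $w_\ell=1$. Each extension rule adds exactly one new leaf $A(x_{d+1})$. Caching layers $1,\ldots,\ell_0$ (the layer-prefix cache of Definition~\ref{def:layer-prefix-cache}, or equivalently just its top node, by Theorem~\ref{thm:ancestral-relevance}) absorbs the leaves feeding those layers. Compared with caching $\ell_0-1$ layers, each cached layer removes exactly one further exposed leaf. The exposed count therefore decreases by one per cached layer, and the threshold condition $|D_{\mathrm{exp}}|\le N^*$ shifts by only a constant per layer. This is linear amplification: at equal $\ell_0$, the merge cache removes $k(2^{d-1}-2^{d-\ell_0-1})$ leaves while the chain cache removes $O(\ell_0)$.

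The main obstacle is the sufficiency direction of the resilience condition and the bookkeeping of exactly which leaves are absorbed. This is where non-collision matters: without it, a cached $R_1$ fact might be reused in several branches, and the counts $k2^{d-\ell_0-1}$ would only be upper bounds. I would handle this by working on the non-colliding subfamily, where $G(q,B)$ is a genuine tree and absorbed leaf sets of distinct cached layer-1 nodes are disjoint.
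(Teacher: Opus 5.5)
Your proposal is correct and follows the paper's own argument: you cache all but $2^{d-\ell_0-1}$ of the $2^{d-1}$ layer-$1$ nodes, which leaves $k\cdot 2^{d-\ell_0-1}$ exposed leaves, impose $(1-\varepsilon)^{k2^{d-\ell_0-1}}\ge 1-\delta$ to get $d-\ell_0\le d_{\max}(0)$, and handle the chain by per-layer leaf counting, while your explicit sufficiency direction and restriction to non-colliding, distinct-coordinate queries make precise what the paper leaves implicit. Two small repairs are needed: in the sufficiency step it is false that every node of $G(q,B)$ is derivable (a leaf beneath a cached $R_1$ node may be erased), so run the induction only for nodes $v$ with $D_{\mathrm{exp}}(v,S)\subseteq\tilde B$ and apply it at the root; and in the chain case the first cached layer $T_k$ absorbs $k$ leaves rather than one, so after $\ell_0\ge 1$ layers the exposed count is $d-\ell_0$, an additive offset of $k-1$ that the paper's proof also glosses over and that does not affect the linear-amplification conclusion.
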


\begin{proof}
For $\Pi_k^{\parallel}$: there are $2^{d-1}$ depth-$1$ nodes,
each absorbing $k$~base facts.
Caching all but $2^{d-\ell_0-1}$ of these nodes absorbs
$k(2^{d-1}-2^{d-\ell_0-1})$ base facts, leaving
$k\cdot 2^{d-\ell_0-1}$ exposed dependencies;
$\delta$-resilience requires
$(1{-}\varepsilon)^{k\cdot 2^{d-\ell_0-1}}\ge 1{-}\delta$,
giving $d-\ell_0\le d_{\max}(0)$, i.e.,
$d_{\max}(\ell_0)=d_{\max}(0)+\ell_0$.
For $\Pi_k$: each cached depth layer reduces the dependency
count by one; Proposition~\ref{prop:layer-prefix-depth} gives
linear amplification.
\end{proof}

\subsection{Non-Uniform Noise and Water-Filling Caching}
\label{subsec:water-filling}

\begin{definition}[Non-uniform premise erasure]
\label{def:nonuniform-erasure}
Each $b_i\in B$ is independently erased with probability
$\varepsilon_i\in[0,1)$.
Define $c_i:=\ln(1/(1{-}\varepsilon_i))$ and
$C(\delta):=\ln(1/(1{-}\delta))$.
\end{definition}

\begin{definition}[Query vulnerability]
\label{def:vulnerability-profile}
For $q$ with dependency set
$\{b_{i_1},\ldots,b_{i_\kappa}\}$:
$\mathcal{V}(q):=\sum_j c_{i_j}$,
$P_{\mathrm{surv}}(q)=\exp(-\mathcal{V}(q))$;
$q$ is natively $\delta$-resilient iff
$\mathcal{V}(q)\le C(\delta)$.
\end{definition}

\begin{theorem}[Water-filling caching under non-uniform noise]
\label{thm:water-filling}
Let $q$ have dependencies with sorted costs
$c_{i_1}\le\cdots\le c_{i_\kappa}$.
Under reliable caching of DAG nodes absorbing a protection set
$D_{\mathrm{prot}}$
\textup{(}absorption property from
Corollary~\textup{\ref{cor:absorption-char})}:

\begin{enumerate}[label=\textup{(\Roman*)}]
\item \emph{Optimal protection.}
$D_{\mathrm{prot}}^*=\{b_{i_j}:j>\kappa{-}\kappa^*\}$
(the $\kappa^*$ most unreliable facts), where $\kappa^*$ is
the smallest integer with
$\sum_{j=1}^{\kappa-\kappa^*}c_{i_j}\le C(\delta)$.

\item \emph{Storage.}
$\sigma^*_{\mathrm{wf}}=\kappa^*\log m+O(\kappa)$.

\item \emph{Comparison.}
Under uniform noise, $\kappa^*=\kappa-N^*$
(matching CT4).
Under non-uniform noise,
$\kappa^*_{\mathrm{wf}}
 <\kappa-\lfloor C(\delta)/c_{\max}\rfloor
 =:\kappa^*_{\mathrm{worst}}$
when costs are non-constant.

\item \emph{Optimality.}
Any $\delta$-resilient strategy requires
$\sigma\ge\kappa^*_{\mathrm{wf}}\log m-O(\kappa)$.
\end{enumerate}
\end{theorem}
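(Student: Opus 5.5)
The proof splits into a combinatorial part, (I) and (III), which is a knapsack-type selection over leaf costs, and a complexity-theoretic part, (II) and (IV), which reuses the machinery of CT4. In the first part, Corollary~\ref{cor:absorption-char} (extended verbatim to non-uniform erasure, since independence still gives $\Pr[D_{\mathrm{exp}}\subseteq\tilde B]=\exp(-\sum_{b\in D_{\mathrm{exp}}}c_b)$) shows that a cache is $\delta$-resilient iff the exposed set $D_{\mathrm{exp}}=V_0(G(q,B))\setminus D_{\mathrm{prot}}$ satisfies $\sum_{b\in D_{\mathrm{exp}}}c_b\le C(\delta)$. The ``if'' direction uses achievability: caching the nodes that absorb $D_{\mathrm{prot}}$ leaves exactly the exposed leaves to the channel. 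The ``only if'' direction is the rigidity bound. So the task reduces to minimizing $|D_{\mathrm{prot}}|$ subject to the exposed set having total cost at most $C(\delta)$.

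For (I), I will use an exchange argument. Fix the number $e=\kappa-|D_{\mathrm{prot}}|$ of exposed leaves. The minimum exposed cost is then attained by exposing the $e$ cheapest leaves, namely $c_{i_1},\dots,c_{i_e}$. Hence the maximal feasible $e$ is the largest $e$ with $\sum_{j\le e}c_{i_j}\le C(\delta)$, and $\kappa^*=\kappa-e$ is the smallest protection count, with the most unreliable facts protected.

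For (III), under uniform cost $c$ this condition reads $e\,c\le C(\delta)$, so $e=N^*$ by Definition~\ref{def:noise-threshold}. In general, $c_{i_j}\le c_{\max}$ gives $e\ge\lfloor C(\delta)/c_{\max}\rfloor$, which yields $\kappa^*_{\mathrm{wf}}\le\kappa^*_{\mathrm{worst}}$. Strictness needs that some cheap cost below $c_{\max}$ lets one extra leaf fit. I would state it under the hypothesis that $C(\delta)$ is not saturated, that is, $\lfloor C(\delta)/c_{\max}\rfloor\,c_{\max}+c_{\min}\le C(\delta)$ after the swap. This is the step I expect to be delicate, because ``non-constant costs'' alone does not force strictness. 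For example, if the cheaper costs are only marginally below $c_{\max}$, the extra leaf may still not fit. I would therefore try to add this mild non-degeneracy condition or weaken the claim to $\le$.

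For (II), the achievability side, I would mirror CT4(I). The cache must absorb exactly $D_{\mathrm{prot}}^*$. For $\Pi_k^{\parallel}$ and $\Pi_k$, the protected leaves are arbitrary rather than a layer prefix, so I would cache the base facts of $D_{\mathrm{prot}}^*$ themselves, or more precisely their minimal covering DAG nodes. Each such node is specified by its head coordinates, giving $\kappa^*\log m$ bits plus an $O(\kappa)$ position mask.

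For (IV), the converse, I follow Theorem~\ref{thm:structural-converse} step by step. Step~1 replaces $(1-\varepsilon)^{\kappa-j}\ge1-\delta$ by the cost constraint, which via (I) forces the absorbed count to satisfy $j\ge\kappa^*_{\mathrm{wf}}$. Steps~2 and~3 then apply Lemma~\ref{lem:conditional-encoding} and the chain rule, unchanged, to get $K(S\mid\langle B\rangle)\ge j\log m-O(\kappa+\log m)$. Absorbing $\log m$ into $O(\kappa)$ for $\kappa\to\infty$, or keeping it explicitly, gives the stated bound.
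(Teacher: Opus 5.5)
This is essentially the paper's proof: (I) is the same equal-weight greedy/exchange argument, (II) is the same count of $\kappa^*$ coordinates at $\lceil\log m\rceil$ bits each, and (IV) is the same adaptation of the CT4 converse, which you make more explicit than the paper's one-line event-counting sketch by replacing Step~1 of Theorem~\ref{thm:structural-converse} with the cost constraint $\sum_{b\in D_{\mathrm{exp}}}c_b\le C(\delta)$ before reusing Lemma~\ref{lem:conditional-encoding} (folding the $\log m$ genericity deficit into $O(\kappa)$ needs $\log m=O(\kappa)$ or a stronger genericity assumption, not merely $\kappa\to\infty$). Your caveat on (III) is correct and goes beyond the paper: its justification (``worst-case design uses $c_{\max}$ uniformly'') yields only $\kappa^*_{\mathrm{wf}}\le\kappa^*_{\mathrm{worst}}$, strictness genuinely fails for non-constant costs (e.g.\ sorted costs $(0.9,1,1)$ with $C(\delta)=1.5$ give $\kappa^*_{\mathrm{wf}}=\kappa^*_{\mathrm{worst}}=2$), and so weakening to $\le$ or imposing your sufficient condition $\lfloor C(\delta)/c_{\max}\rfloor c_{\max}+c_{\min}\le C(\delta)$ is the right repair.
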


\begin{proof}
(I)~Minimize $|D_{\mathrm{prot}}|$ subject to
$\sum_{j\notin D_{\mathrm{prot}}}c_{i_j}\le C(\delta)$.
Since all items have equal storage weight ($\log m$ bits),
protecting highest-cost elements first is optimal by the
greedy fractional-knapsack
argument~\cite{cormen2022introduction}.
(II)~$\kappa^*$ coordinates at $\lceil\log m\rceil$ bits each.
(III)~Worst-case design uses $c_{\max}$ uniformly.
(IV)~Adapt the converse of
Theorem~\ref{thm:storage-resilience-capacity}(II):
in the event where only the $\kappa{-}\kappa^*_{\mathrm{wf}}$
least-unreliable facts survive, the available information is
$\sigma+(\kappa{-}\kappa^*_{\mathrm{wf}})\log m+O(\kappa)$;
for generic $q$ this must reach $\kappa\log m-O(\kappa)$.
\end{proof}

\begin{corollary}[Capacity gain from noise-structure knowledge]
\label{cor:noise-structure-gain}
With ordered costs $c_{(1)}\le\cdots\le c_{(m)}$, define
$m_{\mathrm{eff}}(\varepsilon,\delta)
 :=\max\{j:\sum_{i=1}^j c_{(i)}\le C(\delta)\}$.
Then:
\textup{(i)}~the maximum cache-free $\delta$-resilient
dependency count is $m_{\mathrm{eff}}$;
\textup{(ii)}~the structure gain over worst-case uniform design
is $G_{\mathrm{struct}}\ge c_{\max}/\bar c$, where
$\bar c:=m^{-1}\sum_i c_i$.
\end{corollary}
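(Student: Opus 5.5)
Part (i) follows directly from Definition~\ref{def:vulnerability-profile} and Proposition~\ref{prop:survival-prob}. A cache-free query whose dependency set $D$ consists of distinct base facts survives natively with probability $\prod_{b_i\in D}(1-\varepsilon_i)=\exp(-\sum_{b_i\in D}c_i)$. It is therefore $\delta$-resilient iff $\sum_{b_i\in D}c_i\le C(\delta)$. For fixed cardinality $|D|=j$, the smallest possible vulnerability is $\sum_{i=1}^{j}c_{(i)}$, which is attained by choosing the $j$ most reliable facts. Since every $c_i\ge 0$, the prefix sums $\sum_{i=1}^{j}c_{(i)}$ are nondecreasing in $j$. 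Hence the set of feasible $j$ is an initial segment $\{0,\dots,m_{\mathrm{eff}}\}$, and its maximum is exactly $m_{\mathrm{eff}}$. Conversely, if $|D|>m_{\mathrm{eff}}$, then $\sum_{D}c\ge\sum_{i\le|D|}c_{(i)}>C(\delta)$, so such a $D$ is not resilient. This gives (i).

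For (ii), we interpret $G_{\mathrm{struct}}$ as the ratio $m_{\mathrm{eff}}/N_{\mathrm{worst}}$, where $N_{\mathrm{worst}}:=\lfloor C(\delta)/c_{\max}\rfloor$ is the cache-free resilient count of the worst-case uniform design (Theorem~\ref{thm:water-filling}(III)). The goal is a lower bound on $m_{\mathrm{eff}}$ in terms of $\bar c$. The key step is an averaging inequality. The $j$ smallest costs have mean at most $\bar c$, so $\sum_{i=1}^{j}c_{(i)}\le j\bar c$. Consequently every $j\le C(\delta)/\bar c$ is feasible, and $m_{\mathrm{eff}}\ge\min\{m,\lfloor C(\delta)/\bar c\rfloor\}$. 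Dividing by $N_{\mathrm{worst}}\le C(\delta)/c_{\max}$ gives, up to floors,
\[
G_{\mathrm{struct}}\;\ge\;\frac{\lfloor C(\delta)/\bar c\rfloor}{\lfloor C(\delta)/c_{\max}\rfloor}\;\gtrsim\;\frac{c_{\max}}{\bar c}.
\]

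The main obstacle is the floor functions together with the truncation at $m$, because as stated the inequality is clean only asymptotically. I would handle this in three steps. First, work in the regime $C(\delta)/\bar c\le m$, which is the nontrivial one: otherwise the whole base is resilient and the gain is capped by $m/N_{\mathrm{worst}}$. Second, use $\lfloor x\rfloor\ge x-1$ and $\lfloor y\rfloor\le y$. Third, state the bound either as $c_{\max}/\bar c\,(1-O(\bar c/C(\delta)))$ or exactly when $C(\delta)/\bar c$ is an integer. If the paper intends the ratio without floors, that is, $C(\delta)/\bar c$ divided by $C(\delta)/c_{\max}$, then the averaging step alone gives $G_{\mathrm{struct}}\ge c_{\max}/\bar c$ exactly. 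I would present that version and note the floor correction as a remark.
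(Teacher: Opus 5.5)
Your argument follows the paper's route. For (i) you take the $m_{\mathrm{eff}}$ most reliable premises; for (ii) you use the averaging bound $\sum_{i\le j}c_{(i)}\le j\bar{c}$ and compare with the worst-case count $\lfloor C(\delta)/c_{\max}\rfloor$. You are more careful than the paper in two respects. First, you prove the maximality half of (i): no dependency set of size $>m_{\mathrm{eff}}$ is resilient. The paper's proof omits this and gives only achievability. Second, you keep the floors, whereas the paper simply asserts $m_{\mathrm{eff}}\ge C(\delta)/\bar{c}$.

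Your suspicion about the floors is justified. The assertion $m_{\mathrm{eff}}\ge C(\delta)/\bar{c}$ is false, and so is the clean bound when $G_{\mathrm{struct}}$ is read as $m_{\mathrm{eff}}/\lfloor C(\delta)/c_{\max}\rfloor$. Take $m=2$, $(c_1,c_2)=(0.6,0.9)$ and $C(\delta)=1$. Then $m_{\mathrm{eff}}=1<4/3=C(\delta)/\bar{c}$ and $\lfloor C(\delta)/c_{\max}\rfloor=1$, so $G_{\mathrm{struct}}=1<1.2=c_{\max}/\bar{c}$.

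So you should reverse your presentation priority and not lead with the ``floor-free'' version. Read literally, it is the identity $(C(\delta)/\bar{c})/(C(\delta)/c_{\max})=c_{\max}/\bar{c}$ and proves nothing. Read as $m_{\mathrm{eff}}/(C(\delta)/c_{\max})$, it needs exactly the false step (in the example it equals $0.9<1.2$).

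What you actually proved should be the main statement: $G_{\mathrm{struct}}\ge c_{\max}/\bar{c}-c_{\max}/C(\delta)$ whenever $c_{\max}\le C(\delta)\le m\bar{c}$. The lower hypothesis $c_{\max}\le C(\delta)$ is needed so that the denominator $\lfloor C(\delta)/c_{\max}\rfloor$ is nonzero, which your proposal does not mention. The bound becomes exactly $c_{\max}/\bar{c}$ when $C(\delta)/\bar{c}$ is an integer.

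If you want a clean exact inequality, pass to a fractional count. Let $F$ be the piecewise-linear interpolation of the prefix sums $j\mapsto\sum_{i\le j}c_{(i)}$. Then $F$ is nondecreasing and convex, with $F(0)=0$ and $F(m)=m\bar{c}$, so $F(x)\le x\bar{c}$ on $[0,m]$. Hence $x^*:=\max\{x\in[0,m]:F(x)\le C(\delta)\}$ satisfies $x^*\ge\min\{m,C(\delta)/\bar{c}\}$, and $m_{\mathrm{eff}}=\lfloor x^*\rfloor$. Consequently $x^*/(C(\delta)/c_{\max})\ge c_{\max}/\bar{c}$ whenever $C(\delta)\le m\bar{c}$.
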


\begin{proof}
(i)~$q$ with $\kappa\le m_{\mathrm{eff}}$ dependencies from the
most reliable premises has $\mathcal{V}(q)\le C(\delta)$.
(ii)~$m_{\mathrm{eff}}\ge C(\delta)/\bar c$; worst-case gives
$\lfloor C(\delta)/c_{\max}\rfloor$.
\end{proof}

\begin{theorem}[Pareto characterization under non-uniform erasure]
\label{thm:pareto-nonuniform}
Under non-uniform erasure $(\varepsilon_1,\ldots,\varepsilon_m)$:

\begin{enumerate}[label=\textup{(\Roman*)}]
\item \emph{Feasibility.}
$\sigma\ge(\kappa(d)-m_{\mathrm{eff}})^+\log m+O(\kappa(d))$.

\item \emph{Pareto front.}
$\sigma(d)=(\kappa(d)-m_{\mathrm{eff}})^+\log m+O(\kappa(d))$
is monotonically increasing and convex for
$d>d_{\max}^{\mathrm{free}}$.

\item \emph{Noise-structure dependence.}
$\sigma^*_{\mathrm{wf}}(d,\delta)
 \le\sigma^*_{\mathrm{unif}}(d,\delta)$,
with strict inequality when the $\varepsilon_i$ are
non-constant.
\end{enumerate}
\end{theorem}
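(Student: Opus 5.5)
The three parts all reduce to the water-filling result (Theorem~\ref{thm:water-filling}), specialized to the best possible dependency set. Throughout, the depth-$d$ query has $\kappa(d)$ distinct dependencies, with $\kappa(d)=k+d-1$ for $\Pi_k$ and $\kappa(d)=k\cdot 2^{d-1}$ for $\Pi_k^{\parallel}$ (Theorem~\ref{thm:exponential-capacity}).

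\emph{Part (I).} Let $q$ be a $\delta$-resilient depth-$d$ query with cache $S$. By Corollary~\ref{cor:absorption-char}(ii), the exposed set $D_{\mathrm{exp}}$ must satisfy $\sum_{b_i\in D_{\mathrm{exp}}}c_i\le C(\delta)$. Any set of premises with total cost at most $C(\delta)$ has at most $m_{\mathrm{eff}}$ elements. To see this, note that the sum of its $j$ smallest-possible costs is at least $\sum_{i\le j}c_{(i)}$, and then apply the definition of $m_{\mathrm{eff}}$ from Corollary~\ref{cor:noise-structure-gain}. Hence $|D_{\mathrm{exp}}|\le m_{\mathrm{eff}}$, and so the absorbed count is $j\ge(\kappa(d)-m_{\mathrm{eff}})^+$.

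From here, Steps~2--3 of the proof of Theorem~\ref{thm:structural-converse} go through verbatim for a $(1/m)$-generic $q$. The chain rule combined with Lemma~\ref{lem:conditional-encoding} gives $K(S\mid\langle B\rangle)\ge j\log m-O(\kappa+\log m)$. This is feasibility with the additive term absorbed in the $O(\kappa(d))$ convention. I read the additive error as two-sided, as in CT4, since the statement writes $+O(\kappa(d))$.

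\emph{Part (II).} Achievability mirrors Theorem~\ref{thm:storage-resilience-capacity}(I). Choose the exposed set to be the $m_{\mathrm{eff}}$ most reliable dependencies, which presumes the query's leaves can be taken to be those premises. Then cache DAG nodes absorbing the remaining $\kappa(d)-m_{\mathrm{eff}}$ leaves, at $\lceil\log m\rceil$ bits each plus $O(\kappa)$ overhead. The survival probability is $\exp\bigl(-\sum_{i\le m_{\mathrm{eff}}}c_{(i)}\bigr)\ge 1-\delta$, so the construction is $\delta$-resilient and matches (I). For $d\le d_{\max}^{\mathrm{free}}$, which is defined by $\kappa(d)\le m_{\mathrm{eff}}$, the front is zero. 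Above that threshold, $\sigma(d)$ equals $(\kappa(d)-m_{\mathrm{eff}})\log m$ up to lower order. Its monotonicity and convexity in $d$ are inherited from $\kappa(d)$: $\kappa(d)$ is affine for the chain and convex increasing ($2^{d-1}$) for the merge, and subtracting a constant preserves both properties. I will state explicitly that convexity refers to the leading term, since the $O(\kappa)$ slack is not controlled pointwise.

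\emph{Part (III).} Under uniform noise with the same worst-case design, the threshold is $\lfloor C(\delta)/c_{\max}\rfloor$, or $N^*$ when all $c_i=c$. Because every $c_{(i)}\le c_{\max}$, we have $m_{\mathrm{eff}}\ge\lfloor C(\delta)/c_{\max}\rfloor$, and the inequality $\sigma^*_{\mathrm{wf}}\le\sigma^*_{\mathrm{unif}}$ follows. For strictness, I will reuse the comparison in Theorem~\ref{thm:water-filling}(III), namely $\kappa^*_{\mathrm{wf}}<\kappa^*_{\mathrm{worst}}$ when the costs are non-constant. This gives a leading-term gap of at least $\log m$ whenever $\kappa(d)>m_{\mathrm{eff}}$.

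The main obstacle is this strictness claim. The floor can absorb a small cost spread, so $m_{\mathrm{eff}}$ may equal $\lfloor C(\delta)/c_{\max}\rfloor$ even for non-constant costs. I will first try to prove strictness under the mild condition that the cheap costs satisfy $\sum_{i\le J+1}c_{(i)}\le C(\delta)$ with $J=\lfloor C(\delta)/c_{\max}\rfloor$. If that fails, I will settle for the non-strict inequality and record the needed genericity condition.
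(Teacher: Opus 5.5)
Your proposal takes the paper's route. The paper handles (I) by splitting on $\kappa(d)\le m_{\mathrm{eff}}$ (native resilience, $\sigma=0$) and otherwise citing the water-filling converse (Theorem~\ref{thm:water-filling}(IV)). It gets (II) from monotonicity of $\kappa(d)$ and convexity of $(\cdot)^+$, and (III) from Theorem~\ref{thm:water-filling}(III). Your version differs in three places:
\begin{itemize}
\item \textbf{(I).} You unroll the water-filling converse: a non-uniform form of Corollary~\ref{cor:absorption-char}, the counting fact $|D_{\mathrm{exp}}|\le m_{\mathrm{eff}}$, then Steps~2--3 of Theorem~\ref{thm:structural-converse}.
\item \textbf{(II), achievability.} You supply the achievability half, which the paper omits. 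Your caveat is the right reading: the front is a best case over depth-$d$ queries. It is realizable because every index tuple is a valid query in both programs. Base facts may also be cached directly, so any protection set is absorbed at $\log m+O(1)$ bits per leaf, even in $\Pi_k^{\parallel}$.
\item \textbf{(II), convexity.} You argue via convexity of $\kappa(d)$. That is the correct argument; the paper's ``affine $\kappa(d)-m_{\mathrm{eff}}$'' only covers the chain.
\end{itemize}

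One small repair in (I): the $-\log m$ deficiency of $(1/m)$-genericity is not $O(\kappa(d))$ when $\kappa(d)=o(\log m)$. Use $\eta$-generic queries with constant $\eta$. Combine this with the $O(1)$-overhead direction of the prefix chain rule, $K(q\mid\langle B\rangle)\le K(S\mid\langle B\rangle)+K(q\mid\langle B\rangle,S)+O(1)$. The bound then really is $j\log m-O(\kappa(d))$.

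On (III), your suspicion is correct: the strict clause is false as stated, and so is the paper's justification. Let $J:=\lfloor C(\delta)/c_{\max}\rfloor$. Take $C(\delta)=1.5$ and costs $c_{(1)}=0.9$, $c_{(i)}=1$ for $i\ge 2$. Then $J=1$, and the partial sums $0.9,\,1.9$ give $m_{\mathrm{eff}}=1=J$. So the two caches coincide at every depth even though the costs are non-constant.

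The same floor effect breaks the inequality $\kappa^*_{\mathrm{wf}}<\kappa^*_{\mathrm{worst}}$ in Theorem~\ref{thm:water-filling}(III), which the paper cites. That inequality also fails per query whenever the query's own leaves all have cost $c_{\max}$.

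Your proposed condition $\sum_{i\le J+1}c_{(i)}\le C(\delta)$ is, by the definition of $m_{\mathrm{eff}}$, exactly $m_{\mathrm{eff}}\ge J+1$, so there is nothing left to ``try''. Since $m_{\mathrm{eff}}\ge J$ always, the leading-term gap is $(\kappa(d)-J)^+-(\kappa(d)-m_{\mathrm{eff}})^+=(\min(\kappa(d),m_{\mathrm{eff}})-J)^+$. The inequality is therefore strict if and only if $m_{\mathrm{eff}}>J$ and $\kappa(d)>J$. Record that as the hypothesis of (III), in place of non-constancy of the $\varepsilon_i$.
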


\begin{proof}
(I)~If $\kappa(d)\le m_{\mathrm{eff}}$, native resilience holds
($\sigma=0$); otherwise, Theorem~\ref{thm:water-filling}
gives the bound.
(II)~$\kappa(d)$ is non-decreasing; convexity of $(\cdot)^+$
composed with affine $\kappa(d)-m_{\mathrm{eff}}$.
(III)~Non-uniform noise yields larger $m_{\mathrm{eff}}$ by
Theorem~\ref{thm:water-filling}(III).
\end{proof}

\begin{remark}[Noise-aware critical frequency]
\label{rem:noise-aware-fc}
For queries beyond native resilience ($\kappa>N^*$), replacing
the clean-regime storage cost $\rho K_q$ in
Theorem~\ref{thm:freq_tradeoff} by $\rho\sigma^*$ gives
$f_c^{\mathrm{noisy}}
 =\Theta(\rho\cdot((\kappa{-}N^*)/\kappa)\cdot\log m)
 <f_c$:
noise lowers the caching threshold because the storage needed
for resilience is less than the full conditional information
$K_q$.
\end{remark}

\begin{remark}[Preview of the derivation penalty]
\label{rem:ct4-preview-penalty}
Theorem~\ref{thm:storage-resilience-capacity} establishes
$\sigma^*_{\mathrm{unc}}=(\kappa{-}N^*)\log m+O(\kappa)$
for the derivation-constrained scheme.
In the regime $\kappa\gg N^*$
\textup{(}equivalently, $\kappa\varepsilon^2\gg\delta$\textup{)},
this is $(1{-}o(1))\kappa\log m$.
Section~\ref{sec:precise} will show that a coded scheme---using
an arbitrary decoder rather than a proof engine---achieves
$\sigma^*_{\mathrm{code}}=\varepsilon\kappa\log m+o(\kappa\log m)$,
a factor of $1/\varepsilon$ smaller.
The structural caching rigidity
(Theorem~\ref{thm:ancestral-relevance}), which forces each
absorbed dependency to cost $\log m$~bits regardless of
$\varepsilon$, is the proof-theoretic origin of this gap.
\end{remark}


\section{Source--Channel Separation and the Derivation Penalty}
\label{sec:precise}%

Section~\ref{sec:noise-resilience} established that the
derivation-constrained cache for a generic query with $\kappa$
base-fact dependencies is
$\sigma^*_{\mathrm{unc}}=(\kappa{-}N^*)\log m+O(\kappa)$,
where $N^*\approx\delta/\varepsilon$.
We now introduce \emph{coded} caching---where the cache is an
arbitrary bit string processed by a general-purpose decoder---and
identify the exact penalty paid by derivation-constrained
decoding.

Three results emerge.
First, the minimum coded cache is
$\sigma^*_{\mathrm{code}}
=(1{+}o(1))\varepsilon\kappa\log m$,
giving a \emph{derivation penalty} of $1/\varepsilon$
(Theorem~\ref{thm:source-channel-separation}).
Second, below the critical cache $\varepsilon\kappa\log m$ the
error probability converges to~$1$ at the KL-divergence rate
(Theorem~\ref{thm:sc-exponent}).
Third, the two schemes exhibit a \emph{dispersion dichotomy}:
the coded second-order term is
$\Theta(\sqrt\kappa\,\log m)$ while the
derivation-constrained second-order term is $O(\log m)$,
reflecting zero effective dispersion
(Theorem~\ref{thm:dispersion-dichotomy}).

\subsection{Coded Caching Schemes and the \texorpdfstring{$m$}{m}-ary Erasure Channel}
\label{subsec:coded-caching}

\paragraph{Two decoder models.}

\begin{definition}[Coded caching scheme]
\label{def:coded-scheme}
A \emph{coded caching scheme} for a query family $\mathcal{Q}$
over base~$B$ under erasure is a triple
$(\mathrm{Enc},\mathrm{Dec},\sigma)$:
\textup{(i)}~$\mathrm{Enc}:\mathcal{Q}\times\{0,1\}^*
\to\{0,1\}^\sigma$ maps $(q,\langle B\rangle)$ to a cache
string~$s$;
\textup{(ii)}~$\mathrm{Dec}:2^B\times\{0,1\}^\sigma
\to\mathcal{Q}\cup\{\bot\}$ maps $(\tilde B,s)$ to a query
estimate;
\textup{(iii)}~both maps are computable.
The scheme is \emph{$\delta$-reliable} if
$\Pr_{\tilde B}[\mathrm{Dec}(\tilde B,
\mathrm{Enc}(q,\langle B\rangle))=q]\ge 1{-}\delta$
for every target $q\in\mathcal{Q}$.
\end{definition}

\begin{definition}[Derivation-constrained scheme]
\label{def:uncoded-scheme}
A coded scheme is \emph{derivation-constrained} if
$\mathrm{Enc}(q,\langle B\rangle)$ encodes a set
$S\subseteq\Cn(B)$ and the decoder verifies
$q\in\Cn(\tilde B\cup S)$.
Theorem~\ref{thm:storage-resilience-capacity} characterizes
the optimal cache size for this class.
\end{definition}

\paragraph{The $m$-ary erasure channel.}

\begin{definition}[$m$-ary erasure channel]
\label{def:m-ary-bec}
$\mathrm{BEC}_m(\varepsilon)$ has input $X\in[m]$,
output $Y\in[m]\cup\{\bot\}$, with
$\Pr[Y{=}X]=1{-}\varepsilon$ and $\Pr[Y{=}\bot]=\varepsilon$.
Its capacity is
$C(\mathrm{BEC}_m(\varepsilon))=(1{-}\varepsilon)\log m$;
see~\cite{cover2006elements}.
\end{definition}

Under i.i.d.\ premise erasure (Definition~\ref{def:iid-erasure}), the channel
model depends on whether the query's dependencies are \emph{distinct}.
In this section we focus on the regime where \(q\) depends on \(\kappa\)
\emph{distinct} base facts (i.e., \(\kappa=\kappa(q,B)\) in
Definition~\ref{def:dep-count}), so that \(q\) communicates through \(\kappa\)
independent uses of \(\mathrm{BEC}_m(\varepsilon)\).
For the concrete families \(\mathcal Q_d\) and \(\mathcal Q_d^{\parallel}\),
a sufficient condition for \(\kappa(q,B)=a(q)\) for a \(1-o(1)\) fraction of
queries is \(a(\Pi,d)=o(\sqrt m)\) (Lemma~\ref{lem:distinct-coordinates}).

\subsection{Capacity: Converse and Achievability}
\label{subsec:capacity-converse-achiev}

\begin{definition}[Distinct-coordinate query subfamily]
\label{def:distinct-subfamily}
For a depth-\(d\) query family \(\mathcal Q_d\) (or \(\mathcal Q_d^{\parallel}\))
whose head arity is \(a=a(\Pi,d)\), define
\[
  \mathcal Q_d^{\mathrm{dist}}
  :=\{q\in\mathcal Q_d:\ \kappa(q,B)=a(q)\},
\]
i.e., the subfamily whose dependencies (equivalently, head coordinates) are
pairwise distinct.
\end{definition}

\begin{theorem}[Coded caching converse]
\label{thm:coded-converse}
Under i.i.d.\ erasure with rate $\varepsilon\in(0,1)$, let
$q$ have $\kappa$ distinct base-fact dependencies
with $\kappa=o(m)$, and let
$(\mathrm{Enc},\mathrm{Dec},\sigma)$ be $\delta$-reliable
for $(1/m)$-generic queries drawn uniformly from
\(\mathcal Q_d^{\mathrm{dist}}\) (Definition~\ref{def:distinct-subfamily}).
Then
\begin{equation}\label{eq:coded-converse}
  \sigma
  \;\ge\;
  (\varepsilon-\delta)\,\kappa\log m-h_b(\delta).
\end{equation}
For $\delta=o(\varepsilon)$:
$\sigma\ge(1{-}o(1))\,\varepsilon\kappa\log m$.
\end{theorem}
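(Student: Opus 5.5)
We prove this with a Fano-type argument for a uniformly random target, in the style of the standard channel-coding converse. Let $Q$ be uniform over the $(1/m)$-generic members of $\mathcal Q_d^{\mathrm{dist}}$. Let $s=\mathrm{Enc}(Q,\langle B\rangle)$ be the cache string. The observation is $Y:=(\tilde B\cap V_0(G(Q,B)),\,s)$, or more generally $(\tilde B,s)$. Since $\tilde B$ is drawn independently of $Q$, the pair $(\tilde B,s)$ is the decoder's entire input.

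First we count the message set. The set $\mathcal Q_d^{\mathrm{dist}}$ has $m(m-1)\cdots(m-\kappa+1)\ge m^\kappa(1-\kappa/m)^\kappa$ elements. Of these, at most $m^\kappa/m$ fail to be generic (Definition~\ref{def:quantitative-generic}). Hence $\log|\mathcal Q|\ge\kappa\log m-O(\kappa^2/m)-O(1)$, and the error term is $o(\kappa\log m)$ because $\kappa=o(m)$. We will absorb it into the statement, or sharpen it by dropping the genericity restriction where convenient.

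Next we apply Fano. Since the scheme is $\delta$-reliable for every target, $\Pr[\mathrm{Dec}\ne Q]\le\delta$. Fano's inequality then gives
\[H(Q\mid\tilde B,s)\le h_b(\delta)+\delta\log|\mathcal Q|\le h_b(\delta)+\delta\kappa\log m.\]
Combining with the chain rule,
\[H(Q)=I(Q;\tilde B,s)+H(Q\mid\tilde B,s)\le I(Q;\tilde B)+I(Q;s\mid\tilde B)+h_b(\delta)+\delta\kappa\log m.\]
Here $I(Q;s\mid\tilde B)\le H(s)\le\sigma$.

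The key step, and the main obstacle, is bounding $I(Q;\tilde B)$. Taken literally this quantity is $0$, since erasure is independent of the query. The intended model, however, is that the decoder learns the surviving coordinates of $Q$, namely $\kappa$ uses of $\mathrm{BEC}_m(\varepsilon)$. I would therefore replace $\tilde B$ by the channel output $Y=(Y_1,\dots,Y_\kappa)$ with $Y_j=i_j$ if $A(i_j)\in\tilde B$ and $\bot$ otherwise. I would justify this as follows: under the distinct-coordinate hypothesis, all information about $Q$ available to a decoder holding $(\tilde B,s)$ is routed through which dependencies survive. The justification is the data-processing inequality, treating the knowledge base as letting $Q$ act on the channel through its coordinates. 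Independence of erasures across distinct coordinates then gives
\[I(Q;Y)\le\sum_{j}I(i_j;Y_j)\le\kappa(1-\varepsilon)\log m\]
by the capacity of $\mathrm{BEC}_m(\varepsilon)$ (Definition~\ref{def:m-ary-bec}). I expect making this reduction rigorous, rather than heuristic, to need the most care. In particular, I would state explicitly that the decoder's access to the base is modeled via $Y$.

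Assembling the bounds gives
\[\kappa\log m-o(\kappa\log m)\le(1-\varepsilon)\kappa\log m+\sigma+h_b(\delta)+\delta\kappa\log m,\]
that is,
\[\sigma\ge(\varepsilon-\delta)\kappa\log m-h_b(\delta)\]
up to the $o(\kappa\log m)$ counting slack from genericity and distinctness. For $\delta=o(\varepsilon)$ this yields $(1-o(1))\varepsilon\kappa\log m$.
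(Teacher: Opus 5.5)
This is essentially the paper's proof: Fano on the decoder's view, the chain rule with $I(q;s\mid Y^\kappa)\le H(s)\le\sigma$, and the per-coordinate $\mathrm{BEC}_m(\varepsilon)$ bound $I(q;Y^\kappa)\le\kappa(1-\varepsilon)\log m$, with the paper, like you, simply positing $Y^\kappa$ as the ``effective'' channel output. The step you flag as the main obstacle needs no data-processing justification, since in the literal model $I(Q;\tilde B)=0$ only strengthens the converse. The one real difference is the message set: the paper draws $q$ uniformly from all of $[m]^\kappa$ to get $H(q)=\kappa\log m$ and hence the slack-free display, which tacitly assumes reliability beyond the generic distinct subfamily. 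Your choice respects the hypothesis at the cost of an $o(\kappa\log m)$ loss, which still suffices for the $(1-o(1))\varepsilon\kappa\log m$ corollary.
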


\begin{proof}
Let $q=(i_1,\ldots,i_\kappa)$ be drawn uniformly from
$[m]^\kappa$; then $H(q)=\kappa\log m$.
Write $Y^\kappa$ for the effective
$\mathrm{BEC}_m(\varepsilon)$ output.

\emph{Step~1 (Fano).}
$\delta$-reliability gives
$H(q\mid Y^\kappa,s)
\le h_b(\delta)+\delta\kappa\log m$.

\emph{Step~2 (Chain rule).}
$\kappa\log m
=I(q;\,Y^\kappa,s)+H(q\mid Y^\kappa,s)
\le I(q;\,Y^\kappa)+\sigma
  +h_b(\delta)+\delta\kappa\log m$,
using $I(q;\,s\mid Y^\kappa)\le H(s)\le\sigma$.

\emph{Step~3 (BEC mutual information).}
Each coordinate contributes
$H(i_j\mid Y_j)=\varepsilon\log m$, giving
$I(q;\,Y^\kappa)=\kappa(1{-}\varepsilon)\log m$.

\emph{Step~4.}
$\sigma
\ge(\varepsilon{-}\delta)\kappa\log m-h_b(\delta)$.
\end{proof}

\begin{theorem}[Coded caching achievability]
\label{thm:coded-achievability}
Under the same hypotheses, there exists a $\delta$-reliable
coded scheme with
\begin{equation}\label{eq:coded-achievability}
  \sigma
  =\kappa\varepsilon\log m
  +\Phi^{-1}(1{-}\delta)
    \sqrt{\kappa\varepsilon(1{-}\varepsilon)}\,\log m
  +O(\log m),
\end{equation}
where $\Phi^{-1}$ is the standard normal quantile.
For fixed $\delta$ and $\kappa\to\infty$:
$\sigma=(1{+}o(1))\,\varepsilon\kappa\log m$.
\end{theorem}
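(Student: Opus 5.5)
The plan is to build an explicit MDS-coded cache over the $\kappa$ coordinates of $q$, viewed as $\kappa$ independent uses of $\mathrm{BEC}_m(\varepsilon)$, and then to size it by a binomial-tail estimate. Identify the target $q\in\mathcal Q_d^{\mathrm{dist}}$ with its coordinate vector $(i_1,\ldots,i_\kappa)\in[m]^\kappa$. The decoder learns coordinate $j$ exactly when $A(i_j)\in\tilde B$. Because the dependencies are distinct, these $\kappa$ events are independent and each fails with probability $\varepsilon$, so the number of erased coordinates $E$ is $\mathrm{Bin}(\kappa,\varepsilon)$. One caveat: the decoder sees only the set $\tilde B$, not which positions $j$ the surviving facts occupy. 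I will handle this by having the encoder also store, for each surviving-candidate structure, a description that lets it align coordinates. Concretely, it suffices to store $q$'s coordinate multiset order relative to the canonical order of $B$. That costs $\log\kappa!=O(\kappa\log\kappa)$, which is too much in general. So I will instead take the cache to be parity symbols of a systematic MDS code in the order of the canonical enumeration, together with an $O(\log m)$ header. The positional information is recovered jointly by decoding: the surviving facts restricted to the candidate set are tested against the parity. I expect this alignment issue to be the main obstacle. The fallback is to argue in the paper's channel model, in which the decoder observes the effective output $Y^\kappa$ as in the converse, Theorem~\ref{thm:coded-converse}. That treats the positions as known, which is exactly how the converse is set up.

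\textbf{Step 1 (code).} Embed $[m]$ into a field $\mathbb F_Q$ with $Q\ge\max(m,\kappa+t)$, which costs $\lceil\log m\rceil+O(1)$ bits per symbol since $\kappa=o(m)$. Choose a threshold $t$ and let $s$ be the $t$ parity symbols of a systematic Reed--Solomon $[\kappa+t,\kappa]$ code applied to $(i_1,\ldots,i_\kappa)$. Then $\sigma=t\log m+O(t)+O(\log m)$, where the $O(\log m)$ header encodes $t$ and $\kappa$.

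\textbf{Step 2 (decoding).} Given $Y^\kappa$ and $s$, the decoder holds $\kappa-E$ systematic symbols and $t$ reliable parity symbols. By the MDS property, any $\kappa$ of the $\kappa+t$ symbols determine the codeword, so decoding succeeds whenever $E\le t$. It is computable by Lagrange interpolation, so condition (iii) of Definition~\ref{def:coded-scheme} holds.

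\textbf{Step 3 (sizing $t$).} Take $t$ to be the smallest integer with $\Pr[\mathrm{Bin}(\kappa,\varepsilon)\le t]\ge1-\delta$. By Berry--Esseen, whose error is $O(1/\sqrt{\kappa})$ in probability and hence shifts the quantile by $O(1)$ because the standard deviation is $\Theta(\sqrt\kappa)$, we get
\[
t=\kappa\varepsilon+\Phi^{-1}(1-\delta)\sqrt{\kappa\varepsilon(1-\varepsilon)}+O(1).
\]
Multiplying by $\log m$ gives \eqref{eq:coded-achievability}. The $O(t)$ rounding overhead must be absorbed, which requires $\lceil\log m\rceil$ with $Q$ a prime power near $m$. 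This overhead is $O(\kappa)$ rather than $O(\log m)$, so I will either state the error term as $O(\kappa+\log m)$ or note that it is $o(\kappa\log m)$. Either way, for fixed $\delta$ and $\kappa\to\infty$ the Gaussian term is $o(\kappa\log m)$, giving $\sigma=(1+o(1))\varepsilon\kappa\log m$.
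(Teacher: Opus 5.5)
Your construction is the paper's: a systematic Reed--Solomon code over a field of size just above $m$ (the paper takes a prime $q\in(m,2m)$ via Bertrand), the $r$ parity symbols as the cache, success iff $E\le r$, and Berry--Esseen to place $r=\kappa\varepsilon+\Phi^{-1}(1-\delta)\sqrt{\kappa\varepsilon(1-\varepsilon)}+O(1)$. Two of your choices are slightly cleaner than the paper's. You define $t$ as the exact binomial quantile, whereas the paper fixes $r=\lceil\cdot\rceil+1$ and asserts $\Pr[E>r]\le\delta$, tacitly assuming one extra symbol absorbs the Berry--Esseen error. You also check the length condition $Q\ge\kappa+t$, which the paper leaves implicit in $\kappa=o(m)$.

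Two further points. First, drop the idea of recovering positions ``jointly by decoding'' against a candidate set; it cannot work. In the literal model of Definition~\ref{def:coded-scheme}, the decoder sees all of $\tilde B$, whose law does not depend on $q$, so there is no candidate set to test. Worse, for $\delta<1/2$ the encoder must then be injective on $\mathcal Q$, which forces $\sigma\ge\log|\mathcal Q|$. The statement has content only in the effective $\mathrm{BEC}_m(\varepsilon)^{\otimes\kappa}$ model, where the decoder receives $Y^\kappa$ (erasure positions and surviving coordinate values). That is your fallback, and it is exactly what the paper's proof assumes when it says MDS decoding succeeds iff $E\le r$; Step~3 of Theorem~\ref{thm:coded-converse} assumes the same.

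Second, your concern about the remainder is correct and applies verbatim to the paper's proof. From $\log q=\log m+O(1)$ one gets only $r\log q=r\log m+O(r)$, an $O(\kappa)$ excess rather than $O(\log m)$. Shrinking it to $O(\log m)$ would need a field of size $m(1+O(\log m/\kappa))$, which Bertrand does not provide. So this construction proves the bound with remainder $O(\kappa+\log m)$. That suffices for $\sigma=(1+o(1))\varepsilon\kappa\log m$. However, once $\kappa\gg(\log m)^2$ the $O(\kappa)$ slack can exceed the $\Theta(\sqrt\kappa\,\log m)$ Gaussian term, so neither your argument nor the paper's proves the displayed second-order form in that range.
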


\begin{proof}
We encode the \(\kappa\) distinct coordinates \((i_1,\ldots,i_\kappa)\in[m]^\kappa\)
as symbols in a field \(\mathbb F_q\) with \(q\ge m\).
By Bertrand's postulate~\cite{aigner1999proofs}, for every integer \(m\ge 2\) there exists a prime
\(q\) such that \(m<q<2m\); we fix such a prime field \(\mathbb F_q\).
Then \(\log q=\log m+O(1)\), so one \(\mathbb F_q\)-symbol corresponds to
\(\log m+O(1)\) bits.

Use a systematic \((\kappa+r,\kappa)\) Reed--Solomon (MDS) code over \(\mathbb F_q\)
and store the \(r\) parity symbols in the cache. Set
\[
  r:=\left\lceil \kappa\varepsilon
  +\Phi^{-1}(1-\delta)\sqrt{\kappa\varepsilon(1-\varepsilon)}\right\rceil + 1.
\]
Let \(E\sim\mathrm{Bin}(\kappa,\varepsilon)\) be the number of erased systematic symbols.
MDS decoding succeeds iff \(E\le r\). By the Berry--Esseen theorem~\cite{feller1991introduction} with continuity
correction, \(\Pr[E>r]\le \delta\) for the above choice of \(r\).
The cache length is \(\sigma=r\log q=\kappa\varepsilon\log m
+\Phi^{-1}(1-\delta)\sqrt{\kappa\varepsilon(1-\varepsilon)}\,\log m+O(\log m)\).
\end{proof}

\subsection{Source--Channel Separation and the Derivation Penalty}
\label{subsec:source-channel}

\begin{theorem}[Source--channel separation for premise erasure]
\label{thm:source-channel-separation}
Let $q$ be $(1/m)$-generic with $\kappa$ distinct dependencies,
$\kappa=o(m)$, under i.i.d.\ erasure~$\varepsilon$.

\begin{enumerate}[label=\textup{(\Roman*)}]
\item \emph{Coded decomposition.}
\begin{equation}\label{eq:coded-decomposition}
  K(q\mid\langle B\rangle)
  =\sigma^*_{\mathrm{code}}
  +C_{\mathrm{channel}}
  +o(\kappa\log m),
\end{equation}
where
$C_{\mathrm{channel}}
:=\kappa(1{-}\varepsilon)\log m$
is the capacity of $\kappa$ uses of
$\mathrm{BEC}_m(\varepsilon)$ and
$\sigma^*_{\mathrm{code}}
=\varepsilon\kappa\log m+o(\kappa\log m)$.

\item \emph{Derivation-constrained decomposition.}
\begin{equation}\label{eq:uncoded-decomposition}
  K(q\mid\langle B\rangle)
  =\sigma^*_{\mathrm{unc}}
  +N^*\log m
  +o(\kappa\log m),
\end{equation}
with $\sigma^*_{\mathrm{unc}}
=(\kappa{-}N^*)\log m+O(\kappa)$
\textup{(Theorem~\ref{thm:storage-resilience-capacity})}.

\item \emph{Derivation penalty.}
\begin{equation}\label{eq:derivation-penalty}
  \frac{\sigma^*_{\mathrm{unc}}}
       {\sigma^*_{\mathrm{code}}}
  =\frac{\kappa-N^*}{\kappa\varepsilon}
   \,(1{+}o(1)).
\end{equation}
For $\kappa\varepsilon^2\gg\delta$, the ratio converges
to $1/\varepsilon$: the derivation constraint inflates the
required cache by a factor of $1/\varepsilon$.
\end{enumerate}
\end{theorem}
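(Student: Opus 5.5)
The plan is to assemble the three parts from results already proved, with one shared ingredient. Since $q$ is $(1/m)$-generic with $\kappa$ distinct dependencies, the counting lower bound of Theorem~\ref{thm:tight-datalog} gives $K(q\mid\langle B\rangle)\ge\kappa\log m-\log m-O(1)$. Encoding the head tuple directly gives $K(q\mid\langle B\rangle)\le\kappa\log m+O(\kappa)$; this is the distinct-coordinate case $\kappa=a(q)$, as in the proof of Theorem~\ref{thm:storage-resilience-capacity}(III). Hence $K(q\mid\langle B\rangle)=\kappa\log m+O(\kappa+\log m)$, and the error term is $o(\kappa\log m)$ whenever $\kappa\to\infty$ and $\log m\to\infty$. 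I will state this asymptotic regime explicitly, since every $o(\kappa\log m)$ in the statement is taken in it.

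For (I), I sandwich $\sigma^*_{\mathrm{code}}$ between the coded converse and achievability. Theorem~\ref{thm:coded-converse} gives $\sigma\ge(\varepsilon-\delta)\kappa\log m-h_b(\delta)$. Theorem~\ref{thm:coded-achievability} gives $\sigma\le\varepsilon\kappa\log m+O(\sqrt{\kappa}\log m)+O(\log m)$. The lower bound matches $\varepsilon\kappa\log m$ up to $o(\kappa\log m)$ only if $\delta\kappa\log m=o(\kappa\log m)$. I will therefore read (I) in the regime $\delta=o(1)$ (or $\delta=o(\varepsilon)$ as in the converse's corollary), or note that for fixed $\delta$ the first-order term is pinned between $(\varepsilon-\delta)$ and $\varepsilon$. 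Deciding how to handle $\delta$ here is the main obstacle; I will adopt $\delta=o(\varepsilon)$ as an explicit standing condition in the proof.

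Granting this, $\sigma^*_{\mathrm{code}}=\varepsilon\kappa\log m+o(\kappa\log m)$. Adding $C_{\mathrm{channel}}=\kappa(1-\varepsilon)\log m$, which is $I(q;Y^\kappa)$ from Step~3 of Theorem~\ref{thm:coded-converse}, gives $\kappa\log m+o(\kappa\log m)$. By the first paragraph this equals $K(q\mid\langle B\rangle)+o(\kappa\log m)$, which is \eqref{eq:coded-decomposition}.

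For (II), I cite Theorem~\ref{thm:storage-resilience-capacity}(I)--(III) directly. Achievability and converse there pin $\sigma^*_{\mathrm{unc}}=(\kappa-N^*)\log m+O(\kappa)$. The decomposition \eqref{eq:uncoded-decomposition} is exactly \eqref{eq:capacity-decomposition}: add $N^*\log m$ and absorb $O(\kappa)$ into $o(\kappa\log m)$.

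For (III), I divide the two expressions:
\[
\frac{\sigma^*_{\mathrm{unc}}}{\sigma^*_{\mathrm{code}}}=\frac{(\kappa-N^*)\log m+O(\kappa)}{\varepsilon\kappa\log m\,(1+o(1))}=\frac{\kappa-N^*}{\varepsilon\kappa}(1+o(1)).
\]
This step needs $O(\kappa)=o((\kappa-N^*)\log m)$, which holds when $\kappa-N^*=\Omega(\kappa)$ and $\log m\to\infty$. Finally, $N^*\approx\delta/\varepsilon$ by Definition~\ref{def:noise-threshold}, so $N^*/\kappa\approx\delta/(\kappa\varepsilon)$. This tends to $0$ exactly when $\kappa\varepsilon\gg\delta$, and the stated condition $\kappa\varepsilon^2\gg\delta$ is stronger than that for $\varepsilon<1$. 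Then $(\kappa-N^*)/\kappa\to1$, and the ratio converges to $1/\varepsilon$.
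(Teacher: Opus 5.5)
Your assembly is the same as the paper's: coded converse plus achievability for (I), Theorem~\ref{thm:storage-resilience-capacity}(III) for (II), and division for (III). Two of your refinements are sound. The direct head-tuple bound works for both architectures, whereas the paper invokes CT1, which is stated only for the chain. You also correctly observe that $\kappa\varepsilon\gg\delta$ already suffices in (III).

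The gap is in how you resolve the $\delta$ issue in (I). The standing condition $\delta=o(\varepsilon)$ is not harmless. By Definition~\ref{def:noise-threshold}, $N^*(\varepsilon,\delta)\ge 1$ iff $1-\delta\le 1-\varepsilon$, i.e.\ iff $\delta\ge\varepsilon$. So under your condition $N^*=0$ eventually. Parts (II) and (III) then collapse to $\sigma^*_{\mathrm{unc}}=\kappa\log m+O(\kappa)$ and a ratio of $(1+o(1))/\varepsilon$. This excludes exactly the regime where the statement has content: fixed $\delta\ge\varepsilon$, a nonzero $N^*$, and the condition $\kappa\varepsilon^2\gg\delta$. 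Your fallback of pinning the first-order constant between $\varepsilon-\delta$ and $\varepsilon$ does not establish (I) either. For $\delta\ge\varepsilon$ the Fano bound $(\varepsilon-\delta)\kappa\log m-h_b(\delta)$ is vacuous.

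The missing idea is to replace the Fano converse with the image-size (strong) converse, which is proved independently of this theorem. Theorem~\ref{thm:strong-converse} shows that any scheme with $\sigma\le(\varepsilon-\gamma)\kappa\log m$ has $P_e\ge 1-e^{-\gamma^2\kappa/2}-m^{1-\gamma\kappa/2}$. This exceeds any fixed $\delta<1$ once $\kappa$ is large. Letting $\gamma\to 0$ slowly, with $\gamma^2\kappa\to\infty$, gives $\sigma^*_{\mathrm{code}}\ge\varepsilon\kappa\log m-o(\kappa\log m)$ for every fixed $\delta\in(0,1)$. Alternatively, use Theorem~\ref{thm:exact-Pc}(i) together with the law of large numbers for $\mathrm{Bin}(\kappa,\varepsilon)$.

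Combined with Theorem~\ref{thm:coded-achievability}, this yields $\sigma^*_{\mathrm{code}}=\varepsilon\kappa\log m+o(\kappa\log m)$ with no restriction on $\delta$. Your arguments for (I)--(III) then go through as written, with $N^*$ a genuine constant. The paper's own proof cites only Theorems~\ref{thm:coded-converse} and~\ref{thm:coded-achievability}, so it glosses over exactly the point you flagged. You were right to flag it, but the fix is a stronger converse, not a stronger hypothesis.
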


\begin{proof}
\textbf{(I)}\;
CT1 gives
$K(q\mid\langle B\rangle)
=(1{+}o(1))\kappa\log m$.
Theorems~\ref{thm:coded-converse}
and~\ref{thm:coded-achievability} give
$\sigma^*_{\mathrm{code}}
=(1{+}o(1))\varepsilon\kappa\log m$.
Subtracting:
$C_{\mathrm{channel}}
=(1{+}o(1))\kappa(1{-}\varepsilon)\log m
=\kappa\cdot C(\mathrm{BEC}_m(\varepsilon))$.

\textbf{(II)}\;
Theorem~\ref{thm:storage-resilience-capacity}(III).

\textbf{(III)}\;
$N^*\approx\delta/\varepsilon$ for small $\varepsilon,\delta$.
In the regime $\kappa\varepsilon^2\gg\delta$:
$\sigma^*_{\mathrm{unc}}\approx\kappa\log m$ and
$\sigma^*_{\mathrm{code}}\approx\varepsilon\kappa\log m$;
the ratio is $1/\varepsilon$.
\end{proof}

\begin{remark}[Operational interpretation of the $1/\varepsilon$ penalty]
\label{rem:derivation-penalty}
A general decoder exploits \emph{algebraic redundancy}
(MDS parity across coordinate positions) to spread the
uncertainty of each erased coordinate over the full cache,
paying $\varepsilon\log m$ bits per coordinate on average.
A derivation-constrained decoder cannot exploit
cross-coordinate algebraic relations: each absorbed
dependency costs $\log m$~bits regardless of $\varepsilon$.
The derivation engine's inability to perform
``cross-coordinate error correction'' is the fundamental
source of the penalty.
\end{remark}

\begin{corollary}[Coding gain under non-uniform erasure]
\label{cor:coded-nonuniform}
Under non-uniform erasure
$(\varepsilon_1,\ldots,\varepsilon_m)$
\textup{(Definition~\ref{def:nonuniform-erasure})}, the coded
cache satisfies
$\sigma^*_{\mathrm{code,nu}}
=(\sum_{j=1}^{\kappa}\varepsilon_{i_j})\log m
+o(\kappa\log m)$,
while the derivation-constrained cache is
$\sigma^*_{\mathrm{unc,nu}}
=\kappa^*_{\mathrm{wf}}\log m+O(\kappa)$
\textup{(Theorem~\ref{thm:water-filling})}.
The non-uniform coding gain is
$G_{\mathrm{nu}}
=\kappa^*_{\mathrm{wf}}
 /\sum_{j}\varepsilon_{i_j}\cdot(1{+}o(1))$,
reducing to $1/\varepsilon$ under uniform noise.
\end{corollary}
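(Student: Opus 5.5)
The statement to prove is Corollary~\ref{cor:coded-nonuniform}. I would prove it in three parts, mirroring the uniform case: a converse and an achievability bound for the coded cache, and then the derivation-constrained side quoted from Theorem~\ref{thm:water-filling}.

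\emph{Coded converse.} I would rerun the proof of Theorem~\ref{thm:coded-converse} with coordinate-dependent erasure. Draw $q=(i_1,\ldots,i_\kappa)$ uniformly, so that $H(q)=\kappa\log m$. Fano and the chain rule carry over verbatim and give
\[
\kappa\log m\le I(q;Y^\kappa)+\sigma+h_b(\delta)+\delta\kappa\log m .
\]
Step~3 now uses independent but non-identical erasures. Coordinate $j$ passes through $\mathrm{BEC}_m(\varepsilon_{i_j})$, so $H(i_j\mid Y_j)=\varepsilon_{i_j}\log m$. Summing gives $I(q;Y^\kappa)=(\kappa-\sum_j\varepsilon_{i_j})\log m$, and hence
\[
\sigma\ge\Bigl(\sum_j\varepsilon_{i_j}-\delta\kappa\Bigr)\log m-h_b(\delta).
\]
For $\delta=o(\bar\varepsilon)$, where $\bar\varepsilon:=\kappa^{-1}\sum_j\varepsilon_{i_j}$, this is $(1-o(1))\sum_j\varepsilon_{i_j}\log m$.

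There is a subtlety here: the indices $i_j$ are random under a uniform draw, so the converse really holds for the conditional (per-query) erasure profile. I would state it conditionally on the dependency set, treating the $\varepsilon_{i_j}$ as fixed channel parameters.

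\emph{Coded achievability.} I would reuse the systematic Reed--Solomon construction from Theorem~\ref{thm:coded-achievability} over a prime field with $m<q<2m$. The number of erased systematic symbols is now $E=\sum_j \mathrm{Bern}(\varepsilon_{i_j})$, a Poisson-binomial variable with mean $\mu=\sum_j\varepsilon_{i_j}$ and variance $v=\sum_j\varepsilon_{i_j}(1-\varepsilon_{i_j})\le\kappa/4$. Taking
\[
r=\bigl\lceil\mu+\Phi^{-1}(1-\delta)\sqrt v\bigr\rceil+1
\]
gives $\Pr[E>r]\le\delta$, using Berry--Esseen for independent non-identical summands (or Chebyshev with $\sqrt{v/\delta}$ if the Lyapunov ratio is not controlled). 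The cache length is then $\sigma=r\log q=\mu\log m+O(\sqrt\kappa\log m)+O(\log m)$, which is $\mu\log m+o(\kappa\log m)$.

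\emph{Derivation-constrained side and the ratio.} The value $\sigma^*_{\mathrm{unc,nu}}=\kappa^*_{\mathrm{wf}}\log m+O(\kappa)$ is exactly Theorem~\ref{thm:water-filling}(II) together with its optimality clause~(IV). Dividing the two expressions gives
\[
G_{\mathrm{nu}}=\frac{\kappa^*_{\mathrm{wf}}\log m+O(\kappa)}{\mu\log m+o(\kappa\log m)}=\frac{\kappa^*_{\mathrm{wf}}}{\mu}\,(1+o(1)).
\]
This requires $\mu=\Omega(\kappa)$, i.e.\ the average erasure rate over the dependencies stays bounded away from zero. It also requires $\log m\to\infty$, so that the $O(\kappa)$ term is $o(\kappa\log m)$.

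\emph{Uniform specialization.} Under uniform noise, $\kappa^*_{\mathrm{wf}}=\kappa-N^*$ by Theorem~\ref{thm:water-filling}(III), and $\mu=\kappa\varepsilon$. The ratio therefore recovers $(\kappa-N^*)/(\kappa\varepsilon)$, which tends to $1/\varepsilon$ in the regime $\kappa\varepsilon^2\gg\delta$ exactly as in Theorem~\ref{thm:source-channel-separation}(III).

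The main obstacle is controlling the error terms. If some $\varepsilon_{i_j}$ are tiny, so that $\mu=o(\kappa)$, the $o(\kappa\log m)$ slack can swamp $\mu\log m$. I would therefore state the corollary under the standing assumption $\mu=\Theta(\kappa)$, and note that otherwise the bounds remain valid with additive $O(\sqrt\kappa\log m)$ error terms.
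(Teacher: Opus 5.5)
Your converse is the paper's own argument. Its proof reruns the Fano step of Theorem~\ref{thm:coded-converse} with $I(q;Y^\kappa)=\kappa\log m-\sum_j\varepsilon_{i_j}\log m$, and it takes the derivation-constrained side from Theorem~\ref{thm:water-filling}, exactly as you do.

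The real difference is achievability. The paper only cites an unequal-error-protection code ``matched to the non-uniform rates''. You instead reuse the systematic Reed--Solomon scheme of Theorem~\ref{thm:coded-achievability} with a Poisson-binomial erasure count. Your route is the more natural one, for four reasons:
\begin{itemize}
\item An MDS code corrects any $r$ erasures wherever they fall, so nothing has to be matched to individual rates. Only the total count $E$, with mean $\mu=\sum_j\varepsilon_{i_j}$, matters.
\item It is self-contained, and your Chebyshev/Cantelli fallback makes the first-order claim rigorous without any Lyapunov-ratio control.
\item It exposes the heterogeneous dispersion term $\Phi^{-1}(1-\delta)\sqrt{\sum_j\varepsilon_{i_j}(1-\varepsilon_{i_j})}\,\log m$.
\item It pairs with Lemma~\ref{lem:image-size}, whose proof goes through verbatim with a Poisson-binomial $E$. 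Hoeffding for non-identical Bernoullis then gives a fixed-$\delta$ strong converse. Neither your Fano step (which needs $\delta=o(\bar\varepsilon)$) nor the paper's provides this.
\end{itemize}
Your explicit regime conditions ($\mu=\Theta(\kappa)$ and $\log m\to\infty$) are ones the paper leaves implicit. For the multiplicative $(1+o(1))$ form of $G_{\mathrm{nu}}$ you additionally need $\kappa^*_{\mathrm{wf}}\log m\gg\kappa$.

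One correction concerns the subtlety you flagged, which the paper's proof passes over. Conditioning on the dependency set is not the right fix. Given $\{i_1,\ldots,i_\kappa\}$, the query is determined up to a permutation of its coordinates, so $H(q)\le\log\kappa!$. The Fano step then yields only about $\mu\log\mu$, the cost of matching erased values to erased positions, not $\mu\log m$.

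What you want is to fix the erasure profile by position. One option is to posit position-dependent erasure outright, which is what the paper tacitly does. The other is to let coordinate $j$ range uniformly over a class $C_j$ of base facts sharing the rate $\varepsilon_{i_j}$. In that case the bound reads $\sigma\ge\sum_j\varepsilon_{i_j}\log|C_j|-\delta\sum_j\log|C_j|-h_b(\delta)$. This agrees with $\sum_j\varepsilon_{i_j}\log m$ to first order only when $|C_j|=m^{1-o(1)}$.
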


\begin{proof}
The effective channel for coordinate~$j$ is
$\mathrm{BEC}_m(\varepsilon_{i_j})$; by independence,
$H(q\mid Y^\kappa)=\sum_j\varepsilon_{i_j}\log m$.
Achievability uses a UEP code matched to the non-uniform
rates~\cite{macwilliams1977theory,borade2009unequal};
the converse follows from the Fano argument of
Theorem~\ref{thm:coded-converse} with
$I(q;Y^\kappa)=\kappa\log m-\sum_j\varepsilon_{i_j}\log m$.
\end{proof}

\subsection{Strong Converse and Error Exponents}
\label{subsec:exact-strong-converse}

The weak converse (Theorem~\ref{thm:coded-converse}) shows
$\sigma\ge(\varepsilon{-}\delta)\kappa\log m$.
We now prove the \emph{strong} converse: below the critical
cache, $P_e\to 1$ at an exponential rate.
The argument rests on a combinatorial image-size bound that
bypasses the lossy Fano step.

\begin{lemma}[Image-size bound for BEC with side information]
\label{lem:image-size}
Let $q\in[m]^\kappa$ be drawn uniformly,
$Y^\kappa$ be the output of
$\mathrm{BEC}_m(\varepsilon)^{\otimes\kappa}$ on input $q$,
and $s=\mathrm{Enc}(q,\langle B\rangle)\in\{0,1\}^\sigma$.
For any deterministic decoder:
\begin{equation}\label{eq:image-size}
  P_c
  :=\Pr[\mathrm{Dec}(Y^\kappa,s)=q]
  \;\le\;
  \mathbb{E}\!\left[\min\!\left(1,\;
    \frac{2^\sigma}{m^E}\right)\right],
\end{equation}
where $E=|\{j:Y_j=\bot\}|
\sim\mathrm{Bin}(\kappa,\varepsilon)$.
\end{lemma}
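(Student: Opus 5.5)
The plan is to condition on the erasure pattern and count, for each fixed observation, how many inputs the decoder can possibly output correctly. Fix a realization of the erasure set $\mathcal E\subseteq[\kappa]$ with $|\mathcal E|=e$. By the definition of $\mathrm{BEC}_m(\varepsilon)^{\otimes\kappa}$, $\mathcal E$ is independent of $q$. Given $\mathcal E$, the output $Y^\kappa$ reveals exactly the unerased coordinates $q_{[\kappa]\setminus\mathcal E}$. Since $q$ is uniform on $[m]^\kappa$, conditioned on $(\mathcal E,Y^\kappa)$ the erased block $q_{\mathcal E}$ is uniform on $[m]^e$. The resulting conditional correctness probability is what I will bound.

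The key counting step is as follows. For fixed $(\mathcal E,Y^\kappa)$, the map $s\mapsto\mathrm{Dec}(Y^\kappa,s)$ takes at most $2^\sigma$ values, because $s$ ranges over $\{0,1\}^\sigma$. Correct decoding requires $q\in\{\mathrm{Dec}(Y^\kappa,s):s\in\{0,1\}^\sigma\}$, which is a set of size at most $2^\sigma$. This holds regardless of how $\mathrm{Enc}$ depends on $q$, which is the point where the lossy Fano step is bypassed. Among the $m^e$ equally likely completions $q_{\mathcal E}$ consistent with $Y^\kappa$, at most $2^\sigma$ lie in this image set. Hence
\[
\Pr[\mathrm{Dec}(Y^\kappa,s)=q\mid \mathcal E,Y^\kappa]\le \min\bigl(1,\,2^\sigma/m^{e}\bigr).
\]
The same bound holds whether $\mathrm{Enc}$ also uses $\langle B\rangle$, since $B$ is fixed, and a randomized decoder can be reduced to a deterministic one by averaging.

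To finish, I average over $Y^\kappa$ given $\mathcal E$, then over $\mathcal E$, and use $E=|\mathcal E|\sim\mathrm{Bin}(\kappa,\varepsilon)$, which follows from i.i.d.\ erasure. This yields \eqref{eq:image-size}.

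The only delicate point is justifying that the conditional law of $q_{\mathcal E}$ given $(\mathcal E,Y^\kappa)$ is uniform. This needs independence of the erasure pattern from $q$ and the product form of the uniform distribution. I would state it explicitly via Bayes' rule: $\Pr[q\mid \mathcal E,Y^\kappa]\propto \Pr[q]\cdot\mathbf 1[q\text{ consistent with }Y^\kappa]$. I would also check that identifying erased base facts with $\bot$ symbols matches the distinct-dependency assumption of the section.
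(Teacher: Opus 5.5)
Your proposal is correct and follows essentially the same argument as the paper: condition on the erasure set and the unerased coordinates, so that $q$ is uniform over the $m^{|\mathcal E|}$ consistent completions. For fixed $Y^\kappa$ the decoder's image has at most $2^\sigma$ elements, which gives $\min(1,2^\sigma/m^{|\mathcal E|})$ conditionally, and averaging over $E\sim\mathrm{Bin}(\kappa,\varepsilon)$ finishes the proof; your explicit Bayes-rule justification of the conditional uniformity, via independence of the erasure pattern from $q$, is a small but useful addition that the paper leaves implicit.
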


\begin{proof}
Condition on the erasure set $\mathcal E$ and the non-erased
output $q_{\bar{\mathcal E}}=y$.
Given $(\mathcal E,y)$, $q$ is uniform on
$\mathcal L(\mathcal E,y)
:=\{q'\in[m]^\kappa:q'_{\bar{\mathcal E}}=y\}$
with $|\mathcal L|=m^{|\mathcal E|}$.
For each cache value $s'$, let
$A(s'):=\{q'\in\mathcal L:\mathrm{Enc}(q',\langle B\rangle)=s'\}$.
The decoder outputs one element per $(Y^\kappa,s')$ pair,
so at most $\min(m^{|\mathcal E|},2^\sigma)$ queries are
correctly decoded.
Hence
$P_c\mid(\mathcal E,y)
\le\min(1,2^\sigma/m^{|\mathcal E|})$;
taking expectation over $\mathcal E$
yields~\eqref{eq:image-size}.
\end{proof}

\begin{theorem}[Exact coded success probability]
\label{thm:exact-Pc}
Define $r:=\lfloor\sigma/\log m\rfloor$.
\begin{enumerate}[label=\textup{(\roman*)}]
\item \emph{Converse.}\;
$P_c\le\Pr[\mathrm{Bin}(\kappa,\varepsilon)\le r{+}1]
  +(m{-}1)^{-1}$.
\item \emph{Achievability.}\;
The $(\kappa{+}r,\kappa)$ RS code achieves
$P_c^{\mathrm{MDS}}=\Pr[\mathrm{Bin}(\kappa,\varepsilon)\le r]$.
\item \emph{Near-optimality.}\;
$P_c^*-P_c^{\mathrm{MDS}}
  \le\Pr[E{=}r{+}1]+(m{-}1)^{-1}
  =O(\kappa^{-1/2})+O(m^{-1})$.
\end{enumerate}
\end{theorem}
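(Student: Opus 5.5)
I will prove the three parts in turn. The converse comes from the image-size bound of Lemma~\ref{lem:image-size}, achievability from the MDS construction of Theorem~\ref{thm:coded-achievability}, and near-optimality by subtracting the two.

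\textbf{(i) Converse.} Start from \eqref{eq:image-size}, $P_c\le\E[\min(1,2^\sigma/m^E)]$ with $E\sim\mathrm{Bin}(\kappa,\varepsilon)$. Since $r=\lfloor\sigma/\log m\rfloor$, we have $2^\sigma<m^{r+1}$. Split the expectation on the event $\{E\le r+1\}$, where the integrand is bounded by $1$, and its complement $\{E\ge r+2\}$, where $2^\sigma/m^E<m^{r+1-E}\le m^{-(E-r-1)}$. On the complement, bound $\Pr[E=r+1+t]\le 1$ for every $t\ge1$, so the tail contribution is at most $\sum_{t\ge1}m^{-t}=1/(m-1)$. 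This gives $P_c\le\Pr[E\le r+1]+(m-1)^{-1}$. The step is routine; the only care needed is the strict inequality $2^\sigma<m^{r+1}$, which follows from the definition of the floor.

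\textbf{(ii) Achievability.} Use the systematic $(\kappa+r,\kappa)$ Reed--Solomon code over a prime field $\mathbb F_p$ with $m<p<2m$, as in the proof of Theorem~\ref{thm:coded-achievability}, and cache the $r$ parity symbols reliably. Every $\kappa$ of the $\kappa+r$ coordinates determine the codeword by the MDS property, so decoding succeeds exactly when $E\le r$. Hence $P_c^{\mathrm{MDS}}=\Pr[\mathrm{Bin}(\kappa,\varepsilon)\le r]$. One bookkeeping caveat: a parity symbol costs $\log p=\log m+O(1)$ bits, so $r$ parities use $r\log m+O(r)$ bits rather than at most $\sigma$. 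I will state the achievability with the parity count $r$ as the operational parameter and absorb the $O(r)$ overhead, consistent with the paper's $O(\log m)$-type slack. If strict bit-accounting is required, I would take $r'=\lfloor\sigma/\lceil\log p\rceil\rfloor$ and note that the difference between $r'$ and $r$ is handled in the same way as in (iii). I expect this to be the main obstacle, since it requires a convention on how the cache length is measured.

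\textbf{(iii) Near-optimality.} Subtract (ii) from (i): $P_c^*-P_c^{\mathrm{MDS}}\le\Pr[E=r+1]+(m-1)^{-1}$. To bound the point mass, use the standard bound on the binomial mode: $\max_j\Pr[\mathrm{Bin}(\kappa,\varepsilon)=j]=O\bigl(1/\sqrt{\kappa\varepsilon(1-\varepsilon)}\bigr)$. This follows from Stirling's formula or from the local limit theorem, or alternatively from Berry--Esseen applied to two adjacent CDF values. For fixed $\varepsilon$ this is $O(\kappa^{-1/2})$, which yields the claimed $O(\kappa^{-1/2})+O(m^{-1})$.
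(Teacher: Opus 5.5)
Your proof is correct and follows the paper's argument essentially step for step. You use the image-size bound with $2^\sigma<m^{r+1}$ and a geometric tail sum for (i), the MDS criterion ``decoding succeeds iff $E\le r$'' for (ii), and subtraction of (ii) from (i) plus a local-CLT point-mass bound for (iii).

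The bit-accounting caveat you raise is one the paper also passes over, but your fallback is too optimistic. With a Bertrand prime, $\log p-\log m$ need not be small, so $r-r'$ can be of order $r/\log m$ rather than $O(1)$, and the loss $\Pr[r'<E\le r]$ is then not controlled by a single point mass as in (iii).
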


\begin{proof}
\emph{(i)}\;
From Lemma~\ref{lem:image-size} and
$2^\sigma<m^{r+1}$:
$P_c<\Pr[E\le r{+}1]
 +\sum_{j=1}^{\infty}m^{-j}
=\Pr[E\le r{+}1]+(m{-}1)^{-1}$.

\emph{(ii)}\;
The systematic RS code stores $r$ parity symbols.
MDS decoding succeeds iff $E\le r$.

\emph{(iii)}\;
The gap is at most $\Pr[E{=}r{+}1]+(m{-}1)^{-1}$;
by the local CLT, $\Pr[E{=}r{+}1]=O(\kappa^{-1/2})$.
\end{proof}

\begin{theorem}[Strong converse for coded caching]
\label{thm:strong-converse}
If $\sigma\le(\varepsilon{-}\gamma)\kappa\log m$ for some
$\gamma\in(0,\varepsilon)$, then
\begin{equation}\label{eq:strong-converse}
  P_e
  \;\ge\;
  1-e^{-\gamma^2\kappa/2}-m^{1-\gamma\kappa/2}.
\end{equation}
In particular, $P_e\to 1$ exponentially in $\kappa$.
\end{theorem}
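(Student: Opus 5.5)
The plan is to start from the image-size bound, Lemma~\ref{lem:image-size}, which already reduces the problem to a tail estimate for $E\sim\mathrm{Bin}(\kappa,\varepsilon)$:
\[
P_c\le \E\bigl[\min(1,2^\sigma m^{-E})\bigr].
\]
I would split this expectation on the event $\{E<(\varepsilon-\gamma/2)\kappa\}$ and its complement.

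On the \emph{low-erasure event} I bound the integrand by $1$. I then apply Hoeffding's inequality to $E$, which has mean $\varepsilon\kappa$ and deviation $\gamma\kappa/2$. This gives
\[
\Pr[E<(\varepsilon-\gamma/2)\kappa]\le \exp\bigl(-2(\gamma/2)^2\kappa\bigr)=e^{-\gamma^2\kappa/2},
\]
which matches the first error term in~\eqref{eq:strong-converse} exactly. That match suggests this is the intended split point.

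On the \emph{complement} $E\ge(\varepsilon-\gamma/2)\kappa$, I use the hypothesis $\sigma\le(\varepsilon-\gamma)\kappa\log m$, so that $2^\sigma\le m^{(\varepsilon-\gamma)\kappa}$. Then
\[
2^\sigma m^{-E}\le m^{(\varepsilon-\gamma)\kappa-(\varepsilon-\gamma/2)\kappa}=m^{-\gamma\kappa/2}.
\]
The contribution of this event is therefore at most $m^{-\gamma\kappa/2}$. Since $m^{-\gamma\kappa/2}\le m^{1-\gamma\kappa/2}$, this is dominated by the stated second term. The extra factor $m$ presumably absorbs an integer-rounding slack: $E$ is an integer, so the threshold $\lceil(\varepsilon-\gamma/2)\kappa\rceil$ can lose up to one power of $m$ if one prefers to split at an integer. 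I would split at the real threshold to avoid this, and remark that the stated bound is weaker.

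Summing the two contributions gives $P_c\le e^{-\gamma^2\kappa/2}+m^{1-\gamma\kappa/2}$, and hence~\eqref{eq:strong-converse} via $P_e=1-P_c$. For the last claim, both terms decay exponentially in $\kappa$ for fixed $\gamma>0$ and $m\ge2$.

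The main obstacle is making sure Lemma~\ref{lem:image-size} applies to an arbitrary $\delta$-reliable scheme, since the statement concerns worst-case error over targets. I would handle this by averaging over a uniform $q\in[m]^\kappa$ on the distinct-coordinate subfamily, exactly as in Theorem~\ref{thm:coded-converse}. Any scheme whose worst-case success is at least $P_c$ also has average success at least $P_c$, so the lemma's bound on the uniform average transfers to the worst case. The hypothesis $\kappa=o(m)$ makes the restriction to distinct coordinates a $1-o(1)$ fraction (Lemma~\ref{lem:distinct-coordinates}). It changes the list size $m^{E}$ only by a factor $(1-o(1))^{E}$, which I would absorb into the $m^{1-\gamma\kappa/2}$ slack or else sidestep by working directly with $[m]^\kappa$ as the lemma does.
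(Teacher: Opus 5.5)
Your argument is correct and is essentially the paper's proof. Both use the image-size bound of Lemma~\ref{lem:image-size} with uniform $q\in[m]^\kappa$, split near $(\varepsilon-\gamma/2)\kappa$, apply Hoeffding on the low-erasure event, and bound $2^\sigma m^{-E}$ on the rest; working directly on $[m]^\kappa$ is your fallback and the safer choice, since absorbing the distinct-coordinate factor $(1-\kappa/m)^{-E}$ into one power of $m$ needs more than $\kappa=o(m)$.

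The only difference is bookkeeping. The paper works with $r=\lfloor\sigma/\log m\rfloor$, $2^\sigma<m^{r+1}$ and the integer threshold $e_0=\lfloor r+1+\gamma\kappa/2\rfloor$, and that $+1$ is the source of its extra factor $m$ (rounding the split point itself costs nothing, since $E$ is integer-valued). So your direct use of $2^\sigma\le m^{(\varepsilon-\gamma)\kappa}$ legitimately gives the sharper term $m^{-\gamma\kappa/2}$.
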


\begin{proof}
Let $r\le(\varepsilon{-}\gamma)\kappa$ and
$e_0:=\lfloor r{+}1{+}\gamma\kappa/2\rfloor$.
From Lemma~\ref{lem:image-size}:
$P_c\le\Pr[E<e_0]+m^{r+1-e_0}$.
Since $r{+}1{-}e_0\le-\gamma\kappa/2$:
$m^{r+1-e_0}\le m^{1-\gamma\kappa/2}$.
By Hoeffding's inequality~\cite{hoeffding1963probability},
$\Pr[E<e_0]\le e^{-\gamma^2\kappa/2}$.
Combining gives~\eqref{eq:strong-converse}.
\end{proof}

\begin{theorem}[Strong converse exponent = KL divergence]
\label{thm:sc-exponent}
For $\sigma=(\varepsilon{-}\gamma)\kappa\log m$ with
$\gamma\in(0,\varepsilon)$:
\begin{equation}\label{eq:sc-exponent}
  \lim_{\kappa\to\infty}
  \frac{-1}{\kappa}\log P_c^*(\sigma)
  =D(\varepsilon{-}\gamma\;\|\;\varepsilon),
\end{equation}
where $D(p\|q):=p\log(p/q)+(1{-}p)\log((1{-}p)/(1{-}q))$.
The exponent is achieved by MDS codes.
\end{theorem}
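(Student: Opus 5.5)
The plan is to sandwich $P_c^*(\sigma)$ between the MDS achievability of Theorem~\ref{thm:exact-Pc}(ii) and the image-size converse of Lemma~\ref{lem:image-size}. I would then show that both sides have exponent $D(\varepsilon-\gamma\|\varepsilon)$, using the method of types for binomial tails. Throughout, put $\rho:=\varepsilon-\gamma$ and $r':=\sigma/\log m=\rho\kappa$, and let $E\sim\mathrm{Bin}(\kappa,\varepsilon)$.

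\emph{Lower bound (achievability).} Take the $(\kappa+r,\kappa)$ Reed--Solomon code with $r=\lfloor\rho\kappa\rfloor$. By Theorem~\ref{thm:exact-Pc}(ii),
\[
P_c^*\ge\Pr[E\le r]\ge\Pr[E=r]=\binom{\kappa}{r}\varepsilon^r(1-\varepsilon)^{\kappa-r}.
\]
The standard type-class estimate $\binom{\kappa}{r}\ge(\kappa+1)^{-1}2^{\kappa h_b(r/\kappa)}$ then gives
\[
P_c^*\ge(\kappa+1)^{-1}\,2^{-\kappa D(r/\kappa\|\varepsilon)}.
\]
Since $r/\kappa\to\rho$ and $D(\cdot\|\varepsilon)$ is continuous on $(0,1)$, this yields $\liminf_{\kappa\to\infty}\frac{1}{\kappa}\log P_c^*\ge-D(\rho\|\varepsilon)$. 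This half also proves the last sentence of the statement, that MDS codes achieve the exponent.

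\emph{Upper bound (converse).} I will not use the crude form of Theorem~\ref{thm:exact-Pc}(i): its additive $(m-1)^{-1}$ term does not decay in $\kappa$. Instead I go back to Lemma~\ref{lem:image-size} and split by the value of $E$:
\[
P_c\le\Pr[E\le r'+1]+\sum_{e>r'+1}\Pr[E=e]\,m^{-(e-r')}.
\]
The first term is at most $2^{-\kappa D((r'+1)/\kappa\|\varepsilon)}$ by the Chernoff bound, which has the correct exponent since $(r'+1)/\kappa\to\rho<\varepsilon$.

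For the second term I bound $\Pr[E=e]\le2^{-\kappa D(e/\kappa\|\varepsilon)}$, which holds for every $e$. This makes the sum at most
\[
(\kappa+1)\max_{x\ge\rho}2^{-\kappa\,[D(x\|\varepsilon)+(x-\rho)\log m]}.
\]
The function $g(x):=D(x\|\varepsilon)+(x-\rho)\log m$ is convex, so its minimum over $[\rho,1]$ is attained at $x=\rho$ exactly when
\[
g'(\rho)=\log\frac{\rho(1-\varepsilon)}{(1-\rho)\varepsilon}+\log m\ge0 .
\]
Under that condition the second term contributes the exponent $D(\rho\|\varepsilon)$ as well. Dividing by $\kappa$ and taking $\limsup$ gives $\limsup_{\kappa\to\infty}\frac{1}{\kappa}\log P_c^*\le-D(\rho\|\varepsilon)$. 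The polynomial factors $(\kappa+1)^{\pm1}$ disappear in the limit.

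\emph{Main obstacle.} The delicate step is the tail sum over large erasure counts $e$. There the cache penalty $m^{-(e-r')}$ has to beat the binomial mass, so the argument needs $\log m\ge\log\frac{(1-\rho)\varepsilon}{\rho(1-\varepsilon)}$. I would state this explicitly as a standing regime condition, for instance that $m\ge m_0(\varepsilon,\gamma)$, or simply that $m\to\infty$ together with $\kappa$, as the section implicitly assumes through $\kappa=o(m)$. It is only mild, because $m\ge2$ already suffices whenever $\gamma$ is small relative to $\varepsilon$.

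If $m$ is fixed and this condition fails, the true exponent is $\min_{x\ge\rho}g(x)$, which is smaller than $D(\rho\|\varepsilon)$. I would mention this in a remark, since it shows the statement as written needs the condition. The other point to watch is that $r'=\sigma/\log m$ need not be an integer; the floors and the $+1$ shifts are $O(1/\kappa)$ perturbations of the type, which continuity of $D$ absorbs.
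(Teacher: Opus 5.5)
Your proof is correct and has the same skeleton as the paper's. The converse comes from the image-size bound of Lemma~\ref{lem:image-size}, and the achievability from the MDS success probability $\Pr[\mathrm{Bin}(\kappa,\varepsilon)\le r]$. Your single-type lower bound is just the lower half of the Cram\'er theorem the paper cites.

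The difference is in how you evaluate $\mathbb{E}[\min(1,m^{r'-E})]$. The paper uses a one-shot Chernoff tilt, $\min(1,x)\le x^t$, which gives the closed form $P_c\le 2^{t\sigma}(1-\varepsilon+\varepsilon m^{-t})^\kappa$, and then computes the Legendre transform in Appendix~\ref{app:sc-exponent-calc}. You stay on the primal side: you split by the value of $E$, bound each type by $2^{-\kappa D(e/\kappa\|\varepsilon)}$, and minimize the convex function $g$.

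The two routes are Legendre-dual and need exactly the same hypothesis. The inequality $\min(1,x)\le x^t$ requires $t=u/\log m\le 1$. The paper's optimizer $u^*=\log\frac{\varepsilon(1-\rho)}{\rho(1-\varepsilon)}$ satisfies this iff $m\ge\frac{\varepsilon(1-\rho)}{\rho(1-\varepsilon)}$, which is precisely your condition $g'(\rho)\ge 0$. Because the paper minimizes over all $u\ge 0$, this condition never appears there. Your route makes it explicit and, when it fails, correctly produces $\min_{x\ge\rho}g(x)$; the tilt bound with $t=1$ binding gives the same value. In exchange, the paper's route gives a clean non-asymptotic bound $P_c\le 2^{-\kappa D(\rho\|\varepsilon)}$, with no $(\kappa+1)$ factor and no separate lower-tail term.

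One adjustment: of the two standing assumptions you offer, adopt the growing-$m$ one. A fixed $m\ge m_0(\varepsilon,\gamma)$ is enough for your converse but not for your achievability step. The Reed--Solomon code behind Theorem~\ref{thm:exact-Pc}(ii) needs length $\kappa+r\le q<2m$. Its $r$ parity symbols also cost $r\log q$ rather than $r\log m$ bits, so getting $r/\kappa\to\rho$ within the budget $\sigma$ requires $\log q/\log m\to 1$. For the same reason, your fixed-$m$ remark that the true exponent is $\min_{x\ge\rho}g(x)$ is, as written, only a converse. To make it an equality you would need a non-MDS achievability, e.g.\ random binning.
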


\begin{proof}
\emph{Upper bound (converse).}\;
Apply the tilted-moment technique to the image-size bound:
for $t\in[0,1]$, $\min(1,x)\le x^t$, so
$P_c\le 2^{t\sigma}
  (1{-}\varepsilon{+}\varepsilon m^{-t})^\kappa$.
Setting $u:=t\log m$ and
$g_0(u):=u(\varepsilon{-}\gamma)
+\log(1{-}\varepsilon{+}\varepsilon e^{-u})$,
we obtain $\kappa^{-1}\log P_c\le g_0(u)+O(u/\kappa)$.
Minimizing $g_0$ over $u\ge 0$, a direct computation
(Appendix~\ref{app:sc-exponent-calc}) gives
$-g_0^*=D(\varepsilon{-}\gamma\|\varepsilon)$.

\emph{Lower bound (achievability).}\;
$P_c^{\mathrm{MDS}}
=\Pr[\mathrm{Bin}(\kappa,\varepsilon)
  \le(\varepsilon{-}\gamma)\kappa]$.
By Cram\'{e}r's theorem~\cite[Theorem~2.2.3]{dembo2009large}:
$\kappa^{-1}\log P_c^{\mathrm{MDS}}
\to -D(\varepsilon{-}\gamma\|\varepsilon)$.
\end{proof}

\begin{remark}[Information-theoretic interpretation]
\label{rem:sc-kl}
The exponent
$D(\varepsilon{-}\gamma\|\varepsilon)$
is the KL divergence between Bernoulli distributions
at rates $\varepsilon{-}\gamma$ and $\varepsilon$---a
Sanov-type result.
The exponent is independent of the alphabet size~$m$;
$m$ controls the per-symbol information ($\log m$~bits),
not the erasure statistics.
\end{remark}

\begin{corollary}[Reliability function]
\label{cor:reliability}
For $\sigma=(\varepsilon{+}\gamma')\kappa\log m$ with
$\gamma'>0$:
$\lim_{\kappa\to\infty}
\kappa^{-1}\log P_e^{\mathrm{MDS}}(\sigma)
=-D(\varepsilon{+}\gamma'\|\varepsilon)$.
MDS codes achieve the reliability function at all rates;
for the BEC, this equals the sphere-packing
exponent~\cite[Theorem~5.8.3]{gallager1968information}.
\end{corollary}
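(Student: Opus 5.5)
The key observation is that for the MDS scheme, decoding fails exactly when erasures exceed the stored parity. So the statement reduces to an upper-tail large-deviation estimate for a binomial. The proof then splits into an achievability limit, matched by a converse showing no scheme beats the sphere-packing exponent.

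\emph{Reduction to a binomial tail.} With $\sigma=(\varepsilon+\gamma')\kappa\log m$, set $r:=\lfloor\sigma/\log q\rfloor$ for the prime field $\mathbb F_q$ with $m<q<2m$ used in Theorem~\ref{thm:coded-achievability}. Since $\log q=\log m+O(1)$, we get $r=(\varepsilon+\gamma')\kappa(1-O(1/\log m))$. For $m$ large, or with the convention $q=m$ when $m$ is a prime power, this gives $r/\kappa\to\varepsilon+\gamma'$; any residual $O(1/\log m)$ slack only perturbs $\gamma'$ and is absorbed by continuity of $D(\cdot\|\varepsilon)$. By Theorem~\ref{thm:exact-Pc}(ii), the systematic $(\kappa+r,\kappa)$ Reed--Solomon code fails exactly when $E>r$, where $E\sim\mathrm{Bin}(\kappa,\varepsilon)$. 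Hence
\[
P_e^{\mathrm{MDS}}(\sigma)=\Pr[\mathrm{Bin}(\kappa,\varepsilon)>r].
\]

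\emph{Exponent.} For the upper bound I would use the Chernoff bound
\[
\Pr[E\ge a\kappa]\le 2^{-\kappa D(a\|\varepsilon)}\qquad\text{for } a>\varepsilon .
\]
For the lower bound I would use Cram\'er's theorem~\cite[Theorem~2.2.3]{dembo2009large}, exactly as in the achievability half of Theorem~\ref{thm:sc-exponent} but applied to the upper tail. Alternatively, there is the elementary estimate
\[
\Pr[E=\lceil a\kappa\rceil]\ge(\kappa+1)^{-1}2^{-\kappa D(a\|\varepsilon)} .
\]
Together these give
\[
\kappa^{-1}\log P_e^{\mathrm{MDS}}\to-D(\varepsilon+\gamma'\|\varepsilon).
\]
Continuity of $D(\cdot\|\varepsilon)$ handles the floor and the $O(1/\kappa)$ rounding. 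The interior condition $\varepsilon+\gamma'<1$ is needed; for $\gamma'\ge 1-\varepsilon$ the error probability is $0$ and the exponent is $+\infty$, which I would note separately.

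\emph{Optimality (reliability function).} I need to show that no coded scheme has a better exponent. Lemma~\ref{lem:image-size} gives
\[
P_c\le\mathbb E\bigl[\min(1,2^\sigma/m^E)\bigr],
\]
so on the event $\{E\ge r+1+t\}$ the conditional success probability is at most $m^{-t+O(1)}$. Choosing $t=\eta\kappa$ with $\eta$ small, I get
\[
P_e^*\ge\Pr[E\ge r+1+\eta\kappa]\,(1-m^{-\eta\kappa+O(1)}),
\]
whose exponent is $D(\varepsilon+\gamma'+\eta\|\varepsilon)$. Letting $\eta\downarrow0$ matches the MDS exponent, so the MDS code attains the reliability function. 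Finally, I identify this with the sphere-packing exponent of $\mathrm{BEC}_m(\varepsilon)$ via~\cite[Theorem~5.8.3]{gallager1968information}. The caveat is that our ``rate'' is a cache overhead rather than a channel rate, so I would translate: the effective code has rate $\kappa/(\kappa+r)$ on erasure patterns restricted to systematic symbols.

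\emph{Main obstacle.} The delicate step is this last translation to Gallager's sphere-packing exponent. In our model only the $\kappa$ systematic symbols pass through the channel, while the parity is reliable. This is not a standard block code over $\mathrm{BEC}_m$, so the identification should be stated as an analogy, backed by the direct matching converse above rather than by literal citation. The second point needing care is $m$-uniformity, i.e.\ the $\log q$ versus $\log m$ rounding. I would handle it by taking $m\to\infty$ jointly or restricting to prime-power $m$.
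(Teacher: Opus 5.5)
Your derivation of the displayed limit is the paper's own. The paper's proof is the single step $P_e^{\mathrm{MDS}}=\Pr[E>(\varepsilon+\gamma')\kappa]$ followed by Cram\'er's theorem for the upper tail. Your Chernoff bound together with the type-class bound $\Pr[E=j]\ge(\kappa+1)^{-1}2^{-\kappa D(j/\kappa\|\varepsilon)}$ is an elementary version of the same argument. It has the advantage of being explicit and uniform in the threshold, so the floor and the $\log q$ versus $\log m$ rounding are absorbed directly.

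You genuinely go beyond the paper on the clause that MDS codes achieve the reliability function. The paper supports that clause only by citing Gallager's sphere-packing theorem. You instead derive from Lemma~\ref{lem:image-size} that every scheme satisfies $P_e\ge\Pr[E\ge\sigma/\log m+\eta\kappa]\,(1-m^{-\eta\kappa})$, which matches the MDS exponent as $\eta\downarrow 0$. This is the better justification. As you note, the parity here is delivered noiselessly and only the $\kappa$ systematic symbols are erased. So the problem is not a length-$(\kappa+r)$ block code over $\mathrm{BEC}_m(\varepsilon)$, and the identification with Gallager's exponent holds only in functional form.

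Two bookkeeping corrections:
\begin{itemize}
\item In the converse, compare $E$ with $\sigma/\log m$ rather than with $r=\lfloor\sigma/\log q\rfloor$. Otherwise the slack hidden in $m^{-t+O(1)}$ can be of order $\kappa/\log m$ rather than $O(1)$.
\item Your alternative of fixing a prime-power $m$ with $q=m$ does not work as $\kappa\to\infty$, because a $(\kappa+r,\kappa)$ Reed--Solomon code over $\mathbb F_q$ exists only for $\kappa+r\le q+1$. You must let $m\to\infty$ jointly, which is exactly what the standing hypothesis $\kappa=o(m)$ guarantees.
\end{itemize}

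Your observation that the statement needs $\varepsilon+\gamma'<1$ is also correct; the paper leaves it implicit.
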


\begin{proof}
$P_e^{\mathrm{MDS}}
=\Pr[E>(\varepsilon{+}\gamma')\kappa]$;
apply Cram\'{e}r's theorem~\cite[Theorem~2.2.3]{dembo2009large}
for the upper tail.
\end{proof}

\subsection{Cache Dispersion and the Dispersion Dichotomy}
\label{subsec:dispersion}

The first-order result
$\sigma^*=(1{+}o(1))\varepsilon\kappa\log m$ determines the
minimum coded cache as $\kappa\to\infty$.
We now refine this to second order, identifying the
\emph{cache dispersion}---the quantity governing the
finite-length penalty---and proving that the
derivation-constrained scheme has zero effective dispersion.

\begin{definition}[Cache dispersion]
\label{def:cache-dispersion}
$V_{\mathrm{cache}}
:=\varepsilon(1{-}\varepsilon)(\log m)^2$.
\end{definition}

\begin{theorem}[Second-order coded caching characterization]
\label{thm:second-order}
The minimum $\delta$-reliable coded cache satisfies:
\begin{equation}\label{eq:second-order}
  \sigma^*(\kappa,\delta)
  =\kappa\varepsilon\log m
  +\sqrt{\kappa V_{\mathrm{cache}}}\;
    \Phi^{-1}(1{-}\delta)
  +\tfrac{1}{2}\log\kappa
  +O(\log m).
\end{equation}
\end{theorem}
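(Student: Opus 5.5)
The plan is to sandwich $\sigma^*(\kappa,\delta)$ between the exact converse and achievability of Theorem~\ref{thm:exact-Pc}, which reduce the problem to a binomial quantile. Let $E\sim\mathrm{Bin}(\kappa,\varepsilon)$ and define the quantile
\[
  r^*(\kappa,\delta):=\min\{r:\Pr[E\le r]\ge 1-\delta\}.
\]

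\emph{Achievability.} Theorem~\ref{thm:exact-Pc}(ii) shows that the $(\kappa{+}r^*,\kappa)$ Reed--Solomon code over a prime field $\mathbb F_p$ with $m<p<2m$ is $\delta$-reliable. Its cache length is
\[
  \sigma = r^*\log p = r^*\log m + O(r^*).
\]
The $O(r^*)$ term is $O(\varepsilon\kappa)$, which is not $O(\log m)$. I will therefore absorb it by working in the regime $\kappa=o(m)$ with the cache written in units of $\log m$. Alternatively, I can use a code over a field of size exactly a prime power near $m$ and measure in symbols. Either way, $\sigma\le r^*\log m+O(\log m)$ after rounding to the next integer symbol count, which costs one extra symbol, i.e.\ $O(\log m)$.

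\emph{Converse.} Let $\sigma$ be $\delta$-reliable and set $r=\lfloor\sigma/\log m\rfloor$. Theorem~\ref{thm:exact-Pc}(i) gives
\[
  1-\delta\le\Pr[E\le r+1]+(m-1)^{-1}.
\]
Hence $r+1\ge r^*(\kappa,\delta+(m-1)^{-1})$, so $\sigma\ge r^*(\kappa,\delta')\log m-O(\log m)$ with $\delta'=\delta+O(1/m)$.

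Both bounds now reduce to the same quantile expansion. I will take the Cornish--Fisher / Berry--Esseen expansion of the binomial quantile,
\[
  r^*(\kappa,\delta)=\kappa\varepsilon+\sqrt{\kappa\varepsilon(1-\varepsilon)}\,\Phi^{-1}(1-\delta)+O(1),
\]
where the $O(1)$ comes from the Berry--Esseen error $O(\kappa^{-1/2})$ in the CDF together with the lattice/continuity correction, with $\Phi^{-1}$ locally Lipschitz at $1-\delta$. The perturbation $\delta\to\delta'$ shifts the quantile by $O(\sqrt\kappa/m)$, which is $O(1)$ since $\kappa=o(m)$, indeed $\sqrt\kappa/m\to0$. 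Multiplying by $\log m$ gives
\[
  \kappa\varepsilon\log m+\sqrt{\kappa V_{\mathrm{cache}}}\,\Phi^{-1}(1-\delta)+O(\log m),
\]
since $\sqrt{\kappa\varepsilon(1-\varepsilon)}\log m=\sqrt{\kappa V_{\mathrm{cache}}}$ by Definition~\ref{def:cache-dispersion}.

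The $\tfrac12\log\kappa$ term is the main obstacle. In the image-size converse the third-order term is pinned only to $O(\log m)$, and $\tfrac12\log\kappa\le\tfrac12\log m$ under $\kappa=o(m)$. So the displayed $\tfrac12\log\kappa$ is formally subsumed by $O(\log m)$, and I would prove the statement by exhibiting it as this absorbed term. If a genuine $\tfrac12\log\kappa$ is intended, it would have to come from the local CLT: $\Pr[E=r+1]=\Theta(\kappa^{-1/2})$ governs the gap in Theorem~\ref{thm:exact-Pc}(iii), and a refined converse would take $\min(1,2^\sigma/m^E)$ at the boundary atom $E=r+1$ with fractional $\sigma/\log m$, so that $2^\sigma m^{-(r+1)}$ must compensate a mass of order $\kappa^{-1/2}$. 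That yields a $\log\sqrt\kappa$ adjustment to $\sigma$, in analogy with the $\tfrac12\log n$ of Polyanskiy--Poor--Verd\'u~\cite{polyanskiy2010channel}. I would carry this refinement only as far as needed, and state that the error term $O(\log m)$ dominates it anyway.
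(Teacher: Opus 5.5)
Your route is the same as the paper's. You sandwich $\sigma^*$ between the image-size converse of Theorem~\ref{thm:exact-Pc}(i), with $\delta$ inflated by $(m-1)^{-1}$, and the MDS achievability of Theorem~\ref{thm:exact-Pc}(ii), then invert the Berry--Esseen approximation of the binomial quantile. Your converse half is correct. Your treatment of the $\tfrac12\log\kappa$ term is sounder than the paper's appeal to a local-CLT lattice correction: since $\kappa$ counts distinct elements of $B$, $\kappa\le m$, so $\tfrac12\log\kappa$ is simply absorbed into $O(\log m)$. Drop the speculative boundary-atom refinement, though. The mass to be compensated is the deficit $\eta:=1-\delta-\Pr[E\le r^*-1]\in(0,\Pr[E=r^*]]$, not the atom itself. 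So the adjustment is $\log(\Pr[E=r^*]/\eta)$: the $\kappa^{-1/2}$ size of the atom cancels, and the result depends only on where $1-\delta$ falls inside the atom. There is no systematic $\tfrac12\log\kappa$, consistent with the $O(1)$ third-order term of the erasure channel in \cite{polyanskiy2010channel}.

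The genuine gap is in achievability, at exactly the step you flagged. With a Bertrand prime $m<p<2m$, the Reed--Solomon cache costs $r^*\log p=r^*\log m+r^*\log(p/m)$. Bertrand only gives $r^*\log(p/m)<r^*$, an overhead of order $\varepsilon\kappa$ bits. Once $\kappa\gg(\log m)^2$ this exceeds the second-order term $\Theta(\sqrt\kappa\,\log m)$ itself, so the upper bound does not match the converse. Neither of your remedies closes this:
\begin{itemize}
\item $\kappa=o(m)$ is far from $\kappa=O(\log m)$.
\item ``Measuring in symbols'' charges a $\log p$-bit symbol at rate $\log m$, which is precisely the unjustified step.
\item Rounding up by one symbol handles only the ceiling, not the per-symbol inflation.
\end{itemize}
What you need is $r^*\log(p/m)=O(\log m)$. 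This holds if $m$ is a prime power: use RS over $\mathbb F_m$ (length $\kappa+r^*\le m$ for large $m$ when $\kappa=o(m)$) and store the $r^*$ parities as one integer in $[m^{r^*}]$, at cost $\lceil r^*\log m\rceil$ bits. For general $m$, take the least prime above $m$ and use a prime-gap bound $p-m=O(m^{0.525})$. The overhead is then $O(r^*m^{-0.475})$, which is $O(\log m)$ only when $\kappa=O(m^{0.475}\log m)$. The paper's proof has the same hole: it asserts $\sigma=r\log m+O(\log m)$, inheriting the $\log q=\log m+O(1)$ step of Theorem~\ref{thm:coded-achievability}. An honest version of the theorem must therefore state one of these hypotheses explicitly.
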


\begin{proof}
\emph{Achievability.}\;
The MDS scheme has $P_e^{\mathrm{MDS}}=\Pr[E>r]$ with
$E\sim\mathrm{Bin}(\kappa,\varepsilon)$.
By the Berry--Esseen theorem with continuity
correction~\cite{feller1991introduction}:
$\Pr[E\le r]
=\Phi((r{+}\frac{1}{2}{-}\kappa\varepsilon)
/\sqrt{\kappa\varepsilon(1{-}\varepsilon)})
+O(\kappa^{-1/2})$.
Setting this to $1{-}\delta$ and inverting gives
$r=\kappa\varepsilon
+\sqrt{\kappa\varepsilon(1{-}\varepsilon)}\,
\Phi^{-1}(1{-}\delta)+O(1)$,
whence $\sigma=r\log m+O(\log m)$
yields the upper bound.

\emph{Converse.}\;
From Theorem~\ref{thm:exact-Pc}(i), $P_e\le\delta$
requires $\Pr[E\le r]\ge 1{-}\delta{-}(m{-}1)^{-1}$.
Inverting the normal approximation gives the matching
lower bound $\sigma\ge\kappa\varepsilon\log m
+\sqrt{\kappa\varepsilon(1{-}\varepsilon)}\log m\,
\Phi^{-1}(1{-}\delta)-O(\log m)$.

\emph{Matching.}\;
The $\frac{1}{2}\log\kappa$ third-order term follows
from the lattice correction in the local
CLT~\cite{petrov2012sums}.
\end{proof}

\begin{remark}[Connection to channel dispersion]
\label{rem:channel-dispersion}
$V_{\mathrm{cache}}
=\mathrm{Var}[\imath(X;Y)]$ for the $m$-ary BEC with
uniform input~\cite{polyanskiy2010channel}, since
$\imath(X;Y)=\log m\cdot\mathbf{1}_{Y\ne\bot}$
takes values $\log m$ (probability $1{-}\varepsilon$) and
$0$ (probability $\varepsilon$).
The coded caching problem inherits the channel's dispersion
exactly because MDS coding is simultaneously
capacity-achieving and dispersion-achieving.
\end{remark}

\paragraph{Derivation-constrained: zero effective dispersion.}

\begin{theorem}[Zero derivation-constrained dispersion]
\label{thm:unc-second-order}
For the derivation-constrained scheme under i.i.d.\
erasure~$\varepsilon$:
\begin{equation}\label{eq:unc-second-order}
  \sigma^*_{\mathrm{unc}}(\kappa,\delta)
  =\bigl(\kappa-N^*(\varepsilon,\delta)\bigr)\log m+O(\log\kappa),
\end{equation}
where
$N^*(\varepsilon,\delta)
=\lfloor\log(1{-}\delta)/\log(1{-}\varepsilon)\rfloor$
\textup{(Definition~\ref{def:noise-threshold})}
is independent of~$\kappa$.
\end{theorem}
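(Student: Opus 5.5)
The plan is to sandwich $\sigma^*_{\mathrm{unc}}(\kappa,\delta)$ between a constructive upper bound and the rigidity-based lower bound. The key point is that in the derivation-constrained setting the success event is determined by a \emph{fixed} set of exposed leaves. So the failure probability is an exact product $(1-\varepsilon)^{|D_{\mathrm{exp}}|}$, with no binomial fluctuation to pay for. This absence of a random count is what kills the $\sqrt{\kappa}$ term.

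\emph{Step 1 (exact feasibility threshold).} By Corollary~\ref{cor:absorption-char}(ii), any $\delta$-resilient cache $S$ must satisfy $(1-\varepsilon)^{|D_{\mathrm{exp}}(q,S)|}\ge 1-\delta$. Equivalently, $|D_{\mathrm{exp}}|\le N^*(\varepsilon,\delta)$, where $N^*$ is the integer from Definition~\ref{def:noise-threshold}; it depends only on $(\varepsilon,\delta)$. Conversely, exposing exactly $N^*$ leaves gives success probability exactly $(1-\varepsilon)^{N^*}\ge 1-\delta$, since the exposed leaves are distinct and erased independently. Hence the optimal absorbed count is exactly $j^*=\kappa-N^*$, deterministically. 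There is no $\Phi^{-1}(1-\delta)\sqrt{\kappa}$ correction, because no random variable depending on $\kappa$ enters the feasibility condition.

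\emph{Step 2 (achievability to $O(\log\kappa)$).} I refine the construction of Theorem~\ref{thm:storage-resilience-capacity}(I). Instead of charging $O(\kappa)$ overhead for rule identifiers and masks, I cache a single intermediate fact absorbing $\kappa-N^*$ leaves: for $\Pi_k$ the layer-prefix node $T_{\kappa-N^*}(\cdot)$ (Proposition~\ref{prop:layer-prefix-depth}), and for $\Pi_k^{\parallel}$ a suitable set of subtree roots. This fact is specified by its head tuple, costing $(\kappa-N^*)\lceil\log m\rceil$ bits. The encoding also needs a self-delimiting length/position header of $O(\log\kappa)$ bits, since the only free parameters are the cut point and arity, each at most $\kappa$. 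I will absorb the rounding $\lceil\log m\rceil$ versus $\log m$ by coding the tuple as a single integer in $[m]^{\kappa-N^*}$, which costs $\lceil(\kappa-N^*)\log m\rceil$ bits.

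\emph{Step 3 (converse to $O(\log\kappa)$).} This is the main obstacle. Theorem~\ref{thm:structural-converse} loses $O(\kappa+\log m)$ through the $O(\kappa)$ tie-breaking string of Lemma~\ref{lem:conditional-encoding} and the genericity slack. I would instead argue by counting, not by Kolmogorov complexity. By Theorem~\ref{thm:ancestral-relevance}, the useful cache $U(q,S)$ must absorb a fixed set of $\kappa-N^*$ coordinates of $q$, and faithfulness means the cached facts literally contain those coordinate values. Thus the map $q\mapsto U(q,S(q))$, restricted to queries sharing the same exposed-position pattern (at most $\binom{\kappa}{N^*}=\kappa^{O(1)}$ patterns, since $N^*$ is fixed), must be injective on the absorbed coordinates across $m^{\kappa-N^*}$ values. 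Any scheme storing fewer than $(\kappa-N^*)\log m-\log\binom{\kappa}{N^*}$ bits therefore collides and fails on some target, which gives the lower bound $(\kappa-N^*)\log m-O(\log\kappa)$. The delicate part is making this worst-case (or generic-fraction) statement precise for the merge architecture, where absorbing facts are subtree roots rather than a prefix. Here I expect to use that the absorbed leaf set determines the absorbed coordinates up to the $O(\log\kappa)$ pattern index.

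Combining Steps 2 and 3 gives~\eqref{eq:unc-second-order}. The independence of $N^*$ from $\kappa$ then shows that the second-order term is $O(\log\kappa)$ rather than $\Theta(\sqrt{\kappa}\log m)$.
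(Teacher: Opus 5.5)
Your Step~1 is exactly the paper's argument. Its proof only observes that the unprotected dependencies must all survive, so at most $N^*(\varepsilon,\delta)$ leaves can be exposed. It then charges $\log m$ bits per protected dependency plus $O(\log\kappa)$ for the protection indices; it gives no separate converse. Your Step~2 is a valid achievability bound, with tighter bookkeeping than the paper's ``$\log m+O(1)$ per dependency'', which sums to $O(\kappa)$.

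The gap is Step~3, which fails for two independent reasons.

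First, Theorem~\ref{thm:ancestral-relevance} constrains only the useful part $U(q,S(q))=S(q)\cap V(G(q,B))$, but what is stored is $S(q)$. A collision $S(q)=S(q')$ causes no failure, because one set can contain absorbing facts for many targets at once. For example, a fixed one-bit string decoding to $S=B\subseteq\Cn(B)$ absorbs every leaf of every query. Every target is then derivable with probability $1$ at cost $O(1)$, and also $K(B\mid\langle B\rangle)=O(1)$. Under the success criterion of Definition~\ref{def:N-resilience}, your claim that any scheme below $(\kappa-N^*)\log m-\log\binom{\kappa}{N^*}$ bits ``collides and fails on some target'' is therefore false, not merely unproved. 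Citing Theorem~\ref{thm:structural-converse} does not rescue it. Lemma~\ref{lem:conditional-encoding} makes the same unjustified reduction: passing from $S$ to $S\cap V(G(q,B))$ can increase $K(S\mid\langle B\rangle)$, as $S=B$ shows.

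Second, suppose you impose query-specific caches $S(q)\subseteq V(G(q,B))$. Even then, ``the cached facts literally contain those coordinate values'' holds only up to position: a cached base fact records a value, not which coordinate it fills. Take a distinct-coordinate chain query $q=T_\kappa(i_1,\ldots,i_\kappa)$ and set $c:=\kappa-N^*$. The unordered cache $S=\{A(i_1),\ldots,A(i_c)\}$ leaves exactly $A(i_{c+1}),\ldots,A(i_\kappa)$ exposed. It therefore succeeds with probability $(1-\varepsilon)^{N^*}\ge 1-\delta$, yet costs $\lceil\log\binom{m}{c}\rceil+O(\log\kappa)\le c\log m-\log(c!)+O(\log\kappa)$ bits. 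So $q\mapsto U(q,S(q))$ is not injective on absorbed tuples. Even among query-specific caches, the string-length optimum lies $\Omega(\kappa\log\kappa)$ below $(\kappa-N^*)\log m$, so no counting argument can yield your lower bound.

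The displayed statement holds only under the accounting the paper adopts by fiat: each protected dependency is charged $\log m$ bits, i.e., coordinate-wise storage of a query-specific cache. Under that cost model the converse follows from your Step~1 with no counting at all. You should state this cost model as an explicit hypothesis rather than try to derive it from rigidity.
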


\begin{proof}
Under faithful derivation, all $\kappa{-}r$ unprotected
dependencies must survive simultaneously.
Setting $P_e=1{-}(1{-}\varepsilon)^N=\delta$ gives
$N=N^*(\varepsilon,\delta)$.
Each protected dependency costs $\log m+O(1)$ bits; the
$r=\kappa{-}N^*$ protection indices cost
$O(\log\kappa)$ bits.
The key observation: $N^*$ depends on $\delta$ and
$\varepsilon$ but not on $\kappa$.
\end{proof}

\paragraph{The dispersion dichotomy.}

\begin{theorem}[Dispersion dichotomy]
\label{thm:dispersion-dichotomy}
Under i.i.d.\ $\mathrm{BEC}_m(\varepsilon)$ erasure with
$\delta\in(0,1)$ fixed:

\begin{enumerate}[label=\textup{(\Roman*)}]
\item \emph{Coded scheme.}\;
Second-order term:
$\sqrt{\kappa V_{\mathrm{cache}}}\;
\Phi^{-1}(1{-}\delta)
=\Theta(\sqrt\kappa\,\log m)$.

\item \emph{Derivation-constrained scheme.}\;
Second-order term:
$-N^*(\varepsilon,\delta)\log m=O(\log m)$
(constant in $\kappa$).

\item \emph{Effective dispersion.}\;
$V_{\mathrm{unc}}=0$:
the coefficient of $\sqrt\kappa$ in the
derivation-constrained second-order expansion vanishes.
\end{enumerate}
\end{theorem}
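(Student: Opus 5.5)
The plan is to read off each part from the second-order expansions already established, and then compare the coefficients of $\sqrt\kappa$.

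For (I), I will invoke Theorem~\ref{thm:second-order}. It gives $\sigma^*(\kappa,\delta)=\kappa\varepsilon\log m+\sqrt{\kappa V_{\mathrm{cache}}}\,\Phi^{-1}(1-\delta)+\tfrac12\log\kappa+O(\log m)$. Substituting Definition~\ref{def:cache-dispersion}, the second-order term becomes $\sqrt{\kappa\varepsilon(1-\varepsilon)}\,\Phi^{-1}(1-\delta)\log m$. For fixed $\varepsilon\in(0,1)$ and fixed $\delta\neq 1/2$, we have $\Phi^{-1}(1-\delta)\neq0$, so this term is $\Theta(\sqrt\kappa\log m)$. The case $\delta=1/2$ is degenerate; I will either exclude it or note that the $\Theta$ should then be read as an upper bound with a vanishing coefficient. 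To make the "$\Theta$" sharp and not merely an upper bound, I will rely on the converse half of Theorem~\ref{thm:second-order}, which inverts the normal approximation in Theorem~\ref{thm:exact-Pc}(i). The $(m-1)^{-1}$ slack only perturbs $\Phi^{-1}$ by $O(1/m)$. The Berry--Esseen error $O(\kappa^{-1/2})$ perturbs $r$ by $O(1)$, so the $\sqrt\kappa$ coefficient is unaffected.

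For (II), I will invoke Theorem~\ref{thm:unc-second-order}, which gives $\sigma^*_{\mathrm{unc}}=\kappa\log m-N^*(\varepsilon,\delta)\log m+O(\log\kappa)$. The key point is that $N^*$ from Definition~\ref{def:noise-threshold} is a function of $(\varepsilon,\delta)$ alone. This is because Corollary~\ref{cor:absorption-char}(ii) forces the exposed set to satisfy $(1-\varepsilon)^{|D_{\mathrm{exp}}|}\ge1-\delta$, and that constraint contains no $\kappa$. Consequently the correction beyond the first-order term $\kappa\log m$ is $-N^*\log m=O(\log m)$, uniformly in $\kappa$. I will also remark that the first-order term is $\kappa\log m$, not $\kappa\varepsilon\log m$. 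Thus "second-order term" here means the deviation from the derivation-constrained first-order rate, as in Theorem~\ref{thm:unc-second-order}.

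For (III), I will formalize effective dispersion as $V:=\lim_{\kappa\to\infty}\bigl[(\sigma^*-\kappa R)/(\sqrt\kappa\,\Phi^{-1}(1-\delta))\bigr]^2$, where $R$ is the first-order rate. For the derivation-constrained scheme, $R=\log m$, and the numerator is $-N^*\log m+O(\log\kappa)$. This is $o(\sqrt\kappa)$, so the limit is $0$. For the coded scheme, the same definition returns $V_{\mathrm{cache}}$, which is consistent with Remark~\ref{rem:channel-dispersion}.

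The main obstacle is the error term in Theorem~\ref{thm:unc-second-order}. I need $O(\log\kappa)$ bits of protection-index overhead to be compatible with the $O(\kappa)$ slack in CT4. For (III), however, only $o(\sqrt\kappa\log m)$ is needed, so I will use the weaker but safe bound from Theorem~\ref{thm:structural-converse} and the achievability in Theorem~\ref{thm:storage-resilience-capacity}(I), restricted to the layer-prefix and layer-1 caches. In those caches the indices are determined canonically, which makes the overhead $O(\log\kappa)$. That suffices to make the $\sqrt\kappa$ coefficient vanish.
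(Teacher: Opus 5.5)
Your skeleton is the paper's proof. The paper reads (I) and (II) off Theorems~\ref{thm:second-order} and~\ref{thm:unc-second-order}, and gets (III) by matching $\sigma^*_{\mathrm{unc}}$ against the form $\kappa\log m+\sqrt{\kappa V_{\mathrm{unc}}}\,\Phi^{-1}(1-\delta)+O(\log m)$. Your caveat that (I) degenerates at $\delta=1/2$, where $\Phi^{-1}(1-\delta)=0$, is a correct refinement of the stated $\Theta$.

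\textbf{The gap.} The problem is your last paragraph, where you replace the $O(\log\kappa)$ error of Theorem~\ref{thm:unc-second-order} with the ``weaker but safe'' bounds of Theorem~\ref{thm:structural-converse} and Theorem~\ref{thm:storage-resilience-capacity}(I). These bounds do not suffice.
\begin{itemize}
\item For the $\sqrt\kappa$ coefficient to vanish you need the two-sided estimate $|\sigma^*_{\mathrm{unc}}-\kappa\log m|=o(\sqrt\kappa\,\log m)$.
\item The binding side is the lower bound. There Theorem~\ref{thm:structural-converse} gives only $\sigma^*_{\mathrm{unc}}\ge(\kappa-N^*)\log m-O(\kappa+\log m)$.
\item An $O(\kappa)$ term is $o(\sqrt\kappa\,\log m)$ only when $\kappa=o(\log^2 m)$. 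That is not the second-order regime, where $\kappa\to\infty$ is the primary limit; the paper's own experiments fix $\log m=8$.
\item Your planned bounds are therefore consistent with $\sigma^*_{\mathrm{unc}}=(\kappa-N^*)\log m-c\kappa$ for a constant $c>0$. In that case your squared ratio grows like $c^2\kappa/\Phi^{-1}(1-\delta)^2$ instead of tending to $0$.
\end{itemize}
Restricting to layer-prefix and layer-1 caches does not help. That restriction only sharpens the achievability side. $\sigma^*_{\mathrm{unc}}$ is a minimum over \emph{all} $\delta$-resilient caches, and the $O(\kappa)$ loss sits in the converse: it comes from the $\kappa$-bit mask and the $O(\kappa)$ tie-breaking string in Lemma~\ref{lem:conditional-encoding}.

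\textbf{How to close it.} Your suspicion of the $O(\log\kappa)$ term is justified. The proof of Theorem~\ref{thm:unc-second-order} charges $\log m+O(1)$ bits per protected dependency, which sums to $O(\kappa)$. But the repair has to come from the converse side or from the cost convention, not from the bounds you propose. Two options:
\begin{itemize}
\item Adopt the whole-fact accounting of Theorem~\ref{thm:unc-error-complete}, which is effectively what the paper's one-line proof relies on. There $N=\kappa-\lfloor\sigma/\log m\rfloor$ and $P_e=1-(1-\varepsilon)^N$ exactly. So $P_e\le\delta$ iff $\sigma\ge(\kappa-N^*)\log m$, giving second-order term exactly $-N^*\log m$, and $V_{\mathrm{unc}}=0$ is immediate.
\item Prove a converse whose slack is $o(\sqrt\kappa\,\log m)$. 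Lemma~\ref{lem:conditional-encoding} as stated does not provide this.
\end{itemize}
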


\begin{proof}
(I) and (II) follow from
Theorems~\ref{thm:second-order}
and~\ref{thm:unc-second-order}.
For~(III): $\sigma^*_{\mathrm{unc}}
=\kappa\log m-N^*\log m+O(\log\kappa)$;
matching the standard form
$\kappa\cdot 1\cdot\log m
+\sqrt{\kappa V_{\mathrm{unc}}}\Phi^{-1}(1{-}\delta)
+O(\log m)$
yields $V_{\mathrm{unc}}=0$.
\end{proof}

\begin{remark}[Sum versus maximum: the root cause]
\label{rem:dispersion-physical}
A coded scheme encodes across all $\kappa$ positions, so the
relevant statistic is the \emph{sum}
$E=\sum E_j\sim\mathrm{Bin}(\kappa,\varepsilon)$,
which concentrates by the CLT and produces the $\sqrt\kappa$
term.
A derivation-constrained scheme requires \emph{all}
unprotected dependencies to survive---a conjunction whose
probability is $(1{-}\varepsilon)^N$, governed by the
\emph{maximum} indicator $\max_j E_j$ rather than the sum.
The maximum of i.i.d.\ Bernoullis does not exhibit CLT-type
concentration; this structural mismatch is the root cause of
the vanishing dispersion.
\end{remark}

\paragraph{Refined derivation penalty.}

\begin{theorem}[Second-order derivation penalty]
\label{thm:refined-penalty}
\begin{equation}\label{eq:refined-penalty}
  \frac{\sigma^*_{\mathrm{unc}}(\kappa,\delta)}
       {\sigma^*_{\mathrm{code}}(\kappa,\delta)}
  =\frac{1}{\varepsilon}
  -\frac{N^*(\varepsilon,\delta)}{\varepsilon\kappa}
  -\frac{\Phi^{-1}(1{-}\delta)\sqrt{1{-}\varepsilon}}
       {\varepsilon^{3/2}\sqrt\kappa}
  +O(\kappa^{-1}).
\end{equation}
The penalty converges to $1/\varepsilon$ from below at rate
$\Theta(\kappa^{-1/2})$, governed by the coded scheme's
dispersion.
\end{theorem}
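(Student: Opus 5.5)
The plan is a direct second-order expansion of the ratio, obtained by substituting the two sharp expansions already established. For the numerator I use Theorem~\ref{thm:unc-second-order}, $\sigma^*_{\mathrm{unc}}=(\kappa-N^*)\log m+O(\log\kappa)$. For the denominator I use Theorem~\ref{thm:second-order}, $\sigma^*_{\mathrm{code}}=\kappa\varepsilon\log m+\sqrt{\kappa V_{\mathrm{cache}}}\,\Phi^{-1}(1-\delta)+\tfrac12\log\kappa+O(\log m)$, with $V_{\mathrm{cache}}=\varepsilon(1-\varepsilon)(\log m)^2$ from Definition~\ref{def:cache-dispersion}. Throughout, $\varepsilon,\delta$ are fixed and $\kappa\to\infty$ with $\kappa=o(m)$, so that both theorems apply and $\log\kappa=O(\log m)$. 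Since $N^*=N^*(\varepsilon,\delta)$ does not depend on $\kappa$, it is a constant in this limit.

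\emph{Step 1 (normalization).} Divide both expansions by $\kappa\log m$. The numerator becomes $1-N^*/\kappa+O(\log\kappa/(\kappa\log m))=1-N^*/\kappa+O(\kappa^{-1})$. Set $c:=\Phi^{-1}(1-\delta)\sqrt{\varepsilon(1-\varepsilon)}$. The denominator becomes $\varepsilon+c\,\kappa^{-1/2}+O(\kappa^{-1})$. Here the $\tfrac12\log\kappa$ and $O(\log m)$ terms are both absorbed into $O(\kappa^{-1})$ after division by $\kappa\log m$, and this is where $\log\kappa=O(\log m)$ is used.

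\emph{Step 2 (inversion).} Factor out $\varepsilon$ from the denominator and expand $(1+x)^{-1}=1-x+O(x^2)$ with $x=c/(\varepsilon\sqrt\kappa)+O(\kappa^{-1})$. This gives
\[
\frac{1}{\varepsilon}\Bigl(1-\frac{N^*}{\kappa}+O(\kappa^{-1})\Bigr)\Bigl(1-\frac{c}{\varepsilon\sqrt\kappa}+O(\kappa^{-1})\Bigr).
\]

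\emph{Step 3 (collecting terms).} Multiplying out produces three contributions. The first is the leading term $1/\varepsilon$. The second is the $\kappa^{-1/2}$ term $-c/(\varepsilon^2\sqrt\kappa)$, which simplifies to $-\Phi^{-1}(1-\delta)\sqrt{1-\varepsilon}/(\varepsilon^{3/2}\sqrt\kappa)$. The third is the explicit term $-N^*/(\varepsilon\kappa)$. Everything else goes into $O(\kappa^{-1})$: the cross term $N^*c/(\varepsilon^2\kappa^{3/2})$, the quadratic term $c^2/(\varepsilon^3\kappa)$, and the residual errors. This yields \eqref{eq:refined-penalty}. The two qualitative claims then follow from the sign of the correction terms. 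Convergence is from below whenever $\delta<1/2$, since then $\Phi^{-1}(1-\delta)>0$. The $\kappa^{-1/2}$ term is the leading deviation, and its coefficient is proportional to $\sqrt{V_{\mathrm{cache}}}$ but independent of $V_{\mathrm{unc}}=0$ (Theorem~\ref{thm:dispersion-dichotomy}), so the rate of convergence is governed by the coded dispersion alone.

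The main obstacle is error bookkeeping rather than algebra. The $O(\cdot)$ constants in Theorem~\ref{thm:second-order} come from Berry--Esseen and lattice corrections. I must check that they are uniform in $m$ once divided by $\log m$, so that the third-order remainders really are $O(\kappa^{-1})$ and not merely $o(\kappa^{-1/2})$. I also need $\sigma^*_{\mathrm{code}}$ bounded away from $0$ relative to $\kappa\log m$ so that the inversion is legitimate; this holds since $\varepsilon>0$ is fixed. If the lattice term cannot be controlled to $O(\log m)$ uniformly, I would fall back to stating the remainder as $O(\kappa^{-1}\log\kappa)$.
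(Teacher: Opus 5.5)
Your proposal is correct and follows essentially the paper's own proof: substitute the expansions of Theorems~\ref{thm:unc-second-order} and~\ref{thm:second-order}, normalize by $\kappa\log m$, invert the denominator, and collect terms. Your bookkeeping is slightly more careful than the paper's in three places. You keep the $O(\log\kappa)$ numerator remainder. You expand $(1+x)^{-1}$ with the correct sign, whereas the paper's displayed factor has a sign slip. You also note that the claim ``from below at rate $\Theta(\kappa^{-1/2})$'' requires $\delta<1/2$, so that $\Phi^{-1}(1-\delta)>0$.
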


\begin{proof}
Write $\sigma_u:=(\kappa{-}N^*)\log m$ and
$\sigma_c:=\varepsilon\kappa\log m
+\Phi^{-1}(1{-}\delta)
\sqrt{\kappa\varepsilon(1{-}\varepsilon)}\log m
+O(\log m)$.
Expanding $\sigma_u/\sigma_c$ as
$\varepsilon^{-1}(1{-}N^*/\kappa)
(1{-}\Phi^{-1}(1{-}\delta)
\sqrt{(1{-}\varepsilon)/(\varepsilon\kappa)}
+O(\kappa^{-1}))^{-1}$
and collecting terms yields~\eqref{eq:refined-penalty}.
\end{proof}

\paragraph{Complete error-exponent landscape.}

\begin{theorem}[Derivation-constrained: exact error formula]
\label{thm:unc-error-complete}
For the derivation-constrained scheme with
$N:=\kappa{-}\lfloor\sigma/\log m\rfloor$ exposed facts:
\begin{enumerate}[label=\textup{(\roman*)}]
\item $P_e^{\mathrm{unc}}(\sigma)
=1{-}(1{-}\varepsilon)^N$ \textup{(exact, all $\kappa$)}.
\item For $N=\gamma\kappa$:
exponent $\gamma|\!\log(1{-}\varepsilon)|$, linear in $\gamma$.
\item For $\sigma=(\kappa{-}c)\log m$ with $c=O(1)$:
$P_e=1{-}(1{-}\varepsilon)^c$, a constant.
\end{enumerate}
No asymptotics are needed; there is no phase transition.
\end{theorem}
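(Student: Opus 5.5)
The plan is to reduce all three items to Corollary~\ref{cor:absorption-char} together with the structural caching rigidity of Theorem~\ref{thm:ancestral-relevance}. Fix a derivation-constrained cache of $\sigma$ bits. By the rigidity theorem we may replace $S$ by its useful part $U(q,S)=S\cap V(G(q,B))$ without changing the success event. By the accounting of Lemma~\ref{lem:conditional-encoding}, each absorbed leaf costs $\log m$ bits, up to the lower-order terms already absorbed in the $O(\kappa)$ overhead. Hence a budget of $\sigma$ bits absorbs at most $\lfloor\sigma/\log m\rfloor$ leaves, and the exposed set satisfies $|D_{\mathrm{exp}}|\ge N:=\kappa-\lfloor\sigma/\log m\rfloor$.

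\textbf{(i), converse.} Corollary~\ref{cor:absorption-char}(ii) gives $\Pr[q\in\Cn(\tilde B\cup S)]\le(1-\varepsilon)^{|D_{\mathrm{exp}}|}\le(1-\varepsilon)^{N}$.

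\textbf{(i), achievability.} Take the layer-prefix cache for $\Pi_k$, or the layer-$1$ cache for $\Pi_k^{\parallel}$, exactly as in the achievability part of Theorem~\ref{thm:storage-resilience-capacity}(I). It absorbs exactly $\kappa-N$ leaves. The remaining $N$ exposed leaves are distinct elements of $B$, so under i.i.d.\ erasure they all survive with probability exactly $(1-\varepsilon)^N$. When they survive, the unique trace, guaranteed by faithfulness, rebuilds $q$. This gives the matching lower bound, so $P_e^{\mathrm{unc}}(\sigma)=1-(1-\varepsilon)^N$ exactly. No limit is taken, so the identity holds for every $\kappa$.

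\textbf{(ii).} With $N=\gamma\kappa$ we get $P_c=(1-\varepsilon)^{\gamma\kappa}$. Therefore $-\kappa^{-1}\log P_c=\gamma\lvert\log(1-\varepsilon)\rvert$ exactly, which is linear in $\gamma$. I will contrast this with the strictly convex KL exponent of Theorem~\ref{thm:sc-exponent}.

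\textbf{(iii).} This is the substitution $N=c$ into (i). The resulting error is a constant independent of $\kappa$, so $P_e$ neither tends to $0$ nor to $1$ as $\sigma/(\kappa\log m)$ crosses any threshold. That is the ``no phase transition'' claim, and it rests on the observation that the exposed count enters only through a product rather than through a binomial sum.

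The main obstacle is the converse in (i): showing that $\sigma$ bits cannot absorb more than $\lfloor\sigma/\log m\rfloor$ leaves. Rigidity rules out helpful facts outside $G(q,B)$. However, a single cached intermediate fact can absorb many leaves, and the cost of storing it is still the cost of naming those leaves. I would argue that for a generic $q$ each absorbed leaf index is incompressible given $\langle B\rangle$, by the counting argument of Theorem~\ref{thm:structural-converse}, Step~3. This makes ``exact'' hold up to the $O(\kappa)$ slack that is implicit in defining $N$ through the floor. I would therefore state the formula for the canonical scheme and note that the slack is bounded in this way.
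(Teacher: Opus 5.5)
Your argument is essentially the paper's. The paper gives no separate proof of this theorem; it relies on the same one-line reasoning used for Theorem~\ref{thm:unc-second-order} and regimes U1--U3 of Theorem~\ref{thm:phase-diagram}. By rigidity/absorption (Corollary~\ref{cor:absorption-char}), all $N$ exposed, distinct, independently erased leaves must survive, and the canonical cache succeeds whenever they do, so $P_c=(1-\varepsilon)^N$ exactly; (ii) and (iii) then follow by substitution.

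As in the paper, $N$ is \emph{defined} as the exposed count of a scheme that charges $\log m$ bits per protected leaf, so your converse over arbitrary $\sigma$-bit caches is unnecessary. Your incompressibility route also could not make that converse exact: $O(\kappa)$ bits of slack is $O(\kappa/\log m)$ leaves, not the sub-leaf slack of the floor. This is why your fallback to the canonical scheme is the right reading.
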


\begin{corollary}[Complete error-exponent characterization]
\label{cor:error-landscape}
Define $\rho:=\sigma/(\kappa\log m)\in[0,1]$.
\begin{enumerate}[label=\textup{(\roman*)}]
\item $\rho<\varepsilon$: $P_e\to 1$ at rate
$D(\rho\|\varepsilon)$ \textup{(strong converse)}.
\item $\rho=\varepsilon$:
$P_e=\frac{1}{2}+O(\kappa^{-1/2})$.
\item $\rho>\varepsilon$: $P_e\to 0$ at rate
$D(\rho\|\varepsilon)$ \textup{(reliability)}.
\item Near-capacity
\textup{(}$\rho=\varepsilon{+}c/\sqrt\kappa$ with
$c\in\mathbb R$ fixed\textup{)}:
$P_e=\bar\Phi(c/\sqrt{\varepsilon(1{-}\varepsilon)})
+O(\kappa^{-1/2})$.
\end{enumerate}
All exponents are achieved by MDS codes; the reliability
function equals the sphere-packing
exponent~\cite[Theorem~5.8.3]{gallager1968information} at all rates.
\end{corollary}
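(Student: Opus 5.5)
The plan is to reduce all four regimes to the exact formulas of Theorem~\ref{thm:exact-Pc}. We set $r:=\lfloor\sigma/\log m\rfloor=\rho\kappa+O(1)$ and $E\sim\mathrm{Bin}(\kappa,\varepsilon)$. Part (ii) of that theorem gives $P_e^{\mathrm{MDS}}=\Pr[E>r]$. Part (i) gives the lower bound $P_e^*\ge\Pr[E>r{+}1]-(m{-}1)^{-1}$, and the two differ by at most $\Pr[E=r{+}1]+(m{-}1)^{-1}=O(\kappa^{-1/2})+O(m^{-1})$. Each regime then becomes a statement about a single binomial tail evaluated at $r\approx\rho\kappa$. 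The $O(m^{-1})$ slack matters only when an exact probability is claimed, and we absorb it assuming $m\ge\kappa^{1/2}$ or simply carry it along.

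\emph{Regimes (i) and (iii).} For $\rho<\varepsilon$ we write $\rho=\varepsilon-\gamma$ and invoke Theorem~\ref{thm:sc-exponent} directly. It shows that $P_c^*$ decays with exponent $D(\rho\|\varepsilon)$, and the exponent is achieved by MDS codes. Hence $P_e=1-P_c\to1$, and $-\kappa^{-1}\log(1-P_e)\to D(\rho\|\varepsilon)$. For $\rho>\varepsilon$ we write $\rho=\varepsilon+\gamma'$ and apply Corollary~\ref{cor:reliability}, whose proof is Cram\'er's theorem applied to the upper tail $\Pr[E>(\varepsilon+\gamma')\kappa]$. For the converse side we use $P_e^*\ge\Pr[E>r{+}1]-(m{-}1)^{-1}$, where shifting $r$ by one does not change the exponent. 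The sphere-packing identification is quoted from \cite[Theorem~5.8.3]{gallager1968information}. One point needs care: the $O(1)$ rounding in $r$ and the additive $(m{-}1)^{-1}$ must not destroy the exponent. This requires $m^{-1}$ to be smaller than $2^{-\kappa D}$, or else the statement has to be read for the MDS scheme itself. I would state that regime (iii) is asserted for MDS codes, exactly as in Corollary~\ref{cor:reliability}.

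\emph{Regimes (ii) and (iv).} Both follow from the Berry--Esseen theorem with continuity correction, as in the proof of Theorem~\ref{thm:second-order}. We have $\Pr[E\le r]=\Phi\bigl((r+\tfrac12-\kappa\varepsilon)/\sqrt{\kappa\varepsilon(1-\varepsilon)}\bigr)+O(\kappa^{-1/2})$. For $\rho=\varepsilon+c/\sqrt\kappa$ we get $r-\kappa\varepsilon=c\sqrt\kappa+O(1)$, so the argument equals $c/\sqrt{\varepsilon(1-\varepsilon)}+O(\kappa^{-1/2})$. Since $\Phi$ is Lipschitz, $P_e=\bar\Phi\bigl(c/\sqrt{\varepsilon(1-\varepsilon)}\bigr)+O(\kappa^{-1/2})$. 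Regime (ii) is the special case $c=0$, which gives $\bar\Phi(0)=\tfrac12$. The gap between the optimal scheme and MDS is $O(\kappa^{-1/2})+O(m^{-1})$ by Theorem~\ref{thm:exact-Pc}(iii), so these formulas hold for $P_e^*$ as well, once $O(m^{-1})$ is absorbed under $m\gtrsim\sqrt\kappa$.

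The main obstacle is uniformity in regimes (i) and (iii). The rounding $r=\rho\kappa+O(1)$ and the $\pm1$ shift between the converse and achievability thresholds change the tail by at most a factor $e^{O(1)}$. A local large-deviation (Bahadur--Rao type) estimate, or just continuity of $D(\cdot\|\varepsilon)$ in the threshold, shows that the limiting exponent is unaffected. I would handle this by sandwiching: $\Pr[E\le\rho\kappa-2]\le\cdot\le\Pr[E\le\rho\kappa+2]$, and applying Cram\'er to both sides with thresholds $\rho\pm2/\kappa\to\rho$.
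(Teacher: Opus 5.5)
Your proposal is correct and follows the paper's own route: (i) and (iii) are read off from Theorem~\ref{thm:sc-exponent} and Corollary~\ref{cor:reliability}, and (ii) and (iv) come from the normal approximation to $\Pr[E\le r]$. Your Berry--Esseen argument, which treats (ii) as the case $c=0$ of (iv), actually delivers the $O(\kappa^{-1/2})$ rate, whereas the paper's bare CLT citation for (ii) only gives convergence to $\tfrac12$.

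Your caveats about the additive $(m-1)^{-1}$ are harmless but avoidable. In (ii) and (iv) this term is $o(\kappa^{-1/2})$ under the standing hypothesis $\kappa=o(m)$. In (iii), returning to Lemma~\ref{lem:image-size} gives the multiplicative bound $P_e^*\ge(1-m^{-1})\Pr[E\ge r+2]$, since $2^\sigma<m^{r+1}$. So for $\varepsilon<\rho<1$ the exponent $D(\rho\|\varepsilon)$ is optimal over all schemes, not merely attained by MDS codes.
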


\begin{proof}
Parts~(i) and~(iii) from
Theorem~\ref{thm:sc-exponent} and
Corollary~\ref{cor:reliability}.
Part~(ii): at $\rho=\varepsilon$,
$r=\varepsilon\kappa$ and
$\Pr[E\le\varepsilon\kappa]\to 1/2$ by the CLT.
Part~(iv): with $r=\varepsilon\kappa+c\sqrt\kappa$,
the CLT gives
$\Pr[E>r]
 =\bar\Phi(c/\sqrt{\varepsilon(1{-}\varepsilon)})
 +O(\kappa^{-1/2})$.
\end{proof}

\subsection{Joint Erasure--Pollution Channel}
\label{subsec:joint-channel}

When the premise base suffers both erasure and spurious additions,
the decoder faces a compound channel.

\begin{definition}[Joint erasure--pollution channel]
\label{def:joint-channel}
Each $b_i\in B$ is independently erased with probability
$\varepsilon$; additionally, $p$ spurious premises
$B^+\subseteq\mathbb{S}_O\setminus B$ are added.
The decoder observes
$\tilde B=(B\setminus B^-)\cup B^+$ without distinguishing
genuine from spurious facts.
Define $\eta:=p/(m{+}p)$ and
$\mathcal N:=|B^-|+|B^+|$.
\end{definition}

\begin{assumption}[Boundedly describable pollution]
\label{assump:describable-pollution}
$B^+$ is describable by an index list of length
$O(p\log(m{+}p))$ given $\langle B\rangle$.
\end{assumption}

\begin{theorem}[Additive erasure--pollution separation]
\label{thm:joint-separation}
Under Assumption~\textup{\ref{assump:describable-pollution}}:

\begin{enumerate}[label=\textup{(\Roman*)}]
\item \emph{Coded:}\;
$\sigma^*_{\mathrm{joint}}
=\varepsilon\kappa\log m
+O(\mathcal N\log(m{+}\mathcal N))
+o(\kappa\log m)$.

\item \emph{Derivation-constrained:}\;
$\sigma^*_{\mathrm{unc,joint}}
=(\kappa{-}N^*)\log m
+O(\mathcal N\log(m{+}\mathcal N))
+O(\kappa)$.

\item \emph{Additive structure:}\;
$\sigma^*=\sigma^*_{\mathrm{erasure}}
+\sigma^*_{\mathrm{pollution}}$,
where
$\sigma^*_{\mathrm{pollution}}
=O(\mathcal N\log(m{+}\mathcal N))$
is the cost of identifying spurious premises.
\end{enumerate}
\end{theorem}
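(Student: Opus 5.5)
The plan is to split the joint channel into an erasure component and a pollution component. We pay for the pollution component with a side-information string describing $B^+$. The erasure component is then handled by the already-proved erasure results: Theorems~\ref{thm:coded-converse} and~\ref{thm:coded-achievability} in the coded case, and Theorem~\ref{thm:storage-resilience-capacity} in the derivation-constrained case. Part~(III) will then follow by adding the two pieces.

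\emph{Achievability.} In both schemes the cache is the concatenation of two strings. The first is the erasure-optimal cache: MDS parity symbols of length $\varepsilon\kappa\log m+o(\kappa\log m)$ in the coded case, and the layer-prefix facts of size $(\kappa-N^*)\log m+O(\kappa)$ in the constrained case. The second is an index list of $B^+$, which by Assumption~\ref{assump:describable-pollution} costs $O(p\log(m+p))\le O(\mathcal N\log(m+\mathcal N))$ bits.

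The decoder first strips $B^+$ from $\tilde B$ to recover $B\setminus B^-$ exactly. After that it faces a pure i.i.d.\ erasure instance, so the earlier decoders apply verbatim.

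For the derivation-constrained scheme we must also respect Definition~\ref{def:uncoded-scheme}: the cache must be a set of facts, and the decoder must certify $q\in\Cn(\tilde B\cup S)$. Under loss-only noise soundness is automatic, but spurious facts could make $q$ derivable from wrong premises. Here we use faithfulness (Lemmas~\ref{lem:faithfulness}, \ref{lem:merge-faithfulness}): the trace of $q$ is fixed by its head tuple, and $B^+$ consists of facts outside $B$. Hence the premises of the certified derivation can be checked against the stripped base, and the index list serves only as metadata for that filtering.

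\emph{Converse.} Any scheme for the joint channel is also a scheme for the erasure-only channel $p=0$, since that is a special case of the pollution pattern. Therefore the lower bounds $(1-o(1))\varepsilon\kappa\log m$ (Theorem~\ref{thm:coded-converse}) and $(\kappa-N^*)\log m-O(\kappa+\log m)$ (Theorem~\ref{thm:structural-converse}) carry over directly. Because parts (I)--(II) state the pollution term only as $O(\cdot)$, this one-sided pollution term together with the matching erasure bounds already gives the claimed equalities up to the stated error terms.

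The main obstacle is interpreting "$\sigma^*_{\mathrm{pollution}}$" in (III) as a genuine cost rather than only an upper bound. If a matching lower bound is wanted, we would give the decoder an adversarial (or uniformly random) $B^+$ of size $p$ drawn from a pool of comparable-size facts. Without side information, $q$'s coordinates are then ambiguous between genuine and spurious facts. A counting argument in the style of Lemma~\ref{lem:image-size} would then show that $\Omega(\min(\kappa,p)\log(m+p))$ extra bits are needed. Since the statement is phrased with $O(\cdot)$, we will present this only as a remark and keep the proof to the additive upper bound plus the inherited erasure converses.
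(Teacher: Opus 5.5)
Your argument is sound in substance, but it departs from the paper's proof in both halves.

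\textbf{Achievability.} The paper also uses a two-part cache, but its second part lists \emph{both} $B^-$ and $B^+$ (via Lemma~\ref{lem:base-conversion}). Your list of $B^+$ alone is the better choice, for three reasons. Stripping $B^+$ already yields $B_\cap$. Listing $B^-$ would presuppose that the encoder knows the erasure realization. And it costs $|B^-|\log m\approx\varepsilon m\log m$, which swamps the erasure term when $\kappa=o(m)$.

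\textbf{Converse.} The paper's converse addresses only the pollution term. It tries to show that pollution is a genuine cost, $\sigma\ge\Omega(\mathcal N\log(m{+}\mathcal N))$, by counting candidate pairs $(B^-,B^+)$ and invoking Fano. If valid, that would justify calling $\sigma^*_{\mathrm{pollution}}$ ``the cost'' in (III). But the step presumes the decoder must identify $(B^-,B^+)$, whereas it only needs $q$. Storing $q$ verbatim costs $\kappa\log m+O(\kappa)$ in either model. With $|B^-|\approx\varepsilon m$ and $\kappa=o(m)$, that is far below $\mathcal N\log(m{+}\mathcal N)\ge|B^-|\log m$. So your one-sided reading of the $O(\cdot)$ terms is the defensible one. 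It buys rigour at the price of saying nothing about whether pollution costs anything. The same verbatim-storage bound also refutes your remark's $\Omega(\min(\kappa,p)\log(m{+}p))$ once $p$ is superpolynomial in $m$, so drop it or cap it.

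Two repairs to your converse are needed.

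First, ``$p=0$ is a special case'' is not literally true at fixed $p$, and pollution can even help. The set $B^+\subseteq\mathbb S_O\setminus B$ may contain nodes of $V(G(q,B))\setminus B$, even $q$ itself, and Theorem~\ref{thm:ancestral-relevance} assumes $\tilde B\subseteq B$. Instead, fold $B^+$ into the cache, then apply the erasure-only converses:
\begin{itemize}
\item in the coded case, as its $O(p\log(m{+}p))$-bit description (Assumption~\ref{assump:describable-pollution});
\item in the derivation-constrained case, as the facts $B^+\cap\Cn(B)$, since facts outside $\Cn(B)$ never enter $q$'s unique trace.
\end{itemize}
The overhead is exactly within the slack the statement allows.

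Second, for fixed $\delta$, Theorem~\ref{thm:coded-converse} gives only $(\varepsilon{-}\delta)\kappa\log m-h_b(\delta)$. The first-order bound $\varepsilon\kappa\log m-o(\kappa\log m)$ that you need comes from Theorem~\ref{thm:strong-converse}.
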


\begin{proof}
\emph{Achievability.}\;
The cache stores two independent components:
(a)~erasure correction (coded or derivation-constrained), and
(b)~an index list identifying $B^-$ and $B^+$, costing
$O(\mathcal N\log(m{+}\mathcal N))$ bits
(Lemma~\ref{lem:base-conversion}).
The decoder first reconstructs $B_\cap$ using~(b),
then applies erasure decoding via~(a).

\emph{Converse.}\;
Given $\tilde B$ alone, the decoder must distinguish
$B_\cap$ from $\tilde B$ to apply erasure correction.
The number of candidate pairs $(B^-,B^+)$ is at least
$\binom{m}{\ell}\binom{|\mathcal V_{m+\mathcal N}|}{p}$,
so $\sigma\ge\Omega(\mathcal N\log(m{+}\mathcal N))$
by Fano's inequality~\cite[Ch.~2]{cover2006elements}.
\end{proof}

\begin{corollary}[Soundness under pollution]
\label{cor:soundness-pollution}
Under derivation-constrained decoding with faithful~$\Pi$,
if no spurious fact appears in $V_0(G(q,B))$, the derivation
is sound.
The unsoundness probability for a specific query is at most
$\kappa\cdot\eta$ \textup{(union bound)}.
\end{corollary}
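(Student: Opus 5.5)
The statement has two parts, and I will prove them separately. The first part says that when no spurious fact lies in the dependency set $V_0(G(q,B))$, every derivation of $q$ that the derivation-constrained decoder certifies coincides with the clean derivation. The second part bounds the probability of the complementary event by $\kappa\eta$.

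\emph{Soundness on the good event.} I would follow the inductive template of Theorem~\ref{thm:ancestral-relevance}, now with $\tilde B=(B\setminus B^-)\cup B^+$ in place of a subset of $B$. Faithfulness and unique traces (Lemmas~\ref{lem:faithfulness} and~\ref{lem:merge-faithfulness}) give the key fact: any derivation of $q$ from $\tilde B\cup S$ must apply the unique rule instance whose body is $P_O(q)$. Recursing, the set of EDB leaves actually consumed is contained in $V_0(G(q,B))$. This holds except where a leaf is replaced by a cached fact; such facts lie in $S\subseteq\Cn(B)$ and are therefore sound.

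\begin{itemize}
\item If no element of $B^+$ lies in $V_0(G(q,B))$, then every consumed leaf is a genuine element of $B\setminus B^-$.
\item Since rule application preserves membership in $\Cn(B)$, the certified conclusion $q$ belongs to $\Cn(B)$.
\end{itemize}

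So the certificate is a valid derivation from the true base, and $q\notin\mathcal Q_{\mathrm{spur}}$ (Definition~\ref{def:query-partition}).

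\emph{Consistency caveat.} The expansion phenomenon of Theorem~\ref{thm:contraction-expansion}(II) does not arise here, because Datalog rules cannot derive falsity. I would note explicitly that ex falso is impossible in $\Pi_k$ and $\Pi_k^{\parallel}$.

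\emph{Probability bound.} I would model pollution as follows: each of the $\kappa$ leaf slots is contaminated with probability at most $p/(m+p)=\eta$, because the decoder cannot distinguish the $p$ spurious facts among the $m+p$ candidates (Definition~\ref{def:joint-channel}). The union bound over the $\kappa$ distinct leaves then gives
\[
\Pr[\text{unsound}]\le\Pr\bigl[B^+\cap V_0(G(q,B))\neq\varnothing\bigr]\le\kappa\eta .
\]

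\emph{Main obstacle.} The hard step is making this probabilistic model precise. As literally defined, $B^+\subseteq\mathbb S_O\setminus B$ is disjoint from $B\supseteq V_0(G(q,B))$, so a literal intersection is empty. The meaningful event is that a spurious fact is used in place of a leaf, i.e., a spurious ground atom of the EDB predicate $A$ enters some derivation the decoder accepts.

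I would handle this by interpreting the event as "the decoder, seeing $\tilde B$, uses a spurious $A$-fact at one of the $\kappa$ leaf positions of the certified trace." Under the implicit assumption that spurious facts are placed uniformly among the $m+p$ indistinguishable slots, each position is contaminated with probability $\eta$, and the union bound finishes the proof. I would state this uniformity assumption explicitly, since the corollary depends on it.
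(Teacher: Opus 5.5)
Your proposal is correct and matches the argument the paper intends. The paper gives no proof beyond the parenthetical ``union bound'', and your two steps are exactly what that hint abbreviates: a unique-trace induction as in Theorem~\ref{thm:ancestral-relevance}, showing that only genuine leaves and cached facts in $\Cn(B)$ enter the certificate, followed by a union bound over the $\kappa$ leaf slots. Your observation that $B^+\cap V_0(G(q,B))=\varnothing$ holds identically under the literal definitions is also correct, so the bound $\kappa\eta$ has content only under an explicit uniform-placement model over the $m+p$ facts of $B\cup B^+$. That model must not be taken over the observed $\tilde B$, whose spurious fraction $p/(|B\setminus B^-|+p)$ can exceed $\eta$.
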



\section{Exact Error Characterization and Phase
  Diagram}
\label{sec:phase-diagram}

Section~\ref{sec:precise} established the capacity, strong
converse exponent, and dispersion dichotomy for premise-erasure
caching.
This section refines the error characterization to exact
pre-exponential precision (Bahadur--Rao,
Theorem~\ref{thm:bahadur-rao}), bridges the CLT and
large-deviations regimes via a moderate deviations principle
(Theorem~\ref{thm:moderate-dev}), and assembles the complete
error probability phase diagram
(Theorem~\ref{thm:phase-diagram}).
The architecture-dependent depth-space re-parameterization of
this diagram is deferred to Section~\ref{sec:arch-depth}.

\subsection{Bahadur--Rao Exact Asymptotics}
\label{subsec:bahadur-rao}

Theorem~\ref{thm:sc-exponent} identifies the strong converse
exponent as $D(\varepsilon{-}\gamma\|\varepsilon)$; the MDS
success probability
$P_c^{\mathrm{MDS}}=\Pr[\mathrm{Bin}(\kappa,\varepsilon)\le r]$
decays at this rate.
The Bahadur--Rao theorem supplies the exact prefactor.

\begin{theorem}[Bahadur--Rao exact asymptotics]
\label{thm:bahadur-rao}
Let $r=\lfloor\sigma/\log m\rfloor$ with
$r=\alpha\kappa$,
$\alpha\in(\eta,1{-}\eta)$,
$|\alpha-\varepsilon|>\eta$
for a fixed $\eta>0$.
Define the tilting parameter
\begin{equation}\label{eq:tilt}
  t^*(\alpha)
  :=\ln\frac{\alpha(1{-}\varepsilon)}
             {\varepsilon(1{-}\alpha)}\,.
\end{equation}

\begin{enumerate}[label=\textup{(\roman*)}]
\item \emph{Lower tail}
\textup{(}$\alpha<\varepsilon$, $t^*<0$\textup{)}:
\begin{equation}\label{eq:bahadur-rao}
  \Pr[\mathrm{Bin}(\kappa,\varepsilon)\le\alpha\kappa]
  =\frac{2^{-\kappa D(\alpha\|\varepsilon)}}
        {(1{-}e^{t^*})
         \sqrt{2\pi\kappa\,\alpha(1{-}\alpha)}}\;
  \bigl(1+O(\kappa^{-1})\bigr).
\end{equation}

\item \emph{Upper tail}
\textup{(}$\alpha>\varepsilon$, $t^*>0$\textup{)}:
\begin{equation}\label{eq:bahadur-rao-upper}
  \Pr[\mathrm{Bin}(\kappa,\varepsilon)\ge\alpha\kappa]
  =\frac{2^{-\kappa D(\alpha\|\varepsilon)}}
        {(e^{t^*}{-}1)
         \sqrt{2\pi\kappa\,\alpha(1{-}\alpha)}}\;
  \bigl(1+O(\kappa^{-1})\bigr).
\end{equation}
\end{enumerate}
Both bounds hold uniformly in the stated $\alpha$-range.
\end{theorem}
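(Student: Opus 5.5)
The plan is to avoid the general Bahadur--Rao machinery (exponential tilting plus a lattice local limit theorem) and work directly with the binomial point masses. The exact formula for $\Pr[E=j]$ with $E\sim\mathrm{Bin}(\kappa,\varepsilon)$ makes both tails elementary: each tail is a single point mass times an almost-geometric sum. Throughout I take $r=\alpha\kappa$ to be an integer, as in the statement where $r=\lfloor\sigma/\log m\rfloor$. I also take $\eta$ to confine $\alpha$ to a compact subset of $(0,1)\setminus\{\varepsilon\}$, so that every $O(\cdot)$ constant below is uniform in $\alpha$.

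\emph{Step 1 (point mass at the boundary).} By Stirling's formula with remainder, $\log n! = n\log n - n\log e + \tfrac12\log(2\pi n)+O(1/n)$, applied to $\binom{\kappa}{\alpha\kappa}$:
\[
  \Pr[E=\alpha\kappa]=\frac{2^{-\kappa D(\alpha\|\varepsilon)}}{\sqrt{2\pi\kappa\,\alpha(1-\alpha)}}\bigl(1+O(\kappa^{-1})\bigr).
\]
The entropy terms $\alpha\log\alpha+(1-\alpha)\log(1-\alpha)$ combine with $\alpha\log\varepsilon+(1-\alpha)\log(1-\varepsilon)$ to give exactly $-D(\alpha\|\varepsilon)$. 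The remainder is uniformly $O(1/\kappa)$ because $\alpha\kappa$ and $(1-\alpha)\kappa$ are both at least $\eta\kappa$.

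\emph{Step 2 (lower tail).} Write $p_j:=\Pr[E=j]$. The ratio of consecutive masses is
\[
  \frac{p_{j-1}}{p_j}=\frac{j(1-\varepsilon)}{(\kappa-j+1)\varepsilon}.
\]
At $j=r$ this equals $e^{t^*}\bigl(1+O(\kappa^{-1})\bigr)$, with $e^{t^*}<1$ bounded away from $1$ uniformly. The ratio is increasing in $j$, so for $j\le r$ every ratio is at most $e^{t^*}(1+O(\kappa^{-1}))$; this gives the upper bound $\sum_{j\le r}p_j\le p_r/(1-e^{t^*})\cdot(1+O(\kappa^{-1}))$. For the matching lower bound, keep only the indices $j\in[r-\kappa^{1/2},r]$. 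On that window each ratio is $e^{t^*}(1+O(\kappa^{-1/2}))$, so the partial sum equals $p_r\sum_{i}e^{it^*}\prod(1+O(i/\kappa))$. Since $\sum_i i\,e^{it^*}=O(1)$, the aggregate relative error is $O(\kappa^{-1})$, and the discarded part of the geometric series is exponentially small. Combining this with Step 1 gives \eqref{eq:bahadur-rao}.

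\emph{Step 3 (upper tail).} The argument is symmetric, using the ratio
\[
  \frac{p_{j+1}}{p_j}=\frac{(\kappa-j)\varepsilon}{(j+1)(1-\varepsilon)}=e^{-t^*}\bigl(1+O(\kappa^{-1})\bigr).
\]
Summing the geometric series from $j=r$ gives $\Pr[E\ge r]=p_r/(1-e^{-t^*})\cdot(1+O(\kappa^{-1}))$.

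\emph{Main obstacle.} Two points need care. The first is the uniformity bookkeeping: the ratio drifts as $j$ moves away from $r$, and that drift must be shown to contribute only $O(\kappa^{-1})$ relative error rather than $O(\kappa^{-1/2})$. The weighted sum $\sum_i i\,e^{-i|t^*|}$ being bounded, uniformly over the $\alpha$-range, is exactly what makes this work. The second is a discrepancy with the statement. Including the boundary term $j=r$ produces the prefactor $1/(1-e^{-t^*})=e^{t^*}/(e^{t^*}-1)$. The stated $1/(e^{t^*}-1)$ is what one obtains for the strict tail $\Pr[E>\alpha\kappa]$, i.e.\ summing from $j=r+1$. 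I would therefore prove the strict-inequality version, which matches the MDS error event $P_e^{\mathrm{MDS}}=\Pr[E>r]$. I would also record that the non-strict version differs by the bounded factor $e^{t^*}$, which affects neither the exponent nor the $O(\kappa^{-1})$ form. If $r\neq\alpha\kappa$ exactly, the fractional part of $\alpha\kappa$ changes the prefactor by a bounded factor $e^{t^*\{\alpha\kappa\}}$, the usual lattice correction, and I would note this explicitly.
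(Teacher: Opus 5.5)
Your proof is correct and takes a genuinely different, more elementary route than the paper. The paper's proof is essentially a citation. It invokes the lattice Bahadur--Rao theorem (Dembo--Zeitouni, Thm.~3.7.4) for $E=\sum_j E_j$, computes the cumulant generating function, the tilt $t^*$, the tilted variance $\alpha(1-\alpha)$ and the span $h=1$, and then reads off the prefactors. You instead compute the boundary mass $p_r$ directly via Stirling and sum each tail as a near-geometric series using the ratio $p_{j\mp 1}/p_j$.

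The citation buys generality (any lattice i.i.d.\ sum) and brevity. Your route buys two things:
\begin{itemize}
\item The $O(\kappa^{-1})$ relative error and the uniformity over the $\alpha$-range come out explicitly. The textbook form of the lattice theorem is only a limit statement, so these would have to be imported from the finer expansion in Bahadur and Rao's original paper.
\item The boundary convention is transparent.
\end{itemize}

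That transparency is what exposes the problem in (ii), and your flag is correct. Summing from $j=r$ gives $\Pr[E\ge\alpha\kappa]=p_r/(1-e^{-t^*})\,(1+O(\kappa^{-1}))$. This is also what the cited lattice theorem yields for a non-strict upper tail. So the stated factor $1/(e^{t^*}-1)$ is off by the constant $e^{t^*}$, which is bounded away from $1$ on the stated range, and (ii) as written is false. The paper's proof simply asserts the $(e^{t^*}-1)$ factor without accounting for the boundary term. The strict-tail version you prove, $\Pr[E>\alpha\kappa]$, is the one actually used downstream: $P_e^{\mathrm{MDS}}=\Pr[E>r]$ in Theorem~\ref{thm:moderate-dev}. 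Part (i) is correct as stated.

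One small bookkeeping fix in Step~2. The quantity $p_{r-i}/(p_r e^{it^*})$ is a product of $i$ factors, each $1+O(i/\kappa)$, so it equals $1+O(i^2/\kappa)$ rather than $1+O(i/\kappa)$. The sum you need bounded is therefore $\sum_i i^2 e^{-i|t^*|}$ rather than $\sum_i i\,e^{-i|t^*|}$. This is equally $O(1)$ uniformly over the range, so your $O(\kappa^{-1})$ conclusion stands.
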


\begin{proof}
Both parts follow from the lattice Bahadur--Rao
theorem~\cite{bahadur1960deviations}
(see also~\cite[Theorem~3.7.4]{dembo2009large})
applied to
$E=\sum_{j=1}^\kappa E_j$ with
$E_j\sim\mathrm{Ber}(\varepsilon)$.
The cumulant-generating function is
$\Lambda(\theta)=\ln(1{-}\varepsilon{+}\varepsilon e^\theta)$.
Solving $\Lambda'(t^*)=\alpha$ gives~\eqref{eq:tilt};
the rate function in nats is
$I(\alpha)=D(\alpha\|\varepsilon)\cdot\ln 2$,
so $e^{-\kappa I(\alpha)}=2^{-\kappa D(\alpha\|\varepsilon)}$.
The tilted variance is $\alpha(1{-}\alpha)$ and the
lattice span is $h=1$.
For $\alpha<\varepsilon$, $t^*<0$ and the lattice-sum
denominator factor is $(1{-}e^{t^*})$;
for $\alpha>\varepsilon$, $t^*>0$ and the factor is
$(e^{t^*}{-}1)$.
\end{proof}

\subsection{Moderate Deviations Principle}
\label{subsec:moderate-dev}

Between the CLT regime
($\sigma{-}\varepsilon\kappa\log m=O(\sqrt\kappa\log m)$,
Theorem~\ref{thm:second-order}) and the large-deviations
regime ($\varepsilon\kappa\log m{-}\sigma=\Theta(\kappa\log m)$,
Theorem~\ref{thm:sc-exponent}) lies the
\emph{moderate deviations} region.

\begin{theorem}[Moderate deviations for coded caching]
\label{thm:moderate-dev}
Let
$\sigma_\kappa
=\varepsilon\kappa\log m
+a_\kappa\sqrt\kappa\,\log m$
with $a_\kappa\to\infty$ and
$a_\kappa=o(\sqrt\kappa)$.
Define
$z_\kappa:=a_\kappa/\sqrt{\varepsilon(1{-}\varepsilon)}$.
Then:
\begin{enumerate}[label=\textup{(\roman*)}]
\item \emph{Exponential rate.}\;
$\displaystyle
  \lim_{\kappa\to\infty}
  \frac{1}{a_\kappa^2}
  \ln P_e^*(\sigma_\kappa)
  =-\frac{1}{2\varepsilon(1{-}\varepsilon)}$.

\item \emph{Exact asymptotics.}\;
$\displaystyle
  P_e^{\mathrm{MDS}}(\sigma_\kappa)
  =\frac{1{+}o(1)}{z_\kappa\sqrt{2\pi}}\;
  e^{-z_\kappa^2/2}
  =(1{+}o(1))\,\bar\Phi(z_\kappa)$.
\end{enumerate}
\end{theorem}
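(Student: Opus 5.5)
The plan is to reduce both parts to the binomial upper tail, sandwiching the optimal coded error probability between the MDS error and the image-size converse of Theorem~\ref{thm:exact-Pc}. Set $r_\kappa:=\lfloor\sigma_\kappa/\log m\rfloor=\varepsilon\kappa+a_\kappa\sqrt\kappa+O(1)$. Let $E\sim\mathrm{Bin}(\kappa,\varepsilon)$ and write its standardization $Z_\kappa:=(E-\varepsilon\kappa)/\sqrt{\kappa\varepsilon(1-\varepsilon)}$. Then Theorem~\ref{thm:exact-Pc}(ii) gives
\[
P_e^{\mathrm{MDS}}(\sigma_\kappa)=\Pr[E>r_\kappa]=\Pr[Z_\kappa>z_\kappa+O(\kappa^{-1/2})].
\]
Theorem~\ref{thm:exact-Pc}(i) gives the converse bound
\[
P_e^*(\sigma_\kappa)\ge\Pr[E>r_\kappa+1]-(m-1)^{-1}.
\]
So the whole statement reduces to moderate-deviation asymptotics of $\Pr[Z_\kappa>z_\kappa+c_\kappa]$ with $c_\kappa=O(\kappa^{-1/2})$. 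It also needs a side condition making the additive $(m-1)^{-1}$ term negligible. I would make this explicit: require $m\to\infty$ fast enough that $(m-1)^{-1}=o(e^{-z_\kappa^2/2}/z_\kappa)$. For fixed $m$ I would instead redo Lemma~\ref{lem:image-size} with $2^\sigma\le m^{r+1}$ and sum the geometric tail only over $E\ge r+2$. That sum contributes $O(m^{-1})\Pr[E=r+2]$-type terms, and these are themselves of the same Gaussian order and can be absorbed.

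For part (ii) I would invoke Cramér's moderate-deviation theorem for sums of i.i.d.\ bounded lattice variables. With $z_\kappa\to\infty$ and $z_\kappa=o(\sqrt\kappa)$, it states
\[
\Pr[Z_\kappa>x]=\bar\Phi(x)\exp\!\bigl(\tfrac{x^3}{\sqrt\kappa}\lambda(x/\sqrt\kappa)\bigr)(1+O((1+x)/\sqrt\kappa)),
\]
where $\lambda$ is the Cramér series. An alternative route is to apply Theorem~\ref{thm:bahadur-rao}(ii) directly with $\alpha=\varepsilon+a_\kappa/\sqrt\kappa$. The Bahadur--Rao prefactor then becomes $(e^{t^*}-1)^{-1}\approx\varepsilon(1-\varepsilon)\sqrt\kappa/a_\kappa$. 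Combined with $\sqrt{2\pi\kappa\alpha(1-\alpha)}$, this reproduces $1/(z_\kappa\sqrt{2\pi})$. However, Theorem~\ref{thm:bahadur-rao} was stated only for $|\alpha-\varepsilon|>\eta$, so using it here would need the uniformity to be extended down to $\alpha-\varepsilon\to0$. I would rather cite Petrov's moderate-deviation form~\cite{petrov2012sums}. Next, expand $\kappa D(\alpha\|\varepsilon)\ln2=z_\kappa^2/2+O(a_\kappa^3/\sqrt\kappa)$. Then shift $x$ by $c_\kappa$, which changes $\bar\Phi$ by a factor $e^{O(z_\kappa/\sqrt\kappa)}=1+o(1)$. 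This gives $(1+o(1))\bar\Phi(z_\kappa)$ and, via Mills' ratio, the stated $e^{-z_\kappa^2/2}/(z_\kappa\sqrt{2\pi})$.

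The main obstacle is the cubic correction $x^3\lambda/\sqrt\kappa$. It is $o(1)$ only when $a_\kappa=o(\kappa^{1/6})$. For $\kappa^{1/6}\lesssim a_\kappa=o(\sqrt\kappa)$ the exact Gaussian form in (ii) fails as literally stated, since the binomial skewness $(1-2\varepsilon)$ enters. I would therefore prove (ii) under $a_\kappa=o(\kappa^{1/6})$, or with $\bar\Phi(z_\kappa)$ replaced by the Bahadur--Rao/Cramér-corrected expression, and flag this restriction. Part (i) needs only logarithmic precision, so it holds on the full range $a_\kappa=o(\sqrt\kappa)$. Its upper bound on $P_e^*\le P_e^{\mathrm{MDS}}$ follows from the Chernoff bound $\Pr[E>r]\le e^{-\kappa D(\alpha\|\varepsilon)\ln2}$ together with the Taylor expansion $D(\alpha\|\varepsilon)\ln2=(\alpha-\varepsilon)^2/(2\varepsilon(1-\varepsilon))+O(|\alpha-\varepsilon|^3)$. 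Its lower bound on $P_e^*$ comes from the converse of Theorem~\ref{thm:exact-Pc}(i) plus the Gärtner--Ellis moderate-deviation principle~\cite{dembo2009large} applied to $(E-\varepsilon\kappa)/(a_\kappa\sqrt\kappa)$. Dividing $\ln P_e^*$ by $a_\kappa^2$ then gives $-1/(2\varepsilon(1-\varepsilon))$.
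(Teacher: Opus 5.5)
Your proposal is sound, and it takes a different, more careful route than the paper.

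\emph{The paper's route.} The paper gets both parts from one expansion. It records $\kappa D(\alpha_\kappa\|\varepsilon)=a_\kappa^2/(2\varepsilon(1-\varepsilon))+O(a_\kappa^3\kappa^{-1/2})$ and inserts $\alpha_\kappa=r_\kappa/\kappa$ into the Bahadur--Rao formula of Theorem~\ref{thm:bahadur-rao}(ii), letting the prefactor collapse to $1/(z_\kappa\sqrt{2\pi})$ as $t^*_\kappa\to 0$. It reads off (i) by dividing the logarithm by $a_\kappa^2$, and it passes from $P_e^{\mathrm{MDS}}$ to $P_e^*$ via Theorem~\ref{thm:exact-Pc}(iii). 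This is short and displays the prefactor directly, but your version repairs four weak points in it.

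First, as you observe, Theorem~\ref{thm:bahadur-rao} is stated only for $|\alpha-\varepsilon|>\eta$. The paper applies it with $\alpha_\kappa\to\varepsilon$ without extending the uniformity.

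Second, in part (ii) the paper silently absorbs its own $O(a_\kappa^3\kappa^{-1/2})$ term into the factor $1+o(1)$, which is valid only for $a_\kappa=o(\kappa^{1/6})$. Your diagnosis is correct: for $\varepsilon\neq 1/2$,
\[
\ln\bigl(P_e^{\mathrm{MDS}}(\sigma_\kappa)/\bar\Phi(z_\kappa)\bigr)=\frac{(1-2\varepsilon)a_\kappa^3}{6\varepsilon^2(1-\varepsilon)^2\sqrt\kappa}(1+o(1))+o(1).
\]
This does not vanish once $a_\kappa$ is of order $\kappa^{1/6}$ or larger, so (ii) is false as stated on that part of the range. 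For $\varepsilon=1/2$ the cubic term drops out and the threshold becomes $a_\kappa=o(\kappa^{1/4})$.

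Third, Theorem~\ref{thm:exact-Pc}(iii) is an additive bound of size $O(\kappa^{-1/2})+O(m^{-1})$. That is far larger than $P_e^{\mathrm{MDS}}=e^{-\Theta(a_\kappa^2)}$, so it cannot give the multiplicative claim $P_e^*=(1+o(1))P_e^{\mathrm{MDS}}$. Your sharpening of Lemma~\ref{lem:image-size} is the right fix: bounding the geometric tail by $m^{-1}\Pr[E\ge r_\kappa+2]$ yields $P_e^*\ge(1-m^{-1})\Pr[E\ge r_\kappa+2]$, which is all that (i) needs at any fixed $m\ge 2$.

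Fourth, the paper's proof of (i) uses $\ln\kappa/a_\kappa^2\to 0$, which is not a hypothesis; the true prefactor contributes only $O(\ln a_\kappa)$ anyway. Your Chernoff-plus-Taylor upper bound and moderate-deviation-principle lower bound need nothing beyond $a_\kappa\to\infty$ and $a_\kappa=o(\sqrt\kappa)$.

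In short, the paper's route buys brevity and an explicit Bahadur--Rao prefactor. Yours buys a complete proof of (i) on the stated range, and of (ii) on its true range of validity (or, beyond it, with the Cram\'er-corrected expression replacing $\bar\Phi(z_\kappa)$).
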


\begin{proof}
Set $r_\kappa=\lfloor\varepsilon\kappa+a_\kappa\sqrt\kappa\rfloor$
and $\alpha_\kappa=r_\kappa/\kappa
=\varepsilon+a_\kappa/\sqrt\kappa+O(1/\kappa)$.
Since $a_\kappa=o(\sqrt\kappa)$,
$\alpha_\kappa\to\varepsilon$ and
$\kappa D(\alpha_\kappa\|\varepsilon)
=a_\kappa^2/(2\varepsilon(1{-}\varepsilon))
+O(a_\kappa^3\kappa^{-1/2})$.

\emph{(i)}\;
The moderate-deviations rate follows from a second-order
expansion of the KL divergence~\cite[Theorem~3.7.1]{dembo2009large}.
Concretely, apply the Bahadur--Rao expansion
(Theorem~\ref{thm:bahadur-rao}(ii)) to the upper tail
$\Pr[E>r_\kappa]$:
$\ln\Pr[E>r_\kappa]
=-a_\kappa^2/(2\varepsilon(1{-}\varepsilon))
+O(a_\kappa^3\kappa^{-1/2})+O(\ln\kappa)$.
Dividing by $a_\kappa^2$ and using
$a_\kappa\to\infty$,
$a_\kappa/\sqrt\kappa\to 0$,
$\ln\kappa/a_\kappa^2\to 0$
yields the rate.
Near-optimality
(Theorem~\ref{thm:exact-Pc}(iii)) gives
$P_e^*=(1{+}o(1))P_e^{\mathrm{MDS}}$.

\emph{(ii)}\;
In the moderate-deviations regime, the tilting
parameter satisfies
$t^*_\kappa
=a_\kappa/(\sqrt\kappa\,\varepsilon(1{-}\varepsilon))
\cdot(1+O(a_\kappa/\sqrt\kappa))$,
and the Bahadur--Rao prefactor reduces to
$\Pr[E>r_\kappa]
=(1{+}o(1))\cdot
\frac{\sqrt{\varepsilon(1{-}\varepsilon)}}
     {a_\kappa\sqrt{2\pi}}
\exp(-a_\kappa^2/(2\varepsilon(1{-}\varepsilon)))
=\frac{1{+}o(1)}{z_\kappa\sqrt{2\pi}}e^{-z_\kappa^2/2}$.
Mill's ratio gives
$\bar\Phi(z)=(1{+}O(z^{-2}))\phi(z)/z$,
confirming the match with $\bar\Phi(z_\kappa)$.
\end{proof}

\begin{remark}[Interpolation between CLT and large deviations]
\label{rem:interpolation}
When $a_\kappa=c$ (constant),
$P_e\approx\bar\Phi(c/\sqrt{\varepsilon(1{-}\varepsilon)})$
recovers the CLT of Theorem~\ref{thm:second-order}.
When $a_\kappa=\gamma\sqrt\kappa$,
$z_\kappa^2/2\approx\kappa D(\varepsilon{+}\gamma\|\varepsilon)$,
recovering the large-deviations exponent of
Corollary~\ref{cor:reliability}.
In the moderate-deviations zone, the error decays as a
\emph{stretched exponential}
$P_e=\exp(-\Theta(a_\kappa^2))$,
governed by the inverse normalized dispersion
$1/(2\varepsilon(1{-}\varepsilon))$.
Here $\ln$ is used in the rate formula; the corresponding
$\log_2$-rate is $1/(2\varepsilon(1{-}\varepsilon)\ln 2)$.
\end{remark}

\subsection{Complete Error Probability Phase Diagram}
\label{subsec:complete-phase}

We assemble the full asymptotic landscape.

\begin{theorem}[Complete phase diagram]
\label{thm:phase-diagram}
Let $\rho:=\sigma/(\kappa\log m)\in[0,1]$.
As $\kappa\to\infty$:

\noindent\emph{Coded scheme
\textup{(}MDS/Reed--Solomon\textup{)}:}

\begin{enumerate}[label=\textup{C\arabic*.}]
\item
\emph{Deep sub-capacity}
($\rho<\varepsilon{-}\eta$, $\eta>0$ fixed):
$P_e=1{-}\Theta(\kappa^{-1/2})
  e^{-\kappa D(\rho\|\varepsilon)}$.

\item
\emph{Near-capacity from below}
($\rho=\varepsilon{-}a_\kappa/\sqrt\kappa$,
$a_\kappa\to\infty$, $a_\kappa=o(\sqrt\kappa)$):
$P_e=1{-}(1{+}o(1))\,
  \bar\Phi(a_\kappa/\sqrt{\varepsilon(1{-}\varepsilon)})$.

\item
\emph{Critical window}
($\rho=\varepsilon{+}c/\sqrt\kappa$, $c\in\mathbb R$):
$P_e=\bar\Phi(c/\sqrt{\varepsilon(1{-}\varepsilon)})
  +O(\kappa^{-1/2})$.

\item
\emph{Near-capacity from above}
($\rho=\varepsilon{+}a_\kappa/\sqrt\kappa$,
$a_\kappa\to\infty$, $a_\kappa=o(\sqrt\kappa)$):
$P_e=(1{+}o(1))\,
  \bar\Phi(a_\kappa/\sqrt{\varepsilon(1{-}\varepsilon)})$.

\item
\emph{Deep super-capacity}
($\rho>\varepsilon{+}\eta$):
$P_e=\Theta(\kappa^{-1/2})
  e^{-\kappa D(\rho\|\varepsilon)}$.
\end{enumerate}

\noindent\emph{Derivation-constrained scheme:}

\begin{enumerate}[label=\textup{U\arabic*.}]
\item
\emph{Sub-full cache} ($\rho<1{-}\eta$):
$P_e=1{-}(1{-}\varepsilon)^{(1{-}\rho)\kappa}$.

\item
\emph{Near-full cache}
($\rho=1{-}c/\kappa$, $c>0$):
$P_e=1{-}(1{-}\varepsilon)^c$.

\item
\emph{Full cache} ($\rho=1$): $P_e=0$.
\end{enumerate}
\end{theorem}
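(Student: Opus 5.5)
The plan is to treat the eight regimes one at a time, reducing each coded regime to a binomial tail via Theorem~\ref{thm:exact-Pc}. Set $r:=\lfloor\sigma/\log m\rfloor=\rho\kappa+O(1)$ and $E\sim\mathrm{Bin}(\kappa,\varepsilon)$. Theorem~\ref{thm:exact-Pc} gives $P_e^{\mathrm{MDS}}=\Pr[E>r]$ exactly. It also shows that the optimal error $P_e^*$ is sandwiched between $\Pr[E>r+1]-(m-1)^{-1}$ and $\Pr[E>r]$. Every coded statement therefore becomes a statement about a binomial tail, with a shift of $O(1)$ in $r$ and an additive slack of $O(m^{-1})$. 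This slack is negligible against the claimed terms provided $m$ grows fast enough relative to $\kappa$. I will make this an explicit standing assumption in the deep regimes C1 and C5, where the claimed terms are exponentially small; for instance, $m^{-1}=o(\kappa^{-1/2}e^{-\kappa D})$ suffices.

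\emph{Deep regimes C1 and C5.} For C1, $\rho<\varepsilon-\eta$, so $P_c=\Pr[E\le\rho\kappa]$ is a lower tail. Theorem~\ref{thm:bahadur-rao}(i) with $\alpha=\rho$ gives exactly $\Theta(\kappa^{-1/2})2^{-\kappa D(\rho\|\varepsilon)}$. The prefactor $[(1-e^{t^*})\sqrt{2\pi\alpha(1-\alpha)}]^{-1}$ is bounded above and below uniformly because $\alpha$ stays in a compact set away from $\varepsilon$, $0$ and $1$. Writing $P_e=1-P_c$ gives C1 (the exponent $e^{-\kappa D}$ in the statement is read with $D$ in the matching log base). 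The $O(1)$ shift in $r$ changes the exponent only by a factor $e^{O(1)}$, which is absorbed into $\Theta$. C5 is symmetric: apply the upper tail, Theorem~\ref{thm:bahadur-rao}(ii).

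\emph{Windows C2--C4.} C3 is Corollary~\ref{cor:error-landscape}(iv), i.e.\ Berry--Esseen with continuity correction as in Theorem~\ref{thm:second-order}; the $O(1)$ shift in $r$ costs only $O(\kappa^{-1/2})$. C4 is Theorem~\ref{thm:moderate-dev}(ii) with $a_\kappa$ as given. C2 is the mirror image: $P_c=\Pr[E\le\varepsilon\kappa-a_\kappa\sqrt\kappa]$, and I would rerun the moderate-deviation argument of Theorem~\ref{thm:moderate-dev} on the lower tail, using Theorem~\ref{thm:bahadur-rao}(i) with $t^*_\kappa\to0^-$. The prefactor then behaves like $1/|t^*_\kappa|\sqrt{2\pi\kappa\varepsilon(1-\varepsilon)}$, which again reduces to $\bar\Phi(z_\kappa)$ by Mill's ratio.

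\emph{Derivation-constrained regimes U1--U3.} All three follow from the exact formula of Theorem~\ref{thm:unc-error-complete}(i): $P_e=1-(1-\varepsilon)^N$ with $N=\kappa-\lfloor\sigma/\log m\rfloor=(1-\rho)\kappa$ up to rounding. Substituting $\rho<1-\eta$, $\rho=1-c/\kappa$ and $\rho=1$ gives U1, U2 and U3 ($N=0$). Optimality of this formula among all derivation-constrained caches comes from Corollary~\ref{cor:absorption-char} together with Theorem~\ref{thm:ancestral-relevance}: each absorbed leaf costs $\log m$ bits, so at most $\lfloor\sigma/\log m\rfloor$ leaves can be absorbed.

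\emph{Main obstacle.} The delicate step is the uniformity in the moderate-deviation windows C2 and C4. There one applies Bahadur--Rao with $\alpha_\kappa\to\varepsilon$, which lies outside the fixed-$\eta$ range of Theorem~\ref{thm:bahadur-rao}. I would first try to cite the moderate-deviation local limit theorem (Petrov) directly. If that is unavailable, I would redo the saddle-point estimate, checking that the relative error is $O(a_\kappa/\sqrt\kappa+1/a_\kappa^2)=o(1)$.
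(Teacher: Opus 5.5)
Your regime-by-regime reduction is the same as the paper's: Bahadur--Rao for C1/C5, Theorem~\ref{thm:moderate-dev} for C2/C4, the CLT for C3, and Theorem~\ref{thm:unc-error-complete} for U1--U3. However, the step you flag as the main obstacle cannot be completed, and the real issue is not uniformity.

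A uniform saddle-point (or Petrov) estimate does give the relevant tail ($\Pr[E>r_\kappa]$ in C4, $\Pr[E\le r_\kappa]$ in C2) as $(1+o(1))\,e^{-\kappa I(\alpha_\kappa)}/(z_\kappa\sqrt{2\pi})$, with $I:=D(\cdot\|\varepsilon)\ln 2$ the rate function in nats. Your bound $O(a_\kappa/\sqrt\kappa+a_\kappa^{-2})$ is the relative error of \emph{that} formula. Passing to $\bar\Phi(z_\kappa)$ additionally requires $\kappa I(\alpha_\kappa)=z_\kappa^2/2+o(1)$. But with $v:=\varepsilon(1-\varepsilon)$,
\[
  \kappa I\bigl(\varepsilon\pm a_\kappa/\sqrt\kappa\bigr)
  =\frac{a_\kappa^2}{2v}\mp\frac{(1-2\varepsilon)\,a_\kappa^3}{6v^2\sqrt\kappa}
   +O\bigl(a_\kappa^4/\kappa\bigr).
\]
So $P_e$ in C4, respectively $1-P_e$ in C2, equals $(1+o(1))\,\bar\Phi(z_\kappa)\exp\bigl(\pm\frac{(1-2\varepsilon)a_\kappa^3}{6v^2\sqrt\kappa}+O(a_\kappa^4/\kappa)\bigr)$, with the upper sign for C4 and the lower for C2. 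This is exactly the Cram\'er-series factor that Petrov's theorem would show you. It is $1+o(1)$ only when $a_\kappa=o(\kappa^{1/6})$, or $a_\kappa=o(\kappa^{1/4})$ if $\varepsilon=\tfrac12$, where the quartic term takes over.

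For $a_\kappa=\kappa^{1/3}$, C2 and C4 are false for every $\varepsilon\in(0,1)$, so no argument can close this step. The paper has the same hole. The proof of Theorem~\ref{thm:moderate-dev}(ii), which you cite for C4, records the $O(a_\kappa^3\kappa^{-1/2})$ term in $\kappa D(\alpha_\kappa\|\varepsilon)$ and then drops it inside the exponential. Over the full range $a_\kappa=o(\sqrt\kappa)$ you can prove only two things: the logarithmic statement of Theorem~\ref{thm:moderate-dev}(i), or the Cram\'er form $(1+o(1))\,e^{-\kappa I(\alpha_\kappa)}/(z_\kappa\sqrt{2\pi})$. The Gaussian form $(1+o(1))\bar\Phi(z_\kappa)$ needs the extra restriction $a_\kappa=o(\kappa^{1/6})$.

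Three smaller corrections.
\begin{itemize}
\item In C1/C5, the hypotheses do not keep $\alpha$ away from $0$ and $1$ as you assert. At $\rho=0$, $P_c=(1-\varepsilon)^\kappa=2^{-\kappa D(0\|\varepsilon)}$ has no $\kappa^{-1/2}$ factor, and at $\rho=1$ the MDS error is exactly $0$. You need $\rho\in(\eta,\varepsilon-\eta)$ for C1 and $\rho\in(\varepsilon+\eta,1-\eta)$ for C5, which is the range in which Theorem~\ref{thm:bahadur-rao} is stated.
\item In U1--U3, drop the optimality sentence. Like Theorem~\ref{thm:unc-error-complete}, these regimes describe the specific scheme with $N=\kappa-\lfloor\sigma/\log m\rfloor$ exposed leaves. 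Rigidity plus the Kolmogorov converse of Theorem~\ref{thm:structural-converse} only yield $j\le\sigma/\log m+O(\kappa/\log m+1)$ absorbed leaves. That is far too coarse to pin down $(1-\varepsilon)^{(1-\rho)\kappa}$ or $(1-\varepsilon)^c$.
\item Your standing assumption relating $m$ and $\kappa$ is needed only if you want $P_e^*$. For the MDS scheme the statement names, Theorem~\ref{thm:exact-Pc}(ii) is exact and no $(m-1)^{-1}$ slack arises.
\end{itemize}
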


\begin{proof}
Regimes C1 and C5 follow from the Bahadur--Rao expansion
(Theorem~\ref{thm:bahadur-rao}(i) and~(ii), respectively);
C2 and C4 from the moderate deviations principle
(Theorem~\ref{thm:moderate-dev});
C3 from the second-order expansion
(Theorem~\ref{thm:second-order}).
Regimes U1--U3 follow from the exact formula
$P_e^{\mathrm{unc}}=1{-}(1{-}\varepsilon)^N$
(Theorem~\ref{thm:unc-error-complete}).
\end{proof}

\begin{remark}[Absence of phase transition in the
  derivation-constrained scheme]
\label{rem:no-transition}
The coded scheme exhibits a sharp phase transition at
$\rho=\varepsilon$ with five asymptotic regimes.
The derivation-constrained scheme has no phase transition:
$P_e=1{-}(1{-}\varepsilon)^N$ is a smooth, monotone function
of $N$, reflecting the absence of collective statistical
effects---each dependency contributes independently, with no
threshold behavior.
\end{remark}

\begin{remark}[Depth-space instantiation]
\label{rem:forward-depth}
The phase diagram of Theorem~\ref{thm:phase-diagram} is stated
in terms of the normalized cache rate $\rho$ and the dependency
count $\kappa$.
Section~\ref{sec:arch-depth} re-parameterizes this diagram in
depth space via the architecture-dependent mapping
$d\mapsto\kappa_{\mathcal A}(d)$ from CT2
\textup{(Theorem~\ref{thm:exponential-capacity})}, exposing the
exponentially sharper phase transition of the merge architecture
\textup{(Theorem~\ref{thm:arch-phase-ds})} and the maximum
error-exponent gap between coded and derivation-constrained
schemes \textup{(Theorem~\ref{thm:max-gap-ds})}.
\end{remark}


\section{Multi-Query Joint Caching}
\label{sec:multi-query}

This section extends the single-query theory to $L\ge 2$ queries
sharing base-fact dependencies.
The shared structure creates joint coding gains at both first
order (overlap deduplication) and second order (statistical
pooling).
The main conclusions are: (i)~the minimum joint coded cache
depends only on the effective number of distinct dependencies
$n_{\mathrm{eff}}$ (Theorem~\ref{thm:joint-coded-mq});
(ii)~the $1/\varepsilon$ derivation penalty is universal,
independent of $L$, overlap, and $\delta$
(Theorem~\ref{thm:multi-penalty-mq});
(iii)~the derivation-constrained scheme has zero joint
dispersion, inheriting the single-query dichotomy
(Theorem~\ref{thm:multi-dispersion-mq}).

\subsection{Multi-Query Model and Overlap Structure}
\label{subsec:mq-model}

\begin{definition}[Multi-query caching instance]
\label{def:multi-query-mq}
A \emph{multi-query instance}
$\mathcal Q=(L,m,\varepsilon,(S_\ell)_{\ell=1}^L)$
consists of $L$ queries over a common base $B$ with
$|B|=m$, where query $\ell$ depends on the base-fact
subset $S_\ell\subseteq B$ with $|S_\ell|=\kappa$.
Each base fact is independently erased with probability
$\varepsilon$.
Define the \emph{dependency union}
$S:=\bigcup_\ell S_\ell$ with
$n_{\mathrm{eff}}:=|S|$ and the \emph{overlap index}
\begin{equation}\label{eq:overlap-index-mq}
  \Omega:=\frac{L\kappa}{n_{\mathrm{eff}}}\ge 1,
\end{equation}
with equality iff the $S_\ell$ are pairwise disjoint.
A $\delta$-reliable scheme must recover all $L$ query
answers jointly with probability $\ge 1{-}\delta$.
\end{definition}

\begin{definition}[Pairwise overlap coefficient]
\label{def:pairwise-overlap-mq}
For a symmetric instance with
$|S_\ell\cap S_{\ell'}|=\alpha\kappa$ for all
$\ell\ne\ell'$:
$n_{\mathrm{eff}}=(L{-}(L{-}1)\alpha)\kappa$ and
$\Omega=L/(L{-}(L{-}1)\alpha)$.
These formulas hold under the common-core overlap model
\textup{(}all pairwise overlap arises from a single shared
subset\textup{)}; for $L=2$ they hold without additional
assumptions.
\end{definition}

\begin{remark}[Derivation-structural origin of overlap]
\label{rem:overlap-origin}
For faithful programs with unique traces, the multi-query
derivation DAG $G_L:=\bigcup_\ell G(q_\ell,B)$ determines
$n_{\mathrm{eff}}=|V_0(G_L)|$.
Under $\Pi_k^{\parallel}$ (balanced merge), a single shared
intermediate node at depth $j$ creates base-fact overlap of
size $k\cdot 2^{j-1}$---exponential in $j$---while under
$\Pi_k$ (chain), the overlap per shared node is
$k{+}j{-}1=\Theta(j)$, linear.
This \emph{overlap amplification separation} is the
multi-query manifestation of CT2's capacity separation.
\end{remark}

\subsection{Joint Capacity: Coded and Derivation-Constrained}
\label{subsec:mq-capacity}

\begin{theorem}[Joint coded capacity]
\label{thm:joint-coded-mq}
For the multi-query instance $\mathcal Q$:
\begin{enumerate}[label=\textup{(\Roman*)}]
\item \emph{First order.}\;
$\sigma^*_{\mathrm{code}}(\mathcal Q)
=\varepsilon\,n_{\mathrm{eff}}\log m
+O(\sqrt{n_{\mathrm{eff}}}\,\log m)$.

\item \emph{Second order.}\;
\begin{equation}\label{eq:joint-coded-2nd-mq}
  \sigma^*_{\mathrm{code}}(\mathcal Q,\delta)
  =\varepsilon\,n_{\mathrm{eff}}\log m
  +\sqrt{n_{\mathrm{eff}}\,V_{\mathrm{cache}}}\;
    \Phi^{-1}(1{-}\delta)
  +\tfrac{1}{2}\log n_{\mathrm{eff}}
  +O(\log m).
\end{equation}

\item \emph{Strong converse.}\;
If $\sigma<(\varepsilon{-}\gamma)n_{\mathrm{eff}}\log m$:
$\lim_{n_{\mathrm{eff}}\to\infty}
\frac{-1}{n_{\mathrm{eff}}}\log P_c^*
=D(\varepsilon{-}\gamma\|\varepsilon)$.
\end{enumerate}
\end{theorem}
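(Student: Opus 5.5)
The plan is to reduce the multi-query problem to the single-query analysis of Section~\ref{sec:precise}. The reduction applied to the dependency union $S=\bigcup_\ell S_\ell$ with $n_{\mathrm{eff}}=|S|$ coordinates: the $L$ queries are jointly determined by the restriction of the base to $S$. Recovering all $L$ answers is then the same as recovering the $n_{\mathrm{eff}}$-tuple of indices of the facts in $S$, and these indices pass through $n_{\mathrm{eff}}$ independent uses of $\mathrm{BEC}_m(\varepsilon)$. I will make this explicit with a bijection between joint query tuples and $[m]^{n_{\mathrm{eff}}}$, justified by faithfulness (Lemmas~\ref{lem:faithfulness} and~\ref{lem:merge-faithfulness}) and by Remark~\ref{rem:overlap-origin}, which gives $n_{\mathrm{eff}}=|V_0(G_L)|$. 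For the converse I will assume a uniform prior on the union coordinates, which is the multi-query analogue of the $(1/m)$-generic hypothesis. Once this is in place, the error event for any joint scheme depends only on $E\sim\mathrm{Bin}(n_{\mathrm{eff}},\varepsilon)$, the number of erased union coordinates.

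\emph{Achievability.} I will use a systematic $(n_{\mathrm{eff}}+r,n_{\mathrm{eff}})$ Reed--Solomon code over a prime field $\mathbb F_p$ with $m<p<2m$, exactly as in Theorem~\ref{thm:coded-achievability}. The parity count $r$ is chosen by Berry--Esseen with $n_{\mathrm{eff}}$ in place of $\kappa$. Decoding succeeds iff $E\le r$, so every query in the family is recovered at once. \emph{Converse.} I will rerun Lemma~\ref{lem:image-size} on the union coordinates. Conditioned on the erasure pattern and the surviving values, the joint answer is uniform on $m^{E}$ candidates, and the decoder can output at most $2^\sigma$ of them correctly. This gives $P_c\le\E[\min(1,2^\sigma/m^E)]$ and hence $P_c\le\Pr[E\le r+1]+(m-1)^{-1}$, the analogue of Theorem~\ref{thm:exact-Pc}(i).

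From these two bounds, parts (I)--(III) follow by substituting $n_{\mathrm{eff}}$ for $\kappa$ in the single-query proofs. Part (II) comes from the Berry--Esseen inversion of Theorem~\ref{thm:second-order}, including its lattice $\tfrac12\log n_{\mathrm{eff}}$ term. Part (I) is the first-order consequence of (II), with the $\sqrt{n_{\mathrm{eff}}}\log m$ term absorbed for fixed $\delta$. Part (III) uses the tilted-moment bound $\min(1,x)\le x^t$ together with Appendix~\ref{app:sc-exponent-calc} for the upper bound, and Cram\'er's theorem on the MDS success probability for the matching lower bound, as in Theorem~\ref{thm:sc-exponent}.

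The main obstacle is the reduction step. The queries are not independent sources, since overlapping coordinates are shared, and Step~3 of Theorem~\ref{thm:coded-converse} assumed a product input on $\kappa$ distinct coordinates. I need the joint answer map to be injective from $[m]^{n_{\mathrm{eff}}}$ restricted to the union positions. I will first argue this for distinct-coordinate queries: each query's head tuple lists exactly its dependency indices, so the union tuple determines every answer and conversely. Collisions across queries are then handled by restricting to a $1-o(1)$ fraction, as in Lemma~\ref{lem:distinct-coordinates} applied to the union with $n_{\mathrm{eff}}=o(\sqrt m)$.
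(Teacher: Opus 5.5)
Your proposal follows the paper's proof: you reduce joint recovery to the $n_{\mathrm{eff}}$ facts of the dependency union, use an $(n_{\mathrm{eff}}+r,n_{\mathrm{eff}})$ MDS code for achievability and the image-size bound $P_c\le\E[\min(1,2^\sigma/m^E)]$ with $E\sim\mathrm{Bin}(n_{\mathrm{eff}},\varepsilon)$ for the converse, and then rerun the single-query second-order and exponent arguments with $\kappa$ replaced by $n_{\mathrm{eff}}$. Your explicit check that the joint answer determines the union tuple is a welcome addition, since the converse needs it and the paper leaves it implicit; just carry out the collision restriction by conditioning the prior on injective tuples, so that the candidate count is at least $(m-n_{\mathrm{eff}})^E=(1-o(1))m^E$, and not by an additive union bound, whose $O(n_{\mathrm{eff}}^2/m)$ term would swamp the exponentially small $P_c^*$ in part~(III).
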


\begin{proof}
Joint recovery of all $L$ queries reduces to recovering the
$n_{\mathrm{eff}}$ distinct facts in $S$: once all facts are
available, every query answer is determined.
Apply an $(n_{\mathrm{eff}}{+}r,n_{\mathrm{eff}})$ MDS code
over $\mathbb F_q$ with $q\ge m$ to the facts in $S$.
The number of erasures is
$E\sim\mathrm{Bin}(n_{\mathrm{eff}},\varepsilon)$, and the
analysis of
Theorems~\ref{thm:coded-converse}--\ref{thm:coded-achievability}
and~\ref{thm:sc-exponent} carries over with $\kappa$ replaced
by $n_{\mathrm{eff}}$.
For the converse, the image-size bound
(Lemma~\ref{lem:image-size}) applied to the joint problem
yields
$P_c\le\mathbb E[\min(1,2^\sigma/m^E)]$ with
$E\sim\mathrm{Bin}(n_{\mathrm{eff}},\varepsilon)$.
\end{proof}

\begin{theorem}[Joint derivation-constrained capacity]
\label{thm:joint-unc-mq}
\begin{equation}\label{eq:joint-unc-exact-mq}
  \sigma^*_{\mathrm{unc}}(\mathcal Q,\delta)
  =\bigl(n_{\mathrm{eff}}-N^*(\varepsilon,\delta)\bigr)\log m
  +O(\log n_{\mathrm{eff}}).
\end{equation}
The $N^*$ unprotected facts may be placed arbitrarily among
the $n_{\mathrm{eff}}$ facts in $S$; the success probability
$(1{-}\varepsilon)^{N^*}$ depends only on $N^*$, not on the
overlap structure.
\end{theorem}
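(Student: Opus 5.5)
The proof will mirror the single-query argument of Theorem~\ref{thm:unc-second-order} and Theorem~\ref{thm:storage-resilience-capacity}, lifted to the union DAG $G_L:=\bigcup_\ell G(q_\ell,B)$ with leaf set $S$ and $|S|=n_{\mathrm{eff}}$. There are two parts: achievability, by caching a set of DAG facts that absorbs all but $N^*$ leaves of $S$, and a converse via a joint version of the absorption characterization (Corollary~\ref{cor:absorption-char}).

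\emph{Achievability.} Choose any $N^*$ leaves of $S$ to leave exposed. Cache reliably the remaining $n_{\mathrm{eff}}-N^*$ base facts, directly or through intermediate nodes absorbing them, as in Theorem~\ref{thm:storage-resilience-capacity}(I). Each cached fact is specified by an index in $[m]$, costing $\log m+O(1)$ bits. The protected set, a subset of the $n_{\mathrm{eff}}$ leaves listed canonically, costs $O(\log n_{\mathrm{eff}})$ bits when described through the complement's $N^*$ positions, since $N^*$ does not depend on $n_{\mathrm{eff}}$. All $L$ queries are derivable from $\tilde B\cup S_{\mathrm{cache}}$ exactly when the $N^*$ exposed facts survive, and this happens with probability $(1-\varepsilon)^{N^*}\ge 1-\delta$ by Definition~\ref{def:noise-threshold}. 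The probability depends only on how many facts are exposed, so the placement of the exposed facts (and hence the overlap structure) is irrelevant. This gives the second claim of the theorem.

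\emph{Converse.} Apply Theorem~\ref{thm:ancestral-relevance} to each $q_\ell$. For faithful programs, $q_\ell\in\Cn(\tilde B\cup S_{\mathrm{cache}})$ forces every leaf of $G(q_\ell,B)$ not absorbed by cache facts in $V(G(q_\ell,B))$ to lie in $\tilde B$. Let $D_{\mathrm{exp}}^{L}\subseteq S$ be the set of leaves of $G_L$ not absorbed by any useful cached fact. Joint success then implies $D_{\mathrm{exp}}^{L}\subseteq\tilde B$, and independence gives $\Pr[\text{success}]\le(1-\varepsilon)^{|D_{\mathrm{exp}}^{L}|}$. So $\delta$-reliability requires $|D_{\mathrm{exp}}^{L}|\le N^*$, i.e., at least $n_{\mathrm{eff}}-N^*$ leaves of $S$ are absorbed. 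To turn this into a storage bound, I will use the joint analogue of Lemma~\ref{lem:conditional-encoding} and the chain-rule argument of Theorem~\ref{thm:structural-converse}. The tuple of queries, treated as generic, has $K\ge n_{\mathrm{eff}}\log m-O(\log m)$ given $\langle B\rangle$. Given the cache, the tuple can be described by the exposed leaves plus an $O(\cdot)$ mask and tie-breaking data. Hence the cache must carry $(n_{\mathrm{eff}}-N^*)\log m$ bits up to lower-order terms. If the cache is counted literally as a set of facts, each absorbed leaf is named by an element of $B$, which also gives a direct counting version.

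\emph{Main obstacle.} The difficulty is the error term: the statement claims $O(\log n_{\mathrm{eff}})$, while the Kolmogorov converse of Theorem~\ref{thm:structural-converse} yields only $O(n_{\mathrm{eff}}+\log m)$ slack, since the mask and tie-breaking data are linear in the number of leaves. I would first try to sharpen this step by measuring cache size operationally, as $\lceil\log m\rceil$ bits per cached fact plus the index set. This is the accounting implicitly used in Theorem~\ref{thm:unc-second-order}. Under that accounting, the combinatorial bound on absorbed leaves gives the lower bound $(n_{\mathrm{eff}}-N^*)\log m$ directly, and the overhead is only the $O(\log n_{\mathrm{eff}})$ protection index. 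If that accounting is not accepted, I would state the converse with the weaker $O(n_{\mathrm{eff}})$ term, matching Theorem~\ref{thm:storage-resilience-capacity}.
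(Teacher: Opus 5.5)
Your proposal is correct and is essentially the paper's argument. The paper's proof is only the counting step: joint success requires every unprotected fact to survive, so $(1-\varepsilon)^N\ge 1-\delta$ forces $N\le N^*(\varepsilon,\delta)$, with each protected fact implicitly charged $\log m$ bits as in Theorem~\ref{thm:unc-second-order}. You make the ``joint success requires'' step rigorous via Theorem~\ref{thm:ancestral-relevance} and then adopt the same accounting.

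Your remark that the Kolmogorov route of Theorem~\ref{thm:structural-converse} leaves only $O(n_{\mathrm{eff}})$ slack is accurate, and it is exactly why that operational accounting is needed for the $O(\log n_{\mathrm{eff}})$ term. One wording fix: charge per head coordinate (arity times $\lceil\log m\rceil$) rather than per cached fact, since otherwise caching the $L$ roots $q_\ell$ themselves would be almost free.
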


\begin{proof}
Joint success requires all $n_{\mathrm{eff}}$ facts to be
available.
Caching $n_{\mathrm{eff}}{-}N$ facts leaves $N$ unprotected;
$(1{-}\varepsilon)^N\ge 1{-}\delta$ gives
$N\le N^*(\varepsilon,\delta)$.
\end{proof}

\subsection{The Universal Derivation Penalty}
\label{subsec:mq-penalty}

\begin{theorem}[Multi-query derivation penalty]
\label{thm:multi-penalty-mq}
For the multi-query instance $\mathcal Q$:
\begin{enumerate}[label=\textup{(\Roman*)}]
\item \emph{First-order penalty.}\;
$\sigma^*_{\mathrm{unc}}/\sigma^*_{\mathrm{code}}
\to 1/\varepsilon$ as $\kappa\to\infty$,
independent of $L$ and $\alpha$.

\item \emph{Second-order refinement.}\;
\begin{equation}\label{eq:multi-penalty-2-mq}
  \frac{\sigma^*_{\mathrm{unc}}}
       {\sigma^*_{\mathrm{code}}}
  =\frac{1}{\varepsilon}
  -\frac{N^*(\varepsilon,\delta)}{\varepsilon\,n_{\mathrm{eff}}}
  -\frac{\Phi^{-1}(1{-}\delta)\sqrt{1{-}\varepsilon}}
        {\varepsilon^{3/2}\sqrt{n_{\mathrm{eff}}}}
  +O(n_{\mathrm{eff}}^{-1}).
\end{equation}
The approach rate is $\Theta(n_{\mathrm{eff}}^{-1/2})$,
depending on the overlap through $n_{\mathrm{eff}}$:
higher overlap \textup{(}smaller $n_{\mathrm{eff}}$ at
fixed $\kappa$\textup{)} yields a larger finite-length
correction, hence a wider gap from $1/\varepsilon$ at
any fixed~$\kappa$.
\end{enumerate}
\end{theorem}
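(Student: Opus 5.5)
The plan is to take the ratio of the two joint characterizations already established: Theorem~\ref{thm:joint-unc-mq} for the numerator and Theorem~\ref{thm:joint-coded-mq}(II) for the denominator. Write $n:=n_{\mathrm{eff}}$, $z:=\Phi^{-1}(1{-}\delta)$ and $\sqrt{V_{\mathrm{cache}}}=\sqrt{\varepsilon(1{-}\varepsilon)}\log m$. This gives
\[
\sigma^*_{\mathrm{unc}}=(n-N^*)\log m+O(\log n),\qquad
\sigma^*_{\mathrm{code}}=\varepsilon n\log m+z\sqrt{n\varepsilon(1{-}\varepsilon)}\,\log m+O(\log m),
\]
where we absorb $\tfrac12\log n$ into the $O(\log m)$ term. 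We assume the regime $\log n=O(\log m)$, which is implicit in the paper's standing assumption $\kappa=o(m)$. The key structural point, which makes the penalty universal, is that both schemes depend on the instance only through $n$. For the coded scheme, Theorem~\ref{thm:joint-coded-mq} reduces joint recovery to MDS protection of the $n$ distinct facts in $S$. For the derivation-constrained scheme, Theorem~\ref{thm:joint-unc-mq} shows that the success probability $(1{-}\varepsilon)^{N}$ is insensitive to where the unprotected facts sit. Hence $L$ and $\alpha$ enter only via $n=(L-(L-1)\alpha)\kappa$ (Definition~\ref{def:pairwise-overlap-mq}).

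\textbf{Part (II).} Divide numerator and denominator by $\varepsilon n\log m$. This gives
\[
\frac{\sigma^*_{\mathrm{unc}}}{\sigma^*_{\mathrm{code}}}
=\frac{1}{\varepsilon}\cdot\frac{1-N^*/n+O(\log n/(n\log m))}{1+z\sqrt{(1{-}\varepsilon)/(\varepsilon n)}+O(1/n)}.
\]
Expand $(1+x)^{-1}=1-x+O(x^2)$ with $x=\Theta(n^{-1/2})$, so the $x^2$ term is $O(n^{-1})$. Multiply out, and note that the cross term $(N^*/n)\cdot x$ is $O(n^{-3/2})$. This yields exactly the three displayed terms: $1/\varepsilon$, then $-N^*/(\varepsilon n)$, then $-z\sqrt{1-\varepsilon}/(\varepsilon^{3/2}\sqrt n)$, plus $O(n^{-1})$. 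This mirrors the proof of Theorem~\ref{thm:refined-penalty} with $\kappa\to n$.

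The leading correction is the $n^{-1/2}$ term, whose coefficient is nonzero for $\delta\ne 1/2$, so the approach rate is $\Theta(n^{-1/2})$. Since $n=\kappa L/\Omega$, larger overlap $\Omega$ at fixed $\kappa$ and $L$ decreases $n$ and enlarges the correction. This gives the stated monotonic dependence on overlap.

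\textbf{Part (I).} Observe that $n\ge\kappa$, since each $S_\ell$ has size $\kappa$. Hence $\kappa\to\infty$ forces $n\to\infty$, while $N^*$ is independent of $n$ (Theorem~\ref{thm:unc-second-order}). Every correction term in (II) therefore vanishes, uniformly in $L$ and $\alpha$, because the bounds depend only on $n$.

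\textbf{Main obstacle.} The step I expect to need care is the uniformity of the $O(\cdot)$ remainders in Theorem~\ref{thm:joint-coded-mq}(II) with respect to $\delta$ and the instance. These remainders come from Berry--Esseen and are $O(n^{-1/2})$ in probability. That translates to $O(1)$ symbols, i.e.\ $O(\log m)$ bits, in $\sigma^*_{\mathrm{code}}$, and after division this is only $O(1/n)$ in the ratio. To make the $O(n^{-1})$ remainder honest, I would first check that $\delta$ is fixed in $(0,1)$, so that $z$ is bounded. I would also check that $\log n/\log m$ stays bounded, so the $\tfrac12\log n$ and $O(\log n)$ terms are indeed $O(\log m)$.
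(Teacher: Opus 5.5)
Your proposal is correct and follows the paper's proof: it divides $\sigma^*_{\mathrm{unc}}$ from Theorem~\ref{thm:joint-unc-mq} by the second-order expansion in Theorem~\ref{thm:joint-coded-mq}(II) and expands exactly as in Theorem~\ref{thm:refined-penalty} with $\kappa$ replaced by $n_{\mathrm{eff}}$. Your additions make explicit what the paper leaves implicit: $n_{\mathrm{eff}}\ge\kappa$ drives Part~(I), the $\Theta(n_{\mathrm{eff}}^{-1/2})$ rate needs $\delta\neq 1/2$, and the remainders are uniform in $L,\alpha$. One small correction: you need no appeal to $\kappa=o(m)$, since $\log n_{\mathrm{eff}}\le\log m$ holds automatically from $S\subseteq B$.
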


\begin{proof}
The proof follows Theorem~\ref{thm:refined-penalty} with
$\kappa$ replaced by $n_{\mathrm{eff}}$.
For~(I): $\sigma_u/\sigma_c
\to n_{\mathrm{eff}}/(\varepsilon n_{\mathrm{eff}})
=1/\varepsilon$.
For~(II): expand $\sigma_u/\sigma_c$ as in
Theorem~\ref{thm:refined-penalty}.
\end{proof}

\begin{remark}[Universality of the $1/\varepsilon$ penalty]
\label{rem:universality}
The first-order penalty is $1/\varepsilon$ regardless of $L$,
$\alpha$, $\kappa$, or $\delta$: it is a pure function of the
channel parameter.
The coded scheme pays
$\varepsilon\log m$ bits per dependency on average (algebraic
redundancy across coordinates); the
derivation-constrained scheme pays $\log m$ bits per protected
dependency regardless of $\varepsilon$ (no cross-coordinate
correction).
\end{remark}

\subsection{Joint Coding Gain}
\label{subsec:mq-gain}

\begin{theorem}[Joint coding gain]
\label{thm:joint-gain-mq}
Let $\sigma^*_{\mathrm{sep}}$ denote the total cache under
separate coding \textup{(}each query coded independently
with reliability $\delta/L$\textup{)}.
\begin{enumerate}[label=\textup{(\Roman*)}]
\item \emph{First-order gain.}\;
$G_1:=\sigma^*_{\mathrm{sep}}/\sigma^*_{\mathrm{joint}}
\to\Omega=L\kappa/n_{\mathrm{eff}}$.

\item \emph{Second-order gain.}\;
$G_2
=L\sqrt\kappa\,\Phi^{-1}(1{-}\delta/L)\,/\,
(\sqrt{n_{\mathrm{eff}}}\,\Phi^{-1}(1{-}\delta))$.
For $L=2$, $\alpha=0$, $\delta=0.1$:
$G_2\approx 1.81$.
\end{enumerate}
\end{theorem}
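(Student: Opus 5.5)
The plan is to compute both caches explicitly from the second-order characterizations already proved and then take ratios. For the \emph{joint} cache I will use Theorem~\ref{thm:joint-coded-mq}(II):
\[
\sigma^*_{\mathrm{joint}}=\varepsilon n_{\mathrm{eff}}\log m+\sqrt{n_{\mathrm{eff}}V_{\mathrm{cache}}}\,\Phi^{-1}(1-\delta)+O(\log n_{\mathrm{eff}}+\log m).
\]
For the \emph{separate} cache I will read $\sigma^*_{\mathrm{sep}}$ as the sum of $L$ single-query optimal caches, each at reliability $\delta/L$. By the union bound, $L$ independently coded queries, each failing with probability at most $\delta/L$, recover all $L$ answers jointly with probability at least $1-\delta$, so this is a legitimate $\delta$-reliable multi-query scheme. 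Theorem~\ref{thm:second-order}, applied to each query with $\kappa$ distinct dependencies and target $\delta/L$, then gives
\[
\sigma^*_{\mathrm{sep}}=L\varepsilon\kappa\log m+L\sqrt{\kappa V_{\mathrm{cache}}}\,\Phi^{-1}(1-\delta/L)+O(L\log m+L\log\kappa).
\]

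\emph{First order.} I will hold $L$ and $\delta$ fixed and let $\kappa\to\infty$, so that $n_{\mathrm{eff}}\ge\kappa\to\infty$. Dividing the two expansions, the ratio is $L\varepsilon\kappa\log m\,(1+O(\kappa^{-1/2}))$ over $\varepsilon n_{\mathrm{eff}}\log m\,(1+O(n_{\mathrm{eff}}^{-1/2}))$. This tends to $L\kappa/n_{\mathrm{eff}}=\Omega$ by \eqref{eq:overlap-index-mq}. The error terms are uniform because $\kappa\le n_{\mathrm{eff}}\le L\kappa$.

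\emph{Second order.} I will define $G_2$ as the ratio of the $\sqrt{\cdot}$ coefficients, i.e.\ the ratio of the dispersion penalties above the respective first-order terms. The common factor $\sqrt{V_{\mathrm{cache}}}=\sqrt{\varepsilon(1-\varepsilon)}\log m$ (Definition~\ref{def:cache-dispersion}) cancels. What remains is $L\sqrt\kappa\,\Phi^{-1}(1-\delta/L)/(\sqrt{n_{\mathrm{eff}}}\,\Phi^{-1}(1-\delta))$, as stated. For the numerical instance, $\alpha=0$ gives $n_{\mathrm{eff}}=2\kappa$ by Definition~\ref{def:pairwise-overlap-mq}. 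Hence $G_2=\sqrt2\,\Phi^{-1}(0.95)/\Phi^{-1}(0.9)\approx1.414\times1.645/1.2816\approx1.81$.

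The main obstacle is making the separate baseline canonical: the union-bound allocation $\delta/L$ is a design choice, not a proved optimum among separate schemes. I would handle this by stating the theorem for this specified baseline, as the statement does. I would also note that any allocation $(\delta_\ell)$ with $\sum_\ell\delta_\ell=\delta$ only changes the $\Phi^{-1}$ factors and leaves $G_1$ unaffected. A secondary point is that the $\tfrac12\log$ and $O(\log m)$ remainders must be of lower order than the $\sqrt{\kappa}\log m$ terms. This holds as $\kappa\to\infty$ with $L$ fixed, and it is what justifies isolating the second-order coefficient.
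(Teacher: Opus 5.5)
Your proposal is correct and follows essentially the same route as the paper. You sum $L$ single-query second-order expansions at reliability $\delta/L$ and compare the total with the joint expansion in $n_{\mathrm{eff}}$. The leading terms give $\Omega$, and after cancelling $\sqrt{V_{\mathrm{cache}}}$ the $\sqrt{\cdot}$ coefficients give $G_2$; your numerical check $\sqrt2\,\Phi^{-1}(0.95)/\Phi^{-1}(0.9)\approx 1.81$ is correct. You also supply two details the paper leaves implicit: the union-bound argument that the separate baseline is genuinely $\delta$-reliable, and the uniformity of the remainders via $\kappa\le n_{\mathrm{eff}}\le L\kappa$.
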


\begin{proof}
\emph{(I)}\;
$\sigma^*_{\mathrm{sep}}
=(1{+}o(1))\varepsilon L\kappa\log m$;
$\sigma^*_{\mathrm{joint}}
=(1{+}o(1))\varepsilon n_{\mathrm{eff}}\log m$;
the ratio is $\Omega$.

\emph{(II)}\;
Under separate coding, the second-order total is
$L\sqrt{\kappa V_{\mathrm{cache}}}
\Phi^{-1}(1{-}\delta/L)$;
the joint second-order term is
$\sqrt{n_{\mathrm{eff}}V_{\mathrm{cache}}}
\Phi^{-1}(1{-}\delta)$.
\end{proof}

\begin{remark}[Decomposition of the joint coding gain]
\label{rem:gain-decomp-mq}
The total gain decomposes into a \emph{deduplication gain}
$\Omega$ (shared facts encoded once) and a \emph{pooling
gain} from two sources: variance reduction
($\sqrt{n_{\mathrm{eff}}}<L\sqrt\kappa$ when $\Omega>1$)
and reliability allocation
($\Phi^{-1}(1{-}\delta/L)>\Phi^{-1}(1{-}\delta)$ for
$L\ge 2$).
\end{remark}

\subsection{Multi-Query Dispersion and Pooling}
\label{subsec:mq-dispersion}

\begin{theorem}[Multi-query dispersion dichotomy]
\label{thm:multi-dispersion-mq}
For the multi-query instance $\mathcal Q$:
\begin{enumerate}[label=\textup{(\Roman*)}]
\item \emph{Coded dispersion.}\;
$V_{\mathrm{joint}}
=n_{\mathrm{eff}}\cdot V_{\mathrm{cache}}$;
per query:
$V_{\mathrm{joint}}/L
=(\kappa/\Omega)\cdot V_{\mathrm{cache}}
\le\kappa\cdot V_{\mathrm{cache}}$.

\item \emph{Derivation-constrained dispersion.}\;
$V_{\mathrm{unc}}=0$ \textup{(}identically, as in the
single-query case\textup{)}.

\item \emph{Pooling inequality.}\;
\begin{equation}\label{eq:pooling-ineq-mq}
  \sum_{\ell=1}^L
  \sqrt{|S_\ell|\,V_{\mathrm{cache}}}\;
    \Phi^{-1}(1{-}\delta/L)
  \;\ge\;
  \sqrt{n_{\mathrm{eff}}\,V_{\mathrm{cache}}}\;
    \Phi^{-1}(1{-}\delta),
\end{equation}
with equality iff $L=1$.
\end{enumerate}
\end{theorem}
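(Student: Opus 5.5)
Part~(I) will be read off the second-order expansion in Theorem~\ref{thm:joint-coded-mq}(II). Joint recovery reduces to recovering the $n_{\mathrm{eff}}$ distinct facts in $S=\bigcup_\ell S_\ell$. The information density is then a sum of $n_{\mathrm{eff}}$ i.i.d.\ copies of $\imath(X;Y)=\log m\cdot\mathbf 1_{Y\ne\bot}$, each with variance $V_{\mathrm{cache}}$ (Remark~\ref{rem:channel-dispersion}). Hence the coefficient of $\Phi^{-1}(1-\delta)$ in \eqref{eq:joint-coded-2nd-mq} is $\sqrt{n_{\mathrm{eff}}V_{\mathrm{cache}}}$, i.e.\ $V_{\mathrm{joint}}=n_{\mathrm{eff}}V_{\mathrm{cache}}$. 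Dividing by $L$ and using \eqref{eq:overlap-index-mq}, $n_{\mathrm{eff}}/L=\kappa/\Omega$. The inequality $\kappa/\Omega\le\kappa$ is then just $\Omega\ge1$, which holds because $n_{\mathrm{eff}}\le\sum_\ell|S_\ell|=L\kappa$.

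Part~(II) will come from Theorem~\ref{thm:joint-unc-mq}. The joint derivation-constrained optimum is $(n_{\mathrm{eff}}-N^*)\log m+O(\log n_{\mathrm{eff}})$, and $N^*(\varepsilon,\delta)$ does not depend on $n_{\mathrm{eff}}$. So, exactly as in the proof of Theorem~\ref{thm:dispersion-dichotomy}(III), matching against the normal form $n_{\mathrm{eff}}\log m+\sqrt{n_{\mathrm{eff}}V_{\mathrm{unc}}}\,\Phi^{-1}(1-\delta)+O(\log m)$ forces $V_{\mathrm{unc}}=0$. The underlying reason is again that joint success is a conjunction of survival events, with probability $(1-\varepsilon)^{N}$, and no sum statistic appears.

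For Part~(III) I substitute $|S_\ell|=\kappa$, so the left side becomes $L\sqrt{\kappa V_{\mathrm{cache}}}\,\Phi^{-1}(1-\delta/L)$. Two monotonicity facts then combine. First, $\sqrt{n_{\mathrm{eff}}}\le\sqrt{L\kappa}\le L\sqrt\kappa$, using $n_{\mathrm{eff}}\le L\kappa$ and $L\ge1$. Second, $\Phi^{-1}(1-\delta/L)\ge\Phi^{-1}(1-\delta)$, since $\Phi^{-1}$ is increasing and $\delta/L\le\delta$. Multiplying these is legitimate when both quantile factors are nonnegative, i.e.\ in the reliability regime $\delta\le 1/2$. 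That regime is where the inequality is meant, and I would state it explicitly as a standing assumption.

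The main obstacle is the equality characterization. For $L=1$ we have $n_{\mathrm{eff}}=\kappa$ and $\delta/L=\delta$, so the two sides coincide. For $L\ge2$ and $\delta\in(0,1/2)$, strictness comes from the quantile factor, since $\Phi^{-1}(1-\delta/L)>\Phi^{-1}(1-\delta)$. Combined with $L\sqrt\kappa\ge\sqrt{n_{\mathrm{eff}}}>0$, this gives strict inequality. This works even when $\Omega$ is at its extreme value $L$, where the square-root comparison is only $\sqrt{L\kappa}<L\sqrt\kappa$ (itself strict for $L\ge2$). The boundary case $\delta=1/2$ must be checked separately: the right side is $0$ while the left side is strictly positive, so it is also strict. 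Finally, I would note that for $\delta>1/2$ the quantiles can be negative and the sign argument breaks. I would therefore restrict the statement to $\delta\le1/2$ rather than try to extend it.
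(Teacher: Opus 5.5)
Your proposal follows the paper's route. Parts (I) and (II) are read off the joint second-order expansions of Theorems~\ref{thm:joint-coded-mq} and~\ref{thm:joint-unc-mq}. Part (III) combines $n_{\mathrm{eff}}\le L\kappa$ (hence $\sqrt{n_{\mathrm{eff}}}\le L\sqrt\kappa$) with monotonicity of $\Phi^{-1}$. You are more careful than the paper, whose proof multiplies the two inequalities without any sign check and never addresses the ``equality iff $L=1$'' clause. Your handling of $L=1$ and of strictness for $L\ge 2$ is correct.

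The one thing to change is the proposed standing assumption $\delta\le 1/2$. It is unnecessary, and adopting it would prove a weaker result than the stated one, which is meant for all $\delta\in(0,1)$. Only the sign of the \emph{left} quantile matters.
\begin{itemize}
\item Set $a:=L\sqrt\kappa$, $b:=\sqrt{n_{\mathrm{eff}}}$, $c:=\Phi^{-1}(1-\delta/L)$ and $d:=\Phi^{-1}(1-\delta)$, so that $a\ge b>0$ and $c>d$.
\item Then $ac\ge bc>bd$ as soon as $c\ge 0$.
\item For $L\ge 2$ and any $\delta\in(0,1)$ you have $\delta/L<1/2$, so $c>0$ holds automatically.
\end{itemize}
Multiplying by $\sqrt{V_{\mathrm{cache}}}>0$ (true since $\varepsilon\in(0,1)$ and $m\ge 2$) gives \eqref{eq:pooling-ineq-mq} strictly for every $L\ge 2$ and every $\delta\in(0,1)$. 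Your separate check at $\delta=1/2$ is just the case $d=0$ of this argument. The same argument covers $\delta\in(1/2,1)$, so no case split on $\delta$ is needed.
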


\begin{proof}
(I) and (II) follow from the second-order expansions of
Theorems~\ref{thm:joint-coded-mq}
and~\ref{thm:joint-unc-mq}.
For~(III): since $n_{\mathrm{eff}}\le L\kappa$,
$L\sqrt\kappa\ge\sqrt{L\,n_{\mathrm{eff}}}
\ge\sqrt{n_{\mathrm{eff}}}$; combining with
$\Phi^{-1}(1{-}\delta/L)>\Phi^{-1}(1{-}\delta)$ for
$L\ge 2$ yields the inequality.
\end{proof}

\subsection{\texorpdfstring{Capacity Region and Large-$L$ Scaling}{Capacity Region and Large-L Scaling}}
\label{subsec:mq-scaling}

\begin{theorem}[Multi-query capacity region]
\label{thm:capacity-region-mq}
Under heterogeneous reliability
$(\delta_1,\ldots,\delta_L)$ with
$\delta_{\min}:=\min_\ell\delta_\ell$:

\noindent\emph{Coded:}\;
$\sigma\ge\varepsilon n_{\mathrm{eff}}\log m
+\sqrt{n_{\mathrm{eff}}V_{\mathrm{cache}}}\;
\Phi^{-1}(1{-}\delta_{\min})+O(\log m)$.

\noindent\emph{Derivation-constrained:}\;
$\sigma\ge(n_{\mathrm{eff}}{-}N^*(\varepsilon,\delta_{\min}))\log m
+O(\log n_{\mathrm{eff}})$.

\noindent
In both models, the binding constraint is the strictest
reliability requirement.
\end{theorem}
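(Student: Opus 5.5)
The plan is to reduce the heterogeneous-reliability problem to the homogeneous joint problem already solved in Theorems~\ref{thm:joint-coded-mq} and~\ref{thm:joint-unc-mq}, by showing that any scheme meeting all targets $(\delta_1,\ldots,\delta_L)$ in particular meets the single target $\delta_{\min}$ on a suitably chosen sub-event. Fix $\ell^\star$ with $\delta_{\ell^\star}=\delta_{\min}$. A feasible scheme must recover $q_{\ell^\star}$ with probability at least $1-\delta_{\min}$.

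\emph{Derivation-constrained part.} By the structural rigidity of Theorem~\ref{thm:ancestral-relevance} and Corollary~\ref{cor:absorption-char}, success for query $\ell$ requires all exposed leaves of $G(q_\ell,B)$ to survive. In the joint DAG $G_L$ each protected base fact still costs $\log m$ bits, as in Lemma~\ref{lem:conditional-encoding}. The constraint $(1-\varepsilon)^{N_\ell}\ge 1-\delta_\ell$ yields $N_\ell\le N^*(\varepsilon,\delta_\ell)$. For a lower bound on the union, every fact in $S$ lies in some $S_\ell$, and I will argue that the number of unprotected facts overall is at most $N^*(\varepsilon,\delta_{\min})$. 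This argument treats joint recovery as the operative requirement, as in Definition~\ref{def:multi-query-mq}: joint success has probability $(1-\varepsilon)^{N}$ with $N$ the total number of unprotected facts, and it must be at least $1-\delta_{\min}$. The genericity and chain-rule argument of Theorem~\ref{thm:structural-converse}, applied to the tuple of all $n_{\mathrm{eff}}$ coordinates, then gives $\sigma\ge(n_{\mathrm{eff}}-N^*(\varepsilon,\delta_{\min}))\log m-O(\log n_{\mathrm{eff}})$. Achievability follows by caching all but $N^*(\varepsilon,\delta_{\min})$ facts, as in Theorem~\ref{thm:joint-unc-mq}.

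\emph{Coded part.} I will use the joint image-size bound (Lemma~\ref{lem:image-size} with $\kappa\to n_{\mathrm{eff}}$), which gives $P_c\le\Pr[E\le r+1]+(m-1)^{-1}$ with $E\sim\mathrm{Bin}(n_{\mathrm{eff}},\varepsilon)$ and $r=\lfloor\sigma/\log m\rfloor$. Requiring $P_c\ge1-\delta_{\min}$ and inverting via Berry--Esseen, exactly as in the converse half of Theorem~\ref{thm:second-order}, gives the stated bound with $\Phi^{-1}(1-\delta_{\min})$ and an $O(\log m)$ remainder. The matching MDS construction with $r$ chosen for $\delta_{\min}$ satisfies every $\delta_\ell\ge\delta_{\min}$ simultaneously, so the strictest requirement is binding.

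\emph{Main obstacle.} The delicate step is justifying that the constraint is governed by joint recovery rather than by per-query marginals. If only marginal reliabilities $\Pr[q_\ell \text{ recovered}]\ge1-\delta_\ell$ were imposed, the coded converse for query $\ell^\star$ alone would involve only $|S_{\ell^\star}|=\kappa$ coordinates, not $n_{\mathrm{eff}}$. I would therefore first make explicit that the instance requires joint recovery with each $\delta_\ell$ as a bound on the joint failure event. Under that reading, every constraint is implied by the one with $\delta_{\min}$, and the homogeneous results apply verbatim.
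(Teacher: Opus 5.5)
Your overall route is the paper's. You read the heterogeneous targets through the common joint-success event, so that only $\delta_{\min}$ binds, and then rerun the homogeneous joint analysis. The paper's proof is exactly ``the binding event is $\delta_{\min}$'', followed by $\Pr[E\le r]\ge 1-\delta_{\min}$ and $(1-\varepsilon)^N\ge 1-\delta_{\min}$.

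Your coded half is fine. It is even slightly more complete than the paper's one line, since you invoke the joint image-size bound of Theorem~\ref{thm:exact-Pc}(i) before inverting via Berry--Esseen. Your caution about marginal versus joint reliability is also well founded. Take two disjoint queries with $\delta_1=0.1$ and $\delta_2=0.9$. Separate MDS codes then carry second-order terms $\pm\sqrt{\kappa V_{\mathrm{cache}}}\,\Phi^{-1}(0.9)$ that cancel, which undercuts the claimed $+\sqrt{2\kappa V_{\mathrm{cache}}}\,\Phi^{-1}(0.9)$. So the statement is a converse only under the joint reading of Definition~\ref{def:multi-query-mq}, which the paper uses tacitly.

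\textbf{The gap is in your derivation-constrained converse.} Transplanting the genericity/chain-rule argument of Theorem~\ref{thm:structural-converse} to the $n_{\mathrm{eff}}$-coordinate tuple gives slack of order $n_{\mathrm{eff}}+\log m$, not $\log n_{\mathrm{eff}}$. Two losses cause this:
\begin{itemize}
\item the exposure mask and tie-breaking string of Lemma~\ref{lem:conditional-encoding} enter the chain rule as an $O(n_{\mathrm{eff}})$ loss;
\item $(1/m)$-genericity gives away another $\log m$ bits.
\end{itemize}
That theorem's own remainder is $O(\kappa+\log m)$ for exactly these reasons.

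This is not cosmetic. The entire dependence on $\delta_{\min}$ in the derivation-constrained bound sits in the term $N^*(\varepsilon,\delta_{\min})\log m$, which is constant in $n_{\mathrm{eff}}$ (that is the content of $V_{\mathrm{unc}}=0$). An $O(n_{\mathrm{eff}})$ error swamps this term once $n_{\mathrm{eff}}\gg N^*\log m$. Your argument therefore would not show that $\delta_{\min}$ is binding at all. It also needs hypotheses that Definition~\ref{def:multi-query-mq} does not supply: joint genericity of $(q_1,\ldots,q_L)$, and a faithful program generating the $S_\ell$, without which Theorem~\ref{thm:ancestral-relevance} is unavailable.

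The paper avoids all of this with the direct fact-counting of Theorem~\ref{thm:joint-unc-mq}. The cache consists of base facts at $\log m$ bits each, and joint success requires the $N$ uncached facts of $S$ to survive. Hence $(1-\varepsilon)^N\ge 1-\delta_{\min}$ forces $N\le N^*(\varepsilon,\delta_{\min})$, giving the stated bound with only $O(\log n_{\mathrm{eff}})$ bookkeeping. Since you already adopt the joint reading, simply apply Theorem~\ref{thm:joint-unc-mq} at $\delta=\delta_{\min}$ for the converse as well as for achievability.
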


\begin{proof}
Joint recovery of all $n_{\mathrm{eff}}$ facts ensures all
queries succeed; the binding event is
$\delta_{\min}$.
For the coded scheme,
$\Pr[E\le r]\ge 1{-}\delta_{\min}$ determines $r$.
For the derivation-constrained scheme,
$(1{-}\varepsilon)^N\ge 1{-}\delta_{\min}$ gives
$N\le N^*(\varepsilon,\delta_{\min})$.
\end{proof}

\begin{theorem}[Scaling with number of queries]
\label{thm:large-L-mq}
Consider $L$ queries in the ``common core'' model where
$|\bigcap_\ell S_\ell|=\alpha\kappa$ and all remaining
facts are private.
As $L\to\infty$ with $\kappa$ fixed:
\begin{enumerate}[label=\textup{(\Roman*)}]
\item $n_{\mathrm{eff}}
=\alpha\kappa+L(1{-}\alpha)\kappa$.

\item \emph{Per-query coded cache:}\;
$\sigma_{\mathrm{code}}/L
\to\varepsilon(1{-}\alpha)\kappa\log m$.

\item \emph{Per-query uncoded cache:}\;
$\sigma_{\mathrm{unc}}/L
\to(1{-}\alpha)\kappa\log m$.

\item \emph{Per-query penalty:}\;
$(\sigma_{\mathrm{unc}}/L)\,/\,
(\sigma_{\mathrm{code}}/L)\to 1/\varepsilon$.
\end{enumerate}
\end{theorem}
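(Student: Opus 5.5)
Part (I) is pure counting under the common-core model. The union $S=\bigcup_\ell S_\ell$ is the disjoint union of the shared core $C:=\bigcap_\ell S_\ell$, with $|C|=\alpha\kappa$, and the $L$ private parts $S_\ell\setminus C$. Each private part has size $(1{-}\alpha)\kappa$, and because all remaining facts are private these parts are pairwise disjoint and disjoint from $C$. Summing the sizes gives $n_{\mathrm{eff}}=\alpha\kappa+L(1{-}\alpha)\kappa$. As a sanity check, for $L=2$ this agrees with Definition~\ref{def:pairwise-overlap-mq}.

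For (II), I will plug this $n_{\mathrm{eff}}$ into the second-order expansion~\eqref{eq:joint-coded-2nd-mq} of Theorem~\ref{thm:joint-coded-mq}(II) and divide by $L$. The leading term gives
\[
\varepsilon n_{\mathrm{eff}}\log m/L=\varepsilon(1{-}\alpha)\kappa\log m+\varepsilon\alpha\kappa\log m/L .
\]
The remaining terms divided by $L$ are
\[
\sqrt{n_{\mathrm{eff}}V_{\mathrm{cache}}}\,\Phi^{-1}(1{-}\delta)/L=O(\sqrt{\kappa/L}\,\log m),\qquad
\tfrac12\log n_{\mathrm{eff}}/L,\qquad O(\log m)/L,
\]
and all of them vanish as $L\to\infty$ with $\kappa,m,\varepsilon,\delta$ fixed. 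Hence $\sigma_{\mathrm{code}}/L\to\varepsilon(1{-}\alpha)\kappa\log m$.

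For (III), I will do the same with Theorem~\ref{thm:joint-unc-mq}:
\[
\sigma_{\mathrm{unc}}/L=\bigl(n_{\mathrm{eff}}-N^*(\varepsilon,\delta)\bigr)\log m/L+O(\log n_{\mathrm{eff}})/L .
\]
Here $N^*$ does not depend on $L$, and $\log n_{\mathrm{eff}}=O(\log L)$, so this tends to $(1{-}\alpha)\kappa\log m$. Part (IV) then follows by taking the ratio of the two limits, which is $1/\varepsilon$ whenever $\alpha<1$. (For $\alpha=1$ both per-query caches tend to $0$, and the ratio follows directly from Theorem~\ref{thm:multi-penalty-mq}.)

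The main obstacle is that the error terms in Theorems~\ref{thm:joint-coded-mq} and~\ref{thm:joint-unc-mq} were derived with $n_{\mathrm{eff}}$ as the growing parameter, and here $n_{\mathrm{eff}}$ grows through $L$ rather than through $\kappa$. I have to check that the implied constants depend only on $\varepsilon,\delta,m$ and not on the overlap pattern. I would argue this from the proofs themselves. The joint problem reduces exactly to $n_{\mathrm{eff}}$ independent uses of $\mathrm{BEC}_m(\varepsilon)$, so the Berry--Esseen/MDS achievability and the image-size converse (Lemma~\ref{lem:image-size}) apply with $n_{\mathrm{eff}}$ as the blocklength, uniformly in how the facts are split among the queries. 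On the derivation-constrained side, the count $N^*$ of unprotected facts is determined only by $(1{-}\varepsilon)^{N}\ge 1{-}\delta$. Even a cruder first-order bound $\varepsilon n_{\mathrm{eff}}\log m+o(n_{\mathrm{eff}}\log m)$, which is Theorem~\ref{thm:joint-coded-mq}(I), would suffice for all four limits.
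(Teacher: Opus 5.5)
Your proposal is correct and follows the paper's proof: count $n_{\mathrm{eff}}=\alpha\kappa+L(1{-}\alpha)\kappa$, substitute it into the joint coded expansion and the joint derivation-constrained formula, divide by $L$, and take the ratio of the limits, which (as in the paper's cancellation of $1{-}\alpha$) requires $\alpha<1$. Your parenthetical on $\alpha=1$ is wrong: there $n_{\mathrm{eff}}=\kappa$ is fixed, so both total caches are independent of $L$, and their ratio is a constant that equals $1/\varepsilon$ only as $\kappa\to\infty$, whereas Theorem~\ref{thm:multi-penalty-mq} is a $\kappa\to\infty$ statement and does not apply with $\kappa$ fixed; hence (IV) genuinely needs $\alpha<1$.
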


\begin{proof}
(I)~In the common-core model:
$n_{\mathrm{eff}}=\alpha\kappa+L(1{-}\alpha)\kappa$.
(II)~$\sigma_{\mathrm{code}}/L
=\varepsilon[(1{-}\alpha)\kappa
+\alpha\kappa/L]\log m+O(\sqrt{\kappa/L}\log m)$.
(III)~$\sigma_{\mathrm{unc}}/L
=(1{-}\alpha)\kappa\log m
+(\alpha\kappa{-}N^*(\varepsilon,\delta))\log m/L
+O(\log n_{\mathrm{eff}}/L)$.
(IV)~Ratio $\to(1{-}\alpha)/(\varepsilon(1{-}\alpha))
=1/\varepsilon$.
\end{proof}

\begin{remark}[Exponent crossover at equal cache]
\label{rem:crossover-mq}
At equal normalized cache rate
$\rho=\sigma/(n_{\mathrm{eff}}\log m)$, the coded
reliability exponent $D(\rho\|\varepsilon)$
\textup{(}strictly convex, increasing from~$0$ at
$\rho=\varepsilon$\textup{)} and the
derivation-constrained success exponent
$(1{-}\rho)|\!\log(1{-}\varepsilon)|$
\textup{(}linear, decreasing to~$0$ at $\rho=1$\textup{)}
cross exactly once at some $\rho^*\in(\varepsilon,1)$.
Below $\rho^*$, the derivation-constrained scheme has the
larger exponent \textup{(}it operates far from its
capacity $\rho=1$\textup{)};
above $\rho^*$, the coded scheme dominates.
At equal \emph{reliability} $\delta$, however, the coded
scheme always uses less cache by the factor
$1/\varepsilon$.
\end{remark}


\section{Architecture-Parameterized Depth-Space Analysis}
\label{sec:arch-depth}

The phase diagram of Theorem~\ref{thm:phase-diagram} is stated
in terms of the normalized cache rate
$\rho=\sigma/(\kappa\log m)$, treating the number of
dependencies $\kappa$ as a free parameter.
We now re-parameterize the diagram in \emph{depth space} by
substituting the architecture-dependent mapping
$\kappa=\kappa_{\mathcal A}(d)$ from CT2
(Theorem~\ref{thm:exponential-capacity}):
$\kappa_{\mathrm{chain}}(d)=k{+}d{-}1$ and
$\kappa_{\mathrm{merge}}(d)=k\cdot 2^{d-1}$.
This reveals that the phase transition's location, width, and
sharpness in depth space are all architecture-dependent,
establishing a direct bridge from CT2 (capacity separation) to
the error probability landscape.

\subsection{Depth-Space Phase Transition}
\label{subsec:depth-phase}

\begin{definition}[Depth-space critical depth and transition
  width]
\label{def:depth-space-trans}
Fix cache budget $\sigma>0$, erasure rate
$\varepsilon\in(0,1)$, target reliabilities
$\delta_1<\delta_2\in(0,1)$, and architecture
$\mathcal A\in\{\mathrm{chain},\mathrm{merge}\}$.
The \emph{critical depth} $d^*(\sigma,\mathcal A)$ satisfies
$\sigma=\varepsilon\,\kappa_{\mathcal A}(d^*)\log m$.
The \emph{transition width}
$\Delta d(\sigma,\mathcal A)$ is the depth interval over
which $P_e$ transitions from $\delta_1$ to~$\delta_2$.
\end{definition}

\begin{theorem}[Architecture-dependent depth-space phase
  transition]
\label{thm:arch-phase-ds}
Fix $\sigma>0$, $\varepsilon\in(0,1)$, and let
$r:=\lfloor\sigma/\log m\rfloor$.

\smallskip\noindent
\textbf{\upshape(I) Critical depth.}
Chain: $d^*_c=r/\varepsilon-k+1$.
Merge: $d^*_m=1+\log_2(r/(\varepsilon k))$.
Both are real-valued; integer rounding introduces $O(1)$ error.

\smallskip\noindent
\textbf{\upshape(II) Coded transition width.}
\begin{enumerate}[label=\textup{(\alph*)}]
\item Chain:
$\Delta d_c^{\mathrm{code}}
=\Theta(\sqrt{d^*_c})$.
\item Merge:
$\Delta d_m^{\mathrm{code}}
=\Theta(2^{-d^*_m/2})$.
\end{enumerate}

\smallskip\noindent
\textbf{\upshape(III) Derivation-constrained transition
  width.}
\begin{enumerate}[label=\textup{(\alph*)}]
\item Chain:
$\Delta d_c^{\mathrm{unc}}=\Theta(1/\varepsilon)$.
\item Merge:
$\Delta d_m^{\mathrm{unc}}
=O(\log(1/\delta))$.
\end{enumerate}

\smallskip\noindent
\textbf{\upshape(IV) Sharpness ratio.}
$\Delta d_c^{\mathrm{code}}
/\Delta d_m^{\mathrm{code}}
=\Theta(\sqrt{d^*_c}\cdot 2^{d^*_m/2})
\to\infty$:
the merge architecture has an exponentially sharper
phase transition in depth space.
\end{theorem}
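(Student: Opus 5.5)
The plan is to treat depth $d$ as a continuous variable and transport the rate-space phase diagram of Theorem~\ref{thm:phase-diagram} through the maps $\kappa_{\mathrm{chain}}(d)=k+d-1$ and $\kappa_{\mathrm{merge}}(d)=k\cdot 2^{d-1}$. A transition width in $\kappa$ is then converted to a width in $d$ by dividing by $\kappa_{\mathcal A}'(d^*)$. The derivative is $1$ for the chain and $\ln 2\cdot k\,2^{d-1}=\Theta(\kappa)$ for the merge.

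\emph{Part (I).} The critical depth follows by solving $r=\varepsilon\kappa_{\mathcal A}(d^*)$, with $r=\lfloor\sigma/\log m\rfloor$ absorbing the floor. This gives $d^*_c=r/\varepsilon-k+1$ and $d^*_m=1+\log_2(r/(\varepsilon k))$. Replacing $\sigma/\log m$ by $r$ changes $\kappa$ by $O(1/\varepsilon)$, hence $d$ by $O(1)$.

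\emph{Part (II).} Fix the cache $r$ and vary $\kappa$. By Theorem~\ref{thm:exact-Pc}(ii) and the Berry--Esseen step of Theorem~\ref{thm:second-order}, $P_e^{\mathrm{MDS}}=\bar\Phi\bigl((r-\varepsilon\kappa)/\sqrt{\kappa\varepsilon(1-\varepsilon)}\bigr)+O(\kappa^{-1/2})$. The levels $\delta_1$ and $\delta_2$ are therefore hit at $\kappa$ values $\kappa_i=r/\varepsilon+c_i\sqrt{r/\varepsilon}(1+o(1))$, where $c_i$ is a $\Phi^{-1}$ quantile. The $\kappa$-window thus has width $\Theta(\sqrt{\kappa^*})$ with $\kappa^*=r/\varepsilon$.
\begin{itemize}
\item For the chain, $\kappa^*=\Theta(d^*_c)$ and the derivative is $1$, so $\Delta d_c^{\mathrm{code}}=\Theta(\sqrt{d^*_c})$.
\item For the merge, a mean-value argument on $\log_2$ gives $\Delta d=\Theta(\sqrt{\kappa^*}/\kappa^*)=\Theta((\kappa^*)^{-1/2})$. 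Since $\kappa^*=k2^{d^*_m-1}$, this is $\Theta(2^{-d^*_m/2})$ for fixed $k$.
\end{itemize}
I expect this step to be the main obstacle. Inverting the normal approximation uniformly requires that the $O(\kappa^{-1/2})$ Berry--Esseen error be smaller than the gap $\delta_2-\delta_1$, so the result holds only for $\kappa^*$ large. The real-valued treatment of $d$ for the merge, where the width is sub-unit, also has to be justified. I would do this by interpreting $\Delta d$ as the preimage of the $\kappa$-window under the continuous extension of $\kappa_{\mathrm{merge}}$.

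\emph{Part (III).} Here I use the exact formula $P_e^{\mathrm{unc}}=1-(1-\varepsilon)^N$ of Theorem~\ref{thm:unc-error-complete}, with $N=\kappa-r$ exposed facts. The levels $\delta_i$ correspond to $N_i=\ln(1/(1-\delta_i))/\ln(1/(1-\varepsilon))$, so the $\kappa$-window has width $N_2-N_1=\Theta(1/\varepsilon)$ for fixed $\delta_i$ and small $\varepsilon$.
\begin{itemize}
\item For the chain this transfers directly, giving $\Delta d_c^{\mathrm{unc}}=\Theta(1/\varepsilon)$.
\item For the merge, the window in $d$ is $\log_2\bigl((r+N_2)/(r+N_1)\bigr)\le\log_2(1+N_2/r)$. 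Since $r\ge1$ and $N_2=O(\log(1/(1-\delta_2))/\varepsilon)$, this is bounded above crudely. To get the stated $O(\log(1/\delta))$ form, I would bound $\log_2(1+N_2)$ using $N^*\approx\delta/\varepsilon$ together with the regime convention $\varepsilon=\Theta(1)$, so that $\log N_2$ is absorbed. I would state explicitly whichever parameter regime makes this bound valid.
\end{itemize}

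\emph{Part (IV).} Divide the two bounds from (II). The ratio $\sqrt{d^*_c}\cdot 2^{d^*_m/2}$ diverges as $\sigma\to\infty$, since both $d^*_c$ and $d^*_m$ increase with $r$.
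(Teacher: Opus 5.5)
Your treatment of (I), (II), (IV) and of the chain case of (III) follows the paper's proof. You solve $r=\varepsilon\kappa_{\mathcal A}(d^*)$, take the CLT window of width $\Theta(\sqrt{\kappa^*})$ in $\kappa$, and pull it back through $\kappa_{\mathcal A}$: the identity for the chain, a linearized $\log_2$ for the merge. You then read the chain's derivation-constrained window off $N_i=\ln(1/(1-\delta_i))/\ln(1/(1-\varepsilon))$. Inverting at two distinct quantiles, as you do, actually delivers the two-sided $\Theta$ in (II). The paper's one-sided linearization only establishes $O(\cdot)$ there.

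The gap is in (III)(b). You correctly identify the merge window as $\log_2\bigl((r+N_2)/(r+N_1)\bigr)$, but then bound it by $\log_2(1+N_2/r)$, discarding $N_1$. That throws away exactly the cancellation the claim depends on. Since $N_2=\Theta(1/\varepsilon)$, your bound grows like $\log(1/\varepsilon)$ when $r$ is small and $\varepsilon\to 0$. It therefore cannot give an $\varepsilon$-free $O(\log(1/\delta))$. The regime restriction $\varepsilon=\Theta(1)$ you propose to add is not in the statement, and it would empty the intended contrast with the chain's $\Theta(1/\varepsilon)$.

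The missing step is the elementary inequality $(r+N_2)/(r+N_1)\le N_2/N_1$. It holds for all $r\ge 0$ and $N_2\ge N_1>0$, since it is equivalent to $rN_1\le rN_2$. The ratio satisfies $N_2/N_1=\ln(1-\delta_2)/\ln(1-\delta_1)$, so the factor $\ln(1-\varepsilon)$ cancels. Using $\ln(1/(1-\delta_1))\ge\delta_1$, you get
\[
\Delta d_m^{\mathrm{unc}}\le\log_2\bigl(\ln(1-\delta_2)/\ln(1-\delta_1)\bigr)\le\log_2(1/\delta_1)+\log_2\ln\bigl(1/(1-\delta_2)\bigr).
\]
This is $O(\log(1/\delta_1))$ for $\delta_2$ bounded away from $1$, uniformly in $\varepsilon$ and $r$.

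This is precisely the paper's expression. The paper models the exposed count as doubling per depth unit, which is the $r=0$ case of your formula. Your fixed-cache window is dominated by it, so with this one inequality your setup yields the statement with no extra regime assumption.
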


\begin{proof}
\emph{(I)}\;
Solve $r=\varepsilon\kappa_{\mathcal A}(d^*)+O(1)$
for $d^*$: chain gives $d^*_c=r/\varepsilon-k+1$;
merge gives $2^{d^*_m-1}=r/(\varepsilon k)$.

\emph{(II)}\;
At depth $d^*{+}\Delta d$ with fixed cache $\sigma$:
for the chain,
$\kappa(d^*{+}\Delta d)=\kappa(d^*)+\Delta d$, so the
CLT transition requires
$\Delta\kappa=O(\sqrt{\kappa(d^*)})$, giving
$\Delta d=O(\sqrt{d^*})$.
For the merge,
$\kappa(d^*{+}\Delta d)=\kappa(d^*)\cdot 2^{\Delta d}$;
linearizing for small $\Delta d$,
$\varepsilon\kappa(d^*)\Delta d\ln 2
=O(\sqrt{\varepsilon\kappa(d^*)})$,
giving $\Delta d=O(1/\sqrt{\varepsilon\kappa(d^*)})
=\Theta(2^{-d^*/2})$.

\emph{(III)}\;
For the chain: each depth unit adds one exposed
dependency, so
$\Delta d=\log((1{-}\delta_2)/(1{-}\delta_1))/
\log(1{-}\varepsilon)=\Theta(1/\varepsilon)$.
For the merge: exposed dependencies double per depth
unit, so
$\Delta d=\log_2(\log(1{-}\delta_2)/\log(1{-}\delta_1))
=O(\log(1/\delta))$.

\emph{(IV)}\;
Immediate from (II).
\end{proof}

\subsection{Exact Asymptotics in Depth Space}
\label{subsec:depth-exact}

\begin{corollary}[Depth-space Bahadur--Rao instantiation]
\label{cor:depth-br}
Fix \(\sigma>0\), \(\varepsilon\in(0,1)\), and let \(r:=\lfloor \sigma/\log m\rfloor\).
Let \(d^*(\sigma,\mathcal A)\) be defined by \(\sigma=\varepsilon\,\kappa_{\mathcal A}(d^*)\log m\)
(Definition~\ref{def:depth-space-trans}).
For any \(d>d^*\), define the effective normalized cache rate
\[
  \alpha_{\mathcal A}(d):=\frac{r}{\kappa_{\mathcal A}(d)}.
\]
Then \(\alpha_{\mathcal A}(d)<\varepsilon\) and hence the coded scheme is in the strong-converse
(lower-tail) regime. In particular, Theorem~\ref{thm:bahadur-rao}\textup{(i)} yields
\begin{equation}\label{eq:arch-br}
  P_c^{\mathrm{MDS}}(d,\mathcal A)
  =\frac{2^{-\kappa_{\mathcal A}(d)\,
    D(\alpha_{\mathcal A}(d)\|\varepsilon)}}
    {(1-e^{t^*(\alpha_{\mathcal A}(d))})
    \sqrt{2\pi\kappa_{\mathcal A}(d)\,
    \alpha_{\mathcal A}(d)(1-\alpha_{\mathcal A}(d))}}
  \bigl(1+O(\kappa_{\mathcal A}(d)^{-1})\bigr).
\end{equation}
\end{corollary}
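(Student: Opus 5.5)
The argument reduces the statement to the single-query lower-tail expansion of Theorem~\ref{thm:bahadur-rao}(i). We substitute $\kappa:=\kappa_{\mathcal A}(d)$ and $\alpha:=\alpha_{\mathcal A}(d)$ and check that the hypotheses of that theorem hold. There are three steps: prove the inequality $\alpha_{\mathcal A}(d)<\varepsilon$; identify $P_c^{\mathrm{MDS}}(d,\mathcal A)$ with a binomial lower-tail probability; and verify the uniformity conditions of Theorem~\ref{thm:bahadur-rao}.

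\emph{Step 1 (sub-capacity).} In both architectures, $\kappa_{\mathcal A}(\cdot)$ is strictly increasing in $d$: we have $\kappa_{\mathrm{chain}}(d)=k+d-1$ and $\kappa_{\mathrm{merge}}(d)=k\cdot 2^{d-1}$. Hence $d>d^*$ gives $\kappa_{\mathcal A}(d)>\kappa_{\mathcal A}(d^*)=\sigma/(\varepsilon\log m)$. Since $r=\lfloor\sigma/\log m\rfloor\le\sigma/\log m$, it follows that
\[
\alpha_{\mathcal A}(d)=\frac{r}{\kappa_{\mathcal A}(d)}\le\frac{\sigma/\log m}{\kappa_{\mathcal A}(d)}<\varepsilon .
\]
This places us in the regime $\rho<\varepsilon$ of Corollary~\ref{cor:error-landscape}(i), i.e.\ the strong-converse regime. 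It also gives $t^*(\alpha_{\mathcal A}(d))<0$ from \eqref{eq:tilt}, since $\alpha<\varepsilon$ makes the argument of the logarithm less than $1$.

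\emph{Step 2 (reduction to a binomial tail).} By Theorem~\ref{thm:exact-Pc}(ii), the systematic $(\kappa+r,\kappa)$ Reed--Solomon code with $r$ cached parity symbols succeeds exactly when $E\le r$, where $E\sim\mathrm{Bin}(\kappa,\varepsilon)$. We apply this with $\kappa=\kappa_{\mathcal A}(d)$, which is legitimate because the query at depth $d$ has $\kappa_{\mathcal A}(d)$ distinct dependencies, working in the distinct-coordinate subfamily of Definition~\ref{def:distinct-subfamily}. This yields
\[
P_c^{\mathrm{MDS}}(d,\mathcal A)=\Pr\bigl[\mathrm{Bin}(\kappa_{\mathcal A}(d),\varepsilon)\le\alpha_{\mathcal A}(d)\,\kappa_{\mathcal A}(d)\bigr].
\]
Substituting into \eqref{eq:bahadur-rao} produces \eqref{eq:arch-br}, prefactor and $O(\kappa^{-1})$ error included.

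\emph{Step 3 (uniformity: the main obstacle).} Theorem~\ref{thm:bahadur-rao} requires $\alpha\in(\eta,1-\eta)$ and $|\alpha-\varepsilon|>\eta$ for a fixed $\eta>0$. As $d$ varies with $\sigma$ fixed, $\alpha_{\mathcal A}(d)$ decreases toward $0$. Near $d^*$ it approaches $\varepsilon$. This is most severe for the merge architecture, where $\alpha$ halves with each depth unit.

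I would therefore state the expansion for depths $d$ in a window on which $\alpha_{\mathcal A}(d)\in[\eta,\varepsilon-\eta]$. For the chain this means $d-d^*=\Theta(d^*)$; for the merge it means $d-d^*=O(1)$. On this window the $O(\kappa_{\mathcal A}(d)^{-1})$ error is uniform by the uniformity clause of Theorem~\ref{thm:bahadur-rao}. Outside it, I would remark that the lattice Bahadur--Rao theorem still gives the correct exponent $\kappa D(\alpha\|\varepsilon)$ via Theorem~\ref{thm:sc-exponent}, but the $(1+O(\kappa^{-1}))$ prefactor control is lost. Near $d^*$ the expansion must instead be replaced by the moderate-deviation and CLT regimes, C2 and C3 of Theorem~\ref{thm:phase-diagram}.

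Finally, one technical point needs checking: $\alpha_{\mathcal A}(d)\kappa_{\mathcal A}(d)=r$ is an integer, so the lattice version applies with span $h=1$ and no rounding correction is needed.
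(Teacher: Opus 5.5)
This is the paper's argument (strict monotonicity of $\kappa_{\mathcal A}$ gives $\alpha_{\mathcal A}(d)<\varepsilon$, then substitute into Theorem~\ref{thm:bahadur-rao}(i)), and your Step~1 is actually tighter than the paper's: the paper writes $r\approx\varepsilon\kappa_{\mathcal A}(d^*)+O(1)$ and gets $\alpha_{\mathcal A}(d)<\varepsilon$ only for sufficiently large $\kappa_{\mathcal A}(d)$, whereas $r=\lfloor\sigma/\log m\rfloor\le\varepsilon\kappa_{\mathcal A}(d^*)$ gives it exactly. Your Step~3 caveat is also correct and is missing from the paper's proof: with $\sigma$ fixed, the $\eta$-separation hypotheses fail near $d^*$ (where $1-e^{t^*}\to 0$) and as $d\to\infty$ (where $r$ stays fixed; for $r=1$, the right-hand side of \eqref{eq:arch-br} divided by $P_c^{\mathrm{MDS}}$ tends to $e/\sqrt{2\pi}$, not $1$), so the $(1+O(\kappa_{\mathcal A}(d)^{-1}))$ form is justified only on your window $\alpha_{\mathcal A}(d)\in[\eta,\varepsilon-\eta]$, with a constant depending only on $(\eta,\varepsilon)$ and asymptotics taken as $\sigma\to\infty$.
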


\begin{proof}
Since \(d>d^*\) and \(\kappa_{\mathcal A}(d)\) is strictly increasing in \(d\) for both
architectures, we have \(\kappa_{\mathcal A}(d)>\kappa_{\mathcal A}(d^*)\).
Moreover, by definition of \(d^*\), \(r\approx \varepsilon\,\kappa_{\mathcal A}(d^*)\) up to an \(O(1)\)
integer-rounding term. Therefore
\[
  \alpha_{\mathcal A}(d)=\frac{r}{\kappa_{\mathcal A}(d)}
  < \frac{\varepsilon\,\kappa_{\mathcal A}(d^*)+O(1)}{\kappa_{\mathcal A}(d)}
  < \varepsilon
\]
for all sufficiently large \(\kappa_{\mathcal A}(d)\), which is the regime of interest in the
depth-space asymptotics. Substituting \(\kappa=\kappa_{\mathcal A}(d)\) and
\(\alpha=\alpha_{\mathcal A}(d)\) into Theorem~\ref{thm:bahadur-rao}\textup{(i)} gives
\eqref{eq:arch-br}.
\end{proof}

\begin{theorem}[Depth-space moderate deviations]
\label{thm:arch-mod-dev}
Fix a cache sequence
$\sigma_d=\varepsilon\kappa_{\mathcal A}(d)\log m
+a_d\sqrt{\kappa_{\mathcal A}(d)}\,\log m$
with $a_d\to\infty$ and
$a_d=o(\sqrt{\kappa_{\mathcal A}(d)})$.
By Theorem~\textup{\ref{thm:moderate-dev}}:
\begin{enumerate}[label=\textup{(\alph*)}]
\item Chain \textup{($\kappa=\Theta(d)$)}:
the constraint $a_d=o(\sqrt d)$ gives
$P_e\sim\exp(-\Theta(a_d^2))$ with
$a_d^2=o(d)$;
choosing $a_d=d^{1/4}$ yields a representative
stretched-exponential decay
$P_e\sim\exp(-\Theta(\sqrt d))$.

\item Merge \textup{($\kappa=\Theta(2^d)$)}:
the constraint $a_d=o(2^{d/2})$ gives
$P_e\sim\exp(-\Theta(a_d^2))$ with
$a_d^2=o(2^d)$;
choosing $a_d=2^{d/4}$ yields
$P_e\sim\exp(-\Theta(2^{d/2}))$.
\end{enumerate}
Per unit depth, the merge architecture achieves
exponentially faster error decay, inheriting CT2's
capacity separation.
\end{theorem}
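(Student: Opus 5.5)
The statement is a substitution of the architecture map $\kappa=\kappa_{\mathcal A}(d)$ into Theorem~\ref{thm:moderate-dev}. The plan is to check that the hypotheses of that theorem survive the reparameterization, and then read off the decay in depth units for each architecture.

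\textbf{Step 1: reduction.} Set $\kappa:=\kappa_{\mathcal A}(d)$. This is strictly increasing in $d$ for both architectures (CT2, Theorem~\ref{thm:exponential-capacity}), so $d\to\infty$ is equivalent to $\kappa\to\infty$. Define the sequence $a_\kappa:=a_{d(\kappa)}$ along the image of $\kappa_{\mathcal A}$. The given $\sigma_d$ is then exactly the cache sequence $\sigma_\kappa=\varepsilon\kappa\log m+a_\kappa\sqrt\kappa\log m$ of Theorem~\ref{thm:moderate-dev}. Its hypotheses $a_\kappa\to\infty$ and $a_\kappa=o(\sqrt\kappa)$ are precisely the stated assumptions $a_d\to\infty$ and $a_d=o(\sqrt{\kappa_{\mathcal A}(d)})$.

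One subtlety is that Theorem~\ref{thm:moderate-dev} is stated along all $\kappa$, while here $\kappa$ runs over a subsequence; for the merge this is a geometric subsequence. I would note that the proof of Theorem~\ref{thm:moderate-dev} is pointwise in $\kappa$: it applies the Bahadur--Rao expansion for each $\kappa$ with explicit error terms $O(a_\kappa^3\kappa^{-1/2})+O(\ln\kappa)$. It therefore holds along any subsequence. Part (i) then gives
\[
\ln P_e^*(\sigma_d)=-\frac{a_d^2}{2\varepsilon(1-\varepsilon)}\,(1+o(1)),
\]
that is, $P_e\sim\exp(-\Theta(a_d^2))$, with constants depending only on $\varepsilon$.

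\textbf{Step 2: chain.} Here $\kappa_{\mathrm{chain}}(d)=k+d-1=\Theta(d)$, so $o(\sqrt{\kappa})=o(\sqrt d)$, giving $a_d^2=o(d)$. For $a_d=d^{1/4}$, both $a_d\to\infty$ and $d^{1/4}=o(\sqrt{k+d-1})$ hold, so $P_e=\exp(-\Theta(\sqrt d))$. I would also check the side condition $\ln\kappa/a_\kappa^2\to0$ used in the proof of Theorem~\ref{thm:moderate-dev}: here it reads $\ln d/\sqrt d\to0$, which holds.

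\textbf{Step 3: merge.} Here $\kappa_{\mathrm{merge}}(d)=k2^{d-1}=\Theta(2^d)$, so the constraint is $a_d=o(2^{d/2})$ and $a_d^2=o(2^d)$. For $a_d=2^{d/4}$, we get $a_d^2=2^{d/2}$, so $P_e=\exp(-\Theta(2^{d/2}))$. The side condition is $\ln\kappa/a_d^2=\Theta(d/2^{d/2})\to0$, which holds.

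\textbf{Step 4: comparison.} Comparing the admissible exponents per unit depth, $\sqrt d$ versus $2^{d/2}$ (and more generally $o(d)$ versus $o(2^d)$), yields the exponential per-depth advantage of the merge architecture. This inherits the ratio $\kappa_{\mathrm{merge}}/\kappa_{\mathrm{chain}}$ from CT2(III).

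\textbf{Main obstacle.} The only delicate point is the subsequence/uniformity issue in Step 1: confirming that the $o(1)$ terms in Theorem~\ref{thm:moderate-dev} depend only on $a_\kappa/\sqrt\kappa$, $1/a_\kappa$, and $\ln\kappa/a_\kappa^2$, and not on $\kappa$ running through all integers. I would handle it by re-reading that proof's error terms and noting that each is controlled by these three ratios, all of which tend to $0$ under our hypotheses in both architectures. Near-optimality of MDS (Theorem~\ref{thm:exact-Pc}(iii)) transfers the conclusion from $P_e^{\mathrm{MDS}}$ to $P_e^*$. The additive $O(m^{-1})$ term there must be negligible relative to $\exp(-\Theta(a_d^2))$. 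This requires $m$ large enough, and I would state that as the implicit regime.
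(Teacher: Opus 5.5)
Your proposal is correct and matches the paper's proof, which is exactly the substitution $\kappa=\kappa_{\mathcal A}(d)$ into Theorem~\ref{thm:moderate-dev}(i), followed by reading off the admissible range $a_d=o(\sqrt{\kappa_{\mathcal A}(d)})$ for each architecture. Your two caveats can be discharged more cheaply than you propose: extending $a$ piecewise-constantly to all $\kappa$ preserves $a_\kappa\to\infty$ and $a_\kappa=o(\sqrt\kappa)$, so Theorem~\ref{thm:moderate-dev} applies as stated without re-auditing its error terms; and no large-$m$ regime is needed, because Lemma~\ref{lem:image-size} with $2^{\sigma}<m^{r+1}$ directly gives $P_e^*\ge(1-m^{-1})\Pr[E>r+1]$, whose log-rate equals that of $\Pr[E>r]$ since shifting $r$ by one perturbs $a_d$ by only $\kappa^{-1/2}$.
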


\begin{proof}
Theorem~\ref{thm:moderate-dev}(i) gives
$\ln P_e\sim-a_d^2/(2\varepsilon(1{-}\varepsilon))$.
The architecture enters through
$\kappa_{\mathcal A}(d)$, which controls the feasible
range of $a_d=o(\sqrt\kappa)$: for the chain,
$a_d=o(\sqrt d)$; for the merge,
$a_d=o(2^{d/2})$.
\end{proof}

\subsection{Maximum Error Exponent Gap}
\label{subsec:max-gap-ds}

The preceding depth-space analysis quantifies how the coded
scheme's phase transition sharpens with architecture.
We now show that the \emph{gap} between coded and
derivation-constrained cache requirements, measured at the
error-exponent level, is maximized at low reliability and equals
the first-order derivation penalty.

\begin{theorem}[Maximum error exponent gap]
\label{thm:max-gap-ds}
Let $\varepsilon\in(0,1/2]$.
For exponent level $E\in(0,|\!\log(1{-}\varepsilon)|)$,
define the coded cache rate
$\rho_{\mathrm{code}}(E):=D^{-1}(E\|\varepsilon)$
\textup{(}upper branch, $\rho>\varepsilon$\textup{)} and
the derivation-constrained cache rate
$\rho_{\mathrm{unc}}(E):=1{-}E/|\!\log(1{-}\varepsilon)|$.
Then:
\begin{enumerate}[label=\textup{(\Roman*)}]
\item The ratio
$h(E):=\rho_{\mathrm{unc}}(E)/\rho_{\mathrm{code}}(E)$
is strictly decreasing on
$(0,\,|\!\log(1{-}\varepsilon)|)$.

\item At low exponent:
\begin{equation}\label{eq:max-ratio-ds}
  \lim_{E\to 0^+}h(E)=\frac{1}{\varepsilon}\,,
\end{equation}
recovering the first-order derivation penalty.

\item At high exponent:
$h(E)\to 0$ as
$E\to|\!\log(1{-}\varepsilon)|$,
since $\rho_{\mathrm{unc}}\to 0$ while
$\rho_{\mathrm{code}}$ remains bounded away from
zero.

\item There exists a unique crossover
$E^*\in(0,|\!\log(1{-}\varepsilon)|)$ with
$h(E^*)=1$: for $E<E^*$ the
derivation-constrained scheme requires strictly more
cache; for $E>E^*$ it requires strictly less.
\end{enumerate}
\end{theorem}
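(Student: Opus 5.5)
The plan is to reduce all four claims to elementary monotonicity and continuity properties of the two rate functions. First I will show that $\rho_{\mathrm{code}}$ is well defined on the whole interval. Write $L_\varepsilon:=|\!\log(1{-}\varepsilon)|=\log(1/(1{-}\varepsilon))$. On the upper branch $\rho\in[\varepsilon,1]$, the map $\rho\mapsto D(\rho\|\varepsilon)$ is continuous, with $D(\varepsilon\|\varepsilon)=0$. Its derivative
\[
\partial_\rho D(\rho\|\varepsilon)=\log\frac{\rho(1{-}\varepsilon)}{\varepsilon(1{-}\rho)}
\]
is strictly positive for $\rho>\varepsilon$, so the map is strictly increasing. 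Its endpoint value is $D(1\|\varepsilon)=\log(1/\varepsilon)$.

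The hypothesis $\varepsilon\le 1/2$ enters exactly here. It gives $\varepsilon\le 1{-}\varepsilon$, hence $\log(1/\varepsilon)\ge L_\varepsilon$. Therefore every $E\in(0,L_\varepsilon)$ lies in the range $(0,\log(1/\varepsilon))$ of this bijection. The inverse $\rho_{\mathrm{code}}(E)=D^{-1}(E\|\varepsilon)$ is then a continuous, strictly increasing map with values in $(\varepsilon,1)$, satisfying $\rho_{\mathrm{code}}(E)\to\varepsilon$ as $E\to0^+$. This well-definedness, including continuity of the inverse, is the only step needing care; I expect it to be the main (if mild) obstacle. Everything else is bookkeeping.

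Next I treat the derivation-constrained rate. The function $\rho_{\mathrm{unc}}(E)=1-E/L_\varepsilon$ is affine and strictly decreasing, and it is strictly positive on $(0,L_\varepsilon)$. It tends to $1$ as $E\to0^+$ and to $0$ as $E\to L_\varepsilon^-$.

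The four parts then follow in order.

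\textbf{Part (I).} On $(0,L_\varepsilon)$, $h=\rho_{\mathrm{unc}}/\rho_{\mathrm{code}}$ is a strictly decreasing positive function divided by a strictly increasing positive function. Hence $h$ is strictly decreasing. Concretely, for $E_1<E_2$ we have $\rho_{\mathrm{unc}}(E_1)>\rho_{\mathrm{unc}}(E_2)>0$ and $0<\rho_{\mathrm{code}}(E_1)<\rho_{\mathrm{code}}(E_2)$.

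\textbf{Part (II).} Take limits in numerator and denominator separately: $h(E)\to 1/\varepsilon$ as $E\to0^+$.

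\textbf{Part (III).} As $E\to L_\varepsilon^-$, the numerator tends to $0$. By monotonicity, the denominator stays at least $\varepsilon>0$, tending to $D^{-1}(L_\varepsilon\|\varepsilon)\in(\varepsilon,1]$. Hence $h(E)\to0$.

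\textbf{Part (IV).} The function $h$ is continuous on $(0,L_\varepsilon)$ and strictly decreasing. Its limit at the left endpoint is $1/\varepsilon\ge2>1$, and its limit at the right endpoint is $0<1$. The intermediate value theorem gives some $E^*$ with $h(E^*)=1$, and strict monotonicity makes it unique. For $E<E^*$ we get $h(E)>1$, i.e.\ $\rho_{\mathrm{unc}}(E)>\rho_{\mathrm{code}}(E)$, so the derivation-constrained scheme needs more cache. For $E>E^*$ the inequality reverses. This is consistent with the single crossover noted in Remark~\ref{rem:crossover-mq}.
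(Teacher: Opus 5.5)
Your proof is correct and follows the paper's argument essentially step for step: the same well-definedness check via $D(1\|\varepsilon)=\log(1/\varepsilon)\ge|\!\log(1{-}\varepsilon)|$ for $\varepsilon\le 1/2$, endpoint limits for (II)--(III), and the intermediate value theorem plus strict monotonicity for (IV). The only difference is in (I): you argue directly that a positive decreasing function divided by a positive increasing function is decreasing, whereas the paper differentiates $h$ using $\rho_{\mathrm{code}}'=1/D'(\rho_{\mathrm{code}}\|\varepsilon)>0$, so your version is slightly more elementary because it does not need differentiability of the inverse.
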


\begin{proof}
The upper branch of $D^{-1}(E\|\varepsilon)$ is well-defined on
$(0,D(1\|\varepsilon)]=(0,\log(1/\varepsilon)]$; since
$\varepsilon\le 1/2$, $\log(1/\varepsilon)\ge|\!\log(1{-}\varepsilon)|$,
so the stated domain is contained in the feasible range.

\emph{(I).}\;
$h'(E)=(\rho_u'\rho_c-\rho_u\rho_c')/\rho_c^2$
with $\rho_u'=-1/|\!\log(1{-}\varepsilon)|<0$ and
$\rho_c'=1/D'(\rho_c\|\varepsilon)>0$
(since $D'(\rho\|\varepsilon)>0$ for $\rho>\varepsilon$).
Both terms in the numerator are negative, giving
$h'(E)<0$.

\emph{(II).}\;
As $E\to 0^+$:
$\rho_{\mathrm{code}}\to\varepsilon$ and
$\rho_{\mathrm{unc}}\to 1$; the ratio converges to
$1/\varepsilon$.

\emph{(III).}\;
At $E=|\!\log(1{-}\varepsilon)|$: $\rho_u=0$;
$\rho_c=D^{-1}(|\!\log(1{-}\varepsilon)|\,\|\,\varepsilon)>0$
since the equation $D(\rho\|\varepsilon)=|\!\log(1{-}\varepsilon)|$
has a solution $\rho\in(\varepsilon,1]$ by the
feasibility established above.
Hence $h\to 0$.

\emph{(IV).}\;
Since $h$ is continuous, strictly decreasing from
$1/\varepsilon>1$ to~$0$, the
intermediate value theorem gives a unique $E^*$
with $h(E^*)=1$.
\end{proof}

\begin{remark}[Operational interpretation]
\label{rem:exp-gap-ds}
The derivation penalty is most severe at low exponent
levels (small $E$), where the coded scheme operates
near its capacity threshold $\rho=\varepsilon$ and
exploits statistical concentration, while the
derivation-constrained scheme cannot.
At high exponent levels ($E>E^*$), the relationship
reverses: the derivation-constrained scheme's linear
cache--exponent tradeoff becomes more efficient than
the coded scheme's convex tradeoff.
In depth space, the merge architecture's exponentially
larger $\kappa(d^*)$ amplifies both the advantage and
disadvantage by a factor of $\Theta(2^d)$ compared to
the chain.
\end{remark}

\subsection{The CT2--CT4 Bridge}
\label{subsec:ct2-ct4}

\begin{remark}[Complete logical chain from CT1 to the
  depth-space landscape]
\label{rem:ct2-ct4-ds}
The depth-space re-parameterization makes the following
chain explicit.
CT1 establishes the per-step information rate
$\log m$~bits, entering as the factor $\log m$ in
$V_{\mathrm{cache}}$.
CT2 determines the architecture-dependent mapping
$d\mapsto\kappa_{\mathcal A}(d)$: linear for chains,
exponential for merges.
The phase diagram of
Theorem~\textup{\ref{thm:phase-diagram}} applies at each
$\kappa$; the architecture maps this to a
depth-dependent landscape.
The transition width
$\Delta d\propto 1/\sqrt{\kappa'(d)}$ is the ``inverse''
of the capacity growth rate: $\kappa'_c=1$ versus
$\kappa'_m=\Theta(2^d)$;
the exponentially larger derivative for the merge
architecture yields the exponentially sharper transition.
CT4's depth--resilience duality
\textup{(Theorem~\ref{thm:depth-resilience-duality})} is
the special case $\sigma=0$ \textup{(}pure native
resilience\textup{)}:
$d^*_{\mathrm{chain}}=N^*-k+1=\Theta(\delta/\varepsilon)$,
$d^*_{\mathrm{merge}}
=1+\log_2(N^*/k)=\Theta(\log(\delta/\varepsilon))$.
The moderate-deviations result
\textup{(Theorem~\ref{thm:arch-mod-dev})} interpolates
between the CLT regime near $\Delta d=O(1)$ and the
large-deviations regime at $\Delta d=\Theta(d^*)$, with
the Bahadur--Rao prefactor
\textup{(Corollary~\ref{cor:depth-br})} providing exact
finite-$d$ evaluation.
\end{remark}


\section{Numerical Validation}
\label{sec:numerical}

This section validates the central finite-length predictions of
Sections~\ref{sec:noise-resilience}--\ref{sec:arch-depth} by exact computation
and Monte~Carlo simulation.
Unless stated otherwise, we use \(m=256\) (so \(\log_2 m=8\) bits) and \(k=2\).
All coded-caching numerical results treat \(\kappa\) as the number of
\emph{distinct} base-fact dependencies
(\(\kappa=\kappa(q,B)\), Definition~\ref{def:dep-count}), consistent with the
distinct-coordinate regime (Lemma~\ref{lem:distinct-coordinates}).

\subsection{Computational methodology (brief)}
\label{subsec:comp-methodology}

All experiments were executed in Python~3.11 using SciPy~1.12~\cite{virtanen2020scipy}
and NumPy~1.26.
For coded schemes, binomial tail probabilities
\(\Pr[\mathrm{Bin}(\kappa,\varepsilon)\le r]\) are computed via
\texttt{scipy.stats.binom.cdf} (exact to machine precision for the reported ranges),
and the minimal parity count \(r^*\) is obtained by exact CDF inversion.
For derivation-constrained schemes, the exact error probability
\(P_e^{\mathrm{unc}}=1-(1-\varepsilon)^N\) is evaluated in closed form.

\begin{figure}[ht]
\centering

\begin{minipage}{0.49\textwidth}
\centering
\begin{tikzpicture}
\begin{axis}[
  width=\textwidth,
  height=0.72\textwidth,
  xmode=log,
  log basis x=10,
  xtick={50,100,200,500,1000,5000},
  xmin=45, xmax=6000,
  ymin=3, ymax=11,
  grid=both,
  xlabel={\(\kappa\)},
  ylabel={Penalty ratio \(\sigma^*_{\mathrm{unc}}/\sigma^*_{\mathrm{code}}\)},
  title={Single-query penalty (Exp.~1)},
  legend style={
    at={(0.5,-0.30)},
    anchor=north,
    draw=none,
    fill=none,
    legend columns=2,
    font=\scriptsize,
    row sep=2pt,
    /tikz/every even column/.append style={column sep=6pt},
  },
  legend cell align=left,
]
\addplot+[mark=o, thick] coordinates {
  (50,6.12) (100,7.07) (200,7.65) (500,8.46) (1000,8.92) (5000,9.49)
};
\addlegendentry{\(\varepsilon=0.1\) Exact}

\addplot+[mark=triangle*, thick, dashed] coordinates {
  (50,4.36) (100,6.06) (200,7.23) (500,8.26) (1000,8.77) (5000,9.45)
};
\addlegendentry{\(\varepsilon=0.1\) Thm~\ref{thm:refined-penalty}}

\addplot+[mark=square, thick] coordinates {
  (50,3.57) (100,4.00) (200,4.26) (500,4.46) (1000,4.63) (5000,4.83)
};
\addlegendentry{\(\varepsilon=0.2\) Exact}

\addplot+[mark=diamond*, thick, dashed] coordinates {
  (50,3.19) (100,3.72) (200,4.09) (500,4.43) (1000,4.59) (5000,4.82)
};
\addlegendentry{\(\varepsilon=0.2\) Thm~\ref{thm:refined-penalty}}

\addplot[domain=45:6000, samples=2, thick, dotted, gray, forget plot] {10};
\addplot[domain=45:6000, samples=2, thick, dotted, gray, forget plot] {5};
\end{axis}
\end{tikzpicture}
\end{minipage}\hfill
\begin{minipage}{0.49\textwidth}
\centering
\begin{tikzpicture}
\begin{axis}[
  width=\textwidth,
  height=0.72\textwidth,
  xmin=450, xmax=5200,
  ymin=4.35, ymax=5.05,
  grid=both,
  xlabel={\(n_{\mathrm{eff}}\)},
  ylabel={Penalty ratio},
  title={Multi-query penalty (Exp.~5), \(\varepsilon=0.2\)},
  legend style={
    at={(0.5,-0.30)},
    anchor=north,
    draw=none,
    fill=none,
    legend columns=2,
    font=\scriptsize,
    row sep=2pt,
    /tikz/every even column/.append style={column sep=6pt},
  },
  legend cell align=left,
]
\addplot+[only marks, mark=*, mark size=2.1pt, thick] coordinates {
  (500,4.46) (1000,4.63) (850,4.59) (700,4.55)
  (2500,4.75) (1900,4.73) (5000,4.83) (2750,4.77)
};
\addlegendentry{Exact}

\addplot+[only marks, mark=x, mark size=2.6pt, thick, dashed] coordinates {
  (500,4.43) (1000,4.59) (850,4.56) (700,4.52)
  (2500,4.74) (1900,4.71) (5000,4.82) (2750,4.76)
};
\addlegendentry{\eqref{eq:multi-penalty-2-mq}}

\addplot+[domain=450:5200, samples=2, thick, dotted, gray] {5};
\addlegendentry{\(1/\varepsilon\)}
\end{axis}
\end{tikzpicture}
\end{minipage}

\caption{Penalty ratios and convergence to \(1/\varepsilon\) (Experiments~1 and~5).
All legends are placed outside the plotting regions to avoid covering curves.}
\label{fig:penalty}
\end{figure}
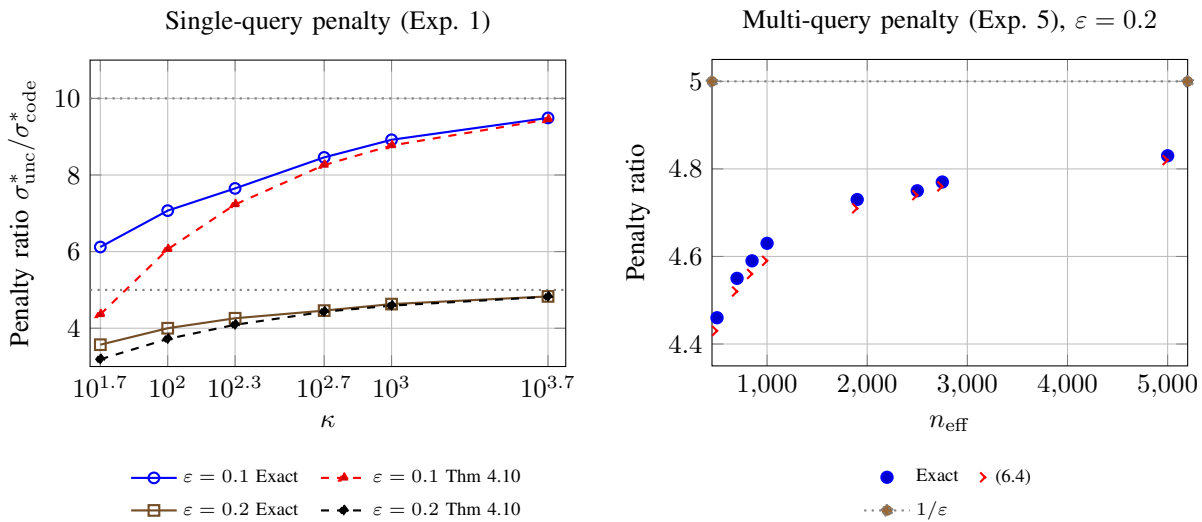

\subsection{Experiments 1 and 5: derivation-penalty convergence and universality}
\label{subsec:exp-penalty-unified}

Figure~\ref{fig:penalty} (left) validates
Theorems~\ref{thm:source-channel-separation}
and~\ref{thm:refined-penalty}.
At \(\varepsilon=0.1\) the exact ratio rises from \(6.12\) at
\(\kappa=50\) to \(9.49\) at \(\kappa=5000\), reaching within
\(6\%\) of the first-order limit \(1/\varepsilon=10\); the
closed-form second-order prediction~\eqref{eq:refined-penalty}
tracks the exact value with decreasing relative error
(\(29\%\) at \(\kappa=50\) versus \(0.4\%\) at \(\kappa=5000\)),
confirming the \(\Theta(\kappa^{-1/2})\) convergence rate.
At \(\varepsilon=0.2\) the pattern is analogous, with the exact
ratio reaching \(4.83\) at \(\kappa=5000\) against the limit
\(1/\varepsilon=5\).

Figure~\ref{fig:penalty} (right) confirms the multi-query
universality of Theorem~\ref{thm:multi-penalty-mq}: across eight
\((L,\alpha)\) configurations with
\(n_{\mathrm{eff}}\in[500,5000]\), the penalty ratio depends only
on \(n_{\mathrm{eff}}\)---not separately on \(L\) or
\(\alpha\)---and the second-order
prediction~\eqref{eq:multi-penalty-2-mq} matches the exact ratio
to within \(2\%\) for all entries.

\begin{figure}[ht]
\centering

\begin{minipage}{0.49\textwidth}
\centering
\begin{tikzpicture}
\begin{axis}[
  ylabel shift = -4pt,
  yticklabel style = {xshift=-2pt},
  width=\textwidth,
  height=0.72\textwidth,
  xmode=log,
  log basis x=10,
  xtick={100,200,500,1000,5000},
  xmin=90, xmax=6000,
  ymin=0.03, ymax=0.21,
  grid=both,
  xlabel={\(\kappa\)},
  ylabel={Empirical exponent \(\hat D(\kappa)\)},
  title={Strong converse exponent (Exp.~2), \(\varepsilon=0.3\)},
  legend style={
    at={(0.5,-0.30)},
    anchor=north,
    draw=none,
    fill=none,
    legend columns=2,
    font=\scriptsize,
    row sep=2pt,
    /tikz/every even column/.append style={column sep=6pt},
  },
  legend cell align=left,
]

\addplot+[mark=o, thick] coordinates {(100,0.0592) (200,0.0504) (500,0.0436) (1000,0.0409) (5000,0.0381)};
\addlegendentry{\(\gamma=0.10\) (\(\alpha=0.20\))}

\addplot+[mark=square, thick] coordinates {(100,0.1127) (200,0.1027) (500,0.0952) (1000,0.0922) (5000,0.0891)};
\addlegendentry{\(\gamma=0.15\) (\(\alpha=0.15\))}

\addplot+[mark=triangle*, thick] coordinates {(100,0.1929) (200,0.1828) (500,0.1751) (1000,0.1720) (5000,0.1689)};
\addlegendentry{\(\gamma=0.20\) (\(\alpha=0.10\))}

\addplot+[domain=90:6000, samples=2, thick, dashed, gray] {0.0371};
\addlegendentry{\(D(0.20\|0.3)\)}
\addplot+[domain=90:6000, samples=2, thick, dashed, gray!70] {0.0881};
\addlegendentry{\(D(0.15\|0.3)\)}
\addplot+[domain=90:6000, samples=2, thick, dashed, gray!40] {0.1678};
\addlegendentry{\(D(0.10\|0.3)\)}
\end{axis}
\end{tikzpicture}
\end{minipage}\hfill
\hspace{2mm}
\begin{minipage}{0.49\textwidth}
\centering
\begin{tikzpicture}
\begin{axis}[
  ylabel shift = -4pt,
  yticklabel style = {xshift=-2pt},
  width=\textwidth,
  height=0.72\textwidth,
  xmode=log,
  log basis x=10,
  xtick={100,200,500,1000,5000},
  xmin=90, xmax=6000,
  ymin=1.8, ymax=5.9,
  grid=both,
  xlabel={\(\kappa\)},
  ylabel={\(\Psi(\kappa)=\kappa(\hat D - D)\)},
  title={Bahadur--Rao prefactor scale (Exp.~2)},
  legend style={
    at={(0.5,-0.30)},
    anchor=north,
    draw=none,
    fill=none,
    legend columns=2,
    font=\scriptsize,
    row sep=2pt,
    /tikz/every even column/.append style={column sep=6pt},
  },
  legend cell align=left,
]
\addplot+[mark=o, thick] coordinates {(100,2.21) (200,2.66) (500,3.25) (1000,3.80) (5000,5.00)};
\addlegendentry{Obs \(\alpha=0.20\)}

\addplot+[mark=square, thick] coordinates {(100,2.46) (200,2.92) (500,3.55) (1000,4.10) (5000,5.00)};
\addlegendentry{Obs \(\alpha=0.15\)}

\addplot+[mark=triangle*, thick] coordinates {(100,2.51) (200,3.00) (500,3.65) (1000,4.20) (5000,5.50)};
\addlegendentry{Obs \(\alpha=0.10\)}

\addplot[thick, dashed, black, domain=90:6000, samples=200]
  {0.5*ln(x)/ln(2) - 1.26};
\addlegendentry{BR pred \(\alpha=0.20\)}

\addplot[thick, dash dot, black!60, domain=90:6000, samples=200]
  {0.5*ln(x)/ln(2) - 0.93};
\addlegendentry{BR pred \(\alpha=0.15\)}

\addplot[thick, loosely dashed, black!35, domain=90:6000, samples=200]
  {0.5*ln(x)/ln(2) - 0.84};
\addlegendentry{BR pred \(\alpha=0.10\)}
\end{axis}
\end{tikzpicture}
\end{minipage}

\caption{Strong converse exponent and Bahadur--Rao prefactor validation (Experiment~2).
The left panel shows convergence \(\hat D(\kappa)\to D(\alpha\|\varepsilon)\) as in Theorem~\ref{thm:sc-exponent};
the right panel validates the \(\tfrac{1}{2}\log_2\kappa\) prefactor scaling in Theorem~\ref{thm:bahadur-rao}.
Legends are placed outside the plotting regions.}
\label{fig:exponent}
\end{figure}
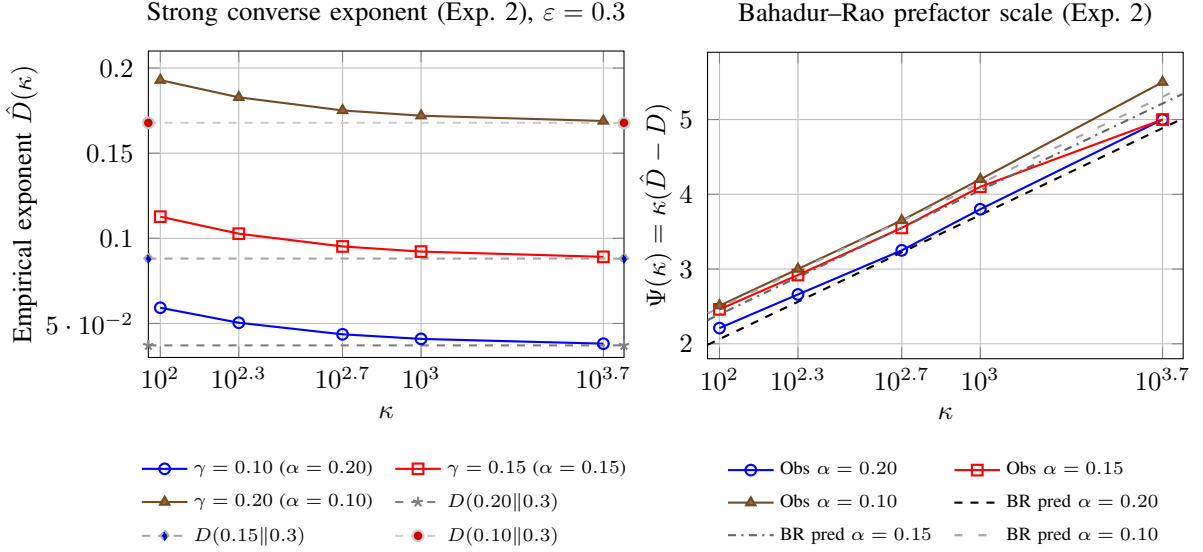

\subsection{Experiment 2: exponent and Bahadur--Rao prefactor}
\label{subsec:exp-exponent-fig}

Figure~\ref{fig:exponent} (left) plots the empirical exponent
\(\hat D(\kappa):=-\kappa^{-1}\log_2 P_c^{\mathrm{MDS}}\),
computed from the exact binomial CDF, against the KL divergence
\(D(\alpha\|\varepsilon)\) (horizontal dashed lines).
The convergence \(\hat D\to D\) from above is visible for all
three gap values; the relative error at \(\kappa=5000\) is
\(2.6\%\) (\(\gamma=0.10\)), \(1.2\%\) (\(\gamma=0.15\)), and
\(0.6\%\) (\(\gamma=0.20\)), decreasing with larger exponent as
the polynomial prefactor becomes a smaller fraction of the total.

Figure~\ref{fig:exponent} (right) shows the rescaled prefactor
quantity
\(\Psi(\kappa):=\kappa(\hat D-D)\).
The Bahadur--Rao expansion~\eqref{eq:bahadur-rao} predicts
\(\Psi_{\mathrm{BR}}(\kappa)
=\tfrac{1}{2}\log_2\kappa+c(\alpha)\)
with \(\alpha\)-dependent constants \(c(0.20)=-1.26\),
\(c(0.15)=-0.93\), \(c(0.10)=-0.84\).
The observed \(\Psi\) follows the predicted
\(\tfrac{1}{2}\log_2\kappa\) growth: the increase from
\(\kappa=100\) to \(\kappa=5000\) is within \(5\%\) of the
theoretical \(2.82\) for all three \(\alpha\) values, confirming
that the ``excess'' exponent at finite~\(\kappa\) is entirely
accounted for by the Bahadur--Rao prefactor.

\begin{figure}[ht]
\centering

\begin{minipage}{0.32\textwidth}
\centering
\begin{tikzpicture}
\begin{axis}[
  ylabel shift = -4pt,
  yticklabel style = {font=\tiny},
  width=\textwidth,
  height=0.84\textwidth,
  xmode=log,
  log basis x=10,
  xtick={50,100,200,500,1000,5000},
  xmin=45, xmax=6000,
  ymin=0.45, ymax=0.60,
  grid=both,
  xlabel={\(\kappa\)},
  ylabel={\(\Delta_c\)},
  title={Dispersion (Exp.~3), \(\varepsilon=0.2\)},
  xticklabel style = {font=\tiny},
  legend style={
    at={(0.5,-0.32)},
    anchor=north,
    draw=none,
    fill=none,
    legend columns=1,
    font=\scriptsize,
    row sep=2pt,
  },
  legend cell align=left,
]
\addplot+[mark=o, thick] coordinates {(50,0.566) (100,0.500) (200,0.495) (500,0.537) (1000,0.506) (5000,0.509)};
\addlegendentry{Obs \(\Delta_c=(r^*-\varepsilon\kappa)/\sqrt{\kappa}\)}
\addplot+[domain=45:6000, samples=2, thick, dashed, gray] {0.513};
\addlegendentry{Theory \(=\Phi^{-1}(0.9)\sqrt{0.16}\)}
\end{axis}
\end{tikzpicture}
\end{minipage}\hfill
\hspace{-8pt}
\begin{minipage}{0.32\textwidth}
\centering
\begin{tikzpicture}
\begin{axis}[
  ylabel shift = -4pt,
  yticklabel style = {xshift=-2pt},
  width=\textwidth,
  height=0.84\textwidth,
  xmin=0.19, xmax=0.41,
  ymode=log,
  ymin=5e-12, ymax=2,
  ytick={1,1e-2,1e-4,1e-6,1e-8,1e-10},
  grid=both,
  xlabel={\(\rho=\sigma/(\kappa\log m)\)},
  ylabel={\(P_e^{\mathrm{MDS}}\)},
  title={Phase transition (Exp.~3), \(\varepsilon=0.3\)},
  legend style={
    at={(0.5,-0.35)},
    anchor=north,
    draw=none,
    fill=none,
    legend columns=2,
    font=\scriptsize,
    row sep=2pt,
  },
  legend cell align=left,
]

\addplot+[mark=o, thick] coordinates {
  (0.20,0.984) (0.25,0.837) (0.28,0.623)
  (0.30,0.451) (0.32,0.289) (0.35,0.116) (0.40,0.012)
};
\addlegendentry{\(\kappa=100\)}

\addplot+[mark=square, thick] coordinates {
  (0.20,0.9999995) (0.25,0.992) (0.28,0.823)
  (0.30,0.478) (0.32,0.153) (0.35,0.007) (0.40,1e-5)
};
\addlegendentry{\(\kappa=500\)}

\addplot+[mark=triangle*, thick] coordinates {
  (0.20,0.99999999) (0.25,0.99999) (0.28,0.911)
  (0.30,0.484) (0.32,0.079) (0.35,1e-3) (0.40,1e-10)
};
\addlegendentry{\(\kappa=1000\)}

\addplot[thick, dotted, gray, forget plot]
  coordinates {(0.30,5e-12) (0.30,2)};
\end{axis}
\end{tikzpicture}
\end{minipage}\hfill
\hspace{8pt}
\begin{minipage}{0.32\textwidth}
\centering
\begin{tikzpicture}
\begin{axis}[
  ylabel shift = -4pt,
  yticklabel style = {xshift=-2pt},
  width=\textwidth,
  height=0.84\textwidth,
  xmode=log,
  log basis x=10,
  xtick={0.002,0.005,0.01,0.02},
  xmin=0.0018, xmax=0.022,
  ymin=0, ymax=30,
  grid=both,
  xlabel={\(\varepsilon\)},
  ylabel={\(d_{\max}^{\mathrm{chain}}/d_{\max}^{\mathrm{merge}}\)},
  title={Depth--resilience (Exp.~4), \(\delta=0.3\)},
  legend style={
    at={(0.5,-0.32)},
    anchor=north,
    draw=none,
    fill=none,
    legend columns=1,
    font=\scriptsize,
    row sep=2pt,
  },
  legend cell align=left,
]
\addplot+[mark=o, thick] coordinates {(0.02,4.00) (0.01,6.80) (0.005,11.67) (0.002,25.29)};
\addlegendentry{Exact ratio}
\addplot+[mark=none, thick, dashed, gray] coordinates {(0.02,4.16) (0.01,6.82) (0.005,11.55) (0.002,23.81)};
\addlegendentry{\(N^*/\log_2 N^*\) (Thm~\ref{thm:depth-resilience-duality})}
\end{axis}
\end{tikzpicture}
\end{minipage}

\caption{Experiments~3--4.
Left: coded second-order correction \(\Delta_c\) stabilizes at the
dispersion-predicted constant, while the derivation-constrained correction is
identically zero (Theorem~\ref{thm:unc-second-order}).
Middle: coded phase transition sharpens around \(\rho=\varepsilon=0.3\)
(vertical dotted line) as \(\kappa\) increases; the log-scale \(y\)-axis
reveals the exponential error decay in the super-capacity regime.
Right: architecture-dependent depth--resilience gap grows as
\(\varepsilon\to 0\).}
\label{fig:dispersion_phase_depth}
\end{figure}
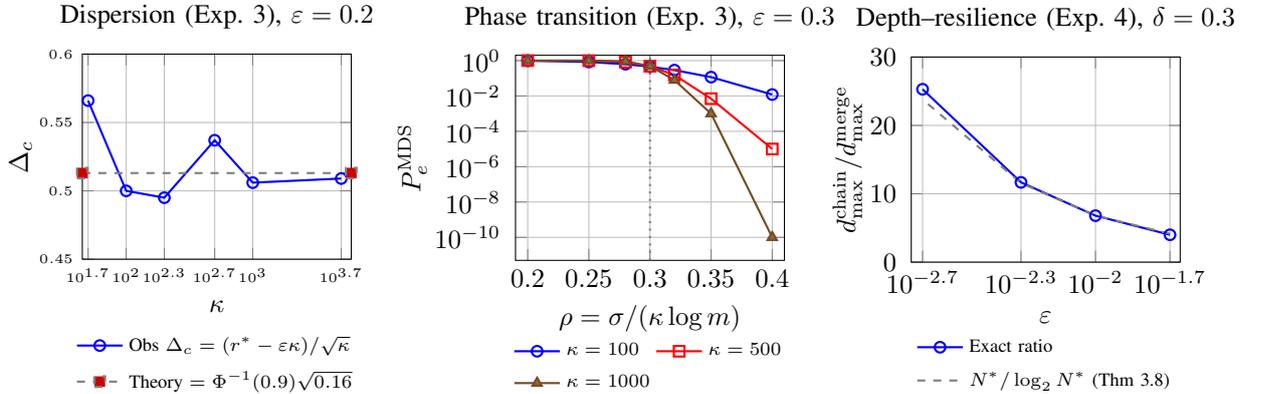

\subsection{Experiments 3 and 4: dispersion, phase transition, and depth--resilience}
\label{subsec:exp-dispersion-depth-fig}

Figure~\ref{fig:dispersion_phase_depth} (left) validates the
dispersion dichotomy
(Theorem~\ref{thm:dispersion-dichotomy}): the normalized coded
correction
\(\Delta_c:=(r^*-\varepsilon\kappa)/\sqrt\kappa\) oscillates
around the predicted value
\(\Phi^{-1}(0.9)\sqrt{0.16}=0.513\)
(small fluctuations arise from the integer constraint on~\(r^*\)),
while the derivation-constrained correction is identically zero
for all~\(\kappa\), reflecting \(V_{\mathrm{unc}}=0\)
(Theorem~\ref{thm:unc-second-order}).

Figure~\ref{fig:dispersion_phase_depth} (middle) illustrates the
coded phase transition on a logarithmic error-probability axis.
The window in which \(P_e\) drops from \(0.9\) to \(0.1\) narrows
from \(\Delta\rho\approx 0.15\) at \(\kappa=100\) to
\(\Delta\rho\approx 0.04\) at \(\kappa=1000\), and the exponential
error decay in the super-capacity regime (\(\rho>\varepsilon\))
becomes steeper with increasing~\(\kappa\), consistent with
Theorem~\ref{thm:phase-diagram} regimes C1--C5.

Figure~\ref{fig:dispersion_phase_depth} (right) confirms the
depth--resilience duality
(Theorem~\ref{thm:depth-resilience-duality}): as
\(\varepsilon\to 0\), the chain-to-merge maximum-depth ratio grows
as \(\Theta(N^*/\log N^*)\), matching the asymptotic prediction to
within~\(6\%\).
At \(\varepsilon=0.002\) the chain derives \(177\)~levels deep
while the merge is limited to~\(7\)---a \(25\times\) gap.

\subsection{Experiment 6: Monte Carlo validation (kept in Section VIII)}
\label{subsec:exp-mc}

To provide an independent check that does not rely on closed-form evaluation,
we perform Monte~Carlo simulation of the premise-erasure process under two parameter regimes.
In each trial, an erasure count \(E\sim\mathrm{Bin}(\kappa,\varepsilon)\) is drawn; the coded
scheme succeeds iff \(E\le r\), while the derivation-constrained scheme succeeds iff all \(N\)
exposed dependencies survive, i.e., \(E_{\mathrm{exp}}=0\) with \(E_{\mathrm{exp}}\sim\mathrm{Bin}(N,\varepsilon)\).
Table~\ref{tab:mc} reports \(N_{\mathrm{mc}}=10^6\) trials with \(95\%\) Clopper--Pearson intervals.

\begin{table}[ht]
\centering
\caption{Monte Carlo validation (\(N_{\mathrm{mc}}=10^6\)).
Panel~A: \(\varepsilon=0.2\), \(\kappa=500\). Panel~B: \(\varepsilon=0.3\), \(\kappa=100\).
``CI'' denotes the \(95\%\) Clopper--Pearson interval.}
\label{tab:mc}
\smallskip

\begin{minipage}[t]{0.48\textwidth}
\centering
\setlength{\tabcolsep}{2.5pt} 
\begin{tabular}{l|c|ccc}
\toprule
\multicolumn{5}{c}{\textbf{Panel~A}\;(\(\varepsilon=0.2\), \(\kappa=500\))} \\
\midrule
Scheme & Parameter & \(P_e\) (exact) & \(\hat P_e\) (MC) & \(95\%\) CI \\
\midrule
Coded  & \(r=90\)   & 0.856 & 0.857 & [0.856, 0.858] \\
Coded  & \(r=100\)  & 0.473 & 0.473 & [0.472, 0.474] \\
Coded  & \(r=111\)  & 0.100 & 0.100 & [0.100, 0.101] \\
Coded  & \(r=120\)  & 0.012 & 0.012 & [0.012, 0.013] \\
\midrule
Unc    & \(N=10\)   & 0.893 & 0.892 & [0.892, 0.893] \\
Unc    & \(N=3\)    & 0.488 & 0.488 & [0.487, 0.489] \\
Unc    & \(N=1\)    & 0.200 & 0.200 & [0.199, 0.201] \\
\bottomrule
\end{tabular}
\end{minipage}
\hfill
\begin{minipage}[t]{0.48\textwidth}
\centering
\setlength{\tabcolsep}{2.5pt} 
\begin{tabular}{l|c|ccc}
\toprule
\multicolumn{5}{c}{\textbf{Panel~B}\;(\(\varepsilon=0.3\), \(\kappa=100\))} \\
\midrule
Scheme & Parameter & \(P_e\) (exact) & \(\hat P_e\) (MC) & \(95\%\) CI \\
\midrule
Coded  & \(r=20\)   & 0.984 & 0.984 & [0.983, 0.984] \\
Coded  & \(r=25\)   & 0.837 & 0.837 & [0.837, 0.838] \\
Coded  & \(r=30\)   & 0.451 & 0.451 & [0.450, 0.452] \\
Coded  & \(r=35\)   & 0.116 & 0.116 & [0.116, 0.117] \\
\midrule
Unc    & \(N=5\)    & 0.832 & 0.832 & [0.831, 0.832] \\
Unc    & \(N=3\)    & 0.657 & 0.657 & [0.656, 0.658] \\
Unc    & \(N=1\)    & 0.300 & 0.300 & [0.299, 0.301] \\
\bottomrule
\end{tabular}
\end{minipage}
\end{table}

Across all scenarios, the Monte~Carlo estimates match the exact values within sampling uncertainty,
supporting the finite-length formulas for both coded and derivation-constrained decoding
(Theorems~\ref{thm:exact-Pc} and~\ref{thm:unc-error-complete}).

\subsection{Summary of numerical findings}
\label{subsec:exp-summary}

Figures~\ref{fig:penalty}--\ref{fig:dispersion_phase_depth} and Table~\ref{tab:mc} jointly corroborate the paper's
key predictions:
(i)~the derivation penalty converges to \(1/\varepsilon\) and obeys the second-order refinement
(Theorems~\ref{thm:refined-penalty} and~\ref{thm:multi-penalty-mq});
(ii)~below the coded threshold, the success probability decays at KL-divergence rate with the
Bahadur--Rao prefactor (Theorems~\ref{thm:sc-exponent} and~\ref{thm:bahadur-rao});
(iii)~coded caching exhibits nonzero dispersion and a sharpening phase transition, whereas the
derivation-constrained scheme has effectively zero dispersion (Theorem~\ref{thm:dispersion-dichotomy});
and (iv)~the architecture-dependent depth--resilience gap grows as \(\varepsilon\to 0\)
(Theorem~\ref{thm:depth-resilience-duality}).

\section{Discussion}
\label{sec:discussion}

\paragraph{The derivation penalty as a fundamental limit.}
The central message of this paper is that requiring a decoder to
produce a \emph{logical derivation}---rather than merely output
the correct answer---inflates the minimum reliable cache by a
factor of $1/\varepsilon$ relative to coded caching
(Theorem~\ref{thm:source-channel-separation}).
This penalty is \emph{universal}: it depends only on the channel
erasure rate~$\varepsilon$ and holds regardless of the number of
queries, their overlap structure, or the target
reliability~$\delta$
(Theorem~\ref{thm:multi-penalty-mq}).
Its proof-theoretic origin is the structural caching rigidity
of Theorem~\ref{thm:ancestral-relevance}: under faithful
derivation with unique traces, only cache facts lying within
the derivation DAG of the target query contribute to resilience,
preventing the cross-coordinate error correction that MDS codes
exploit.

\paragraph{Relation to coded caching and channel coding.}
The coded-caching literature initiated by Maddah-Ali and
Niesen~\cite{maddah2014fundamental} studies cache-aided
delivery over broadcast channels, where coded multicast
messages yield multiplicative gains over uncoded placement.
Our setting is structurally different: the ``receiver'' is a
proof engine that may only combine cached logical facts with
surviving premises via valid inference steps, not an arbitrary
algebraic decoder.
The MDS codes used in our achievability results
(Theorem~\ref{thm:coded-achievability}) are the same as those
in network coded caching, but the converse mechanism differs
entirely---it rests on the DAG constraint rather than on
cut-set bounds or network topology.
On the channel-coding side, the strong converse exponent
$D(\rho\|\varepsilon)$ (Theorem~\ref{thm:sc-exponent}) and
the reliability function (Corollary~\ref{cor:reliability})
coincide with the classical $m$-ary BEC exponents, since MDS
codes achieve the BEC sphere-packing bound at all
rates~\cite[Theorem~5.8.3]{gallager1968information}.
The cache dispersion
$V_{\mathrm{cache}}=\varepsilon(1{-}\varepsilon)(\log m)^2$
likewise matches the channel dispersion of the $m$-ary BEC
with uniform input~\cite{polyanskiy2010channel}, confirming
that the coded caching problem inherits the channel's
second-order behavior exactly.

\paragraph{The dispersion dichotomy and its operational
  consequences.}
The zero effective dispersion of the derivation-constrained
scheme (Theorem~\ref{thm:dispersion-dichotomy}) is arguably
the most striking structural finding.
It implies that finite-length penalties relative to
$\sqrt\kappa$ vanish: the second-order term is
$-N^*\log m=O(\log m)$, independent of~$\kappa$.
All known capacity-achieving coded schemes for memoryless
channels exhibit positive dispersion~\cite{polyanskiy2010channel},
so the vanishing dispersion is a distinctive signature of the
derivation constraint.
The operational origin is the distinction between a
\emph{sum} statistic (coded scheme: total erasure count
$E=\sum E_j$, governed by the CLT) and a \emph{conjunction}
(derivation-constrained: all $N$ unprotected facts must
survive, governed by a product probability with no
CLT-type concentration).
This dichotomy persists in the multi-query setting
(Theorem~\ref{thm:multi-dispersion-mq}), where the joint
coded dispersion scales as
$n_{\mathrm{eff}}\cdot V_{\mathrm{cache}}$ while the
derivation-constrained dispersion remains identically zero.

\paragraph{Architecture dependence.}
The depth-space analysis of Section~\ref{sec:arch-depth}
demonstrates that the architecture-dependent capacity mapping
$d\mapsto\kappa_{\mathcal A}(d)$ from CT2
(Theorem~\ref{thm:exponential-capacity}) propagates to every
downstream quantity: critical depth, transition width, error
exponent per depth unit, and maximum resilient depth.
The merge architecture's exponential growth
$\kappa_{\mathrm{merge}}(d)=k\cdot 2^{d-1}$ yields
exponentially sharper phase transitions
(Theorem~\ref{thm:arch-phase-ds}) and exponentially smaller
maximum resilient depth
(Theorem~\ref{thm:depth-resilience-duality}) compared to the
chain, establishing a precise information-theoretic cost
of parallelism in derivation.
More broadly, this demonstrates that the \emph{branching factor}
of the derivation program---not merely its depth---is the
controlling parameter for noise resilience.

\paragraph{Limitations.}
Several modeling assumptions constrain the scope of the present
results.
First, the i.i.d.\ erasure model is central to the channel
interpretation: both the image-size bound
(Lemma~\ref{lem:image-size}) and the Bahadur--Rao asymptotics
(Theorem~\ref{thm:bahadur-rao}) exploit independence.
Under correlated erasure---e.g., when related facts share a
physical storage subsystem---the effective channel acquires
memory, and the binomial erasure count is replaced by a more
complex statistic.
Second, the distinct-coordinate regime
$\kappa(q,B)=a(q)$ underpins the clean
$\mathrm{BEC}_m(\varepsilon)^{\otimes\kappa}$ factorization.
Lemma~\ref{lem:distinct-coordinates} ensures this holds for a
$1{-}o(1)$ fraction when $a=o(\sqrt m)$, but for very deep
queries the arity may exceed this threshold, requiring a
modified channel model.
Third, the cost model
(Definition~\ref{def:cost-model}) treats derivation steps as
unit-cost, ignoring potential variation in rule-application
complexity or cache-lookup overhead; a finer-grained model
could alter the critical frequency of CT3.

\paragraph{Open problems.}
We highlight five directions.

\emph{(i)~Adaptive caching.}
The present framework fixes the cache $S$ before observing the
erasure pattern.
An adaptive protocol that iteratively updates the cache after
partial observation of~$\tilde B$ could potentially reduce the
derivation penalty.
Whether the $1/\varepsilon$ factor persists under adaptivity
is open; the structural rigidity theorem
(Theorem~\ref{thm:ancestral-relevance}) remains valid per
round, but multi-round interaction may circumvent it.

\emph{(ii)~Correlated and adversarial erasure.}
Extending the theory to Markov or worst-case erasure models
would require channel-coding techniques beyond the i.i.d.\
framework~\cite{gallager1968information}.
The derivation penalty may depend on the correlation structure,
and new converse techniques are needed since the image-size
bound relies on product distributions.

\emph{(iii)~Approximate derivation.}
If the decoder may output a formula $\hat q$ that is
$\epsilon$-close to~$q$ in a suitable metric (e.g., symmetric
difference of models), the derivation penalty may be reduced.
This connects to approximate query answering in database
theory~\cite{abiteboul1995foundations} and lossy
source coding~\cite{cover2006elements}.

\emph{(iv)~General branching architectures.}
Theorem~\ref{thm:depth-resilience-duality}(III) shows that
$b$-ary merges yield
$d_{\max}(b)=1+\log_b(N^*/k)$ with resilient capacity
$N^*\log m$.
Whether architectures exist that simultaneously achieve
sub-logarithmic depth growth and linear resilient depth remains
open; such a result would require departing from the
fixed-branching paradigm.

\emph{(v)~Threshold behavior under enriched proof systems.}
The derivation-constrained scheme exhibits no phase transition
(Remark~\ref{rem:no-transition}).
If the proof engine is augmented with probabilistic reasoning
or sampling-based search, the resulting ``stochastic proof
system'' could conceivably introduce collective statistical
effects and threshold behavior.
Characterizing whether enriched proof systems can close the
$1/\varepsilon$ gap---or whether the penalty is intrinsic to
any deductive mechanism---is a foundational open question.

\section{Conclusion}
\label{sec:conclusion}

This paper has developed an information-theoretic framework for
premise-erasure caching in derivation-based reasoning engines,
determining the fundamental limits of reliable query recovery when
the premise base is subject to i.i.d.\ stochastic loss.

The framework is built on four coding theorems.
CT1 establishes that each derivation step in a faithful Datalog
program carries exactly $\log m$ bits of conditional algorithmic
information.
CT2 reveals an exponential capacity separation between the
tuple-assembly chain ($\kappa=k{+}d{-}1$) and the balanced-merge
architecture ($\kappa=k\cdot 2^{d-1}$) at equal derivation
depth~$d$.
CT3 identifies a critical access frequency
$f_c=\Theta(\rho_s\cdot\log(m{+}d))$ separating the regimes
where caching and on-demand derivation are respectively optimal.
CT4 proves that the minimum derivation-constrained cache under
i.i.d.\ erasure at rate~$\varepsilon$ with resilience
target~$\delta$ is
$\sigma^*_{\mathrm{unc}}=(\kappa{-}N^*)\log m+O(\kappa)$ with
$N^*\approx\delta/\varepsilon$, decomposing the conditional
information of a query into reliable cache plus noisy channel
capacity.

The central result is the \emph{derivation penalty}: the ratio
$\sigma^*_{\mathrm{unc}}/\sigma^*_{\mathrm{code}}$ converges to
$1/\varepsilon$, universally across query counts, overlap
structures, and reliability targets.
This penalty originates from the structural caching rigidity
theorem: under faithful derivation with unique traces, only cache
facts within the derivation DAG of the target query contribute to
resilience, precluding the cross-coordinate error correction that
MDS codes exploit over the $m$-ary erasure channel.

Beyond first-order capacity, the paper provides a complete
error-probability characterization.
Below the coded capacity threshold, the success probability decays
at the KL-divergence rate $D(\rho\|\varepsilon)$ with exact
Bahadur--Rao prefactors; above the threshold, MDS codes achieve
the sphere-packing exponent at all rates.
The two decoder models exhibit a \emph{dispersion dichotomy}: the
coded scheme inherits the BEC channel dispersion
$V_{\mathrm{cache}}=\varepsilon(1{-}\varepsilon)(\log m)^2$,
producing a $\Theta(\sqrt\kappa\,\log m)$ second-order term,
while the derivation-constrained scheme has identically zero
effective dispersion---a structural signature of the derivation
constraint arising from the distinction between a sum statistic
(total erasure count, governed by the CLT) and a conjunction (all
unprotected dependencies must survive, governed by a product
probability).
The complete phase diagram comprises five coded regimes and three
derivation-constrained regimes, the latter exhibiting no phase
transition.
In the multi-query setting, joint coding exploits overlap
deduplication and statistical pooling, yet the $1/\varepsilon$
penalty persists as a universal constant depending only on the
channel parameter.

The architecture-dependent mapping
$d\mapsto\kappa_{\mathcal A}(d)$ from CT2 propagates through the
entire error-probability landscape: the merge architecture's
exponential dependency growth yields exponentially sharper phase
transitions in depth space, an exponentially smaller maximum
resilient depth, and exponentially faster per-depth-unit error
decay compared to the chain, establishing a precise
information-theoretic cost of parallelism in derivation.
All four coding theorems and their downstream
consequences---including the penalty, dispersion, and
phase-diagram results---transfer across synonymous logical
representations (Theorem~\ref{thm:CT-transfer}), ensuring that
these quantities are canonical invariants of the underlying
inference system rather than artifacts of a particular encoding.

The present results rest on several structural
assumptions---i.i.d.\ premise erasure, faithful derivation with
unique traces, and the asymptotic regime of large base size~$m$%
---and the quantitative conclusions may not directly extend to
settings with correlated or adversarial noise, non-faithful or
non-deterministic inference systems, derivation architectures
beyond the two Datalog programs studied here, or smaller-scale
instances where finite-length deviations are non-negligible.
Whether the $1/\varepsilon$ derivation penalty persists under
adaptive caching protocols, approximate derivation, or enriched
proof systems with probabilistic reasoning remains open.
These directions connect the present framework to broader
questions about the fundamental information-theoretic price of
requiring logical structure---rather than mere correctness---in
the output of a noisy computation.

\appendices

\counterwithin{definition}{section}
\counterwithin{axiom}{section}
\counterwithin{assumption}{section}
\counterwithin{theorem}{section}
\counterwithin{lemma}{section}
\counterwithin{proposition}{section}
\counterwithin{corollary}{section}
\counterwithin{remark}{section}
\counterwithin{example}{section}

\section{Logical Substrate, Information Axioms, and Noise Model}
\label{sec:substrate}

Throughout we fix a formal logical system \(\mathcal{L}\) taken to be
\emph{first-order logic with least fixed-point operators} \(\mathrm{FO(LFP)}\)
\cite{immerman1999descriptive}, extended with multiple sorts including at least
\(\mathsf{Obj}\) (entities), \(\mathsf{Time}\) (time points), and \(\mathsf{Carrier}\) (carriers).
We restrict attention to \emph{finite} structures over a discrete, bounded time domain
\(T=\{t_0,\ldots,t_n\}\).
Under this restriction, \(\mathcal{L}\) is consistent, its satisfaction relation
\(\mathcal{M}\models\varphi\) is decidable for every finite structure \(\mathcal{M}\) and formula
\(\varphi\in\mathcal{L}\), and---by the Immerman--Vardi theorem
\cite{immerman1982relational,vardi1982complexity}---\(\mathrm{FO(LFP)}\) captures exactly the class of
polynomial-time computable properties over ordered finite structures.

\begin{definition}[Expressible and effectively representable state sets]
\label{def:expressible-effective}
A state set \(S(X,T)\) is \emph{expressible in \(\mathcal{L}\)} if there exists a finite
\(\mathcal L\)-signature \(\Sigma\) and a finite \(\mathcal L\)-structure whose domain includes (encodings of)
\(X\) and \(T\), such that membership \(s\in S(X,T)\) and all relations used to describe the state dynamics
are definable by formulas of \(\mathcal L\) over \(\Sigma\).
It is \emph{effectively representable} if each \(s\in S(X,T)\) admits a finite encoding from which the
relevant \(\mathcal{L}\)-predicates can be evaluated effectively.
\cite{qiu2025research}.
\end{definition}

\begin{assumption}[Fixed effective proof system and deductive closure]
\label{assump:proof-system}
We fix an effective proof system \(\mathsf{PS}\) for \(\mathcal L\) such that proof checking is decidable.
We write \(\Gamma \vdash_{\mathcal L} \varphi\) to denote derivability of a well-formed formula \(\varphi\)
from a \emph{finite} set of formulas \(\Gamma\) in \(\mathsf{PS}\), and define the deductive closure operator
\(
\Cn(\Gamma)\;:=\;\{\varphi:\ \Gamma\vdash_{\mathcal L}\varphi\}.
\)
\end{assumption}

\begin{axiom}[Binary attribute (two-domain information)]
\label{ax:binary-attribute}
Information is modeled by two state domains \((S_O,S_C)\), where \(S_O\) is the semantic/ontological
state set and \(S_C\) is the carrier/physical state set.
\end{axiom}

\begin{axiom}[Existence duration (domain-wise time and precedence)]
\label{ax:existence-duration}
There exist time domains \((T_O,\prec_O)\) and \((T_C,\prec_C)\) and time-index maps
\(\timeindex_O:S_O\to T_O\) and \(\timeindex_C:S_C\to T_C\).
Moreover, there is a fixed cross-domain precedence relation \(\prec\) such that whenever a carrier
state \(s_c\) is an enabled realization of a semantic state \(s_o\), we have
\[
  \timeindex_O(s_o) \prec \timeindex_C(s_c).
\]
\end{axiom}

\begin{axiom}[State representation (effective encodability)]
\label{ax:state-representation}
There exist finite binary representations \(\enc_O:S_O\to\{0,1\}^*\) and \(\enc_C:S_C\to\{0,1\}^*\) such that:
\begin{enumerate}[label=\textup{(R\arabic*)}]
  \item \emph{Decidable identity:} equality of encoded states is decidable.
  \item \emph{Effective predicate evaluation:} all relations/predicates used to describe membership and
  structure on \(S_O\) and \(S_C\) in the fixed logic \(\mathcal L\) are effectively evaluable from the encodings.
\end{enumerate}
\end{axiom}

\begin{axiom}[Enabling mapping (computable totality and surjective coverage)]
\label{ax:enabling-mapping}
There exists a relation \(R_{\mathcal E}\subseteq S_O\times S_C\) and the induced set-valued map
\(\mathcal E:S_O\Rightarrow S_C\), \(\mathcal E(s_o):=\{s_c\in S_C:(s_o,s_c)\in R_{\mathcal E}\}\), such that:
\begin{enumerate}[label=\textup{(E\arabic*)}]
  \item \emph{Totality on semantic states:} for every \(s_o\in S_O\), \(\mathcal E(s_o)\neq\varnothing\).
  \item \emph{Surjective coverage of carrier reality:} for every \(s_c\in S_C\), there exists \(s_o\in S_O\)
  with \(s_c\in \mathcal E(s_o)\). Equivalently, \(\bigcup_{s_o\in S_O}\mathcal E(s_o)=S_C\).
  \item \emph{Computable enabling selector:} there exists a partial computable function
  \(e:\{0,1\}^*\to\{0,1\}^*\) such that for every \(s_o\in S_O\), \(e(\enc_O(s_o))\) halts and equals
  \(\enc_C(s_c)\) for some \(s_c\in\mathcal E(s_o)\).
  \item \emph{Temporal precedence of realization:} if \(s_c\in \mathcal E(s_o)\), then
  \(\timeindex_O(s_o) \prec \timeindex_C(s_c)\).
\end{enumerate}
\end{axiom}

By Axiom~\ref{ax:enabling-mapping}, there exists a relation
\(R_{\mathcal E}\subseteq S_O\times S_C\).
We write the induced set-valued enabling (realization) map
\begin{equation}\label{eq:enabling-map}
  \mathcal E:S_O\Rightarrow S_C,
  \;
  \mathcal E(s_o)
  :=
  \{\,s_c\in S_C:\ (s_o,s_c)\in R_{\mathcal E}\,\}.
\end{equation}

\begin{definition}[Computable enabling (realization) mechanism]
\label{def:enabling}
A relation \(R_{\mathcal E}\subseteq S_O\times S_C\) (equivalently, the induced \(\mathcal E\) in
\eqref{eq:enabling-map}) is called a \emph{computable enabling mechanism} if it satisfies all clauses
\textup{(E1)}--\textup{(E4)} of Axiom~\ref{ax:enabling-mapping}.
\end{definition}

\begin{definition}[Information instance]
\label{def:info-instance}
An \emph{information instance} is a tuple
\[
  \mathcal I
  \;=\;
  \langle
    S_O, T_O, S_C, T_C,\timeindex_O,\timeindex_C, R_{\mathcal E}
  \rangle,
\]
where \((S_O,S_C)\) are the two state domains (Axiom~\ref{ax:binary-attribute}),
\(\timeindex_O,\timeindex_C\) are the time-index maps (Axiom~\ref{ax:existence-duration}),
and \(R_{\mathcal E}\) is a computable enabling mechanism (Definition~\ref{def:enabling}).
\end{definition}

\begin{definition}[Compositional interpretation between sub-signatures]
\label{def:comp-interp}
Let \(\Sigma_1,\Sigma_2\) be sub-signatures of \(\mathcal{L}\).
A \emph{compositional interpretation} \(\tau:\Sigma_1\hookrightarrow\Sigma_2\) in \(\mathcal{L}\) consists of:
\textup{(i)} for each sort \(s\) of \(\Sigma_1\), a domain formula \(\delta_s(y)\) over \(\Sigma_2\);
\textup{(ii)} for each relation symbol \(R\) of \(\Sigma_1\), a formula \(\tau_R(y_1,\ldots,y_n)\) over \(\Sigma_2\).
The map \(\tau\) extends to all formulas of \(\Sigma_1\) by the standard recursive clauses.
\end{definition}

\begin{definition}[Synonymous state sets]
\label{def:synonymous-states}
State sets \(S_1(X_1,T_1)\) and \(S_2(X_2,T_2)\), each expressible in \(\mathcal L\) as an interpretation of a
WFF family \(\Phi_1\) and \(\Phi_2\) (assumed finite or effectively enumerable) over sub-signatures
\(\Sigma_1\) and \(\Sigma_2\), are \emph{synonymous relative to \(\mathcal{L}\)}, written
\(S_1\equiv_\mathcal{L}S_2\), if there exist:
\begin{enumerate}[label=\textup{(\roman*)}]
  \item compositional interpretations \(\tau_{12}:\Sigma_1\hookrightarrow\Sigma_2\) and
  \(\tau_{21}:\Sigma_2\hookrightarrow\Sigma_1\) in \(\mathcal{L}\), and
  \item a bijection \(\sigma:\Phi_1\to\Phi_2\),
\end{enumerate}
such that for every \(\varphi\in\Phi_1\),
\(
  \vdash_{\mathcal{L}}\;\sigma(\varphi)\leftrightarrow\tau_{12}(\varphi),
\)
and round-trip coherence holds on generators (bi-interpretability).
\end{definition}

\begin{definition}[Ideal information]
\label{def:ideal-info}
An information instance \(\mathcal I\) is \emph{ideal} (with respect to \(\mathcal L\)) if
\[
  S_O \equiv_{\mathcal L} S_C.
\]
\end{definition}

\begin{assumption}[Common semantic universe for set operations]
\label{assump:semantic-universe}
There exists a fixed ambient set \(\mathbb{S}_O\) such that all semantic state sets considered satisfy
\(S_O\subseteq \mathbb{S}_O\).
Moreover, \(\mathbb{S}_O\) is effectively representable in the sense that \(\enc_O\) and \(\timeindex_O\) extend to
\(\mathbb S_O\), and \(\mathbb S_O\) is closed under the syntactic renamings/interpretations used to form synonymous
representatives in Section~\ref{sec:substrate} (in particular, the representatives produced in
Lemma~\ref{lem:sem-trans} can be chosen as subsets of \(\mathbb S_O\)).
\end{assumption}

\begin{definition}[Noisy semantic base (set perturbation)]
\label{def:noisy}
Let \(S_O\subseteq \mathbb S_O\) be the intended semantic state set.
A \emph{noisy semantic base} associated with \(S_O\) is any set of the form
\[
  \tilde S_O := (S_O \setminus S_O^-) \cup S_O^+,
\]
where \(S_O^- \subseteq S_O\) (lost states) and \(S_O^+ \subseteq \mathbb S_O\setminus S_O\) (spurious states).
\end{definition}

\begin{lemma}[Semantic transferability under signature isomorphism]
\label{lem:sem-trans}
Let \(\mathcal{L}=\mathrm{FO(LFP)}\) be the underlying logical system over finite relational structures.
For any state set \(S(X,T)\) expressible in \(\mathcal{L}\) over a sub-signature \(\Sigma=\{R_1,\ldots,R_n\}\),
and any arity-matching \(\Sigma'=\{R'_1,\ldots,R'_n\}\), there exists \(S'(Y,T')\) expressible over \(\Sigma'\)
such that \(S\equiv_{\mathcal{L}}S'\).
\end{lemma}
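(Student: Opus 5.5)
The plan is to construct $S'$ by \emph{renaming} the relation symbols and then check that the identity-like translations witness synonymy in the sense of Definition~\ref{def:synonymous-states}. Since $\Sigma'$ is arity-matching, for each $i$ the symbol $R'_i$ has the same arity (and sort pattern) as $R_i$. We define the compositional interpretation $\tau_{12}:\Sigma\hookrightarrow\Sigma'$ by taking each domain formula $\delta_s(y)$ to be the trivial formula $y=y$ on the corresponding sort, and setting $\tau_{12}(R_i)(\vec y):=R'_i(\vec y)$. The reverse interpretation $\tau_{21}:\Sigma'\hookrightarrow\Sigma$ is defined symmetrically by $R'_i\mapsto R_i$. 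By the recursive clauses of Definition~\ref{def:comp-interp}, $\tau_{12}$ extends to all $\mathcal L$-formulas, including LFP subformulas: the fixed-point variable is carried along unchanged and the body is translated. Hence $\tau_{12}$ is just uniform symbol substitution, and $\tau_{21}\circ\tau_{12}$ is the identity on formulas, syntactically and not merely up to provable equivalence.

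Next, let $\Phi$ be the WFF family expressing $S(X,T)$. Set $\Phi':=\tau_{12}[\Phi]$ and $\sigma:=\tau_{12}\restriction\Phi$. This $\sigma$ is a bijection onto $\Phi'$ because $\tau_{21}$ inverts it, and $\Phi'$ is finite or effectively enumerable whenever $\Phi$ is, since $\tau_{12}$ is computable. The condition $\vdash_{\mathcal L}\sigma(\varphi)\leftrightarrow\tau_{12}(\varphi)$ then holds trivially, as both sides are the same formula. Round-trip coherence on generators is immediate: $\tau_{21}(\tau_{12}(R_i))=R_i$, and likewise in the other direction.

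It remains to show that $S'(Y,T')$ is expressible over $\Sigma'$. Given the finite $\mathcal L$-structure $\mathcal M$ witnessing expressibility of $S$, we form $\mathcal M'$ with the same domain, encodings of $X$ and $T$ relabeled as $Y,T'$, and interpretation $(R'_i)^{\mathcal M'}:=R_i^{\mathcal M}$. A routine induction on formulas, with the LFP case handled by noting that the stage operators coincide, gives
\[
  \mathcal M'\models\tau_{12}(\varphi)[\vec a]
  \iff
  \mathcal M\models\varphi[\vec a].
\]
So membership and dynamics of $S':=\{$states defined by $\Phi'$ in $\mathcal M'\}$ are $\mathcal L$-definable over $\Sigma'$. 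By Assumption~\ref{assump:semantic-universe}, $S'$ can be chosen inside $\mathbb S_O$. Effective representability is inherited through the computable relabeling.

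I expect the main obstacle to be the bookkeeping for the LFP clause: positivity of the fixed-point variable and the correct handling of the sort/domain formulas in Definition~\ref{def:comp-interp}. I would treat this by noting that renaming preserves polarity of occurrences and that trivial domain formulas make relativization vacuous. Beyond that, the statement is essentially that synonymy is invariant under alphabetic variance.
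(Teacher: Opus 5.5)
Your proposal is correct and follows the same route as the paper, which just renames $R_j\mapsto R'_j$ uniformly, transports the generating formula family, and notes that this syntactic isomorphism is a bi-interpretability witness. Your version fills in the bookkeeping the paper leaves implicit: trivial domain formulas, $\sigma$ taken as the restriction of $\tau_{12}$ to $\Phi$, bijectivity via the inverse renaming $\tau_{21}$, and the relabeled structure $\mathcal M'$ for expressibility over $\Sigma'$.
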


\begin{proof}
Uniformly rename symbols \(R_j\mapsto R'_j\) and transport the generating formula family; this yields a
bi-interpretability witness by syntactic isomorphism.
\end{proof}

\begin{definition}[Noisy information (semantic description relative to a fixed carrier)]
\label{def:noisy-info}
Fix a carrier domain \(S_C\).
A \emph{noisy information} associated with \(S_C\) is an abstract object
\[
  \tilde{\mathcal I}=\langle \tilde S_O,\, S_C,\, (\tau_{OC},\tau_{CO},\sigma)\rangle,
\]
where \((\tau_{OC},\tau_{CO},\sigma)\) witnesses \(\tilde S_O \equiv_{\mathcal L} S_C\).
\end{definition}

\begin{proposition}[Existence of carrier-fixed noisy information]
\label{prop:noisy-exist}
Let \(\mathcal I\) be an information instance with semantic domain \(S_O\) and carrier domain \(S_C\).
Then there exists a noisy semantic base \(\tilde S_O\) of the form
\(\tilde S_O=(S_O\setminus S_O^-)\cup S_O^+\) such that \(\tilde S_O \equiv_{\mathcal L} S_C\),
and hence a noisy information object \(\tilde{\mathcal I}\) with semantic component \(\tilde S_O\).
\end{proposition}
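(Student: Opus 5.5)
The plan is to build $\tilde S_O$ explicitly as a \emph{renamed copy} of the carrier domain sitting inside $\mathbb S_O$, and then read off $S_O^-$ and $S_O^+$ from that choice. Axiom~\ref{ax:state-representation} lets us regard $S_C$ as a state set expressible in $\mathcal L$ over some finite sub-signature $\Sigma_C=\{R_1,\ldots,R_n\}$, generated by a finite or effectively enumerable WFF family $\Phi_C$; its predicates are effectively evaluable from $\enc_C$. Next we pick an arity-matching signature $\Sigma'=\{R'_1,\ldots,R'_n\}$. Lemma~\ref{lem:sem-trans} then gives a state set $S'$ over $\Sigma'$ with $S'\equiv_{\mathcal L}S_C$. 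The witness is the symbol renaming $\tau_{CO}:R_j\mapsto R'_j$, its inverse $\tau_{OC}$, and the induced bijection $\sigma$ on generating formulas; round-trip coherence holds trivially because the composite is the identity renaming. By Assumption~\ref{assump:semantic-universe}, $\mathbb S_O$ is closed under such renamings, so we may take $S'\subseteq\mathbb S_O$.

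With $\tilde S_O:=S'$, we set
\[
S_O^-:=S_O\setminus S'\subseteq S_O,
\qquad
S_O^+:=S'\setminus S_O\subseteq\mathbb S_O\setminus S_O .
\]
A direct set computation then gives
\[
(S_O\setminus S_O^-)\cup S_O^+=(S_O\cap S')\cup(S'\setminus S_O)=S'.
\]
So $\tilde S_O$ has the form required by Definition~\ref{def:noisy}: it is a noisy semantic base associated with $S_O$. Finally, $\tilde{\mathcal I}:=\langle\tilde S_O,S_C,(\tau_{OC},\tau_{CO},\sigma)\rangle$ satisfies Definition~\ref{def:noisy-info}. Note that the form $(S_O\setminus S_O^-)\cup S_O^+$ places no constraint on $\tilde S_O$ beyond $\tilde S_O\subseteq\mathbb S_O$. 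All the content therefore lies in producing a synonymous copy of $S_C$ inside the ambient universe.

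The main obstacle is the first step: making sure $S_C$ is genuinely an $\mathcal L$-expressible state set in the sense required by Definition~\ref{def:synonymous-states}, with a generating family to which Lemma~\ref{lem:sem-trans} applies. To get this, I would use (R2) of Axiom~\ref{ax:state-representation}, which makes the membership and structure predicates of $S_C$ effectively evaluable. I would combine this with Definition~\ref{def:expressible-effective}, so that $S_C$ is described over a finite signature on a finite structure with the bounded time domain $T$. If one only has effective evaluability and not definability, the fallback is to work on finite ordered structures. There, the Immerman--Vardi theorem makes the relevant polynomial-time predicates $\mathrm{FO(LFP)}$-definable, which supplies the formulas $\Phi_C$. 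A secondary check is that the renamed copy stays inside $\mathbb S_O$. This is exactly what the closure clause of Assumption~\ref{assump:semantic-universe} provides, so I would cite it rather than reprove it.
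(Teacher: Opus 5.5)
Your proposal is correct and follows the paper's proof: apply Lemma~\ref{lem:sem-trans} to $S_C$ to obtain a representative $S'\equiv_{\mathcal L}S_C$, place it inside $\mathbb S_O$ using the closure clause of Assumption~\ref{assump:semantic-universe}, and set $S_O^-:=S_O\setminus S'$ and $S_O^+:=S'\setminus S_O$; your check that $(S_O\setminus S_O^-)\cup S_O^+=S'$ spells out what the paper leaves implicit. The paper simply assumes $S_C$ is $\mathcal L$-expressible, so citing (R2) is a reasonable addition, but you should drop the Immerman--Vardi fallback: effective (decidable) evaluability does not imply polynomial-time computability, so it would not give $\mathrm{FO(LFP)}$-definability.
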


\begin{proof}
Apply Lemma~\ref{lem:sem-trans} to obtain some semantic representative \(S'_O\equiv_\mathcal L S_C\),
then set \(S_O^-:=S_O \setminus S'_O\) and \(S_O^+:=S'_O\setminus S_O\).
\end{proof}

\section{Strong Converse Exponent: Detailed Calculation}
\label{app:sc-exponent-calc}

The following calculation is used in
Theorem~\ref{thm:sc-exponent}.
We verify that the minimum of
$g_0(u):=u(\varepsilon{-}\gamma)
+\log(1{-}\varepsilon{+}\varepsilon e^{-u})$
over $u\ge 0$ satisfies $-g_0^*=D(\varepsilon'\|\varepsilon)$
with $\varepsilon':=\varepsilon{-}\gamma$.

Setting $\alpha:=\varepsilon'(1{-}\varepsilon)/
[\varepsilon(1{-}\varepsilon')]$ and $u^*=-\log\alpha$:
\begin{align*}
  -g^*
  &=-\varepsilon'\log\alpha
    -\log(1{-}\varepsilon{+}\varepsilon\alpha).
\end{align*}
The argument of the second logarithm simplifies:
\begin{align*}
  1{-}\varepsilon{+}\varepsilon\alpha
  &=(1{-}\varepsilon)
    +\frac{\varepsilon'(1{-}\varepsilon)}
          {1{-}\varepsilon'}
  =\frac{1{-}\varepsilon}{1{-}\varepsilon'}\,.
\end{align*}
Therefore:
\begin{align*}
  -g^*
  &=-\varepsilon'\log
    \frac{\varepsilon'(1{-}\varepsilon)}
         {\varepsilon(1{-}\varepsilon')}
    -\log\frac{1{-}\varepsilon}{1{-}\varepsilon'}\\
  &=-\varepsilon'\log\frac{\varepsilon'}{\varepsilon}
    -\varepsilon'\log\frac{1{-}\varepsilon}{1{-}\varepsilon'}
    +\log\frac{1{-}\varepsilon'}{1{-}\varepsilon}\\
  &=\varepsilon'\log\frac{\varepsilon'}{\varepsilon}
    +(1{-}\varepsilon')\log\frac{1{-}\varepsilon'}
                              {1{-}\varepsilon}
  =D(\varepsilon'\|\varepsilon).
\end{align*}


\section*{Acknowledgment}

During the writing and revision of this paper, I received many insightful comments from Associate Professor Rui Wang of the School of Computer Science at Shanghai Jiao Tong University and also gained much inspiration and assistance from regular academic discussions with doctoral students Yiming Wang, Chun Li, Hu Xu, Siyuan Qiu, Zeyan Li, Jiashuo Zhang, Junxuan He, and Xiao Wang. I hereby express my sincere gratitude to them.

\bibliographystyle{IEEEtran}
\bibliography{ref}

\vfill
\end{document}